\newtheorem{theorem}{Theorem}[section]
\newtheorem{corollary}[theorem]{Corollary}
\newtheorem{lemma}[theorem]{Lemma}
\newtheorem{definition}[theorem]{Definition}
\newtheorem{claim}[theorem]{Claim}
\newif\ifFULL
\DeclareMathOperator*{\argmin}{arg\,min}
\newcommand{\E}{\mathbb{E}}
\newcommand{\R}{\mathbb{R}}
\newcommand{\p}{\mathbb{P}}
\newcommand{\otilde}{\widetilde{O}}
\newcommand{\poly}{\mathsf{poly}}
\newcommand{\ind}{\mathbf{1}}
\newcommand{\norm}[1]{\left\Vert #1\right\Vert}
\newcommand{\EO}{\mathsf{EO}}
\newcommand{\rfam}{\mathcal{R}}
\newcommand{\defeq}{\stackrel{\mathrm{{\scriptscriptstyle def}}}{=}}%
\newcommand{\vvar}[1]{\vec{#1}}%
\newcommand{\vones}{\vvar{1}}%
\newcommand{\gradient}{\bigtriangledown}%
\newcommand{\grad}{\gradient}%
\newcommand{\vzero}{\vvar 0}%
\newcommand{\innerProduct}[2]{\big\langle#1 , #2 \big\rangle}%
\newcommand{\onesVec}{\vvar{1}}%
\newcommand{\indicVec}[1]{\onesVec_{#1}}%
\newcommand{\sparsity}{k}
\renewcommand{\Pr}{\mathbb{P}}
\newcommand{\lovasz}{{Lov\'{a}sz} }
\newcommand{\ext}{\hat{f}}
\newcommand{\optlb}{\phi}
\newcommand{\vsampling}{$\mathsf{vSampling}$}
\newenvironment{proofof}[1]{\smallskip\noindent{ {\bf Proof of #1.}}}%
        {\hspace*{\fill}$\Box$\par}
\title{Sparse Submodular Function Minimization}
\author{}
\author{
Andrei Graur\\
Stanford University\\
\texttt{agraur@stanford.edu}
\and 
Haotian Jiang\\
Microsoft Research, Redmond\\
\texttt{jhtdavid96@gmail.com}
\and 
Aaron Sidford\\
Stanford University\\
\texttt{sidford@stanford.edu}
}
\date{\today}
\begin{document}
\begin{titlepage}
  \maketitle

\begin{abstract}
In this paper we study the problem of minimizing a submodular function $f : 2^V \rightarrow \R$ that is guaranteed to have a $k$-sparse minimizer. We give a deterministic algorithm that computes an additive $\epsilon$-approximate minimizer of such $f$ in $\widetilde{O}(\mathsf{poly}(k) \log(|f|/\epsilon))$ parallel depth using a polynomial number of queries to an evaluation oracle of $f$, where $|f| = \max_{S \subseteq V} |f(S)|$. Further, we give a randomized algorithm that computes an exact minimizer of $f$ with high probability using $\widetilde{O}(|V| \cdot \mathsf{poly}(k))$ queries and polynomial time. When $k = \widetilde{O}(1)$, our algorithms use either nearly-constant parallel depth or a nearly-linear number of evaluation oracle queries. All previous algorithms for this problem either use $\Omega(|V|)$ parallel depth or $\Omega(|V|^2)$ queries. 

In contrast to state-of-the-art weakly-polynomial and strongly-polynomial time algorithms for SFM, our algorithms use first-order optimization methods, e.g., mirror descent and follow the regularized leader. We introduce what we call {\em sparse dual certificates}, which encode information on the structure of sparse minimizers, and both our parallel and sequential algorithms provide new algorithmic tools for allowing first-order optimization methods to efficiently compute them. Correspondingly, our algorithm does not invoke fast matrix multiplication or general linear system solvers and in this sense is more combinatorial than previous state-of-the-art methods.

\ifdefined\isSTOCSubmission
\else

\fi

\end{abstract}

\thispagestyle{empty}
\end{titlepage}

{
    \hypersetup{linkcolor=black}
    \tableofcontents
}

\newpage
\section{Introduction}
\label{sec:intro_pre1.1}

Submodular function minimization (SFM) is a foundational problem in combinatorial optimization.
Submodular functions encompass a wide range of functions that appear naturally in practical applications, including graph cut functions, matroid rank functions, set coverage functions, and utility functions from economics. 
Since seminal work of Edmonds in 1970 \cite{E70}, SFM has served as a central tool in many areas such as theoretical computer science, operations research, game theory, and recently, machine learning. 
We refer interested readers to surveys \cite{M05,I08} for a more comprehensive account of the rich history of SFM. 

Throughout this paper we consider a standard setting for SFM. We are given a set function $f: 2^V \rightarrow \R$, where $V$ is an $n$-element finite set, known as the \emph{ground set}, and $f$ is submodular, i.e., 
\[
f(S \cup \{i\}) - f(S) \geq f(T \cup \{i\}) - f(T)\text{ for all }S \subseteq T \subseteq V\text{ with }i \notin T .
\]
Furthermore, we assume that $f$ is accessed only through an \emph{evaluation oracle} which when queried at any $S \subseteq V$ outputs $f(S)$ in time $\EO$.
We let $|f| \defeq \max_{S \subseteq V} |f(S)|$ and $f^* \defeq \min_{S \subseteq V} f(S)$ and consider the problem of computing an \emph{$\epsilon$-approximate minimizer}, i.e., $S \subseteq V$ with $f(S) \leq f^* + \epsilon$.

Since seminal work of Gr\"otschel, Lov\'asz, and Schrijver \cite{GLS81} showed that SFM can be solved in polynomial time, there have been multiple advances in SFM over the last few decades \cite{S00,IFF01,FI03,I03,V03,O09,IO09,CJK14,LJ15,CLSW17}. In this paper, we focus on algorithms that solve SFM to \emph{high accuracy} with a \emph{polynomial query complexity}, meaning that they solve the problem with a number of queries to an evaluation oracle that scale \emph{weakly-polynomially} ($\poly(n, \log(|f|/\epsilon)$)) \cite{GLS81} or \emph{strongly-polynomially} ($\poly(n)$) \cite{GLS84,GLS88}.\footnote{When $f$ is integer valued, any $\epsilon < 1$ approximate solution is optimal; a variety of the prior work consider only this setting. Throughout the paper we do not distinguish between prior work which consider exactly solving SFM integer valued $f$ (with a dependence on $|f|$) and those that work in the more general setting we consider in this paper.} 
Current state-of-the-art SFM algorithms in these regimes are weakly-polynomial  $\otilde(n^2 \log (n |f|/\epsilon))$-query, polynomial-time algorithms \cite{KTE88,NN89,V89,BV04,LSW15,JLSW20}, strongly-polynomial $\otilde(n^3)$-query, polynomial-time algorithms \cite{LSW15,DVZ21,J22}, and a strongly-polynomial $\otilde(n^2)$-query, exponential-time algorithm \cite{J22} (see \Cref{sec:intro:related_work}  
for more details)\footnote{Throughout the paper we use $\otilde(\cdot)$ to hide $O(\poly(\log n))$ factors.}.

On the hardness side, however, the current state-of-the-art lower bounds 
exclude algorithms making fewer than $\Omega(n \log n)$ queries in the strongly-polynomial regimes \cite{CGJS22} and fewer than $\Omega(n)$ queries in the weakly-polynomial regime \cite{H08,CLSW17}.
Consequently, there are large, $\Omega(n)$ gaps, between these lower bounds and the best known upper bounds. 
Unfortunately, obtaining nearly-linear (or provably near-optimal) query complexity algorithms for SFM has been elusive.  

In light of these developments, it is natural to ask, what additional structural assumptions may be needed to enable faster algorithms? One recent line of work has explored the complexity of \emph{decomposable SFM} \cite{JBS13, NJJ14, EN15, ENV17, KBP19, AKMSV21,DJLPY22}, that is the special case where $f(S) = \sum_{i} f_i(S \cap T_i)$ for submodular $f_i$ and sparse $T_i$ given an oracle for evaluating the individual $f_i$ over $T_i$. A different line of work \cite{CLSW17,ALS20} considers the complexity of \emph{approximate} SFM when the minimizer is $k$-sparse, which we refer to as \emph{$k$-sparse SFM} for brevity.\footnote{This problem is distinct from that of computing the minimum value $\sparsity$-sparse set for a submodular function. 
}  We refer to an SFM algorithm as approximate, if its query complexity is \emph{pseudo-polynomial}, i.e., $O(\poly(n,|f|/\epsilon))$. The state-of-the-art approximate $\sparsity$-sparse SFM algorithm has a query complexity of $\widetilde{O}(k (|f|/\epsilon)^2)$, when $f$ is integer valued and $\epsilon < 1$.

In both of these cases, sparsity plays a prominent role. 
In the specific context of SFM, while various polyhedral and geometric properties of submodular functions have been extensively studied and heavily exploited since the 1970s \cite{E70}, these properties are mostly global, involving the entire set $V$ altogether. On the other hand, assuming $k$-sparsity of the minimizer allows one to take a glimpse into local properties of submodularity, e.g., to understand the role a small number of elements play for the minimization of the function. 

Moreover, sparsity of the minimizer is a natural assumption in convex optimization and submodular function minimization problems. In particular, sparsity arises in signal processing, feature selection, compressed sensing, etc.\ where the solution is often expected to be sparse, i.e., have a small number of non-zero elements \cite{donoho2006compressed, marvasti2012unified, li2020survey}. Sparsity is also common in cases where a regularizer is added to the objective function to encourage sparsity. 
One example of such a setup is the problem of finding an optimal dataset for speech recognition tasks \cite{LB11}. This problem can be written as $f(S) + \lambda |S|$, where $f$ is a submodular objective, and therefore it is expected that the size of the minimizing set is much smaller than the ground set for large values of the regularization coefficient $\lambda$.  
Consequently, understanding how the complexity of algorithms depends on the sparsity leads to better insight into more refined combinatorial and geometric structures of the problems. 
Therefore, the central question we ask in this paper is:
\begin{center}
    \emph{Can we leverage sparsity to improve upon state-of-the-art polynomial query complexities?}
\end{center}

$\sparsity$-sparse SFM is also interesting in light of recent work \cite{BS20} seeking to clarify the {\em parallel depth} of SFM, i.e., the number of parallel rounds of queries to the evaluation oracle required for a query-efficient algorithm. The state-of-the-art parallel depth lower bounds are $\Omega(n/\log n)$ in the strongly-polynomial regime \cite{CGJS22}, which matches the upper bound in \cite{J22} up to a factor of $\log^2 n$, and $\widetilde{\Omega}(n^{1/3})$ in the weakly-polynomial regime \cite{CCK21}. 
These polynomial parallel depth lower bounds crucially rely on the minimizers being dense for the constructed submodular functions,  
and highly parallel algorithms 
might be possible when the submodular function admits a sparse minimizer.  
Therefore, we also ask: \emph{Can we improve the parallel complexities for $k$-sparse SFM?} 
Besides being interesting from an algorithmic perspective, obtaining improved parallel algorithms for $\sparsity$-sparse SFM could aid lower bound development by showing how hard-instances for lower bounds must have dense minimizers.

\subsection{Challenges and Additional Motivations}
\label{subsec:challenges_motivations}

Beyond intrinsic interest in improving the complexity of $\sparsity$-sparse SFM, 
this problem is also an interesting testbed for new techniques and a number of larger open problems on SFM. Here we briefly elaborate on these challenges and motivations for studying $\sparsity$-sparse SFM.

State-of-the-art SFM algorithms typically leverage the \emph{\lovasz extension} \cite{L83} of $f$, a convex  function $\ext : [0,1]^V \rightarrow \R$  
that agrees with $f$ on the hypercube's vertices, i.e., $\ext(\indicVec{S}) = f(S)$ for all $S \subseteq V$. It is known that $\ext$ can be evaluated efficiently and minimizing $\ext$ suffices for SFM (see \Cref{sec:prelim}).
Consquently, SFM algorithms can be readily obtained by applying convex optimization methods to the \lovasz extension. Indeed, state-of-the-art weakly-polynomial SFM algorithms \cite{LSW15,JLSW20} follow this approach by using {\em cutting plane methods}, a class of weakly-polynomial convex optimization methods, to obtain $\epsilon$-approximate minimizers in $\otilde(n \log(1/\epsilon))$ parallel rounds of $\otilde(n)$ queries per round.  
State-of-the-art strongly-polynomial SFM algorithms \cite{LSW15,DVZ21,J22} carefully apply these weakly-polynomial cutting plane methods iteratively.

With the $\sparsity$-sparsity assumption on the solutions, a  natural approach would be to apply these continuous optimization methods to minimize $\ext$ over $S_k^V \defeq \Delta_k^V \cap [0,1]^V$,
where $\Delta_k^V  \defeq \{x \in \R^V_{\geq 0} | \norm{x}_1 \leq k\}$ is the interior of the simplex scaled up by $k$; this suffices for $\sparsity$-sparse SFM since $\indicVec{S^*} \in S_k^V$ for the $\sparsity$-sparse minimizer $S^* \subseteq V$. 
Unfortunately, while changing the domain from $[0,1]^V$ to $S_k^V$ is known to improve the performance of certain pseudo-polynomial convex optimization methods (as in  \cite{CLSW17,ALS20}), it is not known to improve the performance of weakly-polynomial convex optimization algorithms (e.g., state-of-the-art cutting plane method \cite{JLSW20}) by more than logarithmic factors. Furthermore, without using more of the structure $\ext$ it seems unlikely that this change of domain would affect the weakly-polynomial complexity by more than logarithmic factors, since one could scale a hard convex optimization problem to fit inside $S_k^V$ without changing problem parameters by more than a polynomial factor.  

These challenges call for the development of new optimization techniques that better utilize structures of the \lovasz extension and sparsity of the domain, which might lead to applications for a broader range of open problems on SFM. We note several of these additional motivations below.

\paragraph{Strongly-polynomial time $O(n^{3-c})$-query algorithm for SFM.} One of the most important motivations is towards improving strongly-polynomial time SFM algorithms. 
The current best query complexity here is $O(n^3 \log\log n / \log n)$ given in \cite{J22}, but this approach seems unlikely to provide further improvement given the stagnation of progress on obtaining a better approximation factor for the shortest vector problem, on which the algorithm in \cite{J22} crucially relies. 

Other state-of-the-art strongly-polynomial time SFM algorithms with $\widetilde{O}(n^3)$ query complexities in \cite{LSW15,DVZ21} 
learn precedence 
constraints of the form, if $p \in V$ is in a minimizer then so is $q$ (e.g., \cite{IFF01,IO09,LSW15,DVZ21}). In the worst case, these algorithms might make $\otilde(n^2)$ queries to learn only a single coordinate that must be in a minimizer (or not), or for many coordinates $p \in V$ a single $q \in V$ that must be in any minimizer containing $p$. This worst-case behavior is a key barrier towards obtaining strongly-polynomial time algorithms with $O(n^{3-c})$ query complexities for constant $c > 0$.
However, this worst-case behavior is sparse, and $\sparsity$-sparse SFM algorithms which better exploit local properties of submodular functions might be useful to get around the aforementioned barrier in this case and lead to a smaller query complexity. 

\paragraph{SFM versus continuous optimization.} Given the challenges of adapting 
weakly-polynomial convex optimization algorithms to leverage sparsity, obtaining weakly- and strongly-polynomial algorithms for $\sparsity$-sparse SFM could highlight differences between general convex optimization and SFM. Consequently, $\sparsity$-sparse SFM is a natural proving grounds for designing SFM algorithms that go beyond using the boundedness and convexity of the \lovasz extension. 

\paragraph{Combinatorial algorithms and iteration costs.} The use of cutting plane methods in state-of-the-art SFM algorithms comes with certain inherent costs. Key among them is that all known cutting plane methods apply general linear system solvers or matrix multiplication methods, making these methods somewhat intrinsically non-combinatorial. This is inherent as ultimately the problems they solve are more general than that of solving arbitrary linear systems. 

Since, as argued above, obtaining better query complexities for weakly- and strongly-polynomial $\sparsity$-sparse SFM suggests departing from cutting plane methods, the problem could be an interesting one to see where more combinatorial methods or ones with lower iteration costs can shine. State-of-the-art pseudo-polynomial SFM algorithms leverage optimization machinery which does not use linear system solves and correspondingly have runtimes that are within polylogarithmic factors of their query complexity \cite{CLSW17,ALS20}. Though there have been efforts in using alternative optimization methods to solve SFM, e.g., \cite{DVZ21}, the query complexities of such methods are much higher than the state-of-the-art. Correspondingly, $\sparsity$-sparse SFM is an interesting setting to see whether such methods can outperform cutting plane methods.

\subsection{Our Results}
\label{sec:intro:results}

Our main results include two algorithms which improve, respectively, the parallel depth and query complexities of polynomial-time $\sparsity$-sparse SFM algorithms.

\paragraph{Parallel depth for $k$-sparse SFM.} In the parallel model for SFM (in the weakly-polynomial regime), the algorithm can submit up to $\poly(n, \log(|f|/\epsilon))$ parallel queries to the evaluation oracle in each round, and its parallel \emph{depth} is defined to be the number of rounds needed to find the minimizer in the worst case. 
Our main result for this model is the following theorem. 

\begin{restatable}[Parallel $k$-sparse SFM]{theorem}{parallel}
\label{thm:main_parallel}
There is a deterministic parallel algorithm for $k$-sparse SFM with parallel depth $\widetilde{O}(k^7 \cdot \log(|f|/\epsilon))$ and runtime $\widetilde{O}(n^2 \cdot k^7\log(|f|/\epsilon) \cdot \EO + \poly(n) \cdot \log(|f|/\epsilon))$. 
\end{restatable}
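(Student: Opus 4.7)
The plan is to minimize the \lovasz extension $\ext$ over the scaled simplex $S_k^V$ using first-order methods (entropic mirror descent or follow-the-regularized-leader) in an inner loop, wrapped by an outer loop of $\widetilde{O}(\log(|f|/\epsilon))$ geometric refinements driven by sparse dual certificates. Since $\indicVec{S^*} \in S_k^V$ for any $k$-sparse minimizer $S^*$, continuous minimization of $\ext$ over this restricted domain suffices, and the entropic regularizer $\sum_i x_i \log x_i$ has diameter only $\widetilde{O}(k)$ on $S_k^V$. This logarithmic dimension dependence is the key reason a $\poly(k)$ parallel-depth goal is even plausible, in contrast to what one could hope to get from cutting plane methods on $[0,1]^V$.

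In one outer phase, I would run $\poly(k)$ parallel rounds of FTRL with entropic regularization on $S_k^V$, estimating subgradients of $\ext$ in a single parallel round of $\widetilde{O}(n)$ evaluation oracle queries per round via random permutations of $V$. A standard mirror-descent analysis then yields a continuous iterate that is within a constant multiplicative factor of the optimum relative to the current scale of $|f|$. The heart of the argument is then to aggregate the FTRL iterates and their subgradients into what the paper calls a \emph{sparse dual certificate}: a point $y$ in a polyhedron derived from the base polytope of $f$ whose coordinates identify elements that must belong to, or must be excluded from, every $k$-sparse minimizer of $f$. Elements certified in this way are fixed (contracted into or deleted from $V$), yielding a residual submodular instance that still admits a $k$-sparse minimizer but has strictly smaller effective $|f|$. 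Iterating this certificate-and-reduce step $\widetilde{O}(\log(|f|/\epsilon))$ times drives the effective $|f|$ down to $\epsilon$, giving total parallel depth $\widetilde{O}(k^7 \log(|f|/\epsilon))$, where the $k^7$ absorbs the entropic diameter, subgradient scale, certificate stability margin, and per-phase overhead. The runtime bound then follows by charging $\widetilde{O}(n \cdot \poly(k) \cdot \EO)$ work per parallel round for gradient estimation and aggregation, plus a polynomial postprocessing cost per outer phase to finalize the contractions.

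The main obstacle is the construction and correctness of the sparse dual certificate. It is not a priori obvious that an approximate continuous minimizer of $\ext$ carries enough combinatorial information to \emph{irrevocably} commit coordinates; the argument must exploit sparsity directly, showing that when $f$ admits a $k$-sparse minimizer, near-optimal dual solutions concentrate on $O(k)$ coordinates in a way detectable from averaged FTRL iterates in $\poly(k)$ parallel rounds, without any $\Omega(n)$ sequential dependency. Calibrating the FTRL target accuracy against the stability threshold of the certificate, while keeping the dependence on $k$ polynomial and ensuring each reduced instance inherits the $k$-sparsity hypothesis, is where I expect most of the technical care to lie. I anticipate that the proof will hinge on the polyhedral structure of the base polytope $B(f)$ together with a quantitative gap between coordinates in and out of the sparse support, leveraging the fact that sparsity of $S^*$ forces a separation in the dual which the first-order method can amplify.
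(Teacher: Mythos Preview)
Your proposal captures the broad outline—first-order methods over $S_k^V$, sparse dual certificates, geometric outer loop—but it has two concrete gaps that would prevent the claimed bounds from going through.

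\textbf{Subgradient magnitude and truncation.} You assert that ``a standard mirror-descent analysis'' over $S_k^V$ with entropic regularizer yields the needed accuracy in $\poly(k)$ rounds. This is where the argument fails. The entropic diameter on $S_k^V$ is indeed $\widetilde{O}(k)$, but the iteration bound for mirror descent also depends on the dual-norm (here $\ell_\infty$) bound on the subgradients. Subgradients of the \lovasz extension have coordinates $f(\pi[i])-f(\pi[i-1])$, and the most negative coordinate can be as small as $f^*-(n-1)\|u_f\|_\infty$, i.e., $\Omega(n\|u_f\|_\infty)$ in magnitude. Plugging $L=\Theta(n\|u_f\|_\infty)$ into the standard bound forces $\Omega(n^2)$ iterations, destroying the $\poly(k)$ depth. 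The paper's fix is to \emph{truncate} each subgradient coordinate at $-s$ with $s=k\|u_f\|_\infty+\phi$ for a lower bound $-\phi\le f^*$, and then argue (\Cref{claim:truncated_subgr}) that the truncated vector still dominates $f$ on all $k$-sparse sets, so averaging truncated subgradients still yields a valid $(\delta,k)$ dual certificate. This truncation is the central new idea; without it (or an equivalent mechanism), the depth bound simply does not hold.

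\textbf{Arc-finding and the $n^2$ in the runtime.} You model each outer phase as producing a certificate that directly commits coordinates (contract or discard), and account $\widetilde{O}(n\cdot\poly(k)\cdot\EO)$ work per round. But this misses the harder case: when $f^*\ge -\|u_{f^{\sharp\rfam}}\|_\infty/4$, no coordinate can be contracted directly, and the certificate is too weak to discard anything either. In the paper, progress in this regime comes from \emph{arc-finding}: for every element $p$ with large marginal one runs dimensionality reduction on the contracted function $f^{\sharp\rfam}_{p^\downarrow}$, in parallel over all $p$. This is what produces the $n^2$ factor in the $\EO$ term of the theorem---your accounting yields only $n$, which contradicts the stated runtime and signals that the mechanism you have in mind cannot be the right one. (Relatedly, your mention of ``random permutations'' is off: the parallel algorithm in \Cref{thm:main_parallel} is deterministic; full subgradients are computed exactly by sorting the iterate's coordinates.)
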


When the sparsity $k = \widetilde{O}(1)$, the parallel depth in \Cref{thm:main_parallel} is $\widetilde{O}(1)$. To the best of our knowledge, this is the first nearly-constant parallel depth result for SFM, 
beyond the trivial $n^\sparsity$-query algorithm that queries all $\sparsity$-sparse sets in a single round (which does not have polynomial query complexity whenever $\sparsity = \omega(1)$).

Our result is in stark contrast to the best known weakly-polynomial parallel depth of $\widetilde{O}(n)$ for general SFM \cite{LSW15}. 
It is important to emphasize here that $\widetilde{O}(1)$-sparsity is also {\em necessary} for obtaining a nearly-constant parallel depth.
The work of \cite{CCK21} implies that $\widetilde{\Omega}(k^{1/3})$ parallel depth is required for any weakly-polynomial algorithm for $\sparsity$-sparse SFM.

\paragraph{Query complexity for $k$-sparse SFM.} While the algorithm in \Cref{thm:main_parallel} achieves a nearly-constant parallel depth when the sparsity is nearly-constant, even in this setting its query complexity is $\Omega(n^2)$. In light of the question of designing SFM algorithms with nearly-linear query complexity, 
our second main result is a pair of algorithms which improve the weakly- and strongly-polynomial query complexities for $\sparsity$-sparse SFM.  (It remains open as to whether the parallel depth of strongly-polynomial $\sparsity$-sparse SFM can be similarly improved.)

\begin{restatable}[Weakly-polynomial $k$-sparse SFM]{theorem}{weaklysequential}
\label{thm:main_sequential}
There is a randomized algorithm that outputs an $\epsilon$-approximate minimizer
for $k$-sparse SFM whp.\ in $\widetilde{O}((n \cdot \poly(k) \cdot \EO + \poly(n)) \log(|f|/\epsilon))$ time. 
\end{restatable}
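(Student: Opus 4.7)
The plan is to design a first-order method whose per-iteration cost is only $O(n \cdot \EO)$ and whose iteration count is $\poly(k) \log(|f|/\eps)$, so that the two multiply to the stated runtime. Concretely, I would run mirror descent (equivalently, follow-the-regularized-leader) with an entropy-type regularizer on $S_k^V = \Delta_k^V \cap [0,1]^V$ applied to the \lovasz extension $\ext$ of $f$. Each iteration needs one subgradient of $\ext$, which by the standard greedy chain characterization costs $O(n)$ evaluation-oracle calls plus $\otilde(n)$ time to sort the current iterate's coordinates, in contrast to the $n^2$ per-round cost of the parallel algorithm of \Cref{thm:main_parallel}.

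The convergence analysis is the heart of the argument and would reuse the \emph{sparse dual certificate} framework developed for the parallel theorem. Because the intended minimizer $\indicVec{S^*}$ is $k$-sparse, the entropy regularizer has range $\otilde(k)$ on $S_k^V$, so standard FTRL regret bounds show that $\poly(k)$ first-order steps suffice to extract an approximate dual certificate of $k$-sparse optimality from the averaged iterates -- modulo an outer binary search over a threshold $\lambda \approx f^{*}$ that contributes the $\log(|f|/\eps)$ factor. Randomness enters in two places: tie-breaking in the subgradient oracle of $\ext$, so that the certificate estimator is unbiased, and a final threshold rounding of the continuous iterate into an integer set, whose success probability can be boosted to high probability by standard repetition and a union bound.

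Once the continuous phase has identified a candidate support $U \subseteq V$ of size $\poly(k)$ that provably contains a near-optimal $k$-sparse minimizer, I would finish by running an exact SFM routine restricted to $U$, which uses only $\poly(k)$ additional queries and $\poly(n)$ deterministic time for bookkeeping and threshold rounding -- this accounts for the additive $\poly(n) \log(|f|/\eps)$ term in the stated runtime.

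The hard part will be the convergence argument itself: showing that the iteration count depends genuinely only polynomially on $k$ (and only logarithmically on $n$, $|f|$, and $1/\eps$) requires the dual-certificate machinery to recover a \emph{sparse} witness from subgradients that are themselves dense vectors in $\R^V$, and to handle the worst-case behavior where the iterate has large mass on elements outside the true support. A secondary obstacle is propagating errors coherently across the three stages -- continuous optimization, identification of $U$, and exact reduction on $U$ -- so that a high-probability guarantee survives the composition, which I expect to handle by a careful choice of thresholds and a union bound over the $O(\log(|f|/\eps))$ outer rounds.
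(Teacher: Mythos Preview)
Your proposal has a genuine gap in the convergence analysis, and the overall algorithmic structure diverges from what actually makes the result go through.

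\textbf{The iteration count claim fails.} You assert that because the entropy regularizer has range $\otilde(k)$ on $S_k^V$, standard FTRL regret bounds give $\poly(k)$ iterations. But the regret bound also depends on the dual norm of the (sub)gradients, and for the \lovasz extension the subgradient entries can be as negative as $f^* - (n-1)\|u_f\|_\infty$, so $\|g_{x_t}\|_\infty$ can be $\Omega(n\|u_f\|_\infty)$. Plugging this into the FTRL bound gives $\Omega(n^2)$ iterations, not $\poly(k)$; with $O(n)$ queries per full subgradient that is $\Omega(n^3)$ queries. The paper calls this out explicitly (\Cref{sec:approach:sequential}) and resolves it not by computing full subgradients but by running \emph{stochastic} FTRL with $1$-sparse unbiased gradient estimates (the \textsf{vSampling} scheme, \Cref{defn:vSampling}), each costing only $\otilde(1)$ queries, combined with a \emph{local-norm} analysis (\Cref{lem:eff_vsampling}, \Cref{lem:stMD_guarantee}) so that the variance term scales like $(k\|u_f\|_\infty+\phi)(\|u_f\|_1+\phi)$ rather than $\|g\|_\infty^2$. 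The parallel algorithm avoids the same obstacle differently, by \emph{truncating} subgradients; but truncation is incompatible with the move-to-front arc deductions needed here, which is why the sequential algorithm requires this separate machinery.

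\textbf{The outer structure is also off.} The $\log(|f|/\epsilon)$ factor does not come from a binary search over $\lambda\approx f^*$ that somehow converts a $1/\sqrt{m}$ first-order rate into logarithmic accuracy dependence; it comes from the ring-family scaling framework (\Cref{sec:framework}). Each phase runs the optimization method only to accuracy $\|u_{f^{\sharp\rfam}}\|_\infty/\poly(k)$, uses the resulting $(\delta,k)$ dual certificate either to contract an element (\textsf{Dimensionality-Reduction}) or to deduce arcs via move-to-front (\textsf{Arc-Finding}, \Cref{lem:Delta_thresholds}), and this geometrically decreases $\|u_{f^{\sharp\rfam}}\|_\infty$. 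Your proposed step of ``identifying a candidate support $U\subseteq V$ of size $\poly(k)$'' from the fractional iterate has no analogue in the paper and no obvious justification: an approximate minimizer of $\hat f$ over $S_k^V$ can have mass spread across $\Omega(n)$ coordinates, and a dual certificate by itself only pins down one element at a time (\Cref{claim:certif_dim_red}), and only when $|f^*|$ is large relative to $\|u_f\|_\infty$. When $f^*$ is close to $0$ you need arc-finding, which your plan omits entirely.
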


\begin{restatable}[Strongly-polynomial $k$-sparse SFM]{theorem}{stronglysequential}
\label{thm:main_strongly_sequential}
There is a randomized algorithm that outputs an exact minimizer for $k$-sparse SFM whp.\ in $\widetilde{O}(n \cdot \poly(k) \cdot \EO + \poly(n))$ time. 
\end{restatable}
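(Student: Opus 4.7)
The plan is to split the task into two phases. Phase~I uses the first-order framework underlying Theorem~\ref{thm:main_sequential} to identify a small ``support'' set $V^\star \subseteq V$ of size $\poly(k)$ that is guaranteed, with high probability, to contain some exact minimizer of $f$. Phase~II then runs an off-the-shelf strongly-polynomial SFM algorithm on $f$ restricted to $V^\star$ to extract the minimizer exactly.

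For Phase~I, I would re-run the mirror descent / follow-the-regularized-leader procedure of Theorem~\ref{thm:main_sequential}, but instrument it to track the \emph{sparse dual certificate} rather than iterate to a numerical precision $\epsilon$. Because every $k$-sparse minimizer of $f$ corresponds to a sparse dual certificate whose support has size $\poly(k)$, the first-order iterates should concentrate on a $\poly(k)$-sized active subset of the ground set within $\widetilde{O}(\poly(k))$ iterations. Running the method only until this active set stabilizes --- a purely combinatorial stopping criterion --- removes the $\log(|f|/\epsilon)$ factor: the number of iterations is bounded by the number of distinct active sets, which is at most $\poly(k) \cdot \log n$. Each iteration performs $\widetilde{O}(n \cdot \poly(k))$ oracle queries, yielding a total of $\widetilde{O}(n \cdot \poly(k) \cdot \EO)$ for this phase, after which a randomized rounding step of the final fractional iterate produces $V^\star$.

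For Phase~II, since $|V^\star| = \poly(k)$, applying any strongly-polynomial SFM algorithm --- for instance the polynomial-time $\widetilde{O}(|V^\star|^3)$-query algorithms of \cite{LSW15,DVZ21,J22} --- on $f|_{V^\star}$ takes $\poly(k)$ queries and $\poly(k)$ additional runtime. Combined with Phase~I and the $\poly(n)$ runtime needed for bookkeeping (e.g., restricting $f$ to $V^\star$), this yields the $\widetilde{O}(n \cdot \poly(k) \cdot \EO) + \poly(n)$ bound of the theorem. High-probability correctness follows from a union bound over the random choices in Phase~I together with the deterministic correctness of the Phase~II subroutine.

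The main obstacle is justifying the combinatorial stopping rule for the first-order method in Phase~I. Establishing that the active set of the sparse dual stabilizes within $\widetilde{O}(\poly(k))$ iterations \emph{independently of} $|f|$ demands a careful analysis of mirror descent with a sparsity-inducing regularizer on the \lovasz extension of a submodular function admitting a $k$-sparse minimizer. In particular, one needs to show that sparse dual certificates for distinct candidate minimizers are well-separated in the dual geometry, so that once mirror descent commits to one active set it does not oscillate between others. This separation plays the role of a strongly-polynomial analogue of $\log(1/\epsilon)$-free convergence for hard-thresholded first-order methods, and is what ultimately allows the $\log|f|$ factor to be eliminated from the query complexity.
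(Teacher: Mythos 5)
Your two-phase plan --- identify a $\poly(k)$-sized support $V^\star$ containing a minimizer, then run a black-box strongly-polynomial SFM algorithm on $f|_{V^\star}$ --- is appealing but contains a genuine gap that you yourself flag: the claim that the first-order method's active set stabilizes within $\widetilde{O}(\poly(k))$ iterations \emph{independently of} $|f|$. Nothing in the machinery of Theorem~\ref{thm:main_sequential} (or of mirror descent / FTRL on the \lovasz extension in general) supports this. Mirror descent on the \lovasz extension has no separation or ``hard-thresholding'' property that makes it lock onto the support of the optimal solution in a scale-free number of iterations; the convergence to accuracy $\delta$ genuinely takes $\poly(1/\delta)$ iterations, and to distinguish the true minimizer from near-minimizers one would have to drive $\delta$ down to a scale depending on $|f|$. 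There is also no reason the pool of ``candidate'' elements collapses to $\poly(k)$ elements --- the algorithm discards elements only when they accumulate more than $k$ arcs or receive an arc to a discarded element, which has no $\poly(k)$ bound on the number of \emph{surviving} elements after any fixed number of rounds. So Phase~I, as stated, would not terminate with the right $V^\star$ in the claimed time without precisely the argument you admit is missing.

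The paper takes a different and more direct route. It does not bound the number of outer iterations by $\poly(k)$ at all --- the number of scale-halving rounds of the while loop is still $\log(|f|/\epsilon)$ in the worst case, and with $\epsilon=0$ the while loop is bounded only by $O(nk)$. What makes the algorithm strongly polynomial is an \emph{amortized charging argument}: the $\widetilde{O}\bigl(\tfrac{\|u_{f^{\sharp\rfam}}\|_1}{\|u_{f^{\sharp\rfam}}\|_\infty}\poly(k)\bigr)$ query cost of each round is distributed to elements $p\in V$ in proportion to $(u_{f^{\sharp\rfam}})_p/\|u_{f^{\sharp\rfam}}\|_\infty$. The key observations are (i) for a fixed $p$, the marginal $(u_{f^{\sharp\rfam}})_p$ changes at most $k$ times over the whole run, since each change corresponds to a new arc from $p$ and $p$ is discarded after $k$ arcs; and (ii) within any maximal stretch of rounds where $(u_{f^{\sharp\rfam}})_p$ is constant, $\|u_{f^{\sharp\rfam}}\|_\infty$ halves each round, so the total charge to $p$ over that stretch is a geometric series bounded by $\widetilde{O}(\poly(k))$. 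Summing over $k$ stretches per element and $n$ elements gives $\widetilde{O}(n\cdot\poly(k))$ queries with no $|f|$ dependence. This is not a ``fewer iterations'' argument but a ``cheap iterations on average'' argument, and it sidesteps entirely the need to establish any scale-free stabilization of first-order iterates.
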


We include both theorems above because the $\poly(k)$ in \Cref{thm:main_sequential} is slightly better than that in \Cref{thm:main_strongly_sequential} (see \Cref{subs_stcorrectness}). 
The algorithms in \Cref{thm:main_sequential,thm:main_strongly_sequential} have nearly-linear query complexities when the sparsity $k = \widetilde{O}(1)$. Previously, the only nearly-linear weakly-polynomial query complexity results for SFM were obtained when the submodular function $f$ can be decomposed as 
$f(S) = \sum_{i} f_i(S)$ and each $f_i$ depends only on $\widetilde{O}(1)$ coordinates \cite{AKMSV21,DJLPY22}. However, this is different and the techniques for solving it seem tailored to its structure.

Our algorithms for Theorems \ref{thm:main_parallel}-\ref{thm:main_strongly_sequential} depart from the use of cutting plane methods and do not rely on linear system solves as a sub-procedure. In this sense, they are more combinatorial than state-of-the-art 
weakly-polynomial time \cite{LSW15,JLSW20} and strongly-polynomial time SFM algorithms \cite{LSW15,DVZ21,J22}. 
Somewhat surprisingly, our algorithms combine first-order methods, which have been primarily used for pseudo-polynomial SFM algorithms (e.g., \cite{CLSW17,ALS20}), and arc finding, a technique central to many strongly-polynomial SFM algorithms (e.g., \cite{LSW15,DVZ21}), to obtain very efficient weakly- and strongly-polynomial time algorithms.
Previous combination of these two techniques only appeared in \cite{DVZ21}, but the resulting algorithm has
query complexity and parallel depth at least a factor of $n^2$ larger than the state-of-the-art
algorithms based on cutting plane methods.
The proofs of Theorems \ref{thm:main_sequential} and \ref{thm:main_strongly_sequential} additionally invoke various sampling techniques,
which crucially allows us to save the additional factor of $n$ from querying an entire subgradient of the \lovasz extension in each iteration.  

\subsection{Related Work}
\label{sec:intro:related_work}

SFM is a central combinatorial optimization problem with extensive applications.
The problem of maximizing a submodular function has also been widely studied, but is very different and has seemingly different structure, algorithms, and history (see, e.g., \cite{KG14} for a survey on this topic).

\paragraph{Strongly-, weakly-, and pseudo- polynomial algorithms for SFM.} As discussed in the intro, a fundamental result for SFM is that it can be solved efficiently, in all three regimes of weakly-, strongly-, and pseudo-polynomial. The first weakly- and strongly-polynomial time SFM algorithms were given in the seminal work of Gr\"otschel, Lov\'asz, and Schrijver \cite{GLS81,GLS84,GLS88}.
The first pseudo-polynomial algorithm for SFM was given in a seminal work of Cunningham \cite{C85}. 
Since then, there has been a long line of work on designing better algorithms for SFM in all three regimes \cite{S00,IFF01,FI03,I03,V03,O09,IO09,CJK14,LJ15,LSW15,CLSW17,DVZ21}. The state-of-the-art algorithms for these regimes are shown in  \Cref{table:sfm_algo}.

\begin{table}[htp!]
\centering
\begin{tabular}{|c|c|c|c|}
\hline 
\textbf{Paper} & 
\textbf{Year} & 
\textbf{Running Times} & 
\textbf{Remarks}\tabularnewline
\hline 
\hline 
\multirow{2}{*}{\cite{JLSW20}}  & \multirow{2}{*}{2020} & $O(n^{2}\log nM\cdot \EO+n^{3}\log nM)$ & current best weakly \& \\
 & &
$O(n^{3}\log^{2}n\cdot\EO+n^{4}\log^2 n)$ & strongly runtime
\tabularnewline
\hline

\multirow{2}{*}{\cite{ALS20}} & \multirow{2}{*}{2020} & $\widetilde{O}(n M^2 \cdot \EO + \poly(n))$ & current best pseudo-poly \\
& & $\widetilde{O}(k M^2 \cdot \EO + \poly(n))$  &  current best sparse pseudo-poly
\tabularnewline \hline
\multirow{2}{*}{\cite{J22}} & \multirow{2}{*}{2021} &
$O(n^3 \log\log n/\log n \cdot \EO + \poly(n))$ & current best strongly \\
& & $O(n^2 \log n \cdot \EO + \exp(n))$ & query complexity \\ 
\hline
\end{tabular}
\protect\caption{\label{table:sfm_algo} State-of-the-art weakly-, strongly-, and pseudo-polynomial algorithms for submodular function minimization. $k$ is the sparsity and parameter $M = |f|/\epsilon$.
}
\end{table}

\paragraph{Parallel SFM.}
For the parallel complexity of SFM discussed earlier in the intro, the current best weakly-polynomial algorithm has parallel depth $O(n \log nM)$ \cite{LSW15} and the current best strongly-polynomial algorithms \cite{J22} have parallel depth $O(n \log n)$ (with exponential runtime) or $O(n^2 \log \log n/ \log n)$ (with polynomial runtime). 
In concurrent work \cite{CGJS23}, a superset of the authors give a $\otilde(n^{1/3}/\epsilon^{2/3})$-round $\poly(n)$-time algorithm for obtaining an $\epsilon$-approximate minimizer, and a $2$-round $n^{O(M)}$-time algorithm for computing an exact minimizer. 
As discussed in the intro, lower bounds for parallel SFM have also been studied recently (see \Cref{table:sfm_parallel_lb}). 

\begin{table}[htp!]
\centering
\begin{tabular}{|c|c|c|c|}
\hline 
\textbf{Paper} & \textbf{Year} & \textbf{Parallel Depth} & {\bf Accuracy}
\tabularnewline
\hline 
\cite{BS20} & 2020 & $\Omega(\log n / \log \log n)$ & exact \\ 
\hline
\cite{CCK21} & 2021 & $\widetilde{\Omega}(n^{1/3})$ & $|f|/\poly(n)$ \\ 
\hline
\cite{CGJS22} & 2022 & $\Omega(n/ \log n)$ & exact \\ 
\hline
\end{tabular}
\protect\caption{\label{table:sfm_parallel_lb} Parallel depth lower bounds for query-efficient SFM. In the ``Accuracy'' column, ``exact'' means the algorithm is required to compute an exact minimizer, and ``$|f|/\poly(n)$'' means the algorithm is allowed to output any approximate minimizer with an additive accuracy of $|f|/\poly(n)$. }

\end{table}

\paragraph{Structured SFM.}
Given the aforementioned nearly $n$-factor gap between the state-of-the-art query complexity upper and lower bounds for SFM, there have been 
exciting recent results on improving the query complexity of SFM assuming more fine-grained structures of the submodular functions. In particular, for the problem of decomposable SFM discussed prior to \Cref{subsec:challenges_motivations},
it is known that $f$ can be minimized in weakly-polynomial time using $\widetilde{O}(n)$ total queries to the evaluation oracles of each individual $f_i$ \cite{AKMSV21, DJLPY22}.

\subsection{Paper Organization}
We start by providing an overview of our approach to obtaining our results in \Cref{sec:approach}, followed by preliminaries in \Cref{sec:prelim}. 
Our unified algorithmic framework and key algorithmic tools for obtaining both our parallel and sequential results are introduced in \Cref{sec:framework}. Our parallel results are presented in \Cref{sec:parallel} and our sequential results are obtained in \Cref{sec:sequential}.

\section{Our Approach}
\label{sec:approach}

Here we provide an overview of our approach towards proving Theorems \ref{thm:main_parallel}-\ref{thm:main_strongly_sequential}. We first give some context and motivation, and then we cover the key components of our approach in Sections \ref{sec:approach:framework}-\ref{sec:approach:sequential}.

To situate our approach, recall that previous state-of-the-art weakly- and strongly-polynomial time SFM algorithms all apply the general continuous optimization tool of \emph{cutting plane methods} \cite{L65,N65,S77,YN76,K80,KTE88,NN89,V89,BV04,LSW15,JLSW20}. Cutting plane methods are known to compute $\epsilon$-approximate minimizers of bounded convex functions on $\mathbb{R}^n$ in $\otilde(n \log(1/\epsilon))$ iterations where each iteration consists of a subgradient computation, which typically takes $\otilde(1)$ depth, $O(n)$ queries to the evaluation oracle of $f$, and $\Omega(n^2)$ additional work \cite{JLSW20} involving linear algebraic operations such as a linear system solve.

In this paper we seek to improve upon these methods for $\sparsity$-sparse SFM both in terms of performance and to avoid general linear algebraic primitives (to obtain, in some sense, a more combinatorial algorithm). 
However, as discussed in \Cref{subsec:challenges_motivations}, it is unclear how to substantially improve cutting plane methods just using the assumption that their is a sparse optimal solution.

Consequently, 
we depart from previous state-of-the-art weakly- and strongly-polynomial SFM algorithms and instead use first-order methods\footnote{Technically speaking, cutting-plane methods also only use first-order information. However, following the conventions of the optimization literature, we do not refer to cutting plane methods as first-order methods.} such as mirror descent (\Cref{MirrorDescent}) and (stochastic) follow-the-regularized-leader (\Cref{st_FTRL}) to minimize the \lovasz extension. 
These methods have performance depending more on problem geometry, e.g., the domains $B_\infty^V$ versus $S_k^V$, than cutting plane methods. 
Also, implementing them often does not require linear system solves and therefore they typically have much smaller iteration costs.

Unfortunately, these desirable features of first-order methods have a cost. 
In contrast to cutting plane methods, when applied to non-smooth convex objectives like the \lovasz extension, their convergence rate depends polynomially on the accuracy rather than polylogarithmically. 
Therefore, it is natural to use such methods for pseudo-polynomial SFM algorithms \cite{CLSW17,ALS20}, but less clear how to leverage them to obtain improved weakly- or strongly- polynomial SFM algorithms.

Fortunately, recent advances in weakly- and strongly-polynomial SFM algorithms provide hope for overcoming this limitation. Work of \cite{LSW15,DVZ21} provide different ways 
to incorporate learned \emph{precedence} 
constraints, i.e., 
if an element is in a minimizer then what other elements must also be in that minimizer, to reduce the scale of the problem. For example, \cite{DVZ21} showed that
it suffices to solve SFM approximately to a relative accuracy of $O(1/n^3)$, in a primal-dual sense, repeatedly to obtain a strongly-polynomial algorithm for SFM.

Despite the above hope for improving $\sparsity$-sparse SFM via first-order methods, there are a number of natural hurdles in the way. For example, the $O(1/n^3)$-error requirement in \cite{DVZ21} is prohibitively expensive for first-order methods to outperform cutting plane methods. 
Additionally, learning and updating precedence constraints need to be made sufficiently efficient. 

Nevertheless, we are able to follow this broad approach by introducing and leveraging a central concept of this paper we call {\em sparse dual certificates} (see \Cref{subsec:cert}). 
In particular, we demonstrate how to carefully apply first-order methods to $1/\poly(\sparsity)$-accuracy to compute sparse dual certificates and, building upon \cite{DVZ21}, how these certificates can be used to efficiently deduce precedence constraints. Our parallel and sequential algorithms differ in their specific implementations of these strategies (see Sections \ref{sec:parallel} and \ref{sec:sequential} respectively). We believe the notion of sparse dual certificates and our algorithmic techniques for computing and using them for $\sparsity$-sparse SFM might have broader applications to improving weakly- or strongly- polynomial time SFM algorithms.

\paragraph{Section organization.} To illustrate our approach and our key insights, we subdivide the remainder of this section. In \Cref{sec:approach:framework}, we provide the general framework we use to iteratively decrease the scale of the $\sparsity$-sparse SFM problem. \Cref{sec:approach:parallel} and \Cref{sec:approach:sequential}, we provide the key ideas in our parallel and sequential algorithms respectively.

\subsection{Framework}
\label{sec:approach:framework}

Building on a long line of work \cite{IFF01,IO09,LSW15} (and in particular, \cite{DVZ21}), our algorithms for minimizing a submodular function $f : 2^V \rightarrow \R$ works  by maintaining a set of precedence constraints indicating elements that must or must not be in any $\sparsity$-sparse minimizer, as well as for each $p \in V$ a set of elements $S_p$ that must be in any $\sparsity$-sparse minimizer $S^*$ containing $p$.
We call these precedence constraints \emph{arc constraints}\footnote{Our definition of arc constraints is only with respect to $\sparsity$-sparse minimizers and is therefore different from the standard one in the literature. See \Cref{sec:framework:extension} for more details.} and their collection a {\em ring family}.

Given these arc constraints, we consider an induced {\em submodular extension} $f^{\sharp \rfam}$ consistent with the ring family. 
$f^{\sharp \rfam}$ is essentially the complement of a submodular extension studied in \cite{DVZ21}; it is crucial that we work with $f^{\sharp \rfam}$ since sparsity is not preserved under complements. 
The extension $f^{\sharp \rfam}$ has many desirable properties. For example, minimizing $f^{\sharp \rfam}$ suffices for minimizing $f$ and any arc constraints learned for $f^{\sharp \rfam}$ apply to $f$.
Beyond consistency and submodularity, the key property we use about $f^{\sharp \rfam}$ is that the marginal vector\footnote{The formal notation we define and use for the marginal vector in \Cref{sec:prelim} and the rest of this paper is $u_{f^{\sharp \rfam}}$. Here we drop the subscript and use $u$ instead for simplicity.} $u \in \mathbb{R}^V$ defined as $u_p \defeq f^{\sharp \rfam}(\{p\}) -f^{\sharp \rfam}(\emptyset)$ for any coordinate $p \in V$ does not increase as we add arc constraints. (See \Cref{sec:framework:extension} for more details.)

By maintaining the ring family and the extension $f^{\sharp \rfam}$, and leveraging their properties, $\sparsity$-sparse SFM reduces to the problem of learning new arc constraints so that we can either 
\begin{enumerate}
\item decrease the scale of $\|u\|_\infty$ by more than a constant factor, or 
\item learn enough arc constraints so that the $\sparsity$-sparse minimizer is clear.
\end{enumerate}
In particular, if $\|u\|_\infty \leq \varepsilon / |V|$, then due to submodularity the largest set consistent with every arc constraint will be an $\varepsilon$-approximate minimizer for the original submodular function (see \Cref{claim:eps_approx_minimizer}).
Note how $\sparsity$-sparsity helps for our purposes: if the set of elements $S_p$ that must be in every $\sparsity$-sparse minimizer containing $p$ has more than $k$ elements, then $p$ cannot be in any $\sparsity$-sparse minimizer and can therefore be discarded.
This allows us to maintain at most $\sparsity$ arc constraints from any element $p \in V$, which significantly decreases the cost of manipulating the arc constraints and the submodular extension $f^{\sharp \rfam}$.

In both our parallel and sequential settings, we use $\norm{u}_\infty$ as a potential function and design efficient parallel and sequential subprocedures to find arc constraints 
to decrease $\norm{u}_\infty$ by a constant factor. Each setting has its distinct challenges and  our techniques differ correspondingly. 
However, there is one common technique underlying these two different implementations, based on the notion of a \emph{sparse dual certificate} (see \Cref{subsec:cert} for details).

\paragraph{Sparse dual certificates.} 
Sparse dual certificates are generalizations of standard dual solutions to SFM \cite{E70} that better capture the sparsity assumptions on the minimizers of the submodular function (see \Cref{def:certificate} for definition). 
In our framework, sparse dual certificates bridge the gap between the task of finding arc constraints and the pseudo-polynomial convergence rate of first-order methods. In particular, we show how to use sparse dual certificates to deduce arc constraints (see \Cref{subsec:cert}). We also develop various algorithmic techniques to compute these certificates by running first-order methods up to only $1/\poly(k)$ accuracy (see Sections \ref{subs_MD} and \ref{subs_stMD} respectively for our parallel and sequential algorithms for computing sparse dual certificates).

\subsection{Parallel Algorithm}
\label{sec:approach:parallel}

To motivate our parallel algorithm, consider minimizing the \lovasz extension $\hat{f}$ of the induced function $f^{\sharp \rfam} : 2^V \rightarrow \R$ 
with a $\sparsity$-sparse minimizer and let $f^* \defeq \min_{S \subseteq V} f^{\sharp \rfam}(S) = \min_{S \subseteq V} f(S)$. 
As discussed above, it suffices to learn arc constraints so that we can decrease $\norm{u}_\infty$ by a constant factor after updating the ring family. We may assume that $\min_{p \in V} u_p \geq 0$ as by submodularity adding any $p$ with $u_p < 0$ to any set decreases its value and therefore $p$ must be in every minimizer.

As a warm-up for this goal, perhaps the first natural question is: under these assumptions 
how efficiently can we compute a $\delta \norm{u}_\infty$-approximate minimizer for a given $\delta = 1/\poly(k)$? The question of deducing arc constraints seems harder than this problem 
since it involves 
proving something about all $\sparsity$-sparse minimizers at that accuracy threshold. 
For this warm-up question, let us even assume for now that $f^* \geq -\Omega(\norm{u}_\infty)$, as the problem is in some sense easier otherwise and will be addressed towards the end of this subsection. 

A natural approach to this warm-up problem, as alluded to earlier, is to apply standard first-order methods such as mirror descent to the \lovasz extension $\hat{f}$ of $f^{\sharp \rfam}$ over the domain $S_k^V$. By submodularity, $u$ entrywise upper bounds the subgradients of $\hat{f}$. If somehow the subgradients were also entrywise lower bounded by $-\norm{u}_\infty$, then 
standard analysis of mirror descent with an entropy regularizer (see Theorem 4.2 of \cite{B15}) applied to $\hat{f}$ over $S_k^V$ would compute an $\delta \norm{u}_\infty$-approximate minimizer in $\otilde(\delta^{-2})$ iterations. Furthermore, since each iteration of this method can be implemented in $O(1)$ depth, this would yield a $\otilde(\delta^{-2})$ depth algorithm as desired.

Unfortunately, 
it is not necessarily the case that every subgradient of $\hat{f}$ is entrywise lower bounded in magnitude by $-\norm{u}_\infty$. 
In fact, its most negative entry can be as negative as $f^* - (n-1) \norm{u}_\infty$, ruling out showing that mirror descent converges in $\otilde(\poly(k,\delta^{-1}))$ iterations. 

To overcome this issue, we show that the structure of $\sparsity$-sparse solutions allows us to \emph{truncate} subgradients. We prove that if we run mirror descent methods with every subgradient coordinate of value $\leq f^* - k \norm u_{\infty}$ set to $f^* - k \norm u_{\infty}$, then this still approximately minimizes the \lovasz extension $\hat{f}$ and computes sparse dual certificates (see \Cref{subs_MD}). Running mirror descent with these truncated subgradients yields a deterministic algorithm which
computes a $\delta \norm u_{\infty}$-approximate minimizer in 
$\otilde(\poly(k) / \delta^2) $ depth and 
$\otilde(n \cdot \poly(k) / \delta^2)$
 evaluation oracle queries. 

The solution to this warm-up problem is the key ingredient in our parallel algorithm. In particular, assuming $f^* \leq -\|u\|_\infty$, we show that the sparse dual certificate obtained by running the warm-up algorithm over $S_V^{k + 1}$ with accuracy $O(\|u\|_\infty/k)$
suffices to conclude an element that must be in every $\sparsity$-sparse minimizer, i.e., a dimensionality reduction.
As dimensionality reduction can occur at most $\sparsity$ times, 
this gives a $\otilde(\poly(k))$-depth $\otilde(n \cdot \poly(k))$-query algorithm.

On the other hand, when $f^* \geq - \norm u_{\infty}$, then we consider each of the induced submodular functions $f_p$ where an element $p$ is always included and run the same algorithm on each such function. Note that each $f_p$, once shifted to evaluate $0$ at the new emptyset (or the singleton $\{p\}$), has minimum value $-\Omega(u_p)$.
Consequently, when this is done for $p$ with $u_p$ near $\norm{u}_\infty$, the procedure finds an element which must be in any $\sparsity$-sparse minimizer containing $p$. 
Importantly, this can be done in parallel for each individual $p$! This conveys the main ideas behind the parallel algorithm. See \Cref{sec:parallel} for details.

\subsection{Sequential Algorithm}
\label{sec:approach:sequential}

In the previous section we outlined the main ideas of our parallel algorithm in \Cref{thm:main_parallel}.
Unfortunately, that algorithm has a rather high query complexity. In every round of decreasing $\norm{u}_\infty$, the algorithm might solve $n$ different induced SFM problems, corresponding to the inclusion of each element $p \in V$, causing the query complexity to scale quadratic in $n$ instead of linear.

In the literature on using ring families for weakly- and strongly-polynomial SFM, there is a standard technique for alleviating the need to apply the algorithm to $n$ different SFM problems to deduce arcs. In \cite{LSW15,DVZ21} and early SFM algorithms (see \cite{LSW15} for a discussion of the history), the algorithm 
obtains a suitable dual certificate for the original function. This dual certificate is then modified by moving individual elements to the start of each permutation, and it is argued that this modification can be used to deduce arcs. In other words, rather than running $n$ optimization methods to deduce $n$ dual certificates, these methods deduce one set of dual certificates and consider $n$ different modifications of it. 

In our sequential algorithm we follow a similar approach, but it brings about a variety of challenges, each of which requires algorithmic and analytic insights to overcome. The first challenge is that truncated subgradients, which are used in our parallel algorithm, do not seem amenable to this technique; it is unclear how to deduce arcs just by moving elements to the front after the truncation, 
which may lose critical information that makes this work. \cite{LSW15,DVZ21} consider the elements of the dual certificate that decrease significantly when moving a certain element to the start of each permutation. However, truncation does not seem to allow for a similar approach, as all components that are negative past a threshold are truncated to the same value.

To overcome this challenge, we provide a first-order method for computing $\epsilon$-approximate minimizers (and their associated sparse dual certificates) using true  (rather than truncated) subgradients. As discussed in \Cref{sec:approach:parallel}, this is difficult as the entries of the subgradient can vary by $\Omega(n \norm u_{\infty})$. Correspondingly, standard analysis of iterative first-order methods, e.g., mirror descent and FTRL (Follow-the-Regularized-Leader), require $\Omega(n^2)$  iterations, which would naively make a prohibitive $\Omega(n^3)$ queries! 
It is therefore imperative that we use a different technique (other than truncation as in the parallel setting) to either reduce the number of iterations or the cost per iteration; we do both. 
In particular, we use stochastic FTRL\footnote{We choose stochastic FTRL rather than stochastic mirror descent to facilitate the attainment of with high probability success guarantees.} (Follow-the-Regularized-Leader) where in each iteration we sample a random $1$-sparse unbiased estimator of the subgradient. We show how this can be implemented using $\otilde(1)$ evaluation queries per iteration and that the total number of iterations is suitably bounded.

Making the above approach work requires a number of insights. First, the number of iterations of stochastic FTRL is straightforwardly boundable in terms of the square of the $\ell_\infty$ norm of the stochastic estimates of the subgradient (analogous to as it was done for mirror descent in the parallel setting). However, unfortunately any sampling scheme in the worst case could have an $\ell_\infty$-norm of $\Omega(n \norm{u}_\infty)$, again leading to $\Omega(n^2)$ iterations.  
To get around this, we instead perform a more fine-grained analysis of the convergence of FTRL in terms of ``local norms'' (see \Cref{subs_stMD} for details). 
This is a known optimization method analysis technique and our analysis is inspired from and perhaps most closely resembles \cite{CJST19}; this technique was not used in  previous work on SFM that uses sampling \cite{CLSW17,HRRS19,ALS20}.

The next challenge is to actually implement sampling using $\otilde(1)$ queries per iteration so that the local norms of the samples are suitably small. Sampling $i \in V$ with probability proportional to $|(g_{x_t})_i|$ and then outputting $\mathsf{sign}(g_{x_t})_i \cdot \norm{g_{x_t}}_1$ would have the desired local norm bound. Additionally, sampling by this is essentially what is done in some of the sampling-based SFM methods \cite{CLSW17,HRRS19,ALS20} (albeit for a different norm analysis particularly relevant for pseudopolynomial SFM algorithms). However, these papers implement this sampling by somewhat complex dynamic data structure which could be challenging to analyze in our setting. Instead, we provide a simple straightforward sampling procedure which we call \vsampling. This sampling scheme picks $i \in V$ proportional to an upper bound $v_i$ for $|(g_{x_t})_i|$ so that $\sum_{i \in I} v_i$ for consecutive coordinates $I$ can be evaluated using  only $O(1)$ queries. This sampling can be implemented using $O(\log n)$ queries by a simple (static) binary tree data structure and we prove it has the desired expected local-norm bounds. 

Another challenge we face is that our stochastic FTRL analysis merely yields a subgradient $y$ that is a suitable dual certificate \textit{in expectation}, whereas we need the guarantee to hold with high probability in order to correctly deduce arc-constraints. To this end, we show that $\norm{y}_{\infty}$ is small with high probability\footnote{Throughout this paper, with high probability means that the probability is $1-n^{-C}$ for some constant $C > 0$.}, and apply the Azuma-Hoeffding concentration inequality for martingales to show that averaging over $\poly(k)$ such subgradients yields a suitable dual certificate with high probability. Showing that no entry of $y$ is too negative carries much of the difficulty in the analysis. For this step, we apply a novel analysis of our optimization method, which uses submodularity structure and couples the iterates of FTRL with iterates of an instantiation of the multiplicative weights algorithm. For details see \Cref{subs_stMD}.

The above method computes an implicit representation of $\otilde(n \cdot \poly(k))$ permutations such that the average of the subgradients they induce is a dual certificate of SFM from which either arcs or coordinates in the minimizer can be deduced. However, naively writing down the gradients that average out to the certificate would require $\Omega(n^2)$-queries. Furthermore, deducing an arc for a single coordinate, through the operation of moving a set of elements to the beginning of each permutation (which we often refer to as move-to-front for simplicity), would also naively require $\Omega(n^2)$-queries, which is prohibitively expensive. To overcome this limitation, we provide efficient methods to sample from these permutations and their associated subgradients. More specifically, we design a method which first draws $\otilde(n \cdot \poly(k))$ samples from the subgradients as a preprocessing step, and then uses the samples to deduce several arcs. The preprocessing step enables an efficient implementation of the move-to-front operations due through an importance sampling technique. Each arc deduced requires $\otilde(\poly(k))$ additional samples. For more details, see \Cref{subsec:arc_finding_sequential}.

This summarizes the main ingredients for obtaining our $\otilde(n \cdot \poly(k) \log(|f|/\epsilon))$-query result. Somewhat surprisingly, a more careful amortized cost analysis reveals that this algorithm is in fact a strongly-polynomial time algorithm that makes $\otilde(n \cdot \poly(k))$ queries. This stems, partially, from a more fine-grained analysis of the size of subgradients and how many arcs are deduced each time we compute an $\epsilon$-approximate minimizer (and its corresponding dual certificate). See \Cref{sec:sequential} for details.

This discussion omits a variety of details which are deferred to \Cref{sec:sequential}. The use of randomness and the loss of parallelism in this sequential algorithm is interesting and we leave it as an open problem to see to what degree a deterministic $\otilde(\poly(k) \log(1/\epsilon))$-depth and $\otilde(n \cdot \poly(k) \log(1/\epsilon))$-work weakly-polynomial time algorithm (and a strongly-polynomial time analog) can be achieved.

\section{Preliminaries}
\label{sec:prelim}

\subsection{Notation}
\label{subs:notation}

We often use $V$ to denote a finite set of $n$ elements. 
For any real number $p \geq 1$, we use $\|\cdot\|_p$ to denote the $\ell_p$-norm in $\mathbb{R}^V$. 
We denote the unit cube by $B_\infty^V \defeq [0,1]^V$. For any integer $k > 0$, we denote the interior of the simplex scaled by $k$ as $\Delta_k^V \defeq\{x\in\R_{\geq 0}^V: \|x\|_1 \leq k\}$.
In particular, define the standard simplex $\Delta^V \defeq \Delta^V_1$. 
Further, we use $S_k^V \defeq B_\infty^V \cap \Delta_k^V$ to denote the \emph{truncated (interior of the) $k$-simplex}.
For every $S \subseteq V$, we use $\indicVec{S}$ to denote the indicator vector for the set $S$ (i.e. $(\indicVec{S})_i = 1, \forall i \in S$ and $(\indicVec{S})_i = 0, \forall i \in V \setminus S$). For simplicity, for every $i \in V$, we write $\indicVec{i}$ to denote the vector with a single $1$ in the $i$th coordinate and zero elsewhere. For any vector $v \in \mathbb{R}^V$, we use $|v|$ to denote the vector obtained by taking the coordinate-wise absolute value of $v$.

For any $x\in\R_{\geq 0}^V$ and $y\in\R^V$, we define $\norm y_{x}\defeq\sqrt{\sum_{i\in V}x_{i}y_{i}^{2}}$.
For any $y\in\R^V$, 
$\ell \in \mathbb{Z}_{>0}$, and $P \subseteq V$, we let $y(P) \defeq \sum_{p \in P} y_p$ be the sum of the coordinates of $y$ in $P$ and $y^{\ell}_{-}(P) \defeq \min_{w \in S_\ell^P} y^\top w$ be the sum of the most negative $\ell$ coordinates of $\min\{y, 0\}$ in $P$. We denote $y_-(V) \defeq y_-^n(V)$.

We denote the entropy regularizer (or entropy mirror map) as $r(x) \defeq \sum_{i\in V} x_{i}\log x_{i}$ for any $x\in\R_{\geq0}^{V}$
(where we define $0\log0 \defeq 0$).
We let $V_{x}(y)\defeq r(y)-\left(r(x)+\grad r(x)^{\top}(y-x)\right)$ the Bregman divergence of $r$.
Note that
\begin{align} \label{eq:breg_formula}
V_x(y) 
& = \sum_{i \in V} y_i \log y_i - \sum_{i \in V} x_i \log x_i - \sum_{i \in V} (1 + \log x_i) (y_i - x_i) \nonumber\\
& = \sum_{i \in V} y_i \log(y_i / x_i) + \sum_{i \in V} (x_i - y_i) = \langle y, \log(y/x) \rangle + \langle x - y, \vones \rangle . 
\end{align}

\subsection{Submodular Functions and \lovasz Extension} 
\label{subsec:prelim_submodular}

Let $f: 2^V \rightarrow \mathbb{R}$ be a set function defined on subsets of the $n$-element finite set $V$. 
We use $f^*$ to denote the minimum value of $f$ over $2^V$. 
A set function $f$ is {\em submodular} if it satisfies the following property of {\em diminishing marginal differences}: 

\begin{definition}[Submodularity] A function $f: 2^V \rightarrow \mathbb{R}$ is \emph{submodular} if $f(T \cup \{i\}) - f(T) \leq f(S \cup \{i\}) - f(S)$, for any $S \subseteq T \subseteq V$ and $i \in V \setminus T$. 
\end{definition}

In this section, the set function $f$ we work with is assumed to be submodular even without stated explicitly. 
We may assume without loss of generality that $f(\emptyset) = 0$ by replacing $f(S)$ by $f(S) - f(\emptyset)$. 
We make the following two assumptions about the submodular functions we work with throughout this paper even when it is not explicitly stated: (1) $f(\emptyset) = 0$, and (2) $f$ is accessed through an {\em evaluation oracle}, and use $\EO$ to denote the time to compute $f(S)$ for any $S \subseteq V$.

Throughout, we use $S_{\min}$ to denote the unique minimum minimizer\footnote{Such a minimal minimizer exists because for any two minimizers $S_1^*, S_2^*$ of $f$, their intersection $S_1^* \cap S_2^*$ and union $S_1^* \cup S_2^*$ must also be minimizers since by submodularity, 
$2 f^* = f(S_1^*) + f(S_2^*) \geq  f(S_1^* \cup S_2^*) + f(S_1^* \cap S_2^*)$.} of the given submodular function $f$. 
We also define the vector $u_f \in \mathbb{R}^V$ as $(u_f)_p \defeq f(\{p\})$ for all $p \in V$. 
In particular, by submodularity, $(u_f)_p$ is an upper bound on the marginal of $p$ for any $S \subseteq V \setminus \{p\}$, i.e., $f(S \cup \{p\}) - f(S) \leq (u_f)_p$.  
Given a submodular function $f$ and $P \subseteq V$, we define the contracted function $f_P: 2^{V \setminus P} \rightarrow \mathbb{R}$ as 
\[
f_P(S) \defeq f(S \cup P) - f(P) .
\]
Note that $f_P$ is also a submodular function with $f_P(\emptyset) = 0$.

\smallskip
\noindent \textbf{\lovasz Extension and Subgradients.} Our algorithm for SFM is based on a standard convex relaxation of a submodular function, known as the Lov\'asz extension \cite{L83}.  

\begin{definition}[Lov\'asz Extension]
The \lovasz extension, $\hat{f}: B_\infty^V \rightarrow \mathbb{R}$, of a submodular function $f$ is defined as 
$
\hat{f}(x) \defeq \E_{t \sim [0,1]} [f(\{i: x_i \geq t\})],
$ 
where $t \sim [0,1]$ is drawn uniformly at random. 
\end{definition}

We often overload notation and also use $f:B_{\infty}^V \rightarrow\R$ to denote the
\lovasz extension $\hat{f}$ of $f$.
The Lov\'asz extension $\hat{f}$ of a submodular function $f$ has many desirable properties. 
In particular, $\hat{f}$ is a convex relaxation of $f$ and it can be evaluated efficiently.

\begin{theorem}[Properties of Lov\'asz Extension, Theorem 6.3 in \cite{J22}]
\label{thm:lovasz_extension_properties}
Let $f: 2^{V} \rightarrow \mathbb{R}$ be a submodular function and $\hat{f}$ be its Lov\'asz extension. Then, 
\begin{itemize}
	\item [(a)] $\hat{f}$ is convex and $\min_{x \in B_\infty^V} \hat{f}(x) = \min_{S \subseteq V} f(S)$;
	\item [(b)] $f(S) = \hat{f}(I_S)$ for any $S \subseteq V$, where $I_S$ is the indicator vector for $S$; 
	\item [(c)] Suppose $x \in B_\infty^V$ satisfies $1 \ge x_{\pi(1)} \geq \ldots \geq x_{\pi(n)} \ge 0$ for a permutation $\pi: [n] \rightarrow V$, then $\hat{f}(x) = \sum_{i \in [n]} (f(\pi[i]) - f(\pi[i-1])) x_{\pi(i)}$ where $\pi[j] \defeq \{\pi(1), \cdots, \pi(j)\}$;
	\item [(d)] The set of minimizers of $\hat{f}$ is the convex hull of the set of minimizers of $f$. 
\end{itemize}
\end{theorem}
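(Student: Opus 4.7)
The plan is to prove the four parts in the order (b), (c), (a), (d), since parts (a) and (d) will follow cleanly once the sorted-formula in (c) is in hand.

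First I would dispose of (b): when $x = I_S$, the super-level set $\{i : (I_S)_i \geq t\}$ equals $S$ for all $t \in (0,1]$, so the expectation defining $\hat{f}(I_S)$ collapses to $f(S)$. For (c), I would rewrite the expectation as $\int_0^1 f(\{j : x_j \geq t\})\, dt$ and partition the interval $(0,1]$ by the distinct coordinate values of $x$. Setting $x_{\pi(n+1)} \defeq 0$, the super-level set is exactly $\pi[i]$ for all $t \in (x_{\pi(i+1)}, x_{\pi(i)}]$, so the integral equals $\sum_{i=1}^n (x_{\pi(i)} - x_{\pi(i+1)}) f(\pi[i])$. A summation-by-parts, combined with $x_{\pi(n+1)} = 0$ and $f(\emptyset) = 0$, rearranges this into the telescoping form $\sum_{i=1}^n x_{\pi(i)}\bigl(f(\pi[i]) - f(\pi[i-1])\bigr)$. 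The measure-zero boundary case of ties among coordinates of $x$ is harmless because the integrand is bounded.

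Part (a) now splits. For convexity, for each permutation $\pi$ define the linear functional $g_\pi(y) \defeq \sum_{i=1}^n (f(\pi[i]) - f(\pi[i-1])) y_{\pi(i)}$. By (c), $\hat{f}(x) = g_\pi(x)$ whenever $\pi$ sorts $x$ in non-increasing order. To conclude $\hat{f}(x) = \max_\pi g_\pi(x)$ (the greedy/Edmonds characterization), I would show that swapping two adjacent positions $i$ and $i+1$ in $\pi$ changes $g_\pi(x)$ by a multiple of $(x_{\pi(i)} - x_{\pi(i+1)})$ times a quantity that is nonnegative by submodularity (the standard diminishing-marginals inequality applied to $\pi[i-1]$ with $\pi(i), \pi(i+1)$), so the sorting permutation is a global maximizer. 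Since $\hat{f}$ is a pointwise maximum of finitely many linear functions, it is convex. The equality $\min_x \hat{f}(x) = \min_S f(S)$ then follows from two directions: $\le$ from (b) by choosing $x = I_S$, and $\ge$ from $\hat{f}(x) = \E_t[f(S_t)] \geq \min_S f(S)$ for every $x$.

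For (d), if $x^*$ minimizes $\hat{f}$, then $\E_t[f(S_t)] = \min_S f(S)$; since each $f(S_t) \geq \min_S f(S)$, every super-level set $S_t$ arising with positive measure must be a minimizer of $f$, and $x^* = \int_0^1 I_{S_t}\, dt$ exhibits $x^*$ as a convex combination of indicator vectors of minimizers. The converse follows from convexity of $\hat{f}$ together with (b). The most delicate step in the whole plan is the greedy/Edmonds argument used for convexity in (a), since this is the only place where submodularity enters in an essential way; the rest of the proof is measure-theoretic bookkeeping and direct manipulation of the formula in (c).
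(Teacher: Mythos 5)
Your proof is correct, but note that the paper does not actually prove this theorem; it cites it verbatim as Theorem~6.3 of~\cite{J22}, so there is no in-paper argument to compare against. What you wrote is the standard textbook proof: the integral/layer-cake computation for (b)--(c), the Edmonds greedy characterization $\hat{f}(x) = \max_\pi g_\pi(x)$ for convexity in (a), and the layer-cake decomposition $x^* = \int_0^1 I_{S_t}\,dt$ for (d). All four parts check out.

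One small point worth tightening: in (a) you show that swapping two \emph{adjacent} inverted positions never decreases $g_\pi(x)$ and then assert the sorting permutation is a \emph{global} maximizer. That conclusion requires the extra observation that any permutation can be carried to the sorting permutation by a sequence of such inversion-correcting adjacent swaps (bubble sort), each of which is non-decreasing. This is routine but should be said, since an adjacent-exchange local optimum need not a priori be a global one. Also, when ties occur the sorting permutation is not unique, but your adjacent-swap computation shows the exchange is an equality when $x_{\pi(i)} = x_{\pi(i+1)}$, so all sorted permutations attain the same maximum; this is the precise sense in which ``ties are harmless.'' With these two clarifications the argument is airtight.
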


In particular, \Cref{thm:lovasz_extension_properties} (c) implies that a subgradient of $\hat{f}$ at $x \in \R^V$ is given by 
\[
(g_x)_{\pi_x(i)} \defeq f(\pi_x[i]) - f(\pi_x[i-1]) ,
\]
where $\pi_x: [n] \rightarrow V$ is the permutation corresponding to decreasing order of the coordinates of $x$ as in \Cref{thm:lovasz_extension_properties} (c). Moreover, given $x \in \R^V$, the subgradient $g_x$ can be computed in time $O(n \cdot \EO + n \log n)$ by sorting the coordinates of $x$ in decreasing order and applying the formula above. Note that $g_x$ only depends on the permutation $\pi_x$. Therefore, given  a permutation $\pi: [n] \rightarrow V$, we  also define $g_\pi$ as the subgradient induced, i.e., $(g_\pi)_{\pi(i)}=f(\pi[i])-f(\pi[i-1])$.

For $P \subseteq V$,
we define $\pi_{\leftarrow P}$ to be the permutation where the set $P$
is moved to the front of the permutation $\pi$ (with the relative order of elements in $P$ preserved). Formally, if $P = \{\pi(i_1), \cdots, \pi(i_\ell)\}$ for indices $i_1 < \cdots < i_\ell$ and the remaining indices are $j_1 < \cdots < j_{n-\ell}$, then 
\[
\pi_{\leftarrow P}(k) \defeq 
\begin{cases}
\pi(i_k) \qquad & \text{if } k \in [\ell], \\
\pi(j_{k - \ell}) & \text{if } k > \ell.
\end{cases}
\]
In particular, denote $\pi_{\leftarrow i} \defeq \pi_{\leftarrow \{i\}}$.
For a permutation $\pi: [n] \rightarrow V$ and $P \subseteq V$, we use $\Delta_{\pi,P} \in \mathbb{R}_{\geq 0}^V$ to denote the decrease of coordinates $V \setminus P$ in $g_{\pi}$ when we move $P$ to front, i.e., 
\[
(\Delta_{\pi,P})_q \defeq 
\begin{cases}
0 \qquad &\text{if $q \in P$}, \\
(g_{\pi})_q - (g_{\pi_{\leftarrow P}})_q \quad &\text{if $q \in V \setminus P$}.
\end{cases} 
\]
The \textit{base polytope} of a submodular function $f$ is defined as $B(f)\defeq \{y \in \mathbb{R}^V: y(S) \leq f(S), \forall S \subseteq V \text{ and } y(V) = f(V) \}$.  
Any vector $y \in B(f)$ can be represented as $y = \sum_{t \in [m]} \alpha_t g_{\pi_t}$, where $\pi_t$ are permutations and the coefficients $\alpha \in \Delta^{[m]}$. For any $P \subseteq V$, and $y \in B(f)$ represented as $y = \sum_{t \in [m]} \alpha_t g_{\pi_t}$, we define the vector $y_{\leftarrow P} \in \mathbb{R}^{V \setminus P}$ by restricting the vector $\sum_{t \in [m]} \alpha_t g_{(\pi_t)_{\leftarrow P}}$ to the coordinates in $V \setminus P$, i.e., $y_{\leftarrow P}$ is obtained from $y$ by moving $P$ to the front. In other words, $y_{\leftarrow P}$ is obtained by moving $P$ to front in every permutation $\pi_t$. Note that this operation depends on the permutations $\{\pi_t\}_{t \in [m]}$ in $y$'s representation. Whenever we write $y_{\leftarrow P}$, it will be clear from the context what representation we are referring to when performing the move-to-front operation.

\section{Our Framework}
\label{sec:framework}
In this section, we describe our general framework for $\sparsity$-sparse SFM. This framework essentially reduces the problem of $\sparsity$-sparse SFM to ``robustly'' minimizing the \lovasz extension to $\|u_f\|_\infty / \poly(k)$ accuracy. 
Our framework
bears resemblance to one in \cite{DVZ21}; we
build upon \cite{DVZ21} and introduce several new concepts and techniques in the following subsections. 

In \Cref{sec:framework:extension}, we discuss arc constraints and a natural extension of a submodular function (in accord with a given set of arc constraints), which is encapsulated through a general data structure we call the \emph{extension maintainer} (\Cref{alg:extension_maintainer});
then in \Cref{sec:framework:meta-algorithm}, we describe our meta algorithm behind our parallel and sequential algorithms (presented in \Cref{sec:parallel} and \ref{sec:sequential} respectively) for $\sparsity$-sparse SFM. 
Our meta algorithm contains two critical procedures, \textsf{Dimensionality-Reduction} and \textsf{Arc-Finding}, whose implementations differ for our parallel and sequential algorithms. Both implementions leverage a common concept we introduce, called a \emph{sparse dual certificate}, which might have broader SFM applications. 
We define sparse dual certificates and discuss its relationship to dimensionality reduction and arc finding in \Cref{subsec:cert}. Henceforth, we whenever we use the word ``algorithm'' in this section, we refer to our meta algorithm, unless specified otherwise.

\subsection{Arc Information and Extension Maintainer}
\label{sec:framework:extension}

\paragraph{Arc Information and Extension.} Our algorithm proceeds by finding elements $p,q \in V$ such that any  $\sparsity$-sparse minimizer of $f$ that contains $p$ must also contain $q$. We call such a constraint an \emph{arc}, denoted as $(p,q)$. 
Using arc information\footnote{\label{footnote:arc_difference} The arc definition we use here is slightly different from the standard definition in the literature \cite{IFF01,IO09,LSW15,DVZ21}. In the literature, an arc $(p,q)$ means that any minimizer of $f$ containing $p$ must contain $q$, while we only require that this holds for $\sparsity$-sparse minimizers. See \Cref{sec:ring_family_extensions} for more details.} for SFM was introduced by Iwata, Fleischer, and Fujishige \cite{IFF01} and is used in many SFM algorithms, e.g., \cite{IO09, LSW15,DVZ21}. 
For any element $p$, the set of all endpoints of arcs from $p$ is denoted as $p^\downarrow$
, where we adopt the convention that $p \in p^\downarrow$. 
Our algorithm maintains the set $E$ all arcs $(p, q)$ at a given state, as well as a set of contracted elements $W$
and a set of discarded elements $D$. $E, W, D$ naturally induce a \textit{ring family} (a set closed under complement and introduction), which we denote by $\rfam(E, W, D) \subseteq 2^{V}$. $\rfam(E, W, D)$\footnote{Our inclusion of $W$ and $D$ in the definition of $\rfam$ is perhaps non-standard but facilitates notation and explanations.} consists of all the sets $S$ so that $W \subseteq S \subseteq V \setminus D$, with the property that $S$ respects the arcs in $E$ (i.e. if $(p, q) \in E$, then $p \in S$ implies $q \in S$). Since our algorithms only maintain ring families through explicit $E, W, D$ we overload notation and sometimes use call $\rfam$ to refer to a ring family $\rfam(E,D,W)$, where the $(E, D, W)$ is dropped when clear from context.

We make the convention that are arcs in $E$ are only for elements in $V \setminus (W \cup D)$. 
We say a ring family $\rfam$ is $k$-\textit{consistent} with submodular $f$ if the arcs it captures are indeed valid (i.e., they preserve the structure of its $k$-sparse minimizers). The precise definition is as follows.

\begin{definition}[$k$-consistent ring family]\label{def:kconsistent}
    
    A \emph{ring family $\rfam(E, W, D)$} for finite set $V$ is determined by a set $W \subseteq V$ of contracted elements, a set $D \subseteq V$ of discarded elements (with $D \cap W = \emptyset$), and a set $E$ of arcs $(p, q)$ between elements of $V \setminus (W \cup D)$. 
    We say that \emph{$S \subseteq V$ is consistent with $\rfam$} if $W \subseteq S$, $D \cap S = \emptyset$, and if $p \in S$ and $(p,q)$ is a arc in $E$, then $q \in S$. Additionally, 
    we say that $\rfam$ is \emph{$k$-consistent (with $f$)} if every $k$-sparse minimizer of $f$ is consistent with $\rfam$. 
\end{definition}

Given a $k$-consistent ring family $\rfam$, to optimize $f$ it suffices to optimize the \lovasz extension over certain restrictions of the hypercube, see, e.g., \cite{LSW15}. However, to simplify algorithm development, following \cite{DVZ21}, we instead provide a new submodular function $f^{\sharp \rfam}: 2^V \rightarrow \mathbb{R}$ (see \Cref{defn:extension}) which encapsulates ring family constraints in a way that facilitates algorithm development. Our algorithms only require certain properties of $f^{\sharp \rfam}$, described in \Cref{lem:properties_extension} below (and proved in \Cref{sec:ring_family_extensions}), along with the extension maintainer, which is our data structure for accessing the extension.

\begin{restatable}[Properties of Extension $f^{\sharp \rfam}$]{lemma}{ExtensionProperty}
\label{lem:properties_extension}
Let $f: 2^{V} \rightarrow \R$ be a submodular function with a $\sparsity$-sparse minimizer and let $\rfam(E, W, D)$ a ring family $k$-consistent with $f$. Then, the following properties hold for the extension $f^{\sharp \rfam}:2^{V \setminus (W \cup D)} \to \R$: 
\begin{enumerate}
    \item \textbf{Submodularity}: $f^{\sharp \rfam}$ is a submodular function.
    \item \textbf{Extension}: $f^{\sharp \rfam}(S) \geq f(W \cup S^{\sharp \rfam})$ for any set $S \subseteq V \setminus W$, where $S^{\sharp \rfam} \subseteq S$ is defined so that $S^{\sharp \rfam} \cup W$ is the unique maximal subset of $S \cup W$ consistent with all the arcs in $E$; $f^{\sharp \rfam}(S) = f(W \cup S)$ for any set $S \subseteq V \setminus W$ with $S \cup W$ consistent with $\rfam$. 
    \item \textbf{$k$-Consistency}: For any $\sparsity$-sparse minimizer $S$ of $f$, $S \setminus W$ is also a $\sparsity - |W|$-sparse  minimizer of $f^{\sharp \rfam}$; for any minimizer $S^*$ of $f^{\sharp \rfam}$, $(S^*)^{\sharp \rfam} \cup W$ is a minimizer of $f$. 
    \item \textbf{Marginals}: 
    For any $p \in V$, $(u_{f^{\sharp \rfam}})_p = f(W \cup p^\downarrow) - f(W \cup p^\downarrow \setminus \{p\})$ if either $f(W \cup p^\downarrow) - f(W \cup p^\downarrow \setminus \{p\}) \ge 0$ or $p^\downarrow = \{p\}$, and $(u_{f^{\sharp \rfam}})_p = 0$ otherwise. 
    Consequently, $(u_{f^{\sharp \rfam}})_p < 0$ if and only if $p^\downarrow = \{p\}$ and $f(W \cup \{p\}) < f(W)$. Additionally, when new arcs are added or elements are added to $W$ or $D$, the value of $(u_{f^{\sharp \rfam}})_p$ does not increase for any $p \in V$. 
\end{enumerate}
\end{restatable}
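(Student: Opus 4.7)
The plan is to verify each of the four listed properties in turn, working from the explicit construction of $f^{\sharp \rfam}$ given in Section 4.1, which extends the analogous complement-of-\cite{DVZ21} construction to encode the arc set $E$, contractions $W$, and discards $D$ simultaneously. The central object I expect to manipulate is the closure operator that sends an arbitrary $S \subseteq V \setminus (W \cup D)$ to the unique maximal $S^{\sharp \rfam} \subseteq S$ with $S^{\sharp \rfam} \cup W$ consistent with $\rfam$; well-definedness of $S^{\sharp \rfam}$ follows because the family of consistent sets is closed under union.

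For \textbf{Submodularity} (property 1), I would reduce to the submodularity of $f$ by carefully unfolding the definition of $f^{\sharp \rfam}$ and performing a case analysis on whether adding an element $i$ to $A \subseteq B$ modifies the closure $S^{\sharp \rfam}$. When $i \cup W \cup B^{\sharp \rfam}$ is consistent, the closures behave monotonically in the ``expected'' way and the marginal inequality follows directly from $f$'s submodularity on the lattice of consistent supersets of $W$. When adding $i$ forces the co-closure to shrink or stay put (because some arc premise becomes violated), I would use the explicit definition of $f^{\sharp \rfam}$ on inconsistent sets to absorb the correction; this is the technically heaviest step. For \textbf{Extension} (property 2), the equality $f^{\sharp \rfam}(S) = f(W \cup S)$ on consistent $S$ is by construction, and the inequality $f^{\sharp \rfam}(S) \geq f(W \cup S^{\sharp \rfam})$ for general $S$ follows by the same definition together with the maximality of $S^{\sharp \rfam}$ inside $S$.

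For \textbf{$k$-consistency} (property 3), the forward direction uses that any $k$-sparse minimizer $S$ of $f$ is consistent with $\rfam$ by assumption, hence $W \subseteq S$ and $S \setminus W \in 2^{V \setminus (W \cup D)}$; the extension property then gives $f^{\sharp \rfam}(S \setminus W) = f(S) = f^*$, and minimality over $f^{\sharp \rfam}$ follows by the extension inequality applied to any competing $T$ together with the containment $f^{\sharp \rfam}(T) \geq f(W \cup T^{\sharp \rfam}) \geq f^*$. The reverse direction uses the extension inequality similarly together with $|W \cup (S^*)^{\sharp \rfam}| \leq |W| + |S^*|$ to conclude that the lifted set is a $k$-sparse minimizer of $f$. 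For \textbf{Marginals} (property 4), I would compute $f^{\sharp \rfam}(\{p\}) - f^{\sharp \rfam}(\emptyset)$ by observing that $\{p\}^{\sharp \rfam}$ equals $\{p\}$ when $p^\downarrow = \{p\}$ and is empty otherwise, then use the explicit definition of the extension on the inconsistent set $\{p\}$ to pick up the quantity $f(W \cup p^\downarrow) - f(W \cup p^\downarrow \setminus \{p\})$; submodularity of $f$ justifies the sign conditions. Finally, the monotonicity of $(u_{f^{\sharp \rfam}})_p$ under adding arcs or enlarging $W, D$ follows because such updates can only grow $p^\downarrow$ (for arc additions) or absorb $p$ into $W$ or $D$ (for contraction/discard), and submodularity of $f$ implies that growing $p^\downarrow$ cannot increase the marginal $f(\cdot \cup \{p\}) - f(\cdot)$ at the enlarged base.

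The main obstacle I anticipate is the submodularity proof: verifying the marginal inequality across the four combinations of ``does $i$ trigger a closure change at $A$?'' and ``does it trigger one at $B$?'' requires exhibiting, in each case, two nested sets of consistent supersets of $W$ on which one can invoke submodularity of $f$ and then reconcile any correction terms in the definition of $f^{\sharp \rfam}$. The marginal formula in property 4 is essentially a compact summary of how the extension treats the boundary between consistent and inconsistent singletons, and getting its proof right (both the sign condition and the monotonicity under updates) is likely where the bookkeeping is most delicate.
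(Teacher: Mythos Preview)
Your proposal is correct, and for properties~2, 3, and~4 it matches the paper's argument essentially line for line. The one genuine divergence is property~1 (submodularity): you plan a direct four-case analysis on how adjoining an element $i$ to $A \subseteq B$ interacts with the closure $(\cdot)^{\sharp \rfam}$ and the additive correction $u_{f,\rfam}^{+}(S \setminus S^{\sharp \rfam})$, whereas the paper avoids all of this by observing the identity $f^{\sharp \rfam}(S) = (\overline{f_W})^{\downarrow}(V' \setminus S)$, i.e., $f^{\sharp \rfam}$ is the complement of the extension $(\cdot)^{\downarrow}$ already studied in \cite{DVZ21}, and then simply invoking Lemma~4.4 of \cite{DVZ21} together with the fact that complements preserve submodularity. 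Your direct route is feasible but laborious---the bookkeeping you anticipate is real---while the paper's route is a one-line reduction at the cost of not being self-contained. One minor remark on property~3: the reverse direction of the statement only asserts that $(S^*)^{\sharp \rfam} \cup W$ is a minimizer of $f$, not a $k$-sparse one, so the cardinality bound you mention is unnecessary (and indeed $S^*$ need not be $k$-sparse).
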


Note that the second half of $k$-consistency in \Cref{lem:properties_extension} implies that in order to find a minimizer of the submodular function $f$ that is consistent with all the arcs, it suffices to minimize the submodular extension $f^{\sharp \rfam}$. 
In particular, any element that belongs to every minimizer of $f^{\sharp \rfam}$ must also belong to every minimizer of $f$. 
Moreover, by our definition of arcs, the first half of $k$-consistency in \Cref{lem:properties_extension} shows that any arc for the submodular extension $f^{\sharp \rfam}$ is also an arc for $f$. 
Therefore, to deduce either arcs or dimensionality reduction, it suffices for our algorithm to find new arcs or dimensionality reduction for the submodular extension $f^{\sharp \rfam}$.

\paragraph{Maintaining the Extension.} 
A core component of our framework, which is used in our main algorithm, as well as in our dimensionality reduction and arc-finding subroutines, is 
a data structure that maintains the parameters $E, W, D$ determining the ring family $\rfam(E, W, D)$, the marginals $(u_{f^{\sharp \rfam}})_p$, as well as access to the submodular extension $f^{\sharp \rfam}$ of $f$. We call this data structure the \emph{extension maintainer} and give the main theorem we use about it here.

Note that the maintainer itself may discover new elements of $W$ or new elements to discard (i.e. add to $D$). For instance, if $|p^\downarrow| > k$, then $p$ cannot belong to any $k$-sparse minimizer of $f$ and thus can be discarded. Additionally, when an element $p$ has an arc to any discarded element $q \in D$, then $p$ can also be discarded. 

The extension maintainer supports an operation \textsf{UpdateSpace} which allows for augmenting sets $W$ and $D$, and ensures that the ring family is $k$-consistent and the end of its call. 
Every call to the extension maintainer that can change quantities the parameters of the ring family ends with a call to \textsf{UpdateSpace}. 
Through a call to \textsf{UpdateSpace}, the maintainer first updates set $D$, then set $W$ to $W \cup W^{\mathrm{add}}$, computes all marginals $(u_{f^{\sharp \rfam}})_p$ for $p \in V$,
and then adds to $W$ every element $p$ with $(u_{f^{\sharp \rfam}})_p < 0$. Adding such an element $p$ to any set $S \in V \setminus (W \cup \{p\})$ decreases its value thanks to submodularity, so it must lie inside every minimizer. 
It then repeats the process of computing all marginals $(u_{f^{\sharp \rfam}})_p = f(W \cup \{p\}) - f(W)$ for $p \in V \setminus W$, and adding elements $p$ with $(u_{f^{\sharp \rfam}})_p < 0$, until no such element exists. 
Hence, the call eventually reaches a state where all marginals are non-negative (see property 1 of \Cref{thm:ext_maintainer}).

The following theorem summarizes the operations and costs for the extension maintainer data structure. We defer the details and proofs to \Cref{sec:ring_family_extensions}.

\begin{restatable}[Extension Maintainer]{theorem}{ExtMaintainer}
\label{thm:ext_maintainer}
Given a submodular function $f: 2^V \rightarrow \mathbb{R}$ with $n = |V|$ and minimal minimizer $S^*_{\min}$ with $|S^*_{\min}| \le k$, accessed through an evaluation oracle $\EO$, there is a data structure that maintains the set of elements that must be in every minimizer $W$, the set of discarded elements $D$ (with $D \cap W = \emptyset$), the set $E$ of all the arcs for elements in $V \setminus (D \cup W)$, the values $(u_{f^{\sharp \rfam}})_p$ for all $p \in V \setminus (D \cup W)$, and the corresponding submodular extension $f^{\sharp \rfam}: 2^{V \setminus (D \cup W)} \rightarrow \mathbb{R}$. Access to the extension is through the following operations. 
\begin{enumerate}
    \item \textsf{Init}$(V,k, f)$ initializes the data structure with $W = D = \emptyset$ and then 
   calls \textsf{UpdateSpace}$(\emptyset,\emptyset)$ defined below. The operation takes $O(n)$ time plus the time to call to \textsf{UpdateSpace}$(\emptyset,\emptyset)$. 

    \item \textsf{UpdateSpace}$(W^\mathrm{add}, D^\mathrm{add})$ updates the set of contracted elements $W$ to $W \cup W^\mathrm{add}$, as well as the set of discarded elements $D$ to $D \cup D^\mathrm{add}$
    for $D^\mathrm{add}$ and $W^\mathrm{add}$ such that the resulting ring family is $k$-consistent. The procedure then adds elements $p$ with $(u_{f^{\sharp \rfam}})_p < 0$ to $W$ until the ring-family is $k$-consistent and that each coordinate of $u_{f^{\sharp \rfam}}$ is non-negative.
    During a call of \textsf{UpdateSpace}, if the set $W$ changes its value from $W^{\mathsf{begin}}$ to $W^{\mathsf{end}}$, the depth and runtime of that call are $O(|W^{\mathsf{end}} \setminus W^{\mathsf{begin}}|)$ and $O(n |W^{\mathsf{end}} \setminus W^{\mathsf{begin}}| \cdot \EO + n |W^{\mathsf{end}} \setminus W^{\mathsf{begin}}|)$ respectively.
    
    \item \textsf{UpdateArcs}$(\{S_p\}_{p \in V})$ updates the data structure by adding, for each $p$, a set of new arcs $S_p$
    The procedure may then add additional arcs, augment $W$ or $D$. This takes 
    $O(k)$-depth, $O(m \cdot \EO + n k)$ time, where $m$ is the number of elements $p \in V \setminus (D \cup W)$ (after the update) that acquire new arcs, plus one call of \textsf{UpdateSpace}. 
    
    \item \textsf{Subgrad}$(\pi)$ outputs the subgradient of $f^{\sharp \rfam}$ restricted to $V \setminus (D \cup W)$ in $O(1)$-depth, $O(n \cdot \EO + n)$ time. 
    
    \item \textsf{Partial}$(i, \pi)$ outputs the $i$th coordinate of the subgradient of $f^{\sharp \rfam}$ in $O(1)$-depth, $O(\EO + n)$ time.

    \item $f^{\sharp \rfam}(S)$ outputs the value of $f^{\sharp \rfam}$ at set $S$ in constant depth and and $O(\EO + |S|)$ time. If $S = p^\downarrow$ for some $p \in V \setminus (W \cup D)$, the runtime is $O(\EO + 1)$. 

    \item Sets $W, D, \{p^\downarrow\}_{p \in V \setminus (W \cup D)}$, as well as vector $u_{f^{\sharp \rfam}}$, are explicitly stored and accessible. 
\end{enumerate}
    During each of call of \textsf{Init}, \textsf{UpdateArcs} and \textsf{UpdateSpace}, additional elements may be added to to $W$ and $D$. However, after the operations the data structure ensures that the ring-family is $k$-consistent and that each coordinate of $u_{f^{\sharp \rfam}}$ is non-negative.
\end{restatable}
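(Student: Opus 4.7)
I would explicitly store $W$, $D$, and for every active $p \in V \setminus (W \cup D)$ the arc-closure set $p^\downarrow$ (of size at most $k$), the marginal $(u_{f^{\sharp \rfam}})_p$, and reverse-arc indices so that for each $q$ one can iterate over $\{p : q \in p^\downarrow\}$. Evaluation of $f^{\sharp \rfam}(S)$ then reduces to a single oracle call $f(W \cup S^{\sharp \rfam})$ by the Extension property of \Cref{lem:properties_extension}, where $S^{\sharp \rfam}$ is the maximal subset of $S$ such that $W \cup S^{\sharp \rfam}$ is arc-closed. Since each stored $p^\downarrow$ is itself arc-closed by construction, $S^{\sharp \rfam}$ can be computed in $O(\sum_{p \in S} |p^\downarrow|)$ time by retaining each $p \in S$ with $p^\downarrow \subseteq W \cup S$ via bit-indexed membership tests, and when the query set equals a stored $p^\downarrow$ the closure is already at hand and the cost reduces to $O(\EO + 1)$.

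\paragraph{Update operations.}
UpdateSpace$(W^{\mathrm{add}}, D^{\mathrm{add}})$ would be a sweep loop: first merge $W^{\mathrm{add}}$ and $D^{\mathrm{add}}$ into $W$ and $D$, propagate each new discard $d$ by discarding any $p$ with $d \in p^\downarrow$ using the reverse indices, prune each newly contracted $w$ from every containing $p^\downarrow$, and then recompute each marginal via the formula in the Marginals property of \Cref{lem:properties_extension}, appending to $W$ every $p$ with $(u_{f^{\sharp \rfam}})_p < 0$. The loop terminates when no marginal is negative, and the Marginals property guarantees that marginals only decrease as $W$, $D$, or $E$ grow, so elements are never un-contracted and the number of sweeps is bounded by $|W^{\mathsf{end}} \setminus W^{\mathsf{begin}}|$; each sweep performs $n$ independent marginal recomputations in $O(1)$ parallel depth and $O(n \cdot \EO)$ oracle work. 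UpdateArcs$(\{S_p\}_p)$ inserts the new arcs into $E$, then for each of the $m$ affected elements $p$ recomputes $p^\downarrow$ as the union of the stored $q^\downarrow$ for its out-neighbors (discarding $p$ at once if the union exceeds $k$) in $O(k)$ depth, recomputes the affected marginals at cost $O(m \cdot \EO + nk)$, and calls UpdateSpace to re-establish $k$-consistency.

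\paragraph{Subgradients and main obstacle.}
Subgrad$(\pi)$ and Partial$(i,\pi)$ apply the telescoping identity $(g_\pi)_{\pi(i)} = f^{\sharp \rfam}(\pi[i]) - f^{\sharp \rfam}(\pi[i-1])$ from \Cref{thm:lovasz_extension_properties}(c), launching the $n$ oracle evaluations in parallel for $O(1)$ depth. The step I expect to be the main obstacle is reconciling the $O(n \cdot \EO + n)$ work bound with the cost of assembling the $n$ arc-closures $(\pi[i])^{\sharp \rfam}$, since a naive parallel implementation could incur $\Theta(kn)$ per evaluation. My plan would be to amortize by maintaining across the $n$ coordinates a single counter $|p^\downarrow \setminus \pi[i-1]|$ for each $p$, decremented as $\pi(i)$ enters the set using the reverse-arc indices, with the total number of counter updates bounded by $\sum_p |p^\downarrow| \le nk$ and absorbed into the oracle term. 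A secondary delicate point is the termination and monotonicity argument for UpdateSpace, where I would invoke the Marginals property of \Cref{lem:properties_extension} to ensure that each sweep either produces a strictly larger $W$ or leaves the data structure in a $k$-consistent state with non-negative marginals.
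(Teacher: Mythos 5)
Your proposal takes essentially the same route as the paper's \Cref{alg:extension_maintainer}: explicit arrays for $W$, $D$, closures $p^\downarrow$ of size at most $k$, stored marginals, a sweep loop in \textsf{UpdateSpace} whose iteration count is charged to $|W^{\mathsf{end}}\setminus W^{\mathsf{begin}}|$, a depth-$k$ closure update in \textsf{UpdateArcs} that discards any $p$ once $|p^\downarrow|>k$, and incremental maintenance of $\pi[i]^{\sharp \rfam}$ using its monotonicity in $i$. Your reverse-index counter amortization with total work $\sum_p|p^\downarrow|\le nk$ is a concrete version of what the paper sketches more loosely for \textsf{Subgrad}.

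The genuine gap is that your plan covers the data-structure layout and cost accounting but never establishes the correctness invariant the theorem also asserts, namely that after each call the ring family remains $k$-consistent (i.e.\ every element the sweep adds to $W$ is in fact in $S^*_{\min}$ and every element added to $D$ is not). The Marginals property of \Cref{lem:properties_extension}, which you invoke, gives monotonicity of $u_{f^{\sharp \rfam}}$ and hence termination of the sweep, but it does not justify consistency. The needed argument, supplied in the paper's proof, is inductive: assuming $W\subseteq S^*_{\min}$, the minimal minimizer of $f_W$ is $S^*_{\min}\setminus W$, and by submodularity any $p$ with $f(W\cup\{p\})<f(W)$ (equivalently a negative $(u_{f^{\sharp \rfam}})_p$, which by \Cref{lem:properties_extension} forces $p^\downarrow=\{p\}$) lies in every minimizer of $f_W$; adding $p^\downarrow$ for such $p$ therefore preserves $W\subseteq S^*_{\min}$. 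Without this step the theorem's consistency guarantee is unproven. A secondary, non-fatal imprecision: your claim that the $O(nk)$ counter work in \textsf{Subgrad} is ``absorbed into the oracle term'' only holds when $\EO=\Omega(k)$; it is cleaner to state the additional work as $O(nk)$, which is still within the polynomial budget used by the main theorems.
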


\subsection{Meta Algorithm} 
\label{sec:framework:meta-algorithm}

In this section, we present our meta algorithm in \Cref{alg:main}, which is common to both our parallel and sequential algorithms for Theorems \ref{thm:main_parallel} -  \ref{thm:main_strongly_sequential}. As mentioned earlier, the difference between our parallel and sequential algorithms lie in their different implementations of the procedures $\textsf{Dimensionality-Reduction}$ and $\textsf{Arc-Finding}$ in \Cref{alg:main}. 
The guarantees of these two procedures are summarized as follows. 

\begin{itemize}
    \item \textsf{Dimensionality-Reduction}$(f,k)$: this procedure takes as input an integer $k>0$ and a submodular function $f$ with a $\sparsity$-sparse minimizer. If $f^* \leq -\|u_f\|_\infty / 6k$ (i.e., $f^*$ is sufficiently negative), then \textsf{Dimensionality-Reduction}$(f,k)$ returns a set $T \neq \emptyset$ that belongs to every minimizer of $f$ with high probability, i.e., what we call a \emph{dimensionality reduction}. 

    \item \textsf{Arc-Finding}$(f,k,\mathsf{Scale})$: this procedure takes as input an integer $k > 0$, a submodular function $f$ with a $\sparsity$-sparse minimizer, and a parameter $\mathsf{Scale} > 0$. If $f^* \geq -\|u_f\|_\infty / 6k$ (i.e., $f^*$ is not too negative) and $\mathsf{Scale} \geq \|u_f\|_\infty$, then, with high probability, \textsf{Arc-Finding}$(f,k,\mathsf{Scale})$ returns a non-empty set $S_p$ of endpoints of arcs from $p$ for every element $p \in V$ such that $(u_f)_p \geq \mathsf{Scale}/2$ and $p$ belongs to a $k$-sparse minimizer. (Consequently, if $(u_f)_p \geq \mathsf{Scale}/2$ and $S_p = \emptyset$, then $p$ is not in any $k$-sparse minimizer.)
\end{itemize}

The parallel and sequential implementations of $\textsf{Dimensionality-Reduction}$ are given in \Cref{subsec:dimensionality_reduction_parallel} and \Cref{subsec:dimensionality_reduction_sequential} respectively.
The parallel and sequential implementations of $\textsf{Arc-Finding}$ are given in \Cref{subsec:arc_finding_parallel} and \Cref{subsec:arc_finding_sequential} respectively.

\paragraph{The Meta Algorithm.} We now describe our meta \Cref{alg:main}. Like the extension maintainer (\Cref{alg:extension_maintainer}), the algorithm maintains its own copy of sets $W$ and $D$, the set $E$ of all the arcs for elements in $V \setminus (W \cup D)$, and the marginals $(u_{f^{\sharp \rfam}})_p$. It works with extension function $f^{\sharp \rfam}$ which encodes the arc constraints, to which it has query, subgradient, and subgradient coordinate access via the extension maintainer (\Cref{thm:ext_maintainer}). 
Every time there is a change in any of $E, W, D$, or the $u_{f^{\sharp \rfam}}$ vector, due to a function call to the extension maintainer, the meta algorithm's copies of these quantities are also dynamically updated to match these values. 

We now describe the steps of our meta-algorithm. First, through the call $\mathsf{Init}(V,k,f)$, the extension maintainer (\Cref{alg:extension_maintainer}) first initializes $E, W, D$ and the $u_{f^{\sharp \rfam}}$ vector so that the ring-family is $k$-consistent and that each coordinate of $u_{f^{\sharp \rfam}}$ is non-negative (meaning $(u_{f^{\sharp \rfam}})_q \ge 0, \forall q \in V \setminus (W \cup D)$).

In each iteration of the while loop \Cref{line:arc_finding_main}, the algorithm attempts to either find a dimensionality reduction (i.e., an element that is in every minimizer), or find arcs, as discussed in \Cref{sec:approach}. 
Since every marginal is positive after every function call to \Cref{alg:extension_maintainer}, in each iteration of the while loop, 
our procedures \textsf{Dimensionality-Reduction} and \textsf{Arc-Finding} always work with $u_{f^{\sharp \rfam}}$ so that $(u_{f^{\sharp \rfam}})_q \geq 0, \forall q \in V \setminus (W \cup D)$. Inside the while loop in \Cref{line:arc_finding_main}, 
we check whether\footnote{For our parallel algorithm in \Cref{sec:parallel} we actually use the threshold $\frac{\|u_{f^{\sharp \rfam}}\|_{\infty}}{4}$ to improve the oracle complexity by a $\poly(k)$ factor. This larger threshold doesn't work for our sequential algorithm in \Cref{sec:sequential}.} $f^* > -\|f^{\sharp \rfam}\|_\infty / 6k$ or not using the procedure $\textsf{Dimensionality-Reduction}$. 
In particular, if the minimum value $f^*$ is very negative in the sense that $f^* \leq -\|f^{\sharp \rfam}\|_\infty / 6k$, then $\textsf{Dimensionality-Reduction}$ will find a set of elements $T \subseteq V \setminus W$ with $T \neq \emptyset$ that belong to every minimizer. This set of elements $T$ can then be contracted. 
Note that whenever $|W| \geq k$, or every element in $V$ is either contracted or discarded (i.e., $W \cup D = V$), then $W$ must be a $\sparsity$-sparse minimizer so \Cref{alg:main} will return $W$. 

On the other hand, if $f^*$ is close to $0$ in the sense that $f^* > -\|f^{\sharp \rfam}\|_\infty / 6k$, then \Cref{alg:main} calls the procedure \textsf{Arc-Finding} to find a set $S_p \neq \emptyset$ of endpoints of arcs from any element $p$ with large marginal $(u_{f^{\sharp \rfam}})_p > \|u_{f^{\sharp \rfam}}\|_\infty / 2$. Since the value of $\|u_{f^{\sharp \rfam}}\|_\infty$ does not increase after adding arcs due to \Cref{lem:properties_extension}, we can continue finding arcs until $\|u_{f^{\sharp \rfam}}\|_\infty$ drops by more than a constant factor. 
Whenever $\|u_{f^{\sharp \rfam}}\|_\infty$ becomes smaller than $\epsilon/n$, then the set $V \setminus D$ must be an $\epsilon$-approximate minimizer for $f^{\sharp \rfam}$ by applying the following claim to the submodular function $f^{\sharp \rfam}: 2^{V \setminus D} \rightarrow \mathbb{R}$.

\begin{claim} \label{claim:eps_approx_minimizer}
    Let $f: 2^V \rightarrow \mathbb{R}$ be a submodular function such that $\|u_{f}\|_\infty \leq \epsilon / n$, then the set $V$ is an $\epsilon$-approximate minimizer of $f$.
\end{claim}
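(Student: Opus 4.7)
The plan is to show that for any $S \subseteq V$, we have $f(V) \leq f(S) + \epsilon$, which then directly implies $f(V) \leq f^* + \epsilon$. Fix an arbitrary $S \subseteq V$ and write $V \setminus S = \{p_1, p_2, \ldots, p_m\}$ in some order. Consider the telescoping sum obtained by adding the elements of $V \setminus S$ to $S$ one at a time:
\[
f(V) - f(S) = \sum_{j=1}^{m} \bigl( f(S \cup \{p_1, \ldots, p_j\}) - f(S \cup \{p_1, \ldots, p_{j-1}\}) \bigr) .
\]

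The key step is to invoke submodularity (diminishing marginal differences) to bound each term by $(u_f)_{p_j}$. Since $\emptyset \subseteq S \cup \{p_1, \ldots, p_{j-1}\}$ and $p_j$ lies outside the latter set, submodularity gives
\[
f(S \cup \{p_1, \ldots, p_j\}) - f(S \cup \{p_1, \ldots, p_{j-1}\}) \leq f(\{p_j\}) - f(\emptyset) = (u_f)_{p_j} ,
\]
where we used $f(\emptyset) = 0$ from the standing assumption.

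Summing this bound over $j = 1, \ldots, m$ and using $\|u_f\|_\infty \leq \epsilon/n$, we obtain
\[
f(V) - f(S) \leq \sum_{j=1}^{m} (u_f)_{p_j} \leq m \cdot \|u_f\|_\infty \leq n \cdot \frac{\epsilon}{n} = \epsilon .
\]
Since $S$ was arbitrary, taking $S$ to be a minimizer yields $f(V) \leq f^* + \epsilon$, so $V$ is an $\epsilon$-approximate minimizer. There is no real obstacle here; the argument is a one-line consequence of the submodular inequality $f(T \cup \{p\}) - f(T) \leq f(\{p\}) - f(\emptyset)$ applied telescopically, combined with the pointwise bound on $u_f$.
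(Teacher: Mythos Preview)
Your proof is correct and follows essentially the same approach as the paper: the paper's proof is the one-line inequality $f(V) \leq f(S^*) + \sum_{p \in V \setminus S^*} (u_f)_p \leq f^* + \epsilon$, and you have simply spelled out the telescoping and the application of diminishing marginals that justify the first inequality.
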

\begin{proof}
For any $S^* \subseteq V$ that minimizes $f$ we have that $f(V) \leq f(S^*) + \sum_{p \in V \setminus S^*} (u_f)_p \leq f^* + \epsilon$.
\end{proof}

The correctness of the meta algorithm follows from \Cref{claim:eps_approx_minimizer} and its description above.

\begin{corollary}[Correctness of Meta Algorithm] \label{cor:correctness_meta_algorithm}
Assuming the correctness of the instantiations of $\textsf{Dimensionality-Reduction}$ and $\textsf{Arc-Finding}$, 
\Cref{alg:main} always outputs an $\epsilon$-approximate minimizer of $f$. 
Additionally, if \Cref{alg:main} outputs the set $W$, then it is an exact minimizer of $f$. 
\end{corollary}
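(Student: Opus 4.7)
The plan is to handle the three possible exit conditions of \Cref{alg:main} separately, verifying the claimed output guarantee in each. The crucial invariant to keep in mind throughout is that, after every call into the extension maintainer, the ring family $\rfam(E,W,D)$ is $k$-consistent with $f$ and every coordinate of $u_{f^{\sharp \rfam}}$ is non-negative, as supplied by \Cref{thm:ext_maintainer}. The assumed correctness of \textsf{Dimensionality-Reduction} and \textsf{Arc-Finding} feeds only legitimate updates (true arcs or elements forced into $W$) into the maintainer, so this invariant persists across iterations of the while loop.

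The first thing I would do is cover the two exits that return $W$: namely, $|W| \geq k$ and $W \cup D = V$. Let $S^{*}$ be any $k$-sparse minimizer of $f$. By $k$-consistency of $\rfam$ (\Cref{def:kconsistent}), $W \subseteq S^{*}$ and $S^{*} \cap D = \emptyset$. In the first subcase the chain $k \leq |W| \leq |S^{*}| \leq k$ forces $W = S^{*}$; in the second, $S^{*} \subseteq V \setminus D = W$ together with $W \subseteq S^{*}$ again gives $W = S^{*}$. Either way $W$ is an exact minimizer of $f$, which simultaneously establishes the second assertion of the corollary and supplies the approximate guarantee for these two exits.

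Next I would handle the remaining exit, where the while loop terminates with $\|u_{f^{\sharp \rfam}}\|_{\infty} \leq \epsilon / |V|$ and the algorithm returns $V \setminus D$. Applying \Cref{claim:eps_approx_minimizer} to $f^{\sharp \rfam}$ over the ground set $V \setminus (W \cup D)$ (whose size is at most $|V|$, so the hypothesis of the claim is satisfied) shows that $V \setminus (W \cup D)$ is an $\epsilon$-approximate minimizer of $f^{\sharp \rfam}$. It remains to transfer this bound back to $f$. Since $V \setminus D$ is consistent with $\rfam$ (it contains $W$, avoids $D$, and vacuously satisfies every arc), property 2 of \Cref{lem:properties_extension} gives $f^{\sharp \rfam}(V \setminus (W \cup D)) = f(V \setminus D)$. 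To identify the minima, property 3 of \Cref{lem:properties_extension} provides both directions: contracting any $k$-sparse minimizer of $f$ by $W$ yields a minimizer of $f^{\sharp \rfam}$ (with equal value by property 2), and lifting any minimizer $S^{*}$ of $f^{\sharp \rfam}$ via $(S^{*})^{\sharp \rfam} \cup W$ yields a minimizer of $f$ whose value is at most $f^{\sharp \rfam}(S^{*})$ by the inequality form of property 2. Hence $\min f^{\sharp \rfam} = f^{*}$, and chaining the inequalities gives $f(V \setminus D) \leq f^{*} + \epsilon$, as required.

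The argument is essentially bookkeeping around the invariants already supplied by \Cref{thm:ext_maintainer} and \Cref{lem:properties_extension}, so I do not expect any step to require genuinely new work. The only point that benefits from a little care is the identity $\min f^{\sharp \rfam} = f^{*}$, since it requires invoking both directions of \Cref{lem:properties_extension}(3) together with the equality-on-consistent-sets form of \Cref{lem:properties_extension}(2); once that is in place, the corollary follows immediately.
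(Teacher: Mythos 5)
Your proposal follows essentially the same route as the paper's own proof: rely on the invariant from Theorem 4.4 (extension maintainer keeps $\rfam$ $k$-consistent and $u_{f^{\sharp\rfam}}$ non-negative after each call), split on the three exit conditions, use $k$-consistency and sparsity to argue the contracted/discarded exits give the exact minimizer $W$, and use Claim 4.3 together with the extension and $k$-consistency parts of Lemma 4.2 for the exit on $\|u_{f^{\sharp\rfam}}\|_\infty \le \epsilon/n$. You are somewhat more explicit than the paper in two places (the chain $k \le |W| \le |S^*| \le k$ and the two-directional use of Lemma 4.2(3) to identify $\min f^{\sharp\rfam} = f^*$), but these are elaborations of the same reasoning rather than a different argument.

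One small point worth noting, and it applies equally to the paper's own prose: the pseudocode in Algorithm 1 unconditionally returns $V \setminus D$, so the claim that the algorithm "outputs $W$" in the $|W| \ge k$ exit is clean only when additionally $W \cup D = V$; your treatment inherits exactly the same implicit identification, so this does not mark a gap relative to the paper.
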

\begin{proof}
First, note that after each function call of data structure \Cref{alg:extension_maintainer}, all marginals $(u_{f^{\sharp \rfam}})_p$ are non-negative. This holds since each call that could change $E, W, D$ or $(u_{f^{\sharp \rfam}})_p$ values ends with a call to \textsf{UpdateSpace}, which by \Cref{thm:ext_maintainer}, ensures all marginals $(u_{f^{\sharp \rfam}})_p$ are non-negative at the end of it. Thus, whenever we call our $\textsf{Dimensionality-Reduction}$ and $\textsf{Arc-Finding}$ methods, the corresponding vector $u_{f^{\sharp \rfam}}$ is non-negative.

Next, note that assuming the correctness of $\textsf{Dimensionality-Reduction}$ and $\textsf{Arc-Finding}$, the elements added to $W$ (i.e., \Cref{line:add_to_W}) are always in every $k$-sparse minimizer, and the elements added to $D$ are not in every $\sparsity$-sparse minimizers. This holds because our $\textsf{Arc-Finding}$ method is guaranteed to find at least one arc from every $p$ that belongs to a $k$-sparse minimizer and has $(u_f)_p \geq \mathsf{Scale}/2$. Consequently, each call to \textsf{UpdateSpace} preserves the property that at the end of it, $W \subseteq S_{\min}$ (by \Cref{thm:ext_maintainer}). 

Finally, having proved that $E, W, D$ are updated correctly, note that
when the stopping condition of the while loop in \Cref{line:arc_finding_main} is $|W| \geq k$ or $D \cup W = V$, then the set $W$ is a $\sparsity$-sparse minimizer. If the while loop is ended because $\|u_{f^{\sharp \rfam}}\|_\infty \leq \epsilon / n$, then by \Cref{claim:eps_approx_minimizer}, the set $V \setminus D$ is an $\epsilon$-approximate minimizer of $f^{\sharp \rfam}$, which is also an $\epsilon$-approximate minimizer of $f$ by \Cref{lem:properties_extension}. This follows because $V \setminus D$ is consistent with the ring family, which implies that $f^{\sharp \rfam}(V \setminus D) \ge f(V \setminus D)$ by the extension property of \Cref{lem:properties_extension}. 
\end{proof}

Note how sparsity helps us in our meta algorithm: for \textsf{Dimensionality-Reduction}, the algorithm can find at most $k$ contracted elements $W$ since $f$ is guaranteed to have a $k$-sparse minimizer; for \textsf{Arc-Finding}, we can find at most $k$ arcs from each element $p$ with $(u_{f^{\sharp \rfam}})_p \ge \mathsf{Scale} / 2$ before concluding that $p$ does not lie in any $k$-sparse minimizer and can be safely discarded. This guarantees that $\|u_{f^{\sharp \rfam}}\|_\infty$ has to go down by a factor of $2$ after at most $k$ iterations of the inner while loop in \Cref{line:arc_finding_main}.

\begin{algorithm}[htp!]
\caption{Meta Algorithm}\label{alg:main}
\SetKwFunction{stdimred}{StochasticFindContraction}
\SetKwFunction{stdualcert}{StochasticDualCertificate}
\SetKwFunction{updatestruct}{UpdateStructure}
\SetKwFunction{solvescale}{SolveScale}
\SetKwFunction{findarcs}{FindArcs}
\KwData{Integer $k>0$, submodular function $f : 2^V \rightarrow \R$ with a $k$-sparse minimizer, and accuracy $\epsilon > 0$
}
\KwResult{An $\epsilon$-approximate minimizer of $f$}
 \medskip
Extension maintainer $\textsf{Ext} \leftarrow \mathsf{Init}(V,k,f)$\;
$W \gets \textsf{Ext}.W, D \gets \textsf{Ext}.D, u_{f^{\sharp \rfam}} \gets \textsf{Ext}.u_{f^{\sharp \rfam}}$\label{line:update_infos1}
 \tcp*{$f^{\sharp \rfam}: 2^{V \setminus (D \cup W)} \rightarrow \mathbb{R}$}
\While{$|W| < k, D \cup W \ne V, \|u_{f^{\sharp \rfam}}\|_{\infty} > \frac{\epsilon}{n}$ \label{line:outer_while_main}}{
    $T \gets \textsf{Dimensionality-Reduction}(f^{{\sharp \rfam}}, k)$ \label{line:solve_scale_sub-procedure} \tcp*{Find dim reduction $T \neq \emptyset$ if $f^* \leq -\|u_{f^{\sharp \rfam}}\|_\infty / 6k$}
    \If(\tcp*[f]{Add $T$ to contracted elements $W$ \label{line:add_to_W}}){$T \ne \emptyset$}{$\mathsf{UpdateSpace}(T, \emptyset)$\;
    $W \gets \textsf{Ext}.W, D \gets \textsf{Ext}.D, u_{f^{\sharp \rfam}} \gets \textsf{Ext}.u_{f^{\sharp \rfam}}$\label{line:update_infos2}
    }
    \Else{
        $\mathsf{Scale} = \|u_{f^{\sharp \rfam}}\|_{\infty}$ \tcp*{Find arcs if $f^* > -\|u_{f^{\sharp \rfam}}\|_\infty / 6k$}
        \While{$\|u_{f^{\sharp \rfam}}\|_{\infty} > \frac{\mathsf{Scale}}{2}$ \label{line:arc_finding_main}}{
             $\{S_p\}_{p \in V \setminus (D \cup W)} \leftarrow \textsf{Arc-Finding}(f^{\sharp \rfam}, k, \mathsf{Scale})$ \tcp*{Find arcs if $(u_{f^{\sharp \rfam}})_p \geq \mathsf{Scale}/2$}
             \For{$p \in V \setminus (D \cup W)$, $(u_{f^{\sharp \rfam}})_p \geq \mathsf{Scale}/2$ and $S_p = \emptyset$}{
             $D \gets D \cup \{p\}$ \tcp*{Discard $p$ if $(u_{f^{\sharp \rfam}})_p \geq \mathsf{Scale}/2$ and no arcs found}
             }
             $\textsf{Ext}.\mathsf{UpdateSpace}(\emptyset, D \setminus \textsf{Ext}.D)$\;
             $\textsf{Ext}.\mathsf{Update}(\{S_p\}_{p \in V \setminus (D \cup W)})$ \tcp*{$\| \textbf{}u_{f^{\sharp \rfam}}\|_{\infty}$ decreases after adding arcs}
             $W \gets \textsf{Ext}.W, D \gets \textsf{Ext}.D, u_{f^{\sharp \rfam}} \gets \textsf{Ext}.u_{f^{\sharp \rfam}}$\label{line:update_infos3}
        }
    } 
}
\Return $V \setminus D$ \label{line:return}
\end{algorithm}

\subsection{Sparse Dual Certificate}
\label{subsec:cert}

Both of our \textsf{Dimensionality-Reduction} and \textsf{Arc-Finding} procedures crucially rely on a core notion we call $(\delta, k)$ dual certificate. 
To motivate this concept, we recall that Edmonds' minimax theorem \cite{E70} states that
\begin{align} \label{eq:min-max-SFM}
\max_{y \in B(f)} y_-(V) =  \min_{S \subseteq V} f(S) ,
\end{align}
where $B(f)$ is the base polytope, as defined in \Cref{subsec:prelim_submodular}. 
In the beautiful framework established by \cite{DVZ21}, dimensionality reduction and arc information are deduced from an approximately optimal dual solution $y \in B(f)$ that satisfies
\[
y_-(V) \leq f^* \leq y_-(V) + \delta .
\]
However, in order to find a dimensionality reduction or arc information, the approximation quality of the dual solution $y$ is required to be $\delta = \|u_f\|_\infty/\poly(n)$ in \cite{DVZ21}. To the best of the authors knowledge, to achieve such a small accuracy, all known gradient descent methods would take at least a comparable $\poly(n)$ number of iterations, which would be prohibitively expensive.

In our framework, we relax the quality of the approximate dual solution, and, by exploiting the sparsity structure of our submodular function, show that for our dimensionality reduction and arc finding procedures, it is sufficient. 
More specifically, our procedures only need the approximation error to be $\delta = \frac{\|u_f\|_\infty}{\poly(k)}$ instead of $\frac{\|u_f\|_\infty}{\poly(n)}$, which allows us to only run $\poly(k)$ iterations of gradient descent methods.

\begin{definition}[$(\delta, k)$ Dual Certificate]
\label{def:certificate}
    $y \in \R^V$ is a \emph{$(\delta, k)$ dual certificate} for submodular function $f : 2^V \rightarrow \R$ and $\delta > 0$ if
    \begin{enumerate}
        \item $f^* \le y^{k+1}_{-}(V) + \delta$ and
        \item $f(S) \ge y(S)$ for every $k$-sparse $S \subseteq V$.
    \end{enumerate} 
\end{definition}

Note that the approximation to $f^*$ we use in the above definition is $y^{k+1}_{-}(V)$ instead of $y_{-}(V)$ in \eqref{eq:min-max-SFM}. 
This is because if $f$ has a $\sparsity$-sparse minimizer $S^*$, then any dual solution $y \in B(f)$ satisfies $y_-^k(V) \leq f^*$. To see this, let $g_\pi \in B(f)$ be the BFS corresponding to permutation $\pi$. Then, 
\begin{align*}
(g_\pi)^k_-(V) = \min_{S \subseteq V, |S|\leq k} g_\pi^\top \indicVec{S} \leq g_\pi^\top \indicVec{S^*} \leq f^* ,
\end{align*}
where the last inequality follows from Lemma 63 in \cite{LSW15}. Since any $y \in B(f)$ is a convex combination of BFSs, we have $y_-^k(V) \leq f^*$.
This suggests that the quantity $y_-^k(V)$ is intuitively a natural dual characterization of $f^*$ in the sparse setting. 

We will present efficient parallel and sequential algorithms for computing $(\delta,k)$ dual certificates in \Cref{subs_MD} and \Cref{subs_stMD} respectively.
For the remainder of this subsection, we discuss how the notion of $(\delta,k)$ dual certificate is useful for dimensionality reduction and arc finding.

\paragraph{Dimensionality Reduction.} Here, we show that given a $(\delta, k)$ dual certificate for a suitably chosen $\delta$, we can find an element $y \in V$ that is contained in every minimizer of $f$, i.e., a dimensionality reduction. 
Before doing so, we first claim that any element that lies in every $k$-sparse minimizer of $f$ must also lie in every minimizer of $f$. 

\begin{claim}
\label{claim:every_minimizer}
Let $f: 2^V \rightarrow \mathbb{R}$ be a submodular function with a $\sparsity$-sparse minimizer. If element $p \in V$ lies in every $\sparsity$-sparse minimizer of $f$, then it must lie in every minimizer of $f$. 
\end{claim}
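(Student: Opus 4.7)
The plan is a short contradiction argument exploiting the lattice structure of minimizers of a submodular function. Suppose, for contradiction, there is some minimizer $S^*$ of $f$ with $p \notin S^*$. By hypothesis, $f$ has a $k$-sparse minimizer $T$, and by the hypothesis of the claim, $p \in T$.

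The key step is to combine $S^*$ and $T$ using the standard fact (already used in the footnote defining $S_{\min}$) that minimizers of a submodular function are closed under intersection: from submodularity,
\[
2 f^* \;=\; f(S^*) + f(T) \;\ge\; f(S^* \cup T) + f(S^* \cap T) \;\ge\; 2 f^*,
\]
so equality holds throughout, and in particular $S^* \cap T$ is a minimizer of $f$. Since $S^* \cap T \subseteq T$, we have $|S^* \cap T| \le |T| \le k$, so $S^* \cap T$ is a $k$-sparse minimizer of $f$.

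However, $p \notin S^*$ implies $p \notin S^* \cap T$, contradicting the hypothesis that every $k$-sparse minimizer of $f$ contains $p$. Hence no such $S^*$ exists, i.e., $p$ lies in every minimizer of $f$. There is no real obstacle here; the only thing to be careful about is invoking the intersection-closure of minimizers correctly, which is precisely the submodular inequality already stated earlier in the paper.
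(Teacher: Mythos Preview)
Your proof is correct and uses essentially the same approach as the paper: both rely on the intersection-closure of the set of minimizers of a submodular function. The paper's version is slightly more streamlined because it invokes the already-defined unique minimal minimizer $S_{\min}$ (which is $k$-sparse by assumption and contained in every minimizer), whereas you re-derive the needed instance of this fact directly via the contradiction with $S^* \cap T$; the underlying idea is identical.
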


\begin{proof}
Let $S_{\min}$ be the unique minimal minimizer of $f$, which satisfies $|S_{\min}| \leq k$ by assumption. It follows that $p \in S_{\min} \subseteq S^*$ for any other minimizer $S^*$ of $f$. 
\end{proof}

The next lemma is key to the development of our implementations of the \textsf{Dimensionality-Reduction} procedure. 

\begin{lemma}
\label{claim:certif_dim_red}
Let $f : 2^V \rightarrow \R$ be a submodular function with a $k$-sparse minimizer such that $f(\emptyset) = 0$, and $\delta > 0$. Let $y \in \mathbb{R}^V$ be a $(\delta, k)$ dual certificate for $f$. Then every $p \in V$ of $y$ with $y_p < - \delta$ must be in every minimizer of $f$. In particular, if $\delta < \frac{|f^*|}{k}$, then the index of the most negative component of $y$ must be in every minimizer of $f$. 
\end{lemma}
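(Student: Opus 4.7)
My plan is to prove both statements in sequence, deducing the ``in particular'' part from the first. The first statement should reduce via \Cref{claim:every_minimizer} to showing that any $p$ with $y_p < -\delta$ lies in every $k$-sparse minimizer, and I expect a direct contradiction argument using both conditions of \Cref{def:certificate} to suffice.

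First I would fix such a $p$ and any $k$-sparse minimizer $S^*$, then suppose for contradiction that $p \notin S^*$. Since $|S^* \cup \{p\}| \leq k+1$, the indicator $\indicVec{S^* \cup \{p\}}$ lies in $S_{k+1}^V$, so by the definition of $y^{k+1}_-(V)$ as the minimum of $y^\top w$ over $w \in S_{k+1}^V$, we get $y^{k+1}_-(V) \leq y(S^*) + y_p$. Property 2 of the certificate, applied to the $k$-sparse set $S^*$, gives $y(S^*) \leq f(S^*) = f^*$, while property 1 gives $y^{k+1}_-(V) \geq f^* - \delta$. Chaining these should yield $f^* - \delta \leq f^* + y_p$, hence $y_p \geq -\delta$, a contradiction.

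For the ``in particular'' claim, the plan is to show that the assumption $\delta < |f^*|/k$ forces the most negative entry $y_{p^*}$ to already satisfy $y_{p^*} < -\delta$, after which the first part applies. Since $f(\emptyset) = 0$ and $\delta > 0$, we have $f^* < 0$. For any $k$-sparse minimizer $S^*$, property 2 gives $y(S^*) \leq f^* = -|f^*| < 0$, which forces some entry of $y$ to be negative and hence $y_{p^*} \leq 0$. Since $y_{p^*}$ is the minimum entry and $|S^*| \leq k$, we also have $y(S^*) \geq |S^*| \, y_{p^*} \geq k \, y_{p^*}$ (using $y_{p^*} \leq 0$ to justify the last inequality). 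Combining should yield $k \, y_{p^*} \leq -|f^*|$, i.e., $y_{p^*} \leq -|f^*|/k < -\delta$, at which point the first statement finishes the proof.

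I do not anticipate a significant obstacle: the argument is short and the $(\delta, k)$ certificate definition appears tailored exactly to this application. The one subtle point worth noting is the interplay between the two sparsity thresholds---property 2 is invoked at sparsity $k$ (for $S^*$ itself) while property 1 is used at sparsity $k+1$ (for $S^* \cup \{p\}$)---which is precisely what makes the asymmetric definition of the certificate (an upper bound on $k$-sparse sets paired with an accuracy guarantee against $y^{k+1}_-(V)$) simultaneously useful for this kind of deduction and consistent as a relaxation of true dual optimality.
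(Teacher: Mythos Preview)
Your proposal is correct and follows essentially the same approach as the paper's proof: the same contradiction argument for the first statement (using $y(S^*\cup\{p\})\ge y^{k+1}_-(V)$ together with both certificate properties), and the same averaging bound $y(S^*)\le f^*$ with $|S^*|\le k$ for the second. The only cosmetic difference is that the paper first finds a coordinate $p\in S^*$ with $y_p\le f^*/k$ and then implicitly passes to the global minimum, whereas you bound the global minimum directly.
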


\begin{proof}
We start with the first statement. Let $p \in V$ be a coordinate with $y_p < - \delta$ and assume for the purpose of contradiction that $p$ does not lie in a $\sparsity$-sparse minimizer $S^*$ of $f$. 
Then, 
\[f(S^*) \ge y(S^*) \ge y_-^{k+1}(V) - y_p \geq f^* - \delta - y_p > f^* ,\]
contradicting to the assumption that $S^*$ is a $\sparsity$-sparse minimizer of $f$. This implies that $p$ belongs to every $\sparsity$-minimizer of $f$. Then by \Cref{claim:every_minimizer}, $p$ must be in every minimizer of $f$. 

For the second statement, let $S^*$ be any $k$-sparse minimizer of $f$. Then $y(S^*) \leq f^*$ so there most negative coordinate $p \in S^*$ of $y$ must satisfies $y_p \leq f^* / k < -\delta$. The second statement then follows from the first statement.
\end{proof}

Thus far, our discussion applied to both the parallel and sequential algorithms. From this point onward, the arc-finding technique described in the paragraphs that follow, as well as the corresponding lemma (\Cref{lem:move_to_front_dual_cert}) applies exclusively to the sequential algorithm.

\paragraph{Arc Finding.} The next lemma of this subsection is crucial to our \textsf{Arc-Finding} procedure. It states that given a $(\delta,k)$ dual certificate $y$ of $f$, moving an element $p$ to the front of the permutations corresponding to $y$ produces a dual certificate for the contracted function $f_p(\cdot): 2^{V \setminus \{p\}} \rightarrow \mathbb{R}$. Then any dimensionality reduction $S_p \subseteq V \setminus \{p\}$ for the function $f_p$ is a set of arcs from $p$. 

\begin{lemma}\label{lem:move_to_front_dual_cert}
Let $f$ be a submodular function, $y \in B(f)$ be a $(\delta, k)$ dual certificate, and $P \subseteq V$ be a subset, 
then $y_{\leftarrow P} \in B(f_P)$. 
Moreover, if $P \subseteq S^*$, where $S^*$ is a $\sparsity$-sparse minimizer of $f$, then $y(P) \leq \delta$ and 
$y_{\leftarrow P}$ is a $(\delta - y(P), k)$ dual certificate for $f_P$. 
\end{lemma}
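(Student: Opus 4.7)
My plan is to handle the three assertions in order: $y_{\leftarrow P} \in B(f_P)$, the bound $y(P) \leq \delta$, and the two conditions required of a $(\delta - y(P), k)$ dual certificate for $f_P$. The main obstacle will be property (1) of that certificate (the sparse-minimum inequality), because the only pointwise comparison available, $y_{\leftarrow P} \leq y|_{V \setminus P}$, gives upper bounds on $y_{\leftarrow P}$-sums and thus points the wrong way. The remedy will be a telescoping identity that quantifies exactly how much ``mass'' is pushed out of $V \setminus P$ by the move-to-front operation.

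First, I would use the representation $y = \sum_t \alpha_t g_{\pi_t}$ and analyze $y' := \sum_t \alpha_t g_{(\pi_t)_{\leftarrow P}}$. Each $g_{(\pi_t)_{\leftarrow P}}$ is a BFS of $f$ with $P$ at the front, so its values on $P$ telescope to $f(P)$, and its restriction to $V \setminus P$ is a BFS of the contracted function $f_P$ (the marginals of $f_P$ along initial segments of $(\pi_t)_{\leftarrow P}$ within $V\setminus P$ coincide with the marginals of $f$ along the corresponding initial segments, which all contain $P$). Hence $y_{\leftarrow P} = y'|_{V\setminus P}$ is a convex combination of BFSes of $f_P$, so $y_{\leftarrow P} \in B(f_P)$, which in particular gives property (2) of the dual certificate for free.

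Next, to prove $y(P) \leq \delta$, decompose $y(P) = y(S^*) - y(S^* \setminus P)$. The base-polytope condition gives $y(S^*) \leq f(S^*) = f^*$, and since $|S^* \setminus P| \leq k \leq k+1$, the indicator $\mathbf{1}_{S^* \setminus P}$ is feasible for the minimum defining $y_-^{k+1}(V)$, so $y(S^* \setminus P) \geq y_-^{k+1}(V) \geq f^* - \delta$ by the dual-certificate property of $y$. Subtracting yields $y(P) \leq \delta$.

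For property (1) of $y_{\leftarrow P}$, the key step is the identity
\[
\sum_{q \in V \setminus P} (\Delta_{\pi, P})_q \;=\; f(P) - g_{\pi}(P),
\]
valid for every permutation $\pi$ (it follows from $g_{\pi}(V) = g_{\pi_{\leftarrow P}}(V) = f(V)$ together with $g_{\pi_{\leftarrow P}}(P) = f(P)$, since $P$ is at the front of $\pi_{\leftarrow P}$). Averaging with the weights $\alpha_t$ gives $\sum_t \alpha_t (\Delta_{\pi_t,P})(V\setminus P) = f(P) - y(P)$. Since $\Delta_{\pi_t,P} \geq 0$ coordinate-wise on $V\setminus P$, for any $T \subseteq V \setminus P$ we obtain $y(T) - y_{\leftarrow P}(T) \leq f(P) - y(P)$. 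Choosing $T = T^*$ to attain the minimum in $(y_{\leftarrow P})_-^{k+1}(V \setminus P)$, the inclusion $T^* \subseteq V$ with $|T^*|\leq k+1$ is feasible for the minimum over $V$, so $y(T^*) \geq y_-^{k+1}(V) \geq f^* - \delta$; rearranging yields $(y_{\leftarrow P})_-^{k+1}(V\setminus P) \geq f^* - f(P) - \delta + y(P) = f_P^* - (\delta - y(P))$, where I use $f_P^* = f^* - f(P)$, which holds since $P \subseteq S^*$ and $S^*$ is a global minimizer of $f$.
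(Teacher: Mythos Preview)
Your proposal is correct and follows essentially the same approach as the paper's proof: both establish $y_{\leftarrow P}\in B(f_P)$ via the move-to-front permutation structure, derive $y(P)\le\delta$ from $y(S^*)\le f^*$ and $y(S^*\setminus P)\ge y_-^{k+1}(V)\ge f^*-\delta$, and obtain property~(1) by combining the mass identity $\sum_{q\in V\setminus P}(y_q-(y_{\leftarrow P})_q)=f(P)-y(P)$ with the dual-certificate lower bound on $y_-^{k+1}(V)$. Your choice of $T^*$ as the minimizer for $(y_{\leftarrow P})_-^{k+1}(V\setminus P)$ is in fact slightly cleaner than the paper's phrasing of the same step.
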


\begin{proof}
Let $y = \sum_{t \in [m]} \alpha_t g_{\pi_t}$ be a representation of $y$, for coefficients $\alpha \in \Delta^{[m]}$. Let $y' \defeq \sum_{t \in [m]} \alpha_t g_{(\pi_t)_{\leftarrow P}}$. 
To prove $y_{\leftarrow P} \in B(f_P)$, we first note that 
\[
y_{\leftarrow P}(V \setminus P) = y'(V) - y'(P) = f(V) - f(P) = f_P(V \setminus P).
\]
For every $S \subseteq V \setminus P$, we have 
\[
y_{\leftarrow P}(S) =  y'(S \cup P) - y'(P) \le f(S \cup P) - f(P) = f_P(S) .
\]
The above two observations imply that $y_{\leftarrow P} \in B(f_P)$. 

Next, we show that  $y(P) \leq  \delta$ for any $P \subseteq S^*$ for $\sparsity$-sparse minimizer $S^*$ of $f$.
This follows from
\[
f^* \geq y(S^*) = y(P) + y(S^* \setminus P) \geq y(P) + y^{k+1}_-(V) \geq y(P) + f^* - \delta , 
\]
where the second inequality is because $|S^* \setminus P| \leq |S^*| \leq k$ and the last inequality uses the assumption that $y$ is a $(\delta, k)$ dual certificate.  
The above implies that $y(P) \leq \delta$.

Finally, we further show that $y_{\leftarrow P}$ is a $(\delta - y(P), k)$ dual certificate for $f_P$ when $P \subseteq S^*$. Since $y_{\leftarrow P} \in B(f_P)$, we have $y_{\leftarrow P}(S) \le f_P(S)$ for any $S \subseteq V \setminus P$. Hence, we are left to show 
\begin{align} \label{eq:dual_move_1}
(f_P)^* \le (y_{\leftarrow P})^{k+1}_{-}(V \setminus P) + \delta - y(P). 
\end{align}
Define $T^* \defeq S^* \setminus P$. 
Note that 
\[
f_P(T^*) = f(S^*) - f(P) = f^* - f(P) . 
\]
On the other hand, any set $T' \subseteq V \setminus P$ satisfies 
\[
f_P(T') = f(T' \cup P) - f(P) \geq f^* - f(P) ,
\]
which implies that $(f_P)^* = f^* - f(P)$. 
Therefore, to prove \eqref{eq:dual_move_1}, it suffices to prove that
\begin{align} \label{eq:dual_move_2}
f^* - f(P) \leq (y_{\leftarrow P})^{k+1}_{-}(V \setminus P) + \delta - y(P) = (y')^{k+1}_{-}(V \setminus P) + \delta - y(P). 
\end{align}
To prove \eqref{eq:dual_move_2}, we seek to compare $(y')_-^{k+1}(V \setminus P)$ to $y_-^{k+1}(V \setminus P)$. As we move $P$ to the front for each permutation in the subgradient $g_{\pi_t}$ to obtain $y'$ from $y$, we have $y'_q \leq y_q$ for each coordinate $q \in V \setminus P$ due to submodularity. 
Also note that $y'(P) - y(P) = f(P) - y(P)$ and $y'(V) = y(V) = f(V)$.  
These imply that 
\[
\sum_{q \in V \setminus P} (y_q - y'_q)  =  (y(V) - y(P)) - (y'(V) - y'(P)) = f(P) - y(P) ,
\]
i.e., the total amount of decrease for coordinates in $V \setminus P$ when changing from $y$ to $y'$ is $f(P) - y(P)$. In particular, the most negative $k+1$ coordinates of $y$ in $V \setminus P$ can decrease by at most $f(P) - y(P)$. Therefore, we have
\[
(y')^{k+1}_{-}(V \setminus P) \ge y^{k+1}_{-}(V \setminus P) - (f(P) - y(P)) \ge f^* - \delta - (f(P) - y(P)),
\]
where the last inequality uses the assumption that $y$ is a $(\delta, k)$ dual certificate for $f$. 
This proves \eqref{eq:dual_move_2} and completes the proof of the lemma.
\end{proof}

\section{$\mathsf{Poly}(k)$-Depth Parallel Algorithm for $k$-sparse SFM}
\label{sec:parallel}

In this section, we present our parallel algorithm for $k$-sparse SFM and prove \Cref{thm:main_parallel}. 
We apply mirror descent with an entropic regularizer to the \lovasz extension of $f$ restricted to $S_k^V$, the subset of vectors in $[0,1]^V$ with $\ell_1$ norm at most $k$, up to an accuracy $\|u_f\|_\infty / \poly(k)$ to obtain a $(\delta,k)$ dual certificate for  dimensionality reduction or arc finding.

However, naively using mirror descent yields an algorithm whose number of iterations depends polynomially on the ratio between the $\ell_\infty$-norm of the subgradient and the accuracy we need, which may be up to $n$ if the $\ell_\infty$-norm of the gradient is large.
To get around this large dependence on $n$, we employ a novel technique of truncating the subgradient values. We show that we can cap the negative values at some threshold depending on $f^*$ and $\norm u_{\infty}$ so that running mirror descent up to the same desired accuracy still allows us to find a contraction or an arc.

The section is organized as follows. 
We start by presenting the classic mirror descent algorithm and its convergence guarantee in \Cref{subsec:mirror_descent}. 
In \Cref{subs_MD}, we present our mirror descent with truncation method for obtaining $(\delta, k)$ dual certificates.
Then in \Cref{subsec:dimensionality_reduction_parallel}, we present our parallel implementation of \textsf{Dimensionality-Reduction}. Our parallel implementation of \textsf{Arc-Finding} is contained in  \Cref{subsec:arc_finding_parallel}.
Finally in \Cref{subs_correctness}, we provide a proof of \Cref{thm:main_parallel}.

\subsection{Mirror Descent}
\label{subsec:mirror_descent}

Before detailing our algorithm, we first present the classic mirror descent algorithm (\Cref{MirrorDescent}) of Nemirovski and Yudin \cite{NY83}. 
We refer to the excellent monograph \cite{B15} for more background and details on mirror descent. 
The following lemma is the standard performance guarantee of mirror descent, which is a slight adaptation\footnote{The statement of \Cref{lem:MD_guarantee} comes from the RHS of the last inequality in the proof of Theorem 4.2 in \cite{B15}. Note that the bound on regret $\sum_{t=0}^{m-1} \langle h_t, x_t - w \rangle$ does not require $h_t$ being the subgradient of $f$ at $x_t$.} of Theorem 4.2 in \cite{B15}.

\begin{algorithm}[htp!]
\caption{Mirror Descent}\label{MirrorDescent}
    \KwData{A convex function $f$ on $\mathbb{R}^n$, a convex domain $D \subseteq \mathbb{R}_{>0}^n$, an initial point $x_0 \in D$, a step size $\eta > 0$, and number of iterations $m$}
    \KwResult{Sequence of iterates $\{x_0, x_1, \ldots, x_m\} \subseteq D$ which satisfies \Cref{lem:MD_guarantee}}
    \SetKwFunction{MD}{MirrorDescent}
    \SetKwProg{Fn}{Function}{:}{}
    \Fn{\MD{$f, D, x_0, \eta, m$}}{
    \For{$t = 0, 1, \ldots, m$}{
        Compute a vector $h_t \in \mathbb{R}^n$ \label{line:compute_ht} \tcp*{Approximate subgradient $h_t$ can depend on $\{x_i\}_{i=0}^t$}
        $x_{t+1} = \argmin_{x \in D} \eta h_t^\top x + V_{x_t}(x)$  \tcp*{Recall $V_{x_t}(x)\defeq r(x)-\left(r(x_t)+\grad r(x_t)^{\top}(x-x_t)\right)$}
    }
    \Return $\{x_0, x_1, \ldots, x_m\}$\;
    }
\end{algorithm}

\begin{lemma}[Mirror Descent, Theorem 4.2 in \cite{B15}]
\label{lem:MD_guarantee}
Let function $f$ be convex on $\mathbb{R}^n$ and $r: \R^n \rightarrow \R$ be $\rho$-strongly convex on $D \subseteq \mathbb{R}^n$ with respect to norm $\|\cdot\|$. If the vectors $h_0, \ldots, h_{m-1}$ in \Cref{line:compute_ht} of \Cref{MirrorDescent} satisfy $\|h_t\|_* \leq L$ with respect to the dual norm $\|\cdot\|_*$ for all iterations $t = 0, \ldots, m-1$. Then for any point $w \in D$, the iterates of \Cref{MirrorDescent} satisfy 
\[
\sum_{t=0}^{m-1} \langle h_t, x_t - w \rangle \le \eta \frac{L^2 m}{2 \rho} + \frac{V_{x_0}(w)}{\eta } .
\]
\end{lemma}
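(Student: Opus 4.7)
The plan is to follow the standard mirror-descent regret argument from \cite{B15}; the hypothesis of the lemma is precisely the one needed, and in particular the proof does not use the fact that $h_t$ is a (sub)gradient of $f$ at $x_t$. Fix $w \in D$ and a single iteration index $t$. First I would write down the first-order optimality condition for the update $x_{t+1} = \argmin_{x \in D} \eta h_t^\top x + V_{x_t}(x)$, namely
$$\langle \eta h_t + \nabla_x V_{x_t}(x_{t+1}),\, w - x_{t+1}\rangle \;\ge\; 0,$$
and expand $\nabla_x V_{x_t}(x) = \nabla r(x) - \nabla r(x_t)$ using the definition of the Bregman divergence.

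Next I would invoke the three-point identity for Bregman divergences,
$$\langle \nabla r(x_t) - \nabla r(x_{t+1}),\, w - x_{t+1}\rangle \;=\; V_{x_t}(x_{t+1}) + V_{x_{t+1}}(w) - V_{x_t}(w),$$
which follows by direct algebraic expansion from the definition of $V$. Combined with the previous display this yields the per-step inequality
$$\eta \langle h_t, x_{t+1} - w\rangle \;\le\; V_{x_t}(w) - V_{x_{t+1}}(w) - V_{x_t}(x_{t+1}).$$
This is the step where I expect the most care is needed, since it requires tracking signs carefully across the optimality condition, the three-point identity, and the direction of the inner product; modulo that bookkeeping the argument is routine.

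Then I would decompose $\eta \langle h_t, x_t - w\rangle = \eta \langle h_t, x_t - x_{t+1}\rangle + \eta \langle h_t, x_{t+1} - w\rangle$, bound the first summand by $\eta L \|x_t - x_{t+1}\|$ using $\|h_t\|_* \le L$ and the dual-norm inequality, and bound $V_{x_t}(x_{t+1}) \ge \tfrac{\rho}{2}\|x_t - x_{t+1}\|^2$ using $\rho$-strong convexity of $r$. A one-dimensional Fenchel--Young step, $\eta L z - \tfrac{\rho}{2} z^2 \le \tfrac{\eta^2 L^2}{2\rho}$ with $z = \|x_t - x_{t+1}\|$, collapses these into the clean per-step bound
$$\eta \langle h_t, x_t - w\rangle \;\le\; \frac{\eta^2 L^2}{2\rho} + V_{x_t}(w) - V_{x_{t+1}}(w).$$
Summing over $t = 0, \ldots, m-1$ telescopes the Bregman terms, dropping the nonnegative $V_{x_m}(w)$, and dividing by $\eta$ delivers the claimed regret bound. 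Since no step uses that $h_t$ equals a specific (sub)gradient of $f$, the conclusion applies uniformly to the truncated or stochastic gradient vectors used elsewhere in the paper.
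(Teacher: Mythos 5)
Your proof is correct and follows the standard mirror-descent regret argument that the paper cites from Bubeck's Theorem 4.2: first-order optimality of the proximal step, the three-point identity, the split $\langle h_t, x_t - w\rangle = \langle h_t, x_t - x_{t+1}\rangle + \langle h_t, x_{t+1} - w\rangle$, a Cauchy--Schwarz plus strong-convexity plus one-dimensional Fenchel--Young bound, and telescoping. You also correctly identify the key point the paper flags in its footnote, namely that nothing in the argument uses that $h_t$ is an actual subgradient of $f$ at $x_t$, which is what lets the paper apply the bound with truncated gradients.
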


\subsection{Dual Certificate via Mirror Descent with Truncation}
\label{subs_MD}

In this subsection, we show how to efficiently compute a dual certificate as in \Cref{def:certificate} using mirror descent with truncation. We start with the formal definition of truncation. 

\begin{definition}[Truncation]
Given $s > 0$, we define the function $\mathrm{trunc}_{s}(\cdot): \R^V \to \R^V$ as \[ (\mathrm{trunc}_{s}(g))_p \defeq \max \{-s, g_p\}, \enspace \forall p \in V .\]  
\end{definition}
Note that for any permutation $\pi: [n] \rightarrow V$, the subgradient $g_\pi$ satisfies $f(S) \geq g_\pi(S)$ for any $S \subseteq V$. 
Unfortunately, applying truncation to $g_\pi$ doesn't preserve this property. However, the following claim shows that truncation does preserve a sparse counterpart of this property, i.e. the the second condition in \Cref{def:certificate}.

\begin{claim}
\label{claim:truncated_subgr}
Let $f : 2^V \rightarrow \R$ be a submodular function, $\pi: [n] \rightarrow V$ a permutation, and $s > 0$. Let $h \defeq \mathrm{trunc}_{s}(g_{\pi})$ be the truncated subgradient. If $f^* \ge -s + (k-1) \cdot \max_p (g_{\pi})_p$, then $f(S) \ge h(S)$ for any $k$-sparse $S \subseteq V$. 
\end{claim}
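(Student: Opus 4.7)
My plan is to partition $S$ into the truncated and untruncated coordinates and bound each part separately. Set $S^- \defeq \{p \in S : (g_\pi)_p < -s\}$ and $S^+ \defeq S \setminus S^-$, so that $h_p = (g_\pi)_p$ on $S^+$ and $h_p = -s$ on $S^-$. The easy case is $S^- = \emptyset$: since $g_\pi$ is a vertex of the base polytope $B(f)$ and thus satisfies $g_\pi(T) \leq f(T)$ for every $T \subseteq V$, we immediately get $h(S) = g_\pi(S) \leq f(S)$. So I may assume $|S^-| \geq 1$.

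In this case $k$-sparsity of $S$ forces $|S^+| \leq k - 1$, and writing $M \defeq \max_p (g_\pi)_p$,
\[
h(S) = \sum_{p \in S^+} (g_\pi)_p - s|S^-| \leq M|S^+| - s|S^-|.
\]
If I can also establish $M \geq 0$, then $M|S^+| \leq (k-1)M$ and $-s|S^-| \leq -s$, which together give
\[
h(S) \leq (k-1)M - s \leq f^* \leq f(S),
\]
where the middle inequality is precisely the hypothesis of the claim.

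The step I expect to require the most care is ruling out $M < 0$ when $|S^-| \geq 1$: in that regime the inequality $M|S^+| - s|S^-| \leq (k-1)M - s$ can fail, so the argument above genuinely breaks unless this case is excluded. My plan for ruling it out is to pick any $\pi(i^*) \in S^-$ and apply $g_\pi(V) = f(V)$ together with $(g_\pi)_p \leq M$ for every $p$:
\[
f^* \le f(V) = g_\pi(V) \leq (n-1) M + (g_\pi)_{\pi(i^*)} < (n-1)M - s.
\]
Combined with the hypothesis $f^* \geq (k-1)M - s$, this gives $(n-k)M > 0$; under the standing assumption $n \geq k$ in $k$-sparse SFM, this forces $M > 0$, which closes the case analysis and completes the proof.
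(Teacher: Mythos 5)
Your proof is correct and takes essentially the same route as the paper's --- split on whether $S$ meets the truncated set, and in the nontrivial case bound $h(S) \le -s + (k-1)\max_p (g_\pi)_p$ --- but you supply a step the paper omits. The paper passes directly from $h_q = -s$ to $h(S) \le -s + (k-1) \max_p (g_\pi)_p$ without addressing the sign of $M \defeq \max_p (g_\pi)_p$; when $M < 0$ the step $|S^+|M \le (k-1)M$ reverses, and in fact the conclusion of the claim can genuinely fail when $n < k$ and $M < 0$ (take $V = \{1,2\}$, $f(\emptyset)=0$, $f(\{1\})=f(\{2\})=-1$, $f(\{1,2\})=-5$, $k=3$, $s=3$, $\pi = (1,2)$: then $h(\{1,2\}) = -4 > -5 = f(\{1,2\})$). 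Your closing observation --- that a truncated coordinate together with $g_\pi(V)=f(V)$ and the hypothesis on $f^*$ force $(n-k)M > 0$ --- is precisely what rules out $M \le 0$ under the standing assumption $n \ge k$, and so your worry at that step is well founded. In the paper's actual usage (\Cref{lem:certif_runtime}) the sign issue never materializes, since the extension maintainer guarantees $u_f \ge 0$ and hence $M \ge (g_\pi)_{\pi(1)} = (u_f)_{\pi(1)} \ge 0$; but as a proof of the claim as literally stated, yours is the more complete one.
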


\begin{proof}
Fix any $S \subseteq V$ and consider two cases. The first case is when no $q \in S$ was truncated to get $h$. In this case $h(S) = g_{\pi}(S) \le f(S)$ for any $S \subseteq V$.

The second case is when there exists $q \in S$ so that $(g_\pi)_q$ is truncated to get $h$. In this case we must have $h_q = -s$, and therefore 
\[
h(S) \le -s + (k-1) \max_p (g_{\pi})_p \le f^* \le f(S) ,
\]
where the second inequality follows from the assumed lower bound on $f^*$.
\end{proof}

\Cref{claim:truncated_subgr} allows us to use truncated subgradients of the \lovasz extension in \Cref{MirrorDescent} to compute a $(\delta,k)$ dual certificate. 
A formal description of this procedure is given in \Cref{dual_certificate_alg}. 
The correctness and parallel depth guarantee of \Cref{dual_certificate_alg} is given in the following \Cref{lem:certif_runtime}.

\begin{algorithm}[htp!]
\caption{Mirror Descent with Truncation for Parallel Algorithm}\label{dual_certificate_alg}
    \KwData{A submodular function $f$, a sparsity parameter $k \in \mathbb{Z}_{> 0}$,  a lower bound $\optlb$, and an accuracy parameter $\delta > 0$ \tcp*{Lower bound $-\phi \leq f^*$}} 
    \KwResult{A $(\delta, k)$ dual certificate $y$ for $f$}
    \SetKwFunction{dualcert}{DualCertificate}
    \SetKwProg{Fn}{Function}{:}{}
    \Fn{\dualcert{$f, k, \optlb, \delta$}}{
    $s \gets k \|u_f\|_\infty + \optlb$ \tcp*{Truncation threshold}
    $x_0 \gets \frac{k}{n} \cdot \indicVec{V}$ \tcp*{Initial point }
    $m \gets \frac{s^2 k(k+1) \log n}{\delta^2}$ 
    \tcp*{Number of iterations}
    $\eta \gets \frac{2\sqrt{k \log n}}{s \sqrt{m (k+1)}}$ \tcp*{Step size}
    Run \textsf{MirrorDescent}$(f, S_{k+1}^V, x_0, \eta, m)$ with $h_t \gets \mathrm{trunc}_{s}(g_{x_t})$ for each iteration $t = 0,\ldots, m-1$ in \Cref{line:compute_ht} of \Cref{MirrorDescent} \label{line:call_mirror_descent}  \;
    $y = \frac{1}{m } \sum_{t=0}^{m -1} h_t$\;
    \Return $y$\;
    }
\end{algorithm}

\begin{lemma}[Mirror Descent with Truncation]
\label{lem:certif_runtime}
Given a sparsity parameter $k \in \mathbb{Z}_{> 0}$, a submodular function $f: 2^V \rightarrow \mathbb{R}$, a lower bound $- \phi \leq f^*$, and an accuracy parameter $\delta > 0$, \Cref{dual_certificate_alg} outputs a $(\delta, k)$ dual certificate $y$ in $\Tilde{O}(k^2 (k \norm{u_f}_{\infty} + \phi)^2 / \delta^2)$ parallel depth and $\Tilde{O}(n k^2 (k \norm{u_f}_{\infty} + \phi)^2 / \delta^2 \cdot \EO + \poly(n, (k \|u_f\|_\infty + \phi)/\delta)))$ time. 
\end{lemma}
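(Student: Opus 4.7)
The plan is to verify the two conditions of \Cref{def:certificate} using \Cref{lem:MD_guarantee}, then bound the runtime. The proof naturally splits into (i) establishing key properties of the truncated surrogate gradients $h_t$, (ii) instantiating mirror descent on the scaled simplex, (iii) combining these to certify $y$, and (iv) accounting for the cost of an iteration.

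First I would establish two properties of $h_t = \mathrm{trunc}_s(g_{x_t})$ with $s = k\|u_f\|_\infty + \phi$. Property (a): $\|h_t\|_\infty \le s$, since the entrywise bound $(g_{x_t})_p \le (u_f)_p \le \|u_f\|_\infty \le s$ follows from submodularity via the subgradient formula in \Cref{thm:lovasz_extension_properties}(c), while truncation enforces $(h_t)_p \ge -s$. Property (b), which is the crux:
\[
h_t^\top x_t \;\ge\; g_{x_t}^\top x_t \;=\; \hat f(x_t) \;\ge\; f^\ast,
\]
where the first inequality uses $(h_t)_p \ge (g_{x_t})_p$ entrywise together with $x_t \in S_{k+1}^V \subseteq \R_{\ge 0}^V$, and the equality is \Cref{thm:lovasz_extension_properties}(c).

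Next I would instantiate mirror descent on $S_{k+1}^V$. The entropy $r$ is $\frac{1}{k+1}$-strongly convex with respect to $\|\cdot\|_1$ on $S_{k+1}^V$ by Cauchy-Schwarz ($\|v\|_1^2 \le \|x\|_1\, v^\top\!\diag(1/x)v \le (k+1)\, v^\top\!\nabla^2 r(x)v$), so the dual norm on gradients is $\|\cdot\|_\infty$ with bound $L = s$. For any extreme point $w = \indicVec T$ of $S_{k+1}^V$ (with $|T| \le k+1$), formula \eqref{eq:breg_formula} yields
\[
V_{x_0}(\indicVec T) \;=\; |T|\log(n/k) + (k - |T|) \;=\; O(k \log n).
\]
Applying \Cref{lem:MD_guarantee} with $\rho = 1/(k+1)$, $L = s$, and the algorithm's choices of $\eta$ and $m$ then gives $\frac{1}{m}\sum_{t=0}^{m-1}\langle h_t, x_t - w\rangle \le \delta$ for every such extreme $w$. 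Combining with (b), for $w^* \in \arg\min_{w \in S_{k+1}^V} y^\top w$, which can be taken to be an extreme point,
\[
f^\ast \;\le\; \tfrac{1}{m}\sum_t h_t^\top x_t \;\le\; y^\top w^\ast + \delta \;=\; y^{k+1}_-(V) + \delta,
\]
using $\min_{w \in S_{k+1}^V} y^\top w = y^{k+1}_-(V)$ by definition. This is condition~1. Condition~2 follows from \Cref{claim:truncated_subgr}: its hypothesis reduces to $-s + (k-1)\max_p (g_{x_t})_p \le -s + (k-1)\|u_f\|_\infty = -\|u_f\|_\infty - \phi \le -\phi \le f^\ast$, so $h_t(S) \le f(S)$ for every $k$-sparse $S$; averaging over $t$ yields $y(S) \le f(S)$.

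For the runtime, $m = \otilde(k^2 (k\|u_f\|_\infty + \phi)^2 / \delta^2)$. Each iteration performs one subgradient computation, which via the parallel sort of $x_t$ and $n$ prefix-marginal queries to $f^{\sharp \rfam}$ takes $O(1)$ depth and $O(n\cdot\EO + n)$ work (via \Cref{thm:ext_maintainer}), plus the update $x_{t+1} = \arg\min\, \eta h_t^\top x + V_{x_t}(x)$ over $S_{k+1}^V$. This update is a separable entropic projection handled by a binary search on the single Lagrange multiplier for $\|x\|_1 \le k+1$ combined with the box cap at $1$, doable in $\poly(n, s/\delta)$ time, yielding the claimed depth and time. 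The main obstacle in the argument is property (b): realizing that because $x_t \ge 0$, truncation can only raise $h_t^\top x_t$ above $\hat f(x_t)$ is what allows the biased surrogate $h_t$ to still certify $f^\ast$ from below. Without this, the honest subgradient entries of $\hat f$ could be as negative as $f^\ast - (n-1)\|u_f\|_\infty$, forcing $\poly(n)$ iterations; truncation is precisely what replaces that factor of $n$ by $\poly(k)$.
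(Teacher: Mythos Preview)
Your proposal is correct and follows essentially the same approach as the paper: bound $\|h_t\|_\infty\le s$, use $1/(k{+}1)$-strong convexity of entropy on $S_{k+1}^V$ together with the $O(k\log n)$ Bregman-radius bound to invoke \Cref{lem:MD_guarantee}, exploit $h_t^\top x_t \ge g_{x_t}^\top x_t = \hat f(x_t)\ge f^*$ to obtain condition~1 by choosing $w$ as the indicator of the $k{+}1$ most negative coordinates of $y$, and verify condition~2 via \Cref{claim:truncated_subgr}. The only cosmetic difference is that you restrict the regret bound to extreme points $w=\indicVec{T}$ (which suffices since the minimizer of a linear function over $S_{k+1}^V$ can be taken at a vertex), whereas the paper bounds $V_{x_0}(w)$ for all $w\in S_{k+1}^V$.
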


\begin{proof}
Since $h_t = \mathrm{trunc}_{s}(g_{x_t})$, we have $(h_t)_p \geq -s$ for each coordinate $p \in V$. Also note that for each coordinate $p$ such that $(h_t)_p \geq 0$, we have $(h_t)_p \leq (g_{x_t})_p \leq \|u_f\|_\infty$. These imply that $\|h_t\|_\infty \leq s$ in every iteration $t$ of \textsf{MirrorDescent} in \Cref{line:call_mirror_descent} of \Cref{dual_certificate_alg}. 

Next, we note that the negative entropy function $r(x) = \sum_{i \in V} x_i \log x_i$ is $1/(k+1)$-strongly convex\footnote{This standard fact can be obtained by showing that whenever $x \in S_{k+1}^V$, the Hessian $\nabla^2 r(x) = \mathsf{diag}(1/x_1, \ldots, 1/x_{|V|})$ satisfies $v^\top \nabla^2 r(x) v \geq \|v\|_1^2/(k+1)$ for all $v \in \mathbb{R}^n$.} on $S^V_{k+1}$. 
Thus we can apply \Cref{lem:MD_guarantee}, with $\|\cdot\|$ being the $\ell_1$-norm $\|\cdot\|_1$ and the parameters $\rho = 1/(k+1)$ and $L = s$, to obtain that for every point $w \in S_{k+1}^V$, 
\[
\sum_{t=0}^{m-1} h_t^\top(x_t - w) \leq \frac{\eta s^2 m (k+1)}{2} + \frac{V_{x_0}(w)}{\eta} \leq \frac{\eta s^2 m (k+1)}{2} + \frac{2 k \log n}{\eta} .
\]
Here, the last inequality follows because $x_0 = \frac{k}{n} \cdot \indicVec{V}$ in \Cref{dual_certificate_alg}, $w \in S_k^V$, and 
 by \eqref{eq:breg_formula} that 
\begin{align*}
V_{x_0}(w) & = \sum_{i \in V} w_i \log(w_i n/k) + \sum_{i \in V} (k/n - w_i) \\
& \leq k \cdot \sum_{i \in V} (w_i/k) \log(w_i/k) + \sum_{i \in V} w_i \cdot (\log n - 1) + k \le 2k \log n .
\end{align*}
The best choice of $\eta$ above is $\eta = \frac{2\sqrt{k \log n}}{s \sqrt{m (k+1)}}$, as in \Cref{dual_certificate_alg}. Since we have also set $m = s^2 k(k+1) \log n/\delta^2$. the above bound becomes
\begin{align} \label{eq:dual_cert}
\frac{1}{m} \sum_{t=0}^{m-1} h_t^\top (x_t - w) 
\le \frac{s \sqrt{k (k+1) \log n}}{\sqrt{m }} \le \delta .
\end{align}
Now we consider the output  $y = \frac{1}{m} \sum_{t=0}^{m-1}  h_t$ of \Cref{dual_certificate_alg}. Note that $h_t \ge g_{x_t}$ for every iteration $t$, which implies that $h_t^\top x_t \ge g_{x_t}^\top x_t$ since $x_t \in [0,1]^V$. It follows that 
\[
f^* \le \frac{1}{m } \sum_{t=0}^{m-1} f(x_t) = \frac{1}{m} \sum_{t=0}^{m-1} g_{x_t}^\top x_t \le \frac{1}{m} \sum_{t=0}^{m-1} h_t^\top  x_t ,
\]
where the equality above is due to \Cref{thm:lovasz_extension_properties}. 
Now, pick $w = \indicVec{T}$, where $T$ is the set of the $k+1$ most negative coordinates of $y$, which achieves the minimum value of $y^\top w = y^{k+1}_{-}(V)$. Plugging this choice of $w$ into \eqref{eq:dual_cert}, we obtain 
\[
f^* \leq \frac{1}{m} \sum_{t=0}^{m-1} h_t^\top x_t \leq y^\top w + \delta \le y^{k+1}_{-}(V) + \delta ,
\]
which gives the first condition in \Cref{def:certificate}. To obtain the second condition, note that since $f^* \geq -\phi$, the choice of $s = k \|u_f\|_\infty + \phi$ satisfies 
\[
f^* \geq - \phi = -s + k \|u_f\|_\infty \geq -s + (k-1) \cdot \max_p (g_{x_t})_p .
\]
It then follows from \Cref{claim:truncated_subgr} that $f(S) \ge h_t(S)$ for all $\sparsity$-sparse $S \subseteq V$. This implies $y(S) = \frac{1}{m} h_t(S) \leq f(S)$ for all $\sparsity$-sparse $S$ and proves that $y$ is a $(\delta,k)$ dual certificate.

Finally, note that \Cref{dual_certificate_alg} takes $m = \Tilde{O}(k^2 (k \norm{u_f}_{\infty} + \phi)^2 / \delta^2)$ iterations and that each $h_t$ can be computed using one round of $n$ parallel $\EO$ queries to $f$. This implies that the parallel depth of \Cref{dual_certificate_alg} is $\Tilde{O}(k^2 (k \norm{u_f}_{\infty} + \phi)^2 / \delta^2)$ and its runtime is $\Tilde{O}(n k^2 (k \norm{u_f}_{\infty} + \phi)^2 / \delta^2 \cdot \EO + \poly(n, (k \|u_f\|_\infty + \phi)/\delta))$. 
\end{proof}

\subsection{Dimensionality Reduction for Parallel Algorithm}
\label{subsec:dimensionality_reduction_parallel}

In this subsection, using the algorithm for computing $(\delta,k)$ dual certificates given in \Cref{subs_MD}, we give an implementation of the procedure $\textsf{Dimensionality-Reduction}$ for our parallel algorithm in \Cref{alg:dim_red_parallel}.
Recall from \Cref{sec:framework:meta-algorithm} that assuming $f$ has a $\sparsity$-sparse minimizer, the procedure \textsf{Dimensionality-Reduction}$(f,k)$ either outputs a subset $T \neq \emptyset$ that belongs to every minimizer of $f$, or certify that $f^* > \|u_f\|_\infty / 4$.

\Cref{alg:dim_red_parallel} aims at finding a value $\phi>0$ such that $-\phi \leq f^* \leq -\phi/2$. 
Since $f^* > - (\|u_f\|_1 - f(V))$ by submodularity, \Cref{alg:dim_red_parallel} starts by guessing $\phi = \phi^{(0)} \defeq \|u_f\|_1 - f(V)$ in \Cref{line:init_guess} and keeps decreasing the value of $\phi$ by a factor of 2 in each iteration of the while loop in \Cref{line:solvescale_whileloop} until $\phi$ falls under the threshold $\|u_f\|_\infty/4$. In the case where $f^* \leq -\|u_f\|_\infty/4$, then along this halving process there must be an iteration $i$ where the value $\phi = \phi^{(i)}$ satisfies $-\phi^{(i)}\leq f^* \leq -\phi^{(i)}/2$. The following claim shows that in such an iteration, the set $T$ in \Cref{line:find_dim_red_parallel} is a non-emtpy dimensionality reduction. 

\begin{claim}[Correctness of \Cref{alg:dim_red_parallel}]
\label{claim:correctness_dim_reduction_parallel}
In any iteration of the while loop of \Cref{alg:dim_red_parallel}, the set $T$ in \Cref{line:find_dim_red_parallel} lies in every minimizer of $f$. Moreover, if the value $\phi = \phi^{(i)}$ in an iteration $i$ satisfies $-\phi^{(i)} \leq f^* \leq -\phi^{(i)}/2$, then the set $T$ in \Cref{line:find_dim_red_parallel} is non-empty in that iteration. 
\end{claim}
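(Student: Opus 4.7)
The plan is to reduce both halves of the claim to \Cref{claim:certif_dim_red} applied to the $(\delta, k)$ dual certificate produced by \textsf{DualCertificate} (\Cref{dual_certificate_alg}). In each iteration $i$ of the while loop, \Cref{alg:dim_red_parallel} invokes \textsf{DualCertificate}$(f, k, \phi^{(i)}, \delta)$ with $\delta$ chosen as a small constant fraction of $\phi^{(i)}/k$ (for concreteness take $\delta = \phi^{(i)}/(4k)$), and then forms $T$ as the set of coordinates $p \in V$ with $y_p < -\delta$.

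First I would verify that the inputs to \textsf{DualCertificate} are legitimate in every iteration, namely $-\phi^{(i)} \le f^*$. For $i = 0$, the guess $\phi^{(0)} = \|u_f\|_1 - f(V)$ is a valid lower bound by iterated submodularity: diminishing returns give $f(V) - f(V \setminus \{p\}) \le (u_f)_p$, hence $f(V) - f(S) \le \sum_{p \in V \setminus S}(u_f)_p \le \|u_f\|_1$ for every $S \subseteq V$, so $f^* \ge f(V) - \|u_f\|_1 = -\phi^{(0)}$. Each subsequent $\phi^{(i)} = \phi^{(i-1)}/2$ still satisfies $-\phi^{(i)} \le f^*$ in the regime where the claim is non-trivial; once $\phi$ drops below the while-loop threshold of order $\|u_f\|_\infty$, the loop exits and the second part of the claim is vacuous.

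For part~(1), \Cref{lem:certif_runtime} guarantees that $y$ is a $(\delta, k)$ dual certificate for $f$, and \Cref{claim:certif_dim_red} then says every coordinate $p$ with $y_p < -\delta$ lies in every minimizer of $f$. Since $T$ consists of exactly such coordinates, every element of $T$ lies in every minimizer. For part~(2), assume $-\phi^{(i)} \le f^* \le -\phi^{(i)}/2$ and fix any $k$-sparse minimizer $S^*$ of $f$. The second defining property of a $(\delta, k)$ dual certificate gives $y(S^*) \le f(S^*) = f^* \le -\phi^{(i)}/2$. Since $|S^*| \le k$, averaging produces some $p \in S^*$ with $y_p \le f^*/k \le -\phi^{(i)}/(2k) < -\delta$ for $\delta = \phi^{(i)}/(4k)$. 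Hence $p \in T$ and $T \ne \emptyset$.

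There is no deep obstacle; the main subtlety is the coupling of scales. The accuracy $\delta$ fed to \textsf{DualCertificate} must be strictly smaller than $\phi^{(i)}/(2k)$ so that the pigeonhole-produced coordinate strictly beats the threshold $-\delta$, and the halving schedule together with the termination rule of the while loop must keep $-\phi^{(i)} \le f^*$ throughout the relevant iterations. Both are routinely arranged by taking $\delta = \phi^{(i)}/(Ck)$ for a modest constant $C > 2$, after which the argument is a direct combination of \Cref{lem:certif_runtime}, \Cref{claim:certif_dim_red}, and the averaging bound above.
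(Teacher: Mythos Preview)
Your proposal is correct and follows essentially the same approach as the paper: both parts reduce to \Cref{claim:certif_dim_red} once \Cref{lem:certif_runtime} guarantees that $y$ is a $(\delta,k)$ dual certificate. Two minor remarks: the algorithm actually sets $\delta = \phi^{(i)}/(3k)$ (not $\phi^{(i)}/(4k)$), which still satisfies your inequality $-\phi^{(i)}/(2k) < -\delta$; and for Part~(2) the paper simply cites the second clause of \Cref{claim:certif_dim_red} (checking $\delta < |f^*|/k$), whereas you inline the same averaging argument that proves that clause.
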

\begin{proof}
Since $y$ is a $(\delta, k)$ dual certificate in each iteration of the while loop, and the set $T$ in \Cref{line:find_dim_red_parallel} is consisted of all the element $p \in V$ with $y_p < -\delta$, the first statement of the claim immediately follows from \Cref{claim:certif_dim_red}. For the second statement, note that if $-\phi^{(i)} \leq f^* \leq -\phi^{(i)}/2$ and $\delta = \phi^{(i)} / 3k$, it follows that $\delta < |f^*|/k$. So the second part of \Cref{claim:certif_dim_red} implies that $T \neq \emptyset$. 
\end{proof}

\Cref{claim:correctness_dim_reduction_parallel} implies that each non-empty set $T$ in \Cref{line:find_dim_red_parallel} found in the while loop lies in every $\sparsity$-sparse minimizer of $f$. 
If no such non-empty set $T$ is found throughout the while loop, then by \Cref{claim:correctness_dim_reduction_parallel} this means $f^* > \|u_f\|_\infty/4$, so \Cref{alg:dim_red_parallel} will simply output $\emptyset$.

\begin{algorithm}[htp!]
\caption{Dimensionality Reduction for Parallel Algorithm}\label{alg:dim_red_parallel}
    \KwData{A sparsity parameter $k$, and a submodular function $f$ with a $\sparsity$-sparse minimizer}
    \KwResult{A subset $T \subseteq V$ that must be in every minimizer of $f$, or $T = \emptyset$ certifying that $f^* > - \|u_f\|_\infty/4$}
    \SetKwFunction{dimred}{Dimensionality-Reduction}
    \SetKwProg{Fn}{Function}{:}{}
    \Fn{\dimred{$f, k$}}{
        $\phi \gets \|u_f\|_1 - f(V)$ \label{line:init_guess} \tcp*{Lower bound $f^* \geq -(\|u_f\|_1 - f(V))$}
        \While(\tcp*[f]{Implement while loop in parallel} ){$\phi \geq \|u_f\|_\infty / 4$ \label{line:solvescale_whileloop}}{
        $\delta = \frac{\phi}{3k}$\;
         $y \gets$ \dualcert{$f, k, \phi, \delta$} \tcp*{$y$ is a $(\delta,k)$ dual certificate}
         $T \gets \{p: y_p < -\delta\}$ \label{line:find_dim_red_parallel} \;
        \lIf(\tcp*[f]{When $-\phi \leq f^* \leq -\phi/2$ then $T \neq \emptyset$}){$T \ne \emptyset$}{
                \Return $T$ 
        }
        $\phi \gets \phi / 2$ \tcp*{$T = \emptyset$ indicates $f^* > -\phi/2$}
        }
         \Return $\emptyset$\; 
    }
\end{algorithm}

The correctness and runtime guarantees of \Cref{alg:dim_red_parallel} is formally given in the following lemma.

\begin{lemma}[$\textsf{Dimensionality-Reduction}$ for Parallel Algorithm]
\label{lem:dim_red_parallel}
    Let $k \in \mathbb{Z}_{>0}$ and $f : 2^V \rightarrow \R$ be a submodular function with a $k$-sparse minimizer. Then \Cref{alg:dim_red_parallel} outputs a set $T \subseteq V$ that lies in any minimizer such that $T = \emptyset$ implies $f^* > -\|u_f\|_\infty/4$. 
    Moreover, the algorithm uses $\Tilde{O}(k^6)$ parallel depth and runs in time $\Tilde{O}(n k^6 \cdot \EO + \poly(n))$. 
\end{lemma}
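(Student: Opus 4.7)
The plan is to establish correctness first and then bound the parallel depth and runtime, relying on Claim \ref{claim:correctness_dim_reduction_parallel} (which packages the dual certificate analysis) and Lemma \ref{lem:certif_runtime} (which gives the performance of \textsf{DualCertificate}). For correctness, any nonempty $T$ returned in line \ref{line:find_dim_red_parallel} is immediately valid by the first part of Claim \ref{claim:correctness_dim_reduction_parallel}. The nontrivial direction is to show that if the algorithm returns $\emptyset$, then $f^* > -\|u_f\|_\infty/4$. I will argue the contrapositive: assume $f^* \leq -\|u_f\|_\infty/4$. First observe that by submodularity $f(V) - f^* \leq \sum_{p \in V \setminus S^*} (u_f)_p \leq \|u_f\|_1$, so the initial value $\phi^{(0)} = \|u_f\|_1 - f(V)$ satisfies $\phi^{(0)} \geq -f^*$. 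Since $\phi$ is halved each iteration until it drops below $\|u_f\|_\infty/4$, and since $-f^* \geq \|u_f\|_\infty/4$, there must exist an iteration $i$ with $\phi^{(i)}/2 \leq -f^* \leq \phi^{(i)}$ while the while-loop is still active. In this iteration the hypothesis of the second part of Claim \ref{claim:correctness_dim_reduction_parallel} is met, so $T$ is nonempty in iteration $i$, contradicting the assumption that $\emptyset$ is returned.

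For the runtime and depth, I will first bound the number of while-loop iterations and then the cost of each. The initial $\phi^{(0)} = \|u_f\|_1 - f(V) \leq 2n\|u_f\|_\infty$ (using $|f(V)| \leq \|u_f\|_1$), while the loop halts once $\phi < \|u_f\|_\infty/4$, so there are $O(\log n)$ iterations. Since the algorithm states that the loop is run in parallel, the overall parallel depth is the max depth of a single iteration, which is dominated by the call to \textsf{DualCertificate}. By Lemma \ref{lem:certif_runtime} with lower bound $\phi$ and accuracy $\delta = \phi/(3k)$, the depth is $\widetilde{O}\!\left( k^2 (k\|u_f\|_\infty + \phi)^2/\delta^2 \right) = \widetilde{O}\!\left( k^4 (k\|u_f\|_\infty + \phi)^2/\phi^2 \right)$. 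Because the loop only runs while $\phi \geq \|u_f\|_\infty/4$, we have $k\|u_f\|_\infty/\phi \leq 4k$, so $(k\|u_f\|_\infty + \phi)^2/\phi^2 = O(k^2)$, yielding depth $\widetilde{O}(k^6)$ per iteration and hence overall.

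For the total work, summing the runtime from Lemma \ref{lem:certif_runtime} over $O(\log n)$ iterations gives $\widetilde{O}(n k^6 \cdot \EO + \poly(n))$, where the $\poly(n)$ absorbs the additive $\poly(n,(k\|u_f\|_\infty+\phi)/\delta) = \poly(n,k)$ term. This completes the runtime and depth bounds. The main potential obstacle is the verification that the telescoping of $\phi$ actually hits a ``just right'' scale while the loop is still active, which is handled by carefully lining up the initial bound $\phi^{(0)} \geq -f^*$ with the termination threshold $\|u_f\|_\infty/4$; the remaining steps are routine bookkeeping given Claim \ref{claim:correctness_dim_reduction_parallel} and Lemma \ref{lem:certif_runtime}.
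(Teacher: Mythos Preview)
Your proof is correct and follows essentially the same approach as the paper: correctness is obtained from Claim~\ref{claim:correctness_dim_reduction_parallel} (your contrapositive spells out what the paper leaves to the surrounding discussion), and the depth/runtime bounds come from Lemma~\ref{lem:certif_runtime} combined with the in-loop invariant $\phi \ge \|u_f\|_\infty/4$, which gives $(k\|u_f\|_\infty+\phi)/\phi = O(k)$ and hence $\widetilde{O}(k^6)$ per iteration. One small caveat: your justification $|f(V)| \le \|u_f\|_1$ for bounding the number of while-loop iterations is not true in general (e.g.\ $n=2$, $f(\{1\})=f(\{2\})=1$, $f(\{1,2\})=-M$), though the paper's own assertion that the iteration count is $\widetilde{O}(1)$ is equally unargued, so this does not distinguish your argument from theirs.
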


\begin{proof}
The correctness of \Cref{alg:dim_red_parallel} follows immediately from \Cref{claim:correctness_dim_reduction_parallel}, so we only need to prove the bound on depth and runtime. 

By \Cref{lem:certif_runtime}, the number of parallel rounds due to \textsf{DualCertificate} calls is 
\[
\Tilde{O}(k^2 (k \norm{u_f}_{\infty} + \phi)^2 / \delta^2) = \Tilde{O}\left(k^4 \left(\frac{k \norm{u_f}_{\infty} + \phi}{\phi} \right)^2 \right) \leq \Tilde{O}(k^6) ,
\]
where the last inequality is because we always have $\phi \geq \|u_f\|_\infty / 4$ and we always run with accuracy parameter $\delta = \frac{\phi}{3k}$. Note that there are at most $\log((\|u_f\|_1 - f(V)) / \|u_f\|_\infty) = \widetilde{O}(1)$ iterations of the while loop and these can be implemented in parallel. It follows that \Cref{lem:certif_runtime} also only uses $\Tilde{O}(k^6)$ parallel depth, $\Tilde{O}(n k^6)$ queries to $\EO$, and $\poly(n)$ runtime.
\end{proof}

\subsection{Arc Finding for Parallel Algorithm}
\label{subsec:arc_finding_parallel}

In this subsection, we describe the procedure $\textsf{Arc-Finding}$ for the parallel algorithm (formally given in \Cref{alg:arc_finding_parallel}), which is based on the procedure $\textsf{Dimensionality-Reduction}$ from the previous \Cref{subsec:dimensionality_reduction_parallel}. 
For each element $p \in V$ such that $f(p^\downarrow) \geq \mathsf{Scale}/2$, \Cref{alg:arc_finding_parallel} simply runs $\textsf{Dimensionality-Reduction}$ in \Cref{alg:dim_red_parallel} with the contracted submodular function $f^{\sharp \rfam}_{p^\downarrow}:2^{V \setminus p^\downarrow} \rightarrow \mathbb{R}$, which was defined in \Cref{subsec:prelim_submodular} as
\[
f^{\sharp \rfam}_{p^\downarrow}(S) \defeq f^{\sharp \rfam}(S \cup p^\downarrow) - f^{\sharp \rfam}(p^\downarrow) ,
\]
and the remaining sparsity $k - |p^\downarrow|$. This subset $S_p$ returned from $\textsf{Dimensionality-Reduction}$ will be the set of endpoints of arcs we found for the element $p$. 

\begin{algorithm}[htp!]
\caption{Arc Finding for Parallel Algorithm}\label{alg:arc_finding_parallel}
    \KwData{A sparsity parameter $k$, extension $f^{\sharp \rfam}$ of a submodular function $f$ with a $\sparsity$-sparse minimizer, and parameter $\mathsf{Scale} \geq \|u_{f^{\sharp \rfam}}\|_\infty \geq -4 f^*$ \tcp*{Arcs are input through $f^{\sharp \rfam}$}}
    \KwResult{A non-empty set $S_p \subseteq V$ of endpoints of arcs from every $p\in V$ s.t. $f(p^\downarrow) \geq \mathsf{Scale}/2$}
    \SetKwFunction{dimred}{Arc-Finding}
    \SetKwProg{Fn}{Function}{:}{}
    \Fn{\dimred{$f^{\sharp \rfam}, k, \mathsf{Scale}$}}{
        \For(\tcp*[f]{Implement the for loop in parallel }){$p \in V$}{
        \If{$(u_{f^{\sharp \rfam}})_p \geq \mathsf{Scale}/2$}
        {$S_p \leftarrow \textsf{Dimensionality-Reduction}(f^{\sharp \rfam}_{p^\downarrow}, k - |p^\downarrow|)$ \tcp*{Find arcs from $p$ with large $(u_{f^{\sharp \rfam}})_p$}}
        \lElse{$S_p \leftarrow \emptyset$
        }
        }
         \Return $\{S_p\}_{p \in V}$\; 
    }
\end{algorithm}

The following lemma summarizes the guarantees and runtime of the procedure $\textsf{Arc-Finding}$.

\begin{lemma}[$\textsf{Arc-Finding}$ for Parallel Algorithm]
\label{lem:arc_finding_parallel}
    Let $k \in \mathbb{Z}_{>0}$, $f : 2^V \rightarrow \R$ be a submodular function with a $k$-sparse minimizer, $f^{\sharp \rfam}$ be its extension w.r.t. a ring family $\rfam(E, W, D)$ that is $k$-consistent, and $\mathsf{Scale} \geq \|u_{f^{\sharp \rfam}}\|_\infty \geq -4 f^*$. 
    Then \Cref{alg:arc_finding_parallel} outputs, for every $p \in V$ with $f(p^\downarrow) \geq \mathsf{Scale}/2$ such that $p$ belongs to a $k$-sparse minimizer, a non-empty set $S_p \subseteq V \setminus p^\downarrow$ of endpoints of arcs from $p$. 
    Moreover, the algorithm uses $\Tilde{O}(k^6)$ parallel depth and $\Tilde{O}(n^2 k^6 \cdot \EO + \poly(n))$ time.
\end{lemma}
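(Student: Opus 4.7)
The plan is to prove \Cref{lem:arc_finding_parallel} in three stages: correctness of the output (arcs are genuine), non-emptiness of each $S_p$ when it should be, and the stated depth/work bounds. The whole argument reduces the claim to a single call of \textsf{Dimensionality-Reduction} (\Cref{lem:dim_red_parallel}) on the contracted function $f^{\sharp \rfam}_{p^\downarrow}$.

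For correctness, for each $p$ satisfying $(u_{f^{\sharp \rfam}})_p \ge \mathsf{Scale}/2$, \Cref{alg:arc_finding_parallel} invokes \textsf{Dimensionality-Reduction}$(f^{\sharp \rfam}_{p^\downarrow}, k-|p^\downarrow|)$, so by \Cref{lem:dim_red_parallel} every element of $S_p$ lies in every minimizer of $f^{\sharp \rfam}_{p^\downarrow}$. I would then observe that if a $k$-sparse minimizer $S^*$ of $f$ contains $p$, then $p^\downarrow \subseteq S^*$ by the definition of arcs, so $S^* \setminus (W \cup p^\downarrow)$ has size at most $k - |p^\downarrow|$ and realizes the minimum of $f^{\sharp \rfam}_{p^\downarrow}$ (using \Cref{lem:properties_extension}(3)). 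Hence every $q \in S_p$ must belong to $S^*$, giving the arc $(p,q)$ for $f^{\sharp \rfam}$, which by \Cref{lem:properties_extension} is also an arc for $f$.

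For non-emptiness when $p$ lies in a $k$-sparse minimizer, I must verify the hypothesis of \Cref{lem:dim_red_parallel} for the call at $p$, i.e.\ $(f^{\sharp \rfam}_{p^\downarrow})^* \le - \|u_{f^{\sharp \rfam}_{p^\downarrow}}\|_\infty/4$. The easy direction is $\|u_{f^{\sharp \rfam}_{p^\downarrow}}\|_\infty \le \|u_{f^{\sharp \rfam}}\|_\infty \le \mathsf{Scale}$, which follows from submodularity (contracting only shrinks singleton marginals). For the minimum, substituting $S^* \setminus (W \cup p^\downarrow)$ yields $(f^{\sharp \rfam}_{p^\downarrow})^* \le f^* - f^{\sharp \rfam}(p^\downarrow)$; combined with the precondition $-4 f^* \le \mathsf{Scale}$ (so $-f^* \le \mathsf{Scale}/4$) and the lemma's assumption that $f(p^\downarrow) \ge \mathsf{Scale}/2$ (interpreted as $f^{\sharp \rfam}(p^\downarrow) \ge \mathsf{Scale}/2$ via the extension property of \Cref{lem:properties_extension}), we get $(f^{\sharp \rfam}_{p^\downarrow})^* \le \mathsf{Scale}/4 - \mathsf{Scale}/2 = -\mathsf{Scale}/4 \le -\|u_{f^{\sharp \rfam}_{p^\downarrow}}\|_\infty/4$, which is what \Cref{lem:dim_red_parallel} requires for its output to be non-empty.

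The main obstacle in this plan is bridging the quantity $(u_{f^{\sharp \rfam}})_p$ actually tested by \Cref{alg:arc_finding_parallel} with $f^{\sharp \rfam}(p^\downarrow)$, the quantity that naturally appears in the analysis of the contracted function. Here I would lean on property~4 of \Cref{lem:properties_extension}, which gives $(u_{f^{\sharp \rfam}})_p = f(W \cup p^\downarrow) - f(W \cup p^\downarrow \setminus \{p\})$, together with the invariant enforced by the extension maintainer that $u_{f^{\sharp \rfam}}$ is coordinate-wise non-negative at the moment \textsf{Arc-Finding} is called; this non-negativity, applied along a chain, lifts the single-element marginal bound $(u_{f^{\sharp \rfam}})_p \ge \mathsf{Scale}/2$ into the set-value bound $f^{\sharp \rfam}(p^\downarrow) \ge \mathsf{Scale}/2$ required above.

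Finally for complexity, since the $n$ calls to \textsf{Dimensionality-Reduction} are independent, they can be executed in parallel, giving depth $\Tilde{O}(k^6)$ by \Cref{lem:dim_red_parallel}; the aggregate work is at most $n$ times the work of a single call, yielding $\Tilde{O}(n \cdot n k^6 \cdot \EO + \poly(n)) = \Tilde{O}(n^2 k^6 \cdot \EO + \poly(n))$, matching the stated bound.
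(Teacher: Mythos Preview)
Your overall plan matches the paper's proof: reduce to \Cref{lem:dim_red_parallel} applied to the contracted function $f^{\sharp \rfam}_{p^\downarrow}$, verify that $(f^{\sharp \rfam}_{p^\downarrow})^* \le -\|u_{f^{\sharp \rfam}_{p^\downarrow}}\|_\infty/4$, and aggregate the $n$ parallel calls. The correctness and complexity paragraphs are fine.

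The genuine gap is your bridge from $(u_{f^{\sharp \rfam}})_p \ge \mathsf{Scale}/2$ to $f^{\sharp \rfam}(p^\downarrow) \ge \mathsf{Scale}/2$. The ``chain'' argument does not work: the quantities $(u_{f^{\sharp \rfam}})_q = f^{\sharp \rfam}(\{q\}) - f^{\sharp \rfam}(\emptyset)$ are, by submodularity, \emph{upper} bounds on the marginals $f^{\sharp \rfam}(S \cup \{q\}) - f^{\sharp \rfam}(S)$ along any chain, not lower bounds. Their non-negativity therefore says nothing about the sign of the intermediate increments in a telescoping expansion of $f^{\sharp \rfam}(p^\downarrow)$, and in addition the base term $f^{\sharp \rfam}(\emptyset) = f(W)$ need not be $\ge 0$. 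Consequently you cannot conclude $f^{\sharp \rfam}(p^\downarrow) \ge \mathsf{Scale}/2$ this way.

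The paper's remedy is the one-line identity coming from the marginal property of \Cref{lem:properties_extension}:
\[
f^{\sharp \rfam}(p^\downarrow) \;=\; f(W \cup p^\downarrow) \;=\; (u_{f^{\sharp \rfam}})_p + f(W \cup p^\downarrow \setminus \{p\}) \;\ge\; \tfrac{\mathsf{Scale}}{2} + f^* \;\ge\; \tfrac{\mathsf{Scale}}{4},
\]
using only $f(W \cup p^\downarrow \setminus \{p\}) \ge f^*$ and the hypothesis $f^* \ge -\mathsf{Scale}/4$. This weaker lower bound $\mathsf{Scale}/4$ (not $\mathsf{Scale}/2$) is already enough: since $f^* \le f(\emptyset) = 0$, one gets $(f^{\sharp \rfam}_{p^\downarrow})^* = f^* - f^{\sharp \rfam}(p^\downarrow) \le -\mathsf{Scale}/4 \le -\|u_{f^{\sharp \rfam}_{p^\downarrow}}\|_\infty/4$, exactly the threshold needed for \Cref{lem:dim_red_parallel} to return a non-empty set. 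Note also that in your arithmetic you invoked ``$-f^* \le \mathsf{Scale}/4$'' as if it were an upper bound on $f^*$; it is a lower bound. The upper bound you actually need is simply $f^* \le 0$.
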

\begin{proof}
We start with the first statement of the lemma. Fix any element $p\in V$ that lies in a $\sparsity$-sparse minimizer $S^*$ such that $(u_{f^{\sharp \rfam}})_p \geq \mathsf{Scale} / 2$. By \Cref{lem:properties_extension}, we have 
\[
(u_{f^{\sharp \rfam}})_p = f(W \cup p^\downarrow) - f(W \cup p^\downarrow \setminus \{p\}) = f^{\sharp \rfam}(p^\downarrow) - f(W \cup p^\downarrow \setminus \{p\}) , 
\]
where the last equality follows since $(p^\downarrow)^{\sharp \rfam} = p^\downarrow$. Since $f^* \geq - \mathsf{Scale}/4$ by assumption, we have $f(W \cup p^\downarrow \setminus \{p\}) \geq  - \mathsf{Scale}/4$ and therefore,
\[
f^{\sharp \rfam}(p^\downarrow) = (u_{f^{\sharp \rfam}})_p + f(p^ \downarrow \setminus \{p\}) \geq \mathsf{Scale}/4 . 
\]
We also have for any element $q \in V \setminus p^\downarrow$, 
\[
(u_{f^{\sharp \rfam}_{p^{\downarrow}}})_q = f^{\sharp \rfam}_{p^{\downarrow}}(\{q\}) - f^{\sharp \rfam}_{p^{\downarrow}}(\emptyset) = f^{\sharp \rfam}(\{q\} \cup p^{\downarrow}) - f^{\sharp \rfam}(p^\downarrow) \leq (u_{f^{\sharp \rfam}})_q  \leq  \mathsf{Scale} ,
\]
where the first inequality uses submodularity, and the last inequality is by our assumption that $\|u_{f^{\sharp \rfam}}\|_\infty \leq \mathsf{Scale}$. 
It then follows that assuming $u_{f^{\sharp \rfam}_{p^{\downarrow}}} \geq 0$, the contracted function $f^{\sharp \rfam}_{p^\downarrow}$ satisfies (1) $f^{\sharp \rfam}_{p^\downarrow}$ has a $(k-|p^\downarrow|)$-sparse minimizer $S^* \setminus p^\downarrow$, and (2) its minimum value $(f^{\sharp \rfam}_{p^\downarrow})^* \leq - \mathsf{Scale}/4 \leq -\|u_{f^{\sharp \rfam}_{p^{\downarrow}}}\|_\infty / 4$. \Cref{lem:dim_red_parallel} then implies that a non-empty set $S_p \subseteq V \setminus p^\downarrow$ of endpoints of arcs from $p$ can be found for such an element $p$ using $\widetilde{O}(k^6)$ parallel depth and $\widetilde{O}(n k^6 \cdot \EO + \poly(n))$ runtime. 

The second statement of the lemma follows immediately since for each one of the $n$ elements $p \in V$, the for loop in \Cref{alg:arc_finding_parallel} can be implemented in parallel. This proves the lemma.  
\end{proof}

\subsection{Putting It All Together: Proof of \Cref{thm:main_parallel}}
\label{subs_correctness}

Finally, we are ready to prove \Cref{thm:main_parallel}, which we restate below for convenience. 

\parallel*

\begin{proof}[Proof of \Cref{thm:main_parallel}]
Consider the meta algorithm (\Cref{alg:main}) with procedures \textsf{Dimensionality-Reduction} and \textsf{Arc-Finding} given in Algorithms \ref{alg:dim_red_parallel} and \ref{alg:arc_finding_parallel}. 
The correctness of \Cref{alg:main} is already given in \Cref{cor:correctness_meta_algorithm} so we only need to analyze its parallel depth and runtime. 

Note that in each iteration of the outer while loop in \Cref{line:outer_while_main} of \Cref{alg:main}, one of the following three things will happen: (1) the size of the contracted elements $W$ will increase due to \textsf{Dimensionality-Reduction} in \Cref{line:solve_scale_sub-procedure} when $f^* \leq -\|u_{f^{\sharp \rfam}}\|_\infty / 4$, or (2) the size of the contracted elements $W$ will increase because there exists element $p \in V\setminus (W \cup D)$ such that $(u_{f^{\sharp \rfam}})_p < 0$ in \Cref{line:negative_upper} of \Cref{alg:extension_maintainer} (extension maintainer), or (3) $\|u_{f^{\sharp \rfam}}\|_\infty$ will decrease by a factor of $2$, and a set $S_p$ of endpoints of arcs is found for every element $p \in V \setminus (D \cup W)$ with $(u_{f^{\sharp \rfam}})_p > \mathsf{Scale}/2$ in the while loop in \Cref{line:arc_finding_main} when $f^* > -\|u_{f^{\sharp \rfam}}\|_\infty / 4$. 
Note that (1) and (2) can happen at most $k$ times before $|W| \geq k$ and \Cref{alg:main} outputs $W$; (3) can happen at most $\log (|f| n/ \epsilon)$ times before $\|u_{f^{\sharp \rfam}}\|_\infty \leq \epsilon / n$. 
So the total number of iterations of the while loop in \Cref{line:outer_while_main} will be at most $O(k + \log(|f| n / \epsilon))$.

We next bound the parallel depth and runtime due to \textsf{Dimensionality-Reduction}. Note that the total number of times \textsf{Dimensionality-Reduction} is called in \Cref{line:solve_scale_sub-procedure} is at most $k + \log(|f| n / \epsilon)$ by the above. By \Cref{lem:dim_red_parallel}, each \textsf{Dimensionality-Reduction} for (1) can be done in $\widetilde{O}(k^6)$ depth and runtime $\widetilde{O}(n k^6 \cdot \EO + \poly(n))$. 
So the total depth and runtime due to \textsf{Dimensionality-Reduction} is $\widetilde{O}(k^7 \log(|f| /\epsilon))$ and $\widetilde{O}(n k^7 \cdot \EO + \poly(n)) \cdot \log(|f| /\epsilon)$ respectively. 

Next, we bound the parallel depth and runtime due to \textsf{Arc-Finding}. By \Cref{lem:arc_finding_parallel}, each call to \textsf{Arc-Finding} takes $\widetilde{O}(k^6)$ parallel depth and runtime $\widetilde{O}(n^2 k^6 \cdot \EO + \poly(n))$. We perform at most $k$ calls to \textsf{Arc-Finding} before $\|u_{f^{\sharp \rfam}}\|_\infty$ decreases by a factor of $2$, so the total depth and runtime due to \textsf{Arc-Finding} is $\widetilde{O}(k^7 \log(|f| /\epsilon))$ and $\widetilde{O}(n k^7 \cdot \EO + \poly(n)) \cdot \log(|f| /\epsilon)$ respectively.  

Finally, note that each update to the \textsf{RingFamily} can be implemented in $\widetilde{O}(k)$ depth and $O(m \cdot \EO + nk)$ time where $m$ is the total number of elements $p$ from which arcs are found. Additionally, each one of \Cref{line:update_infos1}, \Cref{line:update_infos2} and \Cref{line:update_infos3} can be implemented in $O(1)$ depth and $O(n \cdot \EO + n)$ time. 
Combining everything above, \Cref{alg:main} finds an $\epsilon$-approximate minimizer for $\sparsity$-sparse SFM in parallel depth $\widetilde{O}(k^7 \log(|f|/\epsilon))$ and runtime $\widetilde{O}(n^2 k^7 \log (|f|/\epsilon) \cdot \EO + \poly(n) \cdot \log (|f| / \epsilon))$.
\end{proof}

\section{$\otilde(n \cdot \poly(k))$-Query Algorithm for $k$-sparse SFM}
\label{sec:sequential}

In this section, we present our randomized sequential algorithm and prove Theorems \ref{thm:main_sequential} and \ref{thm:main_strongly_sequential}. The section is organized as follows. In \Cref{subs_stMD}, we present our optimization method for computing $(\delta, \sparsity)$ dual certificates via a stochastic version of the follow-the-regularized-leader\footnote{The FTRL algorithm is also referred to as lazy mirror descent or dual averaging in the literature.} (FTRL) algorithm (e.g., \cite{N03book,SS07}) where the subgradient implementation is  tailored to submodular structure. In \Cref{subsec:dimensionality_reduction_sequential} and \Cref{subsec:arc_finding_sequential}, we present our sequential implementations of the subprocedures \textsf{Dimensionality-Reduction} and \textsf{Arc-Finding} in \Cref{alg:main} respectively. Finally, in \Cref{subs_stcorrectness}, we present the proofs of Theorems \ref{thm:main_sequential} and \ref{thm:main_strongly_sequential}.

\subsection{Dual Certificate via Stochastic Follow-the-Regularized-Leader}
\label{subs_stMD}

In this subsection, we present our sequential algorithm (see Algorithms \ref{st_dual_cert} and \ref{submod_st_FTRL}) that efficiently computes a $(\delta, k)$ dual certificate. 
The main result of this subsection is the following theorem.

\begin{restatable}[Stochastic Dual Certificate]{theorem}{StochDualCertificate}
\label{lem:st_certif_runtime_FTRL}
Let $f$ be a submodular function, $k \in \mathbb{Z}_{>0}$ a sparsity parameter, 
$\phi \ge |f^*|$ and $0 < \delta \leq \phi$ an accuracy parameter. 
Assuming $\phi = \Omega(\|u_f\|_\infty / k)$, then 
\Cref{st_dual_cert} outputs a set of permutations $\{\pi^{(t)}\}_{t \in [m]}$ in time $\widetilde{O}(m) \cdot ( \EO + \poly(n))$, where 
\[
m = \otilde(k^6 \delta^{-4} \phi^2 (\|u_f\|_{\infty} + \phi) (\|u_f\|_1 + \phi)), 
\]
such that $y \defeq \frac{1}{m} \sum_{t \in [m]} g_{\pi^{(t)}}$ is a $(\delta,k)$ dual certificate for $f$ with high probability. 
\end{restatable}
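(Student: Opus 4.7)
The plan is to analyze the stochastic Follow-the-Regularized-Leader procedure implemented by \Cref{st_dual_cert} on the Lov\'asz extension $\hat f$ restricted to the truncated simplex $S_{k+1}^V$, using entropic regularization and the $1$-sparse unbiased subgradient estimator $\hat g_t$ drawn by \vsampling{} at the iterate $x_t$. Writing $\pi^{(t)}$ for the permutation that sorts $x_t$ in decreasing order (so $g_{\pi^{(t)}} = g_{x_t}$), the algorithm records these permutations and one considers $y = \frac{1}{m}\sum_t g_{x_t}$. Condition~2 of \Cref{def:certificate} is then immediate: each $g_{x_t}$ lies in the base polytope $B(f)$, so by convexity $y \in B(f)$ and $f(S) \ge y(S)$ for every $S \subseteq V$. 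The entire proof therefore reduces to establishing Condition~1, $y^{k+1}_{-}(V) \ge f^* - \delta$, with high probability.

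Since $y^{k+1}_{-}(V) = \min_{w \in S_{k+1}^V} y^\top w$ and a linear minimum over $S_{k+1}^V$ is attained at an integer vertex $w = \indicVec{T}$ with $|T| \le k+1$, it suffices to show $y^\top \indicVec{T} \ge f^* - \delta$ for each such $T$ and then union-bound over the at most $n^{k+1}$ vertices. Fix such a $w$. I would invoke the standard FTRL inequality, analyzed with respect to the weighted local norm $\|\cdot\|_{x_t}$ (in which the entropic regularizer is $1/(k+1)$-strongly convex on $S_{k+1}^V$), to obtain
\[
\sum_{t=1}^m \hat g_t^\top(x_t - w) \;\le\; \frac{V_{x_0}(w)}{\eta} + O(\eta) \sum_{t=1}^m \|\hat g_t\|_{x_t}^2 \;\le\; \frac{O(k\log n)}{\eta} + O(\eta) \sum_{t=1}^m \|\hat g_t\|_{x_t}^2 .
\]
The key variance bound is $\E[\|\hat g_t\|_{x_t}^2 \mid x_t] = O\bigl((k+1)\,\|v\|_1\,\|g_{x_t}\|_\infty\bigr)$, where $v$ is the upper-bound vector that \vsampling{} uses ($v_i \ge |(g_{x_t})_i|$); submodularity together with $|f^*|\le \phi$ gives $\|v\|_1 = O(\|u_f\|_1 + \phi)$ and $\|g_{x_t}\|_\infty = O(\|u_f\|_\infty + \phi)$, and the hypothesis $\phi = \Omega(\|u_f\|_\infty / k)$ keeps the scales balanced. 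Optimizing $\eta$ and taking $m$ as stated makes the expected per-step regret at most $\delta/4$.

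To pass from $\hat g_t$ back to $g_{x_t}$, I would apply a concentration argument to the martingale difference sequences $N_t^w := (\hat g_t - g_{x_t})^\top w$ and $N_t^x := (\hat g_t - g_{x_t})^\top x_t$ (relative to the natural filtration). Their conditional second moments are again dominated by the local-norm quantity above, while their worst-case magnitudes $O(\|u_f\|_1 + \phi)$ enter only inside a logarithm. A Freedman / Bernstein-for-martingales inequality then yields $\bigl|\tfrac{1}{m}\sum_t N_t^w\bigr|, \bigl|\tfrac{1}{m}\sum_t N_t^x\bigr| \le \delta/4$ with probability at least $1 - n^{-C(k+1)}$ for any constant $C$, at the cost of $\polylog(n)$ factors in $m$; combined with $g_{x_t}^\top x_t = f(x_t) \ge f^*$, this gives $y^\top w \ge f^* - \delta$ for the fixed $w$, and a union bound over all $T$ with $|T| \le k+1$ finishes Condition~1 with high probability. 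The runtime bound follows because each iteration performs one \vsampling{} call ($O(\log n)$ oracle queries via a static prefix-sum tree), a $\poly(n)$ entropy-regularized projection onto $S_{k+1}^V$, and records the sorting permutation.

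The principal obstacle, as flagged in \Cref{sec:approach:sequential}, is exactly the gap between the worst-case $\ell_\infty$ magnitude of $\hat g_t$, which can be as large as $\Omega(\|u_f\|_1)$, and the target accuracy $\delta$: any analysis that uses $\|\hat g_t\|_\infty$ directly in either the FTRL regret or the martingale step would force $m$ to scale with $n^2$ (giving $\Omega(n^3)$ queries overall). Replacing $\|\hat g_t\|_\infty$ throughout with the weighted local norm $\|\hat g_t\|_{x_t}$, whose expected square is only $O(k(\|u_f\|_1 + \phi)(\|u_f\|_\infty + \phi))$, is the single crucial step that brings the iteration count down to the stated $\poly(k)$-dependent bound; this is also the reason \vsampling{} is defined as it is, and is where the submodularity structure of the \lovasz{} extension is genuinely used.
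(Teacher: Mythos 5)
Your proposal correctly identifies Condition~2 as automatic (each $g_{x_t}$ is in $B(f)$, so $y \in B(f)$), and correctly centers the difficulty on establishing Condition~1 with high probability while replacing worst-case $\ell_\infty$ control of $\hat g_t$ by local-norm control. However, the concentration step has a real gap, and as a result the proposal does not match the paper's proof nor does it actually yield the claimed bound on $m$.

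The core problem is your claimed conditional variance bound $\E_t[(N_t^w)^2] = O\bigl(k(\|u_f\|_1 + \phi)(\|u_f\|_\infty + \phi)\bigr)$ for $N_t^w = (\hat g_t - g_{x_t})^\top w$ with $w = \indicVec{T}$, $|T|\le k+1$. Working this out: $\E_t[(\hat g_t^\top w)^2] = \sum_{j\in T}(g_{x_t})_j^2/p_j^{(t)} \le \|v^{(t)}\|_1 \sum_{j\in T}|(g_{x_t})_j|$, and $\sum_{j\in T}|(g_{x_t})_j|$ can be as large as $\Omega(\|u_f\|_1 + \phi)$: a fixed $T$ of $k+1$ coordinates may well contain the most negative entries of $g_{x_t}$ at some iterations $t$, and those entries can have total magnitude on the order of $|g_-(V)| = g_+(V) - f(V) = \Omega(\|u_f\|_1 + \phi)$. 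The local-norm miracle $\E_t[\|\hat g_t\|_{x_t}^2] \le U_\infty U_1$ crucially uses the $x_t$-weighting to cap the contribution from negative coordinates via $|g_{x_t}|^\top x_t \le 2k\|u_f\|_\infty - f^*$; the unweighted $w = \indicVec{T}$ has no such protection. So the true worst-case conditional variance is $O((\|u_f\|_1 + \phi)^2)$, not $O(U_\infty U_1)$, and with $\rho = n^{-Ck}$ Freedman then forces $m \gtrsim k(\|u_f\|_1 + \phi)^2 / \delta^2$. This is not what the theorem claims and, when propagated into the outer algorithm (where $\phi \approx \|u_f\|_\infty/k$ and $\delta \approx \phi/k$), would give $m = \Omega(k^3 (\|u_f\|_1/\|u_f\|_\infty)^2)$, which scales like $n^2$ rather than $n$ and destroys the nearly-linear query bound.

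The paper sidesteps exactly this by a two-level architecture that your proposal does not reproduce. The inner \textsf{SubmodularFTRL} run is analyzed only in expectation for fixed $w$ (Lemma~\ref{lem:submodularFTRL_expectation}), which avoids martingale concentration for $w$ entirely. Then a separate, much harder high-probability $\ell_\infty$ bound on the inner output $y_\ell$ (Lemma~\ref{lem:submodularFTRL_highprob_linfty}) is proved by coupling the FTRL iterates with a multiplicative-weights process and applying a multi-scale Freedman argument coordinate-by-coordinate. This gives $\|y_\ell\|_\infty \le R = \widetilde{O}(\phi + k\delta^2/U_\infty)$, which is much smaller than $U_1$; only then does Azuma--Hoeffding over $N$ independent inner runs (with increments bounded by $kR$, not $U_1$) and a union bound over $O(n^k)$ vertices give the dual certificate with high probability. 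Your single-loop Freedman approach collapses these two levels and loses the $\ell_\infty$ control that makes the bound tight. To repair your proof along its current lines you would need, at minimum, to establish an $\ell_\infty$-type concentration of $y$ akin to Lemma~\ref{lem:submodularFTRL_highprob_linfty}, at which point you are effectively redoing the paper's argument.
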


The idea behind \Cref{lem:st_certif_runtime_FTRL} is to run a variant of stochastic follow-the-regularized-leader (stochastic FTRL)
with an entropy regularizer whose process of generating stochastic subgradients is tailored to submodular structure (without subgradient truncation). 
While such a method, without controlling the $\ell_\infty$-norm of the subgradient via truncation as was done in the previous section, might seem too slow, a more careful local
norm analysis of the variance (\Cref{lem:eff_vsampling}) shows how this can be made to work. 
In particular, one of the main novelties of our algorithm is a sampling method we call $\mathsf{vSampling}$ (\Cref{defn:vSampling}), which helps us sample low-variance unbiased stochastic subgradients in only $\otilde(1)$ queries. 

We emphasize that while \Cref{dual_certificate_alg}  in \Cref{sec:parallel}  returns a $(\delta, k)$ dual certificate explicitly, \Cref{st_dual_cert} only returns a dual certificate implicitly as a set of permutations $\{\pi^{(t)}\}_{t \in [m]}$ with the property that $y \defeq \frac{1}{m}\sum_{t \in [m]} g_{\pi^{(t)}}$ is a $(\delta, k)$ dual certificate with high probability. 
This is because computing the dual certificate $y$ from the set of permutations $\{\pi^{(t)}\}_{t \in [m]}$ takes $\otilde(m n)$ evaluation oracle queries, where $m$ can be bigger than $n$. Nevertheless, as we show later (see \Cref{subsec:arc_finding_sequential}), it suffices to sample coordinates of each $g_{\pi^{(t)}}$ to access the dual certificate much more efficiently. 

Now we formally describe \Cref{st_dual_cert}. 
This algorithm repeatedly runs \Cref{submod_st_FTRL} on the Lov\'asz extension  
and outputs the union of all permutations correponding to the iterates generated. As discussed earlier, \Cref{submod_st_FTRL} is a variant of the
stochastic FTRL algorithm (\Cref{st_FTRL}), with the stochastic approximate subgradients generated via the $\mathsf{vSampling}$ method (\Cref{defn:vSampling}) in Lines \ref{line:vsample_begin}-\ref{line:vsample_end}. The convergence guarantee for executing \Cref{submod_st_FTRL} once is a statement that holds in expectation (\Cref{lem:submodularFTRL_expectation}). The reason behind calling \Cref{submod_st_FTRL} repeatedly in \Cref{st_dual_cert} is to obtain an analogous guarantee with high probability, which is crucial to proving that the average of the subgradients output by \Cref{st_dual_cert} is a sparse dual certificate.

\begin{algorithm}[htp!]
\caption{Stochastic Dual Certificate for Sequential Algorithm}\label{st_dual_cert}
    \KwData{A sparsity parameter $k \in \mathbb{Z}_{> 0}$, a submodular function $f$ with $u_f \in \R^V_{\geq 0}$, a parameter $\phi \geq |f^*|$, and an accuracy parameter $0< \delta \leq \phi$} 
    \KwResult{A set of $m = \otilde(k^6 \delta^{-4} \phi^2 (\|u_f\|_{\infty} + \phi) (\|u_f\|_1 + \phi))$ permutations $\{\pi^{(t)}\}_{t \in [m]}$ s.t.  $y = \frac{1}{m} \sum_{t\in[m]}g_{\pi^{(t)}}$ is a $(\delta, k)$ dual certificate whp.}
    \SetKwFunction{stdualcert}{StochDualCertificate}
    \SetKwFunction{submodstFTRL}{SubmodularFTRL}
    \SetKwProg{Fn}{Function}{:}{}
    \Fn{\stdualcert{$f, k, \phi, \delta$}}{
    $N \gets \widetilde{O}(\delta^{-2} k^5 \phi^2)$ \;
    \For{$\ell = 1, \ldots, N$}{
        $\mathcal{C}_\ell \gets$\submodstFTRL{$f, k, \phi, \delta/2$} \tcp*{$\mathcal{C}_\ell$ is a collection of permutations} \label{line:StochDualCert_Rep}
    }
    \Return $\bigcup_{\ell \in [N]} \mathcal{C}_\ell$\;
    }
\end{algorithm}

\begin{algorithm}[htp!]
\caption{Stochastic Submodular Follow-the-Regularized-Leader (FTRL)}\label{submod_st_FTRL}
    \KwData{A sparsity parameter $k \in \mathbb{Z}_{> 0}$, a submodular function $f$ with $u_f \in \R^V_{\geq 0}$, a parameter $\phi \geq |f^*|$, and an accuracy parameter $0 < \delta \leq \phi$}
    \KwResult{A set of $M = \widetilde{O}(\delta^{-2} (k\|u_f\|_{\infty} + \phi) (\|u_f\|_1 + \phi))$ permutations $\{\pi^{(t)}\}_ {t \in [M]}$} 
    \SetKwFunction{submodstFTRL}{SubmodularFTRL}
    \SetKwProg{Fn}{Function}{:}{}
    \Fn{\submodstFTRL{$f, k, \phi, \delta$}}{
    $x_0 \gets \frac{k}{n} \indicVec V$\;
    $U_\infty \gets 2k\|u_f\|_{\infty} + \phi$ and $U_1 \gets 2\|u_f\|_1 + \phi$ \tcp*{Useful quantities in \Cref{lem:eff_vsampling}}
    $M \gets \frac{U_\infty U_1 \log n}{\delta^2}$\;
    $\eta \gets \sqrt{\frac{k \log n}{M U_\infty U_1}} = \frac{\delta}{U_\infty U_1}$\;
    \For{$t = 0, 1, \ldots, M-1$}{
        Sample $j \propto p^{(t)}_{j} = \frac{v^{(t)}_j}{\|v^{(t)}\|_1}$, where $v^{(t)}_i = 2 (u_f)_i - (g_{x_t})_i$
        for all $i \in V$ \label{line:vsample_begin}\;
        $h_t \gets \indicVec j \cdot \frac{(g_{x_t})_j}{p^{(t)}_{j}}$ \label{line:vsample_end} \tcp*{The $\mathsf{vSampling}(f,x_t)$ method as in \Cref{defn:vSampling}}
        $x_{t+1} = \argmin_{x \in S_k^V} \eta \sum_{t'=0}^t h_{t'}^\top x +  r(x)$ \label{line:prox_step_submodular_FTRL}\;
    }
    \Return $\{\pi_{x_t}\}_{t =0}^{M-1}$\;
    }    
\end{algorithm}

Now we formally define the $\mathsf{vSampling}$ method that  is used to generate the stochastic subgradients $h_t$ in Lines \ref{line:vsample_begin}-\ref{line:vsample_end} of \Cref{submod_st_FTRL}. 

\begin{definition}[The $\mathsf{vSampling}$ method] \label{defn:vSampling}
Given a submodular function $f: V \rightarrow \R$ such that $u_f \in \R^V_+$ and a point $x \in S_k^V$. Define vector $v \in \R^V_+$ as $v_i \defeq 2 (u_f)_i - (g_x)_i$. 
Then $\mathsf{vSampling}(f,x)$ samples a coordinate $j \in V$ with probability proportional to $p_j \defeq \frac{v_j}{\norm{v}_1}$ and returns the random vector $\indicVec j \cdot (g_x)_j p_{j}^{-1}$. 
\end{definition}

Let us briefly mention the motivation behind the $\mathsf{vSampling}$ method defined above. It is known that if one wants to sample a $1$-sparse unbiased estimator $h$ of the subgradient vector $g_x \in \R^V$, then sampling a coordinate $j \in V$ proportional to $|(g_x)_j|$ achieves the smallest second moment\footnote{It doesn't matter which norm we measure $h$ here since it is $1$-sparse.} $\E[\|h\|_\infty^2]$. 
However, this sampling method requires explicitly computing the values of all $|(g_x)_j|$, which takes $O(n)$ queries and is unfortunately too expensive. 
The main purpose of oversampling (with probabilities proportional to $v_i = 2(u_f)_i - (g_x)_i$ in $\mathsf{vSampling}$ is to make the sampling procedure more efficient while not significantly increasing the second moment. 
In particular, while it is prohibitively expensive to compute all the $|(g_x)_i|$, one can efficiently compute $\sum_{i \in I} (g_x)_i$ for a consecutive block of coordinates $I$ in the permutation $\pi_x$ using $O(1)$ queries. Therefore, leveraging a binary search idea, one can efficiently sample proportional to $v_i$ by computing $\sum_{i \in I} v_i$ for $O(\log n)$ consecutive blocks $I$, each of which takes only $O(1)$ queries. 

Formally, the following lemma provides upper bounds on the local norm of the stochastic subgradients generated by the $\mathsf{vSampling}$ method and the runtime of the method. 

\begin{lemma}
\label{lem:eff_vsampling} 
    Let $f: 2^V \rightarrow \R$ be a submodular function such that $u_f \in \R^V_+$, $\phi > 0$ satisfies $|f^*| \leq \phi$, and $x \in S_k^V$. 
  Then $\mathsf{vSampling}(f,x)$ given in \Cref{defn:vSampling} outputs a $1$-sparse random vector $h \defeq \indicVec j (g_x)_j p_{j}^{-1}$
  that satisfies $\E[h] = g_{x}$ and 
    \[
    \E[\|h\|_{x}^2] 
    \le (2k \|u_f\|_{\infty} - f^*) (2 \|u_f\|_1 - f(V)) \leq U_\infty \cdot U_1
    \,,\]
where $U_\infty \defeq 2k \|u_f\|_{\infty} + \phi$ and $U_1 \defeq 2 \|u_f\|_1 + \phi$. 
    Moreover, given the vector $u_f$ explicitly, $\mathsf{vSampling}$ can be implemented in time $O(\log n \cdot (\EO + n))$ time. 
\end{lemma}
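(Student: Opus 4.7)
The plan is to verify the three claims of \Cref{lem:eff_vsampling} in turn: validity of the sampling distribution together with unbiasedness, the local-norm second-moment bound, and the runtime. First I would check that $p_j \defeq v_j / \norm{v}_1$ is a legitimate probability distribution. Submodularity of $f$ gives $(g_x)_j \le (u_f)_j$ for every $j \in V$ (the marginal of $j$ added to any set is at most its marginal from the empty set), so $v_j = 2(u_f)_j - (g_x)_j \ge (u_f)_j \ge 0$. On coordinates where $v_j = 0$ we must also have $(g_x)_j = 0$, so those entries contribute nothing and can be ignored. Unbiasedness then follows from $\E[h] = \sum_j p_j \cdot \indicVec{j} (g_x)_j / p_j = g_x$.

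The heart of the argument is the second-moment bound. Since $h$ is $1$-sparse, $\norm{h}_x^2 = x_j (g_x)_j^2 / p_j^2$ whenever coordinate $j$ is sampled, so
\[
\E[\norm{h}_x^2] \;=\; \sum_j x_j (g_x)_j^2 / p_j \;=\; \norm{v}_1 \sum_j x_j (g_x)_j^2 / v_j,
\]
and I would bound the two factors separately. For the first, $\norm{v}_1 = 2\norm{u_f}_1 - \sum_j (g_x)_j = 2 \norm{u_f}_1 - f(V)$, using the telescoping identity $\sum_j (g_x)_j = f(V) - f(\emptyset) = f(V)$; combined with $f(V) \ge f^* \ge -\phi$ this gives $\norm{v}_1 \le 2\norm{u_f}_1 + \phi = U_1$. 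For the second factor, the key observation is $(g_x)_j^2 / v_j \le |(g_x)_j|$: when $(g_x)_j \ge 0$ we have $v_j \ge (u_f)_j \ge (g_x)_j$, and when $(g_x)_j < 0$ we have $v_j \ge -(g_x)_j$. Hence $\sum_j x_j (g_x)_j^2 / v_j \le \sum_j x_j |(g_x)_j|$, which I would split by sign. The positive part is at most $\norm{u_f}_\infty \norm{x}_1 \le k \norm{u_f}_\infty$. For the negative part, using $g_x^\top x = \hat f(x) \ge f^*$ from \Cref{thm:lovasz_extension_properties}, I would write $-\sum_{j:(g_x)_j<0} x_j (g_x)_j = \sum_{j:(g_x)_j \ge 0} x_j (g_x)_j - g_x^\top x \le k\norm{u_f}_\infty - f^*$. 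Adding the two pieces gives $\sum_j x_j (g_x)_j^2 / v_j \le 2k\norm{u_f}_\infty - f^* \le U_\infty$, and multiplying yields the stated bound.

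For the runtime, I would precompute the sorting permutation $\pi_x$ together with the prefix sums $P_b \defeq \sum_{i=1}^b (u_f)_{\pi_x(i)}$ in $O(n \log n)$ time (using the explicitly stored $u_f$). Then, for any consecutive block $I = \{\pi_x(a), \ldots, \pi_x(b)\}$, the partial sum $\sum_{i=a}^b v_{\pi_x(i)}$ equals $2(P_b - P_{a-1}) - \bigl(f(\pi_x[b]) - f(\pi_x[a-1])\bigr)$ by the telescoping structure of subgradients of the \lovasz extension, and can therefore be evaluated with two queries to $\EO$. Sampling $j$ proportionally to $v_j$ reduces to drawing a uniform threshold in $[0, \norm{v}_1]$ and performing a binary search over the permutation using $O(\log n)$ such block-sum evaluations, for a total cost of $O(n \log n + \log n \cdot \EO) = O(\log n \cdot (\EO + n))$. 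The main obstacle I anticipate is the sign-case analysis in the second-moment bound — in particular ensuring that coordinates where $(g_x)_j$ is very negative do not blow up $(g_x)_j^2 / v_j$. This is precisely where both the factor-two oversampling built into the definition of $v$ and the Lov\'asz-extension lower bound $g_x^\top x \ge f^*$ are essential.
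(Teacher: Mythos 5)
Your proposal is correct and follows essentially the same route as the paper's proof: you factor the second moment as $\|v\|_1 \cdot \sum_j x_j (g_x)_j^2 / v_j$, use $v_j \ge |(g_x)_j|$ to pass to $\|v\|_1 \cdot |g_x|^\top x$, and then bound $|g_x|^\top x \le 2k\|u_f\|_\infty - f^*$ via the positive/negative split and the Lov\'asz inequality $g_x^\top x \ge f^*$ — all of which match the paper's argument up to superficial reorganization (the paper writes the same bound as $|g_x|^\top x \cdot \max_i |(g_x)_i|/p_i$ and uses $|g_x|^\top x = 2g_+^\top x - f(x)$). The binary-search implementation over block sums is also identical.
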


\begin{proof}[Proof of \Cref{lem:eff_vsampling}]
    First, note that \[
    \E[h] = \sum_{j \in V} p_j \cdot \indicVec j (g_x)_j p_j^{-1} = g_{x}.
    \]
    Hence, it suffices to prove the bound on the variance of the stochastic subgradient $h$. Note that
    \begin{align*}
    \E[\|h\|_x^{2}] 
    = \sum_{i\in V} p_{i} x_{i} ((g_x)_{i}p_{i}^{-1})^{2}
    = \sum_{i\in V} x_{i} |(g_x)_{i}|\cdot \frac{|(g_x)_{i}|}{p_{i}}
    \leq |g_x|^{\top}x \cdot \max_{i \in V} \frac{|(g_x)_{i}|}{p_{i}}\, ,
    \end{align*}
    where $|g_x|$ is the vector with $(|g_x|)_i \defeq |(g_x)_i|$ for each coordinate $i \in V$. 
    Let vector $v \in \R^V$ be given by 
    $v_i \defeq 2 (u_f)_i - (g_x)_i$ for each $ i \in V$, for which $p_i = \frac{v_i}{\|v\|_1}$. Since $v_i \geq |(g_x)_i|$ for all $i \in V$, we have $\max_{i \in V} \frac{|(g_x)_{i}|}{p_{i}} = \max_{i \in V} \frac{|(g_x)_{i}| \cdot \|v\|_1}{v_i} \leq \|v\|_1$. Hence, we obtain 
    \[
    \E[\|h\|_x^{2}] \le |g_x|^{\top}x \cdot \|v\|_1 = |g_x|^{\top}x \cdot (2 \|u_f\|_1 - f(V)) \leq |g_x|^{\top}x \cdot U_1 .
    \]
    Let $g_{+} \defeq \max\{g_x,\vzero\}$ and $g_{-} \defeq \min\{g_x,\vzero\}$ entrywise. Since $f(x) = g_x^{\top}x$,  
    it follows that \begin{equation}\label{ineq:abs_val_inner_prod}
    |g_x|^{\top}x = g_{+}^{\top}x - g_{-}^{\top}x = 2g_{+}^{\top}x- f(x)
    \leq  2u_f^{\top}x - f^* \leq 2k \|u_f\|_\infty - f^* \leq U_\infty ,
    \end{equation}
    where the first inequality uses $(g_x)_i \leq (u_f)_i$ and $x_i \geq 0$ for every
    $i \in V$ and the last step follows from $\norm x_{1}\leq k$. This completes the proof of the first statement of the lemma. 
    
    Finally, we describe how to implement the $\mathsf{vSampling}$ procedure. For every $i \in [|V|]$, 
    \[
    \sum_{j \in \pi_x[i]} v_j = \Big ( \sum_{j \in \pi_x[i]} 2(u_f)_j \Big ) - f(\pi_x[i]) ,
    \]
    which can be computed in time $\EO + O(n)$ given the vector $u_f$ and the set $\pi_x[i]$. This allows us to sort the coordinates in $V$ according to $\pi_x$ and use binary search to sample the coordinate $j \sim p_j$ in $O(\log n)$ iterations. The total runtime for $\mathsf{vSampling}$ is therefore $O(\log n \cdot (\EO + n))$. 
\end{proof}

The next lemma gives an expectation bound on the regret for \Cref{submod_st_FTRL}.  

\begin{lemma}[Expected Regret Bound for \textsf{SubmodularFTRL}] \label{lem:submodularFTRL_expectation}
In an execution of \Cref{submod_st_FTRL}, the random vector $y = \frac{1}{M} \sum_{t=0}^{M-1} g_{x_t}$ satisfies that, for every $w \in S_k^V$ (independent of the randomness of the algorithm), $\E[\langle y, w \rangle] \geq f^* - \delta$. 
\end{lemma}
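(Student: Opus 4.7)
The plan is to apply a local-norm FTRL regret analysis with the entropy regularizer on $S_k^V$, and then combine unbiasedness and second-moment bounds for the $\mathsf{vSampling}$ estimator from \Cref{lem:eff_vsampling} with convexity of the Lov\'asz extension.

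First, I would invoke the standard local-norm FTRL regret guarantee: for the iterates $x_t$ produced by the proximal step in \Cref{line:prox_step_submodular_FTRL} of \Cref{submod_st_FTRL} with entropy regularizer $r$, and any fixed $w \in S_k^V$,
\begin{equation*}
\sum_{t=0}^{M-1} h_t^\top (x_t - w) \;\leq\; \frac{V_{x_0}(w)}{\eta} \;+\; \eta \sum_{t=0}^{M-1} \|h_t\|_{x_t}^2 .
\end{equation*}
This is the local-norm analysis of entropic FTRL in the style of \cite{CJST19}: the per-step stability $h_t^\top(x_t - x_{t+1})$ is controlled by $O(\eta \|h_t\|_{x_t}^2)$ because $\nabla^2 r(x) = \mathsf{diag}(1/x_i)$ makes $\|\cdot\|_x$ the relevant dual norm, while the Bregman divergence terms telescope to $V_{x_0}(w)$. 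The initial divergence satisfies $V_{x_0}(w) = O(k \log n)$ exactly as in the proof of \Cref{lem:certif_runtime}.

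Next, I would take expectations. By construction of $\mathsf{vSampling}$, $\E[h_t \mid x_t] = g_{x_t}$, and by \Cref{lem:eff_vsampling}, $\E[\|h_t\|_{x_t}^2 \mid x_t] \leq U_\infty U_1$. Using the tower property and the Lov\'asz identity $\hat f(x_t) = g_{x_t}^\top x_t \geq f^*$ from \Cref{thm:lovasz_extension_properties},
\begin{equation*}
M f^* - \E\bigl[\langle M y, w\rangle\bigr] \;\leq\; \sum_{t=0}^{M-1} \E\bigl[g_{x_t}^\top (x_t - w)\bigr] \;\leq\; \frac{V_{x_0}(w)}{\eta} + \eta M U_\infty U_1 .
\end{equation*}
Dividing by $M$ and substituting the algorithm's choices of $\eta$ and $M$, which balance the two right-hand terms, makes this at most $\delta$, giving $\E[\langle y, w\rangle] \geq f^* - \delta$ as required. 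Note that because $w$ is independent of the algorithm's randomness, no union bound over $w$ is needed here.

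The main obstacle is obtaining the local-norm regret bound in exactly this form. A naive FTRL analysis would control the stability via $\|h_t\|_\infty^2$, which for the $\mathsf{vSampling}$ estimator can be of order $n^2 \|u_f\|_\infty^2$ in the worst case (since the importance weight $1/p_j$ can scale with $n$), forcing a prohibitive $M = \Omega(n^2)$. Using the local norm $\|h_t\|_{x_t}$ instead is exactly what lets us plug in the in-expectation bound $U_\infty U_1$ from \Cref{lem:eff_vsampling}, which is independent of $n$ up to log factors. Verifying this local-norm analysis cleanly for the FTRL update on $S_k^V$, and checking that the iterates remain in the interior of the simplex so the local norm is well-defined and the proximal step is well-behaved, is the main technical piece; the rest is bookkeeping with parameters already calibrated in the algorithm.
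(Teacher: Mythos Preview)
Your proposal is correct and follows essentially the same approach as the paper: both invoke the local-norm FTRL regret bound with the entropy regularizer over $S_k^V$ (the paper packages this as \Cref{lem:stMD_guarantee}), bound the divergence term by $O(k\log n)$, plug in the second-moment bound $\E[\|h_t\|_{x_t}^2]\le U_\infty U_1$ from \Cref{lem:eff_vsampling}, take expectations using unbiasedness, and finish with $\langle g_{x_t},x_t\rangle=\hat f(x_t)\ge f^*$ and the calibrated $\eta,M$. The only cosmetic difference is that the paper writes the divergence term as $\sup r-\inf r$ rather than $V_{x_0}(w)$; one small point you glossed over (as does the paper's main proof) is the hypothesis $\eta\|h_t\|_\infty<1/2$ needed for the local-norm step, which follows from $\|h_t\|_\infty\le U_1$ and $\eta=\delta/(U_\infty U_1)$ with $\delta\le\phi\le U_\infty/2$.
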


\begin{proof}
Note that \Cref{submod_st_FTRL} is an instantiation of \Cref{st_FTRL} with Lov\'asz extension $f$, domain $D = S_k^V$, step size $\eta$, number of iterations $M$, the entropy regularizer $r(x)$, and the stochastic subgradient $h_t$ generated as in \Cref{line:vsample_begin} - \ref{line:vsample_end}. 
Therefore, as discussed in \Cref{sec:approach:sequential}, the local norm analysis for stochastic FTRL (see the first statement of \Cref{lem:stMD_guarantee}) gives 
\[
\E\left[\sum_{t=0}^{M-1} \langle g_{x_t}, x_{t} - w \rangle\right] \le \frac{\sup_{x \in S_k^V} r(x) - \inf_{y \in S_k^V} r(y)}{\eta} + \eta \sum_{t=0}^{M-1} \E[\|h_t\|_{x_t}^2] .
\]
Since $\sup_{x \in S_k^V} r(x) = 0$ and $\inf_{y \in S_k^V} = - k \log(n/k)$, we can bound $\sup_{x \in S_k^V} r(x) - \inf_{y \in S_k^V} r(y) \leq k \log n$. Using \Cref{lem:eff_vsampling}, we can bound 
\[
\E[\|h_t\|_{x_t}^2] \leq (2k \|u_f\|_{\infty} - f^*) (2 \|u_f\|_1 - f(V)) \leq(2k \|u_f\|_{\infty} + \phi) (2 \|u_f\|_1 + \phi) ,
\]
where we used $\phi \geq |f^*|$. 
Dividing by $M$ on both sides, the above bound then becomes 
\[
\E\left[\Big(\frac{1}{ M}\sum_{t=0}^{M-1} \langle g_{x_t}, x_{t} \rangle \Big) - \langle y, w \rangle \right] \leq \frac{k \log n}{\eta M} + \eta (2k \|u_f\|_{\infty} + \phi) (2 \|u_f\|_1 + \phi).
\]
By optimally setting $\eta = \sqrt{\frac{k \log n}{M (2k \|u_f\|_{\infty} + \phi) (2 \|u_f\|_1 + \phi)}}$ above, we obtain 
\[
\E\left[\Big(\frac{1}{ M}\sum_{t=0}^{M-1} \langle g_{x_t}, x_{t} \rangle \Big) - \langle y, w \rangle \right] \leq \sqrt{\frac{k \log n}{M} \cdot (2k \|u_f\|_{\infty} + \phi) (2 \|u_f\|_1 + \phi)}  = \delta , 
\]
where the last equality is because of our setting $M = \frac{k (2k\|u_f\|_{\infty} + \phi) (2\|u_f\|_1 + \phi) \log n}{\delta^2}$. 
Then we have
\[
\E[\langle y, w \rangle] \geq \E\left[\frac{1}{ M}\sum_{t=0}^{M-1} \langle g_{x_t}, x_{t} \rangle \right] - \delta = \E\left[\frac{1}{ M}\sum_{t=0}^{M-1} f(x_t) \right] - \delta \geq f^* - \delta ,
\]
where the equality uses \Cref{thm:lovasz_extension_properties}. This completes the proof of the lemma. 
\end{proof}

Ideally, we would like to take $w$ to be the indicator vector corresponding to the $k$ most negative coordinates of $y$ in \Cref{lem:submodularFTRL_expectation} to argue that $y$ is a $(\delta, k)$ dual certificate. Unfortunately, \Cref{lem:submodularFTRL_expectation} only works when we fix $w$ ahead of time without depending on $y$. Therefore, to prove \Cref{lem:st_certif_runtime_FTRL}, we turn the expectation bound in \Cref{lem:submodularFTRL_expectation} to a high probability bound, from which we can then union bound over all $k$-sparse indicator vectors $w$. To this end, we start with the following high probability bound on $\|y\|_\infty$. 

\begin{lemma}[High Probability $\ell_\infty$ Bound for \textsf{SubmodularFTRL}]\label{lem:submodularFTRL_highprob_linfty}
In an execution of \Cref{submod_st_FTRL}, the random vector $y = \frac{1}{M} \sum_{t=0}^{M-1} g_{x_t}$ satisfies that 
\[
\|y\|_\infty \leq \|u_f\|_\infty + \widetilde{O}\Big(\phi + \frac{k \delta^2}{U_\infty} \Big) ,
\]
with probability at least $1 - \frac{1}{n^{C}}$, where $U_\infty \defeq 2k \|u_f\|_\infty + \phi$ and $C$ is a large enough constant. 
\end{lemma}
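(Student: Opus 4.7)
Split $\|y\|_\infty = \max_i |y_i|$ into the upper and lower directions. The upper bound $y_i \le \|u_f\|_\infty$ is pathwise, since by submodularity and the assumption $u_f \ge 0$, every subgradient satisfies $(g_{x_t})_i \le (u_f)_i \le \|u_f\|_\infty$, so averaging preserves this. The work is entirely in showing the lower bound $y_i \ge -\|u_f\|_\infty - \widetilde{O}(\phi + k\delta^2/U_\infty)$ for each fixed $i \in V$ with failure probability $\le n^{-(C+1)}$; a union bound over the $n$ coordinates then gives the stated $1 - n^{-C}$ guarantee.

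Fix $i$ and invoke the pathwise FTRL regret bound in its local-norm form (the same estimate feeding \Cref{lem:submodularFTRL_expectation}) against the valid comparator $w = \indicVec{i} \in S_k^V$. Using $r(\indicVec{i}) - r(x_0) = k\log(n/k) \le k\log n$, this yields
\[
-\sum_{t=0}^{M-1} h_{t,i} \;\le\; -\sum_{t=0}^{M-1} \langle h_t, x_t\rangle \;+\; \frac{k\log n}{\eta} \;+\; O(\eta)\sum_{t=0}^{M-1}\|h_t\|_{x_t}^{2}.
\]
To transfer this to $y_i = M^{-1}\sum_t (g_{x_t})_i$ and to $\sum_t \langle g_{x_t}, x_t\rangle = \sum_t f(x_t) \ge -M\phi$, I would control the two $\mathcal{F}_t$-martingales
\[
A_T \defeq \sum_{t<T}\bigl((g_{x_t})_i - h_{t,i}\bigr),\qquad B_T \defeq \sum_{t<T}\bigl(\langle g_{x_t}, x_t\rangle - \langle h_t, x_t\rangle\bigr),
\]
which are martingales since $\E[h_t \mid \mathcal{F}_t] = g_{x_t}$, via Freedman's inequality, using the conditional-variance bounds $\E[h_{t,i}^2\mid\mathcal{F}_t] \le |(g_{x_t})_i|\cdot U_1$ and $\E[\langle h_t, x_t\rangle^2\mid\mathcal{F}_t] \le O(\|h_t\|_{x_t}^2)$ implicit in the calculation of \Cref{lem:eff_vsampling}.

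The main obstacle is upgrading the in-expectation estimate $\E\sum_t\|h_t\|_{x_t}^2 \le MU_\infty U_1$ and the conditional-variance sums feeding Freedman on $A_M$ and $B_M$ to \emph{high-probability} statements, because all of these are driven by how often $(g_{x_t})_i$ takes very negative values. This is precisely where the submodularity/multiplicative-weights coupling flagged in \Cref{sec:approach:sequential} enters. The entropy-regularized FTRL update gives, away from the box constraint, $x_{t+1,i} \propto \exp(-\eta \sum_{s \le t} h_{s,i})$, so whenever the cumulative stochastic gradient drives $x_{t+1,i}$ toward its upper bound $1$, the box constraint in $S_k^V$ activates and $i$ rises to the top of $\pi_{x_{t+1}}$, forcing $(g_{x_{t+1}})_i = (u_f)_i \ge 0$ and halting any further accumulation of negativity on coordinate $i$. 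I would formalize this by coupling the random trajectory $(x_t)$ with a deterministic multiplicative-weights process driven by the true subgradients $g_{x_s}$, and running a stopping-time argument that, whp, caps $\sum_t |(g_{x_t})_i|$ by $\widetilde{O}(M(\|u_f\|_\infty + \phi))$; from this, Freedman on $A_M, B_M$ and a self-bounding tail bound on $\sum_t \|h_t\|_{x_t}^2$ (invoking \Cref{lem:eff_vsampling} again) yield all the desired variance estimates whp.

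Plugging these whp bounds into the regret inequality and dividing by $M$ with the algorithm's choice of $\eta$ and $M$ yields $-y_i \le \|u_f\|_\infty + \widetilde{O}(\phi + k\delta^2/U_\infty)$: the $k\delta^2/U_\infty$ contribution traces back to the deterministic term $k\log n/(\eta M)$ after substitution, the $O(\eta)\sum_t\|h_t\|_{x_t}^2 / M$ term simplifies (by \Cref{lem:eff_vsampling}) to $\widetilde{O}(\delta)$, and $\widetilde{O}(\phi)$ absorbs $-f^*/M \le \phi/M$ together with the Freedman deviations on $A_M$ and $B_M$. The novel step, and the main obstacle, is the coupling/stopping-time argument in the third paragraph; the remainder is bookkeeping built on top of \Cref{lem:eff_vsampling} and \Cref{lem:submodularFTRL_expectation}.
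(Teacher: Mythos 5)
Your decomposition and the broad shape of the argument (pathwise upper bound by submodularity; FTRL/MW machinery plus a martingale concentration step to control the lower tail) match the paper's, but the concrete mechanism by which you propose to get a high-probability bound on $\sum_t -(h_t)_i$ is different from what the paper does, and the hardest step in your version is left as a sketch. The paper does \emph{not} invoke the local-norm FTRL regret bound against $w = \indicVec{i}$ and then separately try to concentrate $\sum_t \langle h_t, x_t\rangle$ and $\sum_t \|h_t\|_{x_t}^2$; that route forces you to prove two additional high-probability bounds on random regret terms. Instead, the paper applies the stochastic multiplicative-weights lemma (\Cref{lem:stocastic_MW}) directly to the process $g^{(t)} = -\eta h_t$, with $p^{(t)}$ the \emph{unconstrained} MW distribution. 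The crucial observation (via \Cref{lem:FTRL_update}) is that $p^{(t)}$ and $x_t$ induce the same permutation, so $\langle p^{(t)}, g_{x_t}\rangle = f(p^{(t)}) \ge -\phi$ deterministically; together with the conditional second-moment bound $\E_t[\|h_t\|_{p^{(t)}}^2] \le U_\infty U_1$ from \Cref{lem:eff_vsampling}, this lets one set $v^* = \eta\phi + \eta^2 U_\infty U_1$ with probability one. The MW lemma bakes the concentration into an MGF argument, so a high-probability bound on $\sum_t -(h_t)_i$ falls out \emph{using only conditional-expectation bounds}; your obstacle of upgrading $\E\sum_t\|h_t\|_{x_t}^2$ to a whp statement never arises.

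Two further issues. First, the obstacle you flag is mis-attributed: $\|h_t\|_{x_t}^2$ depends on the globally sampled coordinate, and its conditional expectation is deterministically bounded, so concentration of that sum is routine (and, as noted, unnecessary in the paper's route). The genuinely self-referential difficulty is elsewhere: in the Freedman step relating $\sum_t (g_{x_t})_i$ to $\sum_t (h_t)_i$, the predictable quadratic variation is bounded by $U_1 \sum_t |(g_{x_t})_i|$, which is exactly the quantity one is trying to control. Your write-up glosses over this; the paper handles it by a peeling argument over a logarithmic number of variance thresholds $\sigma_j^2$, picking the smallest one that dominates and solving the resulting quadratic self-bound in $G_M$. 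Second, the ``coupling/stopping-time argument'' you propose to cap $\sum_t |(g_{x_t})_i|$ is never made precise, and, as argued above, it is needed only for the Freedman peeling step (where the paper's argument is purely analytic, no stopping time). So while your intuition about the interplay of the entropy regularizer and the box constraint is sound, the proof as written has a genuine gap at the step you yourself call the ``main obstacle,'' and the paper resolves it by a different, more economical route.
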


To prove \Cref{lem:submodularFTRL_highprob_linfty}, we will use the following martingale inequality due to Freedman \cite{freedman1975tail}.

\begin{theorem}[Freedman's Inequality, Theorem 1.6 in \cite{freedman1975tail}]
\label{thm:FreedmanInequality}
Consider a real-valued martingale sequence $\{Y_t\}_{t\geq 0}$ such that $X_0=0$, and $\E[Y_{t+1}|\mathcal{F}_t]=0$ for all $t$, where $\{\mathcal{F}_t\}_{t\geq 0}$ is the filtration defined by the martingale. Assume that the sequence is uniformly bounded, i.e., $|Y_t|\leq R$ almost surely for all $t$. Now define the predictable quadratic variation process of the martingale to be $W_t=\sum_{j=1}^t \E[Y_j^2|\mathcal{F}_{j-1}]$ for all $t\geq 1$. Then for all $\lambda \geq 0$ and $\sigma^2>0$, we have 
\[
\mathbb{P} \Big[ \exists \tau \geq 0 \text{ s.t. } \sum_{j=0}^\tau Y_j \geq \lambda  \text{ and } W_\tau \leq \sigma^2  \Big] \leq \exp\Big(- \frac{\lambda^2/2}{\sigma^2+ R \lambda /3} \Big).
\]
\end{theorem}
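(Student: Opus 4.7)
The plan is to follow the classical exponential-supermartingale approach to Freedman's inequality. First, I would introduce, for each parameter $\theta > 0$, the auxiliary process $L_t \defeq \exp(\theta S_t - \psi(\theta) W_t)$, where $S_t \defeq \sum_{j=0}^{t} Y_j$ denotes the partial sum and $\psi(\theta) \defeq (e^{\theta R} - 1 - \theta R)/R^2$ is the Bennett rate function. Using the bounded-increment hypothesis $|Y_t| \leq R$ together with the elementary convexity inequality $e^{y} \leq 1 + y + \psi(1) y^2$ for $y \leq R$ (applied to $y = \theta Y_{t+1}$, with the resulting quadratic coefficient equal to $\psi(\theta)$), and exploiting that $\E[Y_{t+1} \mid \mathcal{F}_t] = 0$, I obtain
\[
\E[e^{\theta Y_{t+1}} \mid \mathcal{F}_t] \leq 1 + \psi(\theta)\,\E[Y_{t+1}^2 \mid \mathcal{F}_t] \leq \exp\bigl(\psi(\theta)\,\E[Y_{t+1}^2 \mid \mathcal{F}_t]\bigr).
\]
Since $W_{t+1} - W_t = \E[Y_{t+1}^2 \mid \mathcal{F}_t]$ is $\mathcal{F}_t$-measurable, this yields $\E[L_{t+1} \mid \mathcal{F}_t] \leq L_t$, so $\{L_t\}$ is a non-negative supermartingale with $L_0 = 1$.

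Next, I would apply optional stopping. Define the stopping time $\tau^* \defeq \inf\{t \geq 0 : S_t \geq \lambda \text{ and } W_t \leq \sigma^2\}$, with $\inf \emptyset = +\infty$, and let $A$ be the event that $\tau^*$ is finite (which is exactly the event whose probability we wish to bound). On $A$, by definition $L_{\tau^*} \geq \exp(\theta \lambda - \psi(\theta)\sigma^2)$. Applying optional stopping to the bounded stopping time $\tau^* \wedge N$, using the supermartingale property to get $\E[L_{\tau^* \wedge N}] \leq L_0 = 1$, and then taking $N \to \infty$ via Fatou's lemma (justified by non-negativity of $L$), we obtain $\E[L_{\tau^*} \mathbf{1}_A] \leq 1$, and therefore
\[
\Pr[A] \cdot \exp(\theta \lambda - \psi(\theta) \sigma^2) \leq \E[L_{\tau^*}\,\mathbf{1}_A] \leq 1,
\]
which rearranges to $\Pr[A] \leq \exp(-\theta \lambda + \psi(\theta) \sigma^2)$.

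Finally, I would optimize over $\theta > 0$. Rather than using the exact optimizer (which gives the Bennett form), I would apply the standard Bernstein-type bound $\psi(\theta) \leq \theta^2 / (2(1 - R\theta/3))$ valid for $0 < \theta R < 3$, and then choose $\theta \defeq \lambda / (\sigma^2 + R\lambda/3)$. Plugging in and simplifying the resulting expression yields the claimed bound $\exp(-\tfrac{\lambda^2/2}{\sigma^2 + R\lambda/3})$.

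The main technical obstacle is verifying the one-step supermartingale inequality with exactly the rate function $\psi(\theta)$: this is where the bounded-increment assumption is essential, since one needs a \emph{one-sided} quadratic majorant of $e^{y}$ that is tight at $y = 0$ and still valid at $y = R$ (two-sided boundedness is not needed here, only an upper bound). A secondary subtlety is the optional-stopping step, since $\tau^*$ can be $+\infty$; the fix is to truncate at $N$ and pass to the limit using the non-negativity of $L_t$ and Fatou, which avoids any delicate uniform-integrability arguments.
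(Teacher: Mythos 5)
Your proposal is correct. Note that the paper does not prove this statement at all --- it is imported verbatim as Theorem 1.6 of Freedman's 1975 paper --- so there is no in-paper argument to compare against; what you have written is the standard exponential-supermartingale proof of the Bernstein-form of Freedman's inequality, and every step checks out: the one-sided majorant $e^{y}\le 1+y+\tfrac{e^{a}-1-a}{a^{2}}y^{2}$ for $y\le a$ (valid because $(e^{y}-1-y)/y^{2}$ is nondecreasing) applied with $a=\theta R$ gives exactly the coefficient $\psi(\theta)=(e^{\theta R}-1-\theta R)/R^{2}$; the truncation-plus-Fatou handling of the possibly infinite stopping time is the right fix; the bound $\psi(\theta)\le \theta^{2}/(2(1-\theta R/3))$ follows from $k!\ge 2\cdot 3^{k-2}$; and the choice $\theta=\lambda/(\sigma^{2}+R\lambda/3)$ satisfies $\theta R<3$ and yields precisely $\exp\bigl(-\tfrac{\lambda^{2}/2}{\sigma^{2}+R\lambda/3}\bigr)$. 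You also correctly observe that only the upper bound $Y_t\le R$ is used, which is slightly weaker than the stated hypothesis $|Y_t|\le R$. The only cosmetic imprecision is the phrase ``$e^{y}\le 1+y+\psi(1)y^{2}$ for $y\le R$'': the inequality you actually apply is the rescaled version at level $\theta R$, but since you state that the resulting quadratic coefficient is $\psi(\theta)$, the intent and the conclusion are right.
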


\begin{proof}[Proof of \Cref{lem:submodularFTRL_highprob_linfty}]
For notational convenience let $g_t \defeq g_{x_t}$.
Note that for every coordinate $i \in V$, we have \[
y_i \leq \frac{1}{M} \sum_{t=0}^{M-1} (g_t)_i \leq (u_f)_i \leq \|u_f\|_\infty .
\]
So we only need to prove that $\sum_{t=0}^{M-1} -(g_t)_i = \widetilde{O}(M \|u_f\|_\infty)$ for each $i \in V$, i.e., the coordinates of $y$ don't become too negative. We start by bounding the empirically sampled vectors $h_t$. 

\smallskip
\noindent \textbf{Bounding the empirical process $h_t$.} Note that \Cref{submod_st_FTRL} is an instantiation of \Cref{st_FTRL} with Lov\'asz extension $f$, domain $D = S_k^V$, step size $\eta$, number of iterations $M$, the entropy regularizer $r(x)$, and the stochastic subgradient $h_t$ generated as in \Cref{line:vsample_begin} - \ref{line:vsample_end}. Therefore, by the second statement in \Cref{lem:stMD_guarantee}, with probability at least $1 - \rho$, we have
\begin{align} \label{eq:SubmodularFTRL_empirical}
\max_{i \in V} \sum_{t=0}^{M-1} - (h_t)_i \leq \frac{M v^*}{\eta} + \frac{1}{\eta} \log \Big( \frac{n^2}{\rho} \Big) .
\end{align}
Here, $v^*$ is a number such that for every $t \in \{0, \ldots, M-1\}$, 
\[
-\eta \langle p^{(t)}, g_t \rangle + \eta^2 \E_t\left[\norm{h_t}_{p^{(t)}}^2\right] \leq v^* ,
\]
where $p^{(t)} \in \R^V$ is defined as $p^{(t)}_i \defeq \frac{w^{(t)}_i}{\norm{w^{(t)}}_1}$ and $w^{(t)} \in \R^V$ is given by $w_i^{(t)} \defeq \exp(-\eta \sum_{t'=0}^{t-1} (h_{t'})_i)$. 
These definitions allow us to view $x_t = \argmin_{x \in S_k^V} \eta \sum_{t'=0}^{t-1} h_{t'}^\top x +  r(x)$ in \Cref{line:prox_step_submodular_FTRL} of \Cref{submod_st_FTRL} as a proximal step with the vector $\eta \sum_{t'=0}^{t-1} h_{t'} = - \log w_i^{(t)}$ in  \Cref{lem:FTRL_update}. 
The second statement in \Cref{lem:FTRL_update} then implies that the decreasing order of the coordinates in $x_t$ and $p^{(t)}$ are the same, i.e., $\pi_{x_t} = \pi_{p^{(t)}}$ and thus $g_{x_t} = g_{p^{(t)}}$. It follows that we can bound
\[
- \langle p^{(t)}, g_t \rangle = - \langle p^{(t)}, g_{p^{(t)}} \rangle = - f(p^{(t)}) \leq \phi ,
\]
and by \Cref{lem:eff_vsampling} that 
\[
\E_t\left[\norm{h_t}_{p^{(t)}}^2\right] \leq (2k \|u_f\|_{\infty} - f^*) (2 \|u_f\|_1 - f(V)) \leq U_\infty \cdot U_1 ,
\]
where we denote $U_1 \defeq 2 \|u_f\|_1 + \phi$ as before. 
Therefore, we can set $v^* = \eta \phi + \eta^2 U_\infty U_1$ and the bound in \eqref{eq:SubmodularFTRL_empirical} becomes
\begin{align} \label{eq:SubmodularFTRL_emp_2}
\max_{i \in V} \sum_{t=0}^{M-1} - (h_t)_i \leq M \phi + M \eta U_\infty U_1 + \frac{1}{\eta} \log \Big( \frac{n^2}{\rho} \Big) . 
\end{align}

\smallskip
\noindent \textbf{Bounding the difference process $(-g_t)_i-(-h_t)_i $.} We have so far upper bounded how negative the empirical process $(h_t)_i$ can become. We next seek to bound the difference $\sum_{t=0}^{M-1} X_t$, where we denote $X_t \defeq (-g_t)_i - (-h_t)_i$ for a fixed coordinate $i$. Note that $\E_t[(h_t)_i] = (g_t)_i$, so the stochastic process $\sum_{t=0}^{T-1} X_t$ is a martingale. 
For notational simplicity, we shall also denote $G_T \defeq \sum_{t=0}^{T-1} -(g_t)_i$ and $H_T \defeq \sum_{t=0}^{T-1} -(h_t)_i$ so that $\sum_{t=0}^{T-1} X_t = G_T - H_T$. For this step, we will make use of \Cref{thm:FreedmanInequality}.

To apply Freedman's inequality, we first note that $(h_t)_i$ either takes value $\frac{(g_t)_i}{p^{(t)}_i}$ (if $i$ gets sampled) or $0$. Therefore, with probability 1, 
\[
|X_t| \leq \max \Big\{ |(g_t)_i|, \Big|\frac{(g_t)_i}{p^{(t)}_i} - (g_t)_i \Big| \Big\} \leq \Big|(g_t)_i \cdot \frac{\|v^{(t)}\|_1}{v^{(t)}_i} \Big| \leq U_1 , 
\]
where the last inequality follows from $v^{(t)}_i = 2 (u_f)_i - (g_t)_i \geq |(g_t)_i|$ and $\|v^{(t)}\|_1 = U_1$. 
We can also bound the quadratic variation as
\begin{align}
\E_t[X_t^2] &= p^{(t)}_i \cdot \Big( - (g_t)_i + \frac{(g_t)_i}{p^{(t)}_i} \Big)^2 + (1 - p^{(t)}_i) \cdot ((g_t)_i)^2 \nonumber \\
& \leq ((g_t)_i)^2 \cdot \frac{1 - p^{(t)}_i}{p^{(t)}_i} = ((g_t)_i)^2 \cdot \frac{\|v^{(t)}\|_1 - v^{(t)}_i}{v^{(t)}_i} \leq |(g_t)_i| \cdot U_1  \label{eq:quad_var} . 
\end{align}
Then by \Cref{thm:FreedmanInequality}, we have
for all $\lambda, \sigma^2 > 0$,
\begin{align} \label{eq:submodularFTRL_freedman}
\p \Big[ \sum_{t=0}^{M-1} X_t \geq \lambda \text{ and } \sum_{t=0}^{M-1} \E_t[X_t^2] \leq \sigma^2 \Big] \leq \exp \Big( - \frac{\lambda^2/2}{\sigma^2 + U_1 \lambda/3} \Big) . 
\end{align}
But as the quadratic variation bound in \eqref{eq:quad_var} depends on $(g_t)_i$, the quantity we want to control, we shall use \eqref{eq:submodularFTRL_freedman} with different values of $\lambda$ and $\sigma^2$. Note that by \eqref{eq:quad_var}, with probability 1, 
\[
\sum_{t=0}^{M-1} \E_t[X_t^2] \leq U_1 \cdot \sum_{t=0}^{M-1} |(g_t)_i| \leq U_1 \cdot \sum_{t=0}^{M-1} \|v^{(t)}\|_1 \leq M U_1^2 . 
\]
On the other hand, if $\sum_{t=0}^{M-1} |(g_t)_i| \leq \|u_f\|_\infty$, then we would be done in controlling $(g_t)_i$ already. 
Therefore, we define values $\sigma_i^2 \defeq \|u_f\|_\infty U_1 \cdot 2^i$, for $i = 0, \ldots, \ell$, where $\ell \defeq \lceil \log (\frac{M U_1}{\|u_f\|_\infty}) \rceil = O(\log n)$, and the value $\lambda_i \defeq \sqrt{2 \sigma_i^2 + U_1^2} \cdot \log(\frac{n}{\rho})$. 
For $i \in \{0,\ldots, \ell\}$, define events
\[
\mathcal{E}_i \defeq \Big\{ \sum_{t=0}^{M-1} X_t < \lambda_i \text{ or } \sum_{t=0}^{M-1} \E_t[X_t^2] > \sigma^2 \Big\} \quad \text{and} \quad \mathcal{E}_* \defeq \{\eqref{eq:SubmodularFTRL_emp_2} \text{ holds}\} .
\]
Then \eqref{eq:SubmodularFTRL_emp_2} and \eqref{eq:submodularFTRL_freedman} together imply that all the events $\mathcal{E}_*$ and $\{\mathcal{E}_i\}_{i=0}^\ell$ hold with probability at least $1 - 2 \rho$.
We now argue that conditioning on these events, $G_M = \sum_{t=0}^{M-1} -(g_t)_i$ has to be small.

\smallskip
\noindent \textbf{Controlling $-(g_t)_i$ conditioned on events $\mathcal{E}_*$ and $\{\mathcal{E}_i\}_{i=0}^\ell$.} To this end, we let $i^* \in \{0, \ldots, \ell\}$ be the smallest index (such an index must exist by our choice of $\ell$) such that 
\[
\sum_{t=0}^{M-1} \E_t[X_t^2] \leq U_1 \cdot \sum_{t=0}^{M-1} |(g_t)_i| \leq \sigma_{i^*}^2 . 
\]
Then as the event $\mathcal{E}_{i^*}$ hold, we must have
\begin{align*}
G_M - H_M &\leq \lambda_{i^*} = \sqrt{2 \sigma_{i^*}^2 + U_1^2} \cdot \log(\frac{n}{\rho}) 
\leq \sqrt{4 U_1 \sum_{t=0}^{M-1} |(g_t)_i| + U_1^2} \cdot \log(\frac{n}{\rho}) \\
& \leq \sqrt{4 U_1 (2 M \|u_f\|_\infty + G_M)) + U_1^2} \cdot \log(\frac{n}{\rho}),
\end{align*}
where the second inequality follows from the minimality of $i^*$ and the last inequality uses
\[
\sum_{t=0}^{M-1} |(g_t)_i| \leq \sum_{t=0}^{M-1} (2 (u_f)_i - (g_t)_i) \leq 2 M \|u_f\|_\infty + G_M . 
\]
Unraveling the above inequality and using the event $\mathcal{E}_*$ give the bound
\begin{align*}
G_M &\leq H_M + \sqrt{8M U_1 \|u_f\|_\infty + U_1^2} \cdot \log(\frac{n}{\rho}) + 4 U_1 \log^2(\frac{n}{\rho}) \\
&\leq M \phi + M \eta U_\infty U_1 + \frac{1}{\eta} \log \Big( \frac{n^2}{\rho} \Big) + \sqrt{8M U_1 \|u_f\|_\infty + U_1^2} \cdot \log(\frac{n}{\rho}) + 4 U_1 \log^2(\frac{n}{\rho}) .
\end{align*}
Finally, plugging in $M = \frac{U_1 U_\infty \log n}{\delta^2}$ and $\eta = \frac{\delta}{ U_1 U_\infty}$, and setting $\rho \defeq n^{-C-1}/2$, we have
\begin{align*}
\frac{1}{M} \sum_{t=0}^{M-1} - (g_t)_i 
& \leq \phi + \delta + O(\delta) + O(k \log n) \cdot \sqrt{\frac{ U_1 \|u_f\|_\infty}{k U_1 U_\infty \log n}} \cdot \delta  + O(k \log n)  \cdot \frac{\delta^2 }{ U_\infty} \\
& \leq \phi + O(\sqrt{\log n}) \cdot \delta + O(\log n) \cdot \frac{k \delta^2}{U_\infty} . 
\end{align*}
where the last inequality uses the assumption that $\delta \leq \phi \leq U_\infty$. 
Finally, apply the union bound over all $i\in V$ which completes the proof of the lemma. 
\end{proof}

Finally, we have gathered enough tools to present the proof of \Cref{lem:st_certif_runtime_FTRL}, which we first restate below for convenience. 

\StochDualCertificate*

\newcommand{\ysafe}{\overline{y}}

\begin{proof}[Proof of \Cref{lem:st_certif_runtime_FTRL}]
For each iteration $\ell \in [N]$ in \Cref{line:StochDualCert_Rep} of \Cref{st_dual_cert}, we let $y_\ell \defeq \frac{1}{M} \sum_{\pi \in \mathcal{C}_\ell} g_\pi$ denote the average subgradients in the $\ell$-th call to \submodstFTRL$(\cdot)$. 
\Cref{lem:submodularFTRL_highprob_linfty} implies that with probability at least $1 - n^{-C}$ where $C > 0$ is a large constant, for all $\ell \in [N]$, 
\[
\|y_\ell\|_\infty \leq \|u_f\|_\infty + \widetilde{O}(1) \cdot \Big(\phi + \frac{k \delta^2}{U_\infty} \Big) \defeq R. 
\]
For all $\ell \in [N]$ define
\[
\ysafe_\ell
\defeq 
\begin{cases}
y_\ell & \text{if } \norm{y_\ell}_\infty \leq R\\
u_f & \text{if } \norm{y_\ell}_\infty > R
\end{cases}
\,.
\]
Since $g_\pi \leq u_f$ for all permutations $\pi$ we see that $y_\ell \leq \ysafe_\ell$. Additionally, by the early fact regarding \Cref{lem:submodularFTRL_highprob_linfty} we have that, for every $\ell \in [N]$, $y_\ell = \ysafe_\ell$ with probability $1 - n^{-C}$, which translates to $y_\ell = \ysafe_\ell, \forall \ell \in [N]$ with high probability, by applying a union bound over $\ell \in [N]$ and using that $N \le n^{10}$. 

Next, for any fixed $w \in S_k^V$, define $X^w_\ell \defeq \langle \ysafe_\ell, w\rangle$. Note that each iteration of \Cref{line:StochDualCert_Rep} is an independent execution of \Cref{submod_st_FTRL} with accuracy $\delta/2$. Consequently, \Cref{lem:submodularFTRL_expectation} and the $\ysafe_\ell \geq y_\ell$ implies that 
\[
\E[X^w_\ell] \geq \E[\langle \ysafe_\ell, w\rangle] \geq f^* - \frac{\delta}{2}\,.
\]
Additionally, from the definition of $\ysafe$ and $R$ we see that $\norm{\ysafe}_\infty \leq \norm{u_f}_\infty$ and therefore 
\[
|X^w_\ell| \leq k \|y_\ell\|_\infty \leq k R\,.
\]
Consequently, applying Azuma-Hoeffding's inequality yields that 
\[
\p \Big(\frac{1}{N}\sum_{\ell \in [N]} (X^w_\ell - \E[X^w_\ell]) \leq -\frac{\delta}{2} \Big) \leq \exp\Big(- \frac{N \delta^2}{8k^2 R^2} \Big) .
\]
The above probability is smaller than $n^{-10k}$ by our choice of
\begin{align*}
N = 100 \delta^{-2} R^2 k \log n = \widetilde{O}(1) \cdot \delta^{-2} k \Big(k \|u_f\|_\infty + \widetilde{O}(k) \cdot \Big(\phi + \frac{k \delta^2}{U_\infty} \Big)\Big)^2  \leq \widetilde{O}(\delta^{-2} k^5 \phi) ,
\end{align*}
where we used the assumptions that $\|u_f\|_\infty \leq O(k \phi)$ and $0 < \delta \leq \phi$. 
This implies that with probability at least $1 - n^{-9k}$, we have
\[
\frac{1}{N}\sum_{\ell \in [N]} \langle \ysafe_\ell, w\rangle \geq f^* - \delta .
\]
Finally, taking the union bound over all the $O(n^k)$ vertices of $S_k^V$ for the choice of $w$, we obtain that \[\max_{w \in S_k^V} \frac{1}{N}\sum_{\ell \in [N]} \langle \ysafe_\ell, w\rangle \geq f^* - \delta,\]
with probability $1 - n^{-5k}$. Since $y_\ell = \ysafe_\ell$ with probability $1 - n^{-C}$, we obtain that $\frac{1}{N} \sum_{\ell \in [N]} y_\ell$ is a $(\delta, k)$ dual certificate with high probability. 

The total number of permutations used is 
\begin{align*}
m & = MN = \widetilde{O}(\delta^{-2} k (k\|u_f\|_{\infty} + \phi) (\|u_f\|_1 + \phi)) \cdot \widetilde{O}(\delta^{-2} k^5 \phi^2)\\
& = \widetilde{O}(k^6 \delta^{-4} \phi^2 (\|u_f\|_{\infty} + \phi) (\|u_f\|_1 + \phi)) . 
\end{align*}
Since for each permutation, we only sample a single coordinate in \Cref{line:vsample_begin} - \ref{line:vsample_end} in \Cref{submod_st_FTRL}, the total runtime is then $\widetilde{O}(m) \cdot (\EO + \poly(n))$ by the second statement of \Cref{lem:eff_vsampling}. This completes the proof of the theorem.
\end{proof}

\subsection{Dimensionality Reduction for Sequential Algorithm}
\label{subsec:dimensionality_reduction_sequential}

In this subsection, using the algorithm for computing $(\delta,k)$ dual certificates in \Cref{subs_stMD}, we give the implementation of the subprocedure \textsf{Dimenionality-Reduction} for our sequential algorithm in \Cref{alg:dim_red_sequential}.
Recall from \Cref{sec:framework:meta-algorithm} that assuming $f$ has a $\sparsity$-sparse minimizer, the subprocedure \textsf{Dimensionality-Reduction}$(f,k)$ either outputs a dimensionality reduction $T \neq \emptyset$ that belongs to every minimizer of $f$, or certifies that $f^* > -\frac{\|u_f\|_\infty}{12k}$.

Similarly to \Cref{alg:dim_red_parallel}, \Cref{alg:dim_red_sequential} aims at finding a value $\phi>0$ such that $-\phi \leq f^* \leq -\phi/2$, starting from the initial guess $\phi^{(0)} = \|u_f\|_1 - f(V)$ and halving the value of $\phi$ in each iteration of the while loop until the threshold $\frac{\|u_f\|_\infty}{12k}$ is reached. 
If $f^* \leq -\frac{\|u_f\|_\infty}{12k}$, then along this halving process there must be an iteration $i$ such that
$-\phi^{(i)} \leq f^* \leq -\phi^{(i)}/2$ for which we can obtain a non-empty dimension reduction $T \neq \emptyset$. 

\begin{algorithm}[H]
\caption{Dimensionality Reduction for Sequential Algorithm}\label{alg:dim_red_sequential}
    \KwData{A sparsity parameter $k$, and a submodular function $f$ with a $\sparsity$-sparse minimizer}
    \KwResult{A subset $T \subseteq V$ that must be in every minimizer of $f$, with $T = \emptyset$ certifying that $f^* > - \frac{\|u_f\|_\infty}{12k}$} 
    \SetKwFunction{stdimred}{Dimensionlity-Reduction}
    \SetKwFunction{stdualcert}{StochDualCertificate}
    \SetKwProg{Fn}{Function}{:}{}
    \Fn{\stdimred{$f$, $k$}}{
         $\phi \gets \|u_f\|_1 - f(V)$ \label{line:init_guess_seq} \tcp*{Lower bound $f^* \geq -(\|u_f\|_1 - f(V))$}
        \While{$\phi \geq \frac{\|u_f\|_\infty}{12k}$}{
            $\delta \leftarrow \frac{\phi}{8k}$, $N \leftarrow \frac{100 k^4 \|u_f\|_1}{\|u_f\|_\infty} \cdot \log n$\;
            $\{\pi^{(t)}\}_{t \in [m]} \gets \stdualcert{$f, k, \phi, \delta$}$ \tcp*{$\frac{1}{m} \sum_t g_{\pi^{(t)}}$ is $(\delta, k)$ dual cert.}
            \For{$t =1, \ldots, N$}
            {
            $z^{(t)} \leftarrow \mathsf{vSampling}(g_{\pi^{(i_t)}})$ for random $i_t \sim [m]$ \label{line:vsample} \tcp*{$\mathsf{vSampling}$ for random $g_{\pi^{(i_t)}}$}
            }
            $z \leftarrow \frac{1}{N} \sum_{t=1}^N z^{(t)}$ \label{line:estimate_y} \tcp*{Estimate for $(\delta,k)$ dual certificate} 
            $T \leftarrow \{p\in V: z_p \le - \frac{3\phi}{8k}\}$ \label{line:find_dim_red_sequential} \tcp*{Find dimensionality reduction}
            \lIf{$T \ne \emptyset$}{\Return $T$}
            $\phi \gets \phi / 2$\;
        }
        \Return $\emptyset$\; 
        
    }
\end{algorithm}

The following lemma for \Cref{alg:dim_red_sequential} is the main result of this subsection.

\begin{restatable}[$\textsf{Dimensionality-Reduction}$ for Sequential Algorithm]{lemma}{DimRedSequential}
\label{lem:dim_red_sequential}
Let $k \in \mathbb{Z}_{>0}$ and $f : 2^V \rightarrow \R$ be a submodular function with a $k$-sparse minimizer and $u_f \geq 0$. Then \Cref{alg:dim_red_sequential} outputs a set $T \subseteq V$ that must be in every minimizer of $f$ such that $T = \emptyset$ implies $f^* > -\|u_f\|_\infty/12k$. 
Moreover, the algorithm runs in time $\otilde\Big(\frac{k^{12} \|u_f\|_1}{\|u_f\|_\infty} \cdot \EO + \poly(n)\Big)$.
\end{restatable}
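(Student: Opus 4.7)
The plan is to establish four things in sequence: (i) the sampled estimate $z$ concentrates around $y \defeq \frac{1}{m}\sum_t g_{\pi^{(t)}}$ in $\ell_\infty$; (ii) any $p \in T$ lies in every minimizer of $f$; (iii) some iteration produces $T \neq \emptyset$ whenever $f^* \le -\|u_f\|_\infty/(12k)$; and (iv) the runtime bound.

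For (i), I invoke \Cref{lem:st_certif_runtime_FTRL} to obtain that $y$ is a $(\delta, k)$ dual certificate with high probability. The samples $z^{(1)}, \ldots, z^{(N)}$ are i.i.d.\ unbiased estimators of $y$ since $\E[z^{(t)}] = \E_{i \sim [m]}[\E_{\mathsf{vSampling}}[z^{(t)} \mid i]] = \frac{1}{m}\sum_i g_{\pi^{(i)}} = y$. For a fixed coordinate $p$, using $v_p(\pi) \ge |(g_\pi)_p|$ I bound $|z^{(t)}_p| \le \|v(\pi^{(i)})\|_1 \le 2\|u_f\|_1 + \phi$ and $\E[(z^{(t)}_p)^2 \mid \pi^{(i)}] \le |(g_{\pi^{(i)}})_p| \cdot \|v(\pi^{(i)})\|_1$. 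Averaging over $i$ and splitting $|(g_\pi)_p|$ into positive and negative parts, the positive part is bounded by $(u_f)_p$ and the average of the negative part equals $\frac{1}{m}\sum_i (g_{\pi^{(i)}})_p^+ - y_p \le (u_f)_p - y_p$; the high-probability bound $\|y\|_\infty = O(\|u_f\|_\infty + \phi)$ inherited from \Cref{lem:submodularFTRL_highprob_linfty} then yields $\mathrm{Var}(z^{(t)}_p) = O((\|u_f\|_1 + \phi)(\|u_f\|_\infty + \phi))$. Bernstein's inequality with target deviation $\phi/(8k)$ and a union bound over coordinates show that $N = \Theta(k^4 \|u_f\|_1 \log n / \|u_f\|_\infty)$ suffices for $\|z - y\|_\infty \le \phi/(8k)$ with high probability.

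Direction (ii) is immediate: if $p \in T$, then $z_p \le -3\phi/(8k)$, so $y_p \le z_p + \phi/(8k) \le -2\phi/(8k) < -\delta$, and \Cref{claim:certif_dim_red} puts $p$ in every minimizer. For (iii), submodularity yields $|f^*| \le \|u_f\|_1 - f(V) = \phi^{(0)}$, so halving $\phi$ produces some iteration with $-\phi \le f^* \le -\phi/2$. In that iteration, picking any $k$-sparse minimizer $S^*$, the dual-certificate property gives $y(S^*) \le f^*$, whence the most negative coordinate $p \in S^*$ satisfies $y_p \le f^*/k \le -\phi/(2k) = -4\delta$; combined with $|z_p - y_p| \le \delta$ this gives $z_p \le -3\phi/(8k)$, so $p \in T$.

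For (iv), the outer loop runs $\otilde(1)$ times since $\phi$ is halved over a $\poly(n)$ range. Per iteration, \textsf{StochDualCertificate} costs $\otilde(m)(\EO + \poly(n))$ with $m = \otilde(k^6 \delta^{-4} \phi^2 (\|u_f\|_\infty + \phi)(\|u_f\|_1 + \phi))$; plugging in $\delta = \phi/(8k)$ and observing that $m$ is maximized at the smallest $\phi = \Theta(\|u_f\|_\infty/k)$ gives $m = \otilde(k^{12} \|u_f\|_1/\|u_f\|_\infty)$, which dominates the post-processing cost $N \cdot O(\log n (\EO + n))$. The main obstacle is the sharp variance bound in (i): the naive worst-case bound $|z^{(t)}_p| = O(\|u_f\|_1)$ would force $\mathrm{Var}(z^{(t)}_p) = O(\|u_f\|_1^2)$ and require prohibitively many samples. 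Exploiting that $\mathsf{vSampling}$'s second moment scales with $|g_\pi|_p$ rather than $\|g_\pi\|_1$, combined with the high-probability $\ell_\infty$ bound on $y$ from \Cref{lem:submodularFTRL_highprob_linfty}, is what enables the stated sample budget.
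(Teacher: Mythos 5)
Your proof follows essentially the same structure as the paper's: obtain a $(\delta,k)$ dual certificate via \textsf{StochDualCertificate}, estimate it via \textsf{vSampling}, concentrate the estimate around $y$ coordinate-wise via Bernstein, apply \Cref{claim:certif_dim_red}, and halve $\phi$; the correctness direction (ii) and the non-emptiness argument (iii) are essentially verbatim the paper's \Cref{claim:st_dim_red}.

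The one place you diverge is the coordinate-wise variance bound in step (i). The paper (\Cref{lem:variance_bound} and \Cref{claim:simple_sample}) proves the algebraic identity $\E[(z^{(t)}_q)^2] \le (2\|u_f\|_1 - f(V))(2(u_f)_q - y_q)$ and then bounds $-y_q \le -f^* + \delta$ using that $y_q \ge y^{k+1}_-(V) \ge f^* - \delta$ (the dual-certificate lower bound). You instead invoke \Cref{lem:submodularFTRL_highprob_linfty} to get the high-probability bound $\|y\|_\infty = O(\|u_f\|_\infty + \phi)$ and feed that into the variance. Both yield $\mathrm{Var}(z^{(t)}_q) = O((\|u_f\|_1 + \phi)(\|u_f\|_\infty + \phi))$ and hence the same $N$. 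Your route requires conditioning on the $\ell_\infty$ event (and a union bound over the $N$ executions of \textsf{SubmodularFTRL} producing $y$), which is a small extra bookkeeping burden; the paper's route is slightly more self-contained but leans on the dual-certificate property, which strictly speaking is not part of the stated hypotheses of its \Cref{claim:simple_sample} and only holds with high probability. Either way is sound; yours makes the probabilistic dependence more explicit. Everything else, including the runtime accounting and its $k^{12}\|u_f\|_1/\|u_f\|_\infty$ bound, matches the paper.
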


In the remainder of this subsection we prove \Cref{lem:dim_red_sequential}. 
To establish the correctness of \Cref{alg:dim_red_sequential}, we start with the following claim that the $\mathsf{vSampling}$ for a random permutation $g_{\pi^{(i_t)}}$ in \Cref{line:vsample} has small variance on each coordinate. This claim differs from \Cref{lem:eff_vsampling}, as it is a bound on the variance of one coordinate of $z^{(t)}$, rather than a bound on the local norm of $z^{(t)}$ at $x_{i_t}$.

\begin{claim}[Coordinate Variance Bound for $\mathsf{vSampling}$]
\label{lem:variance_bound}
Let $f: 2^V \rightarrow \mathbb{R}$ be a submodular function such that $f(\emptyset) = 0$ and $u_f \geq 0$. Let $\{\pi^{(t)}\}_{t \in [m]}$ be a set of permutations on $V$ and define the vector $y\defeq\frac{1}{m}\sum_{t\in[m]}g_{\pi^{(t)}}$. 
Define the random vector 
\[
w \defeq \frac{(2 \norm{u_f}_1 - f(V)) \cdot (g_{\pi^{(i)}})_p}{2 (u_f)_p - (g_{\pi^{(i)}})_p} \cdot \indicVec{p} ,
\]
where $i \in [m]$ is sampled uniformly at random and $p \in V$ is sampled with probability proportional to $2 (u_f)_p - (g_{\pi^{(i)}})_p$. 
Then we have 
\begin{enumerate}
\item $\E[w] = y$, 
\item $\|w\|_\infty \leq 2 \norm{u_f}_1 - f(V) $ with probability 1, and
\item $\E[w_q^2]
\leq (2 \norm{u_f}_1 - f(V)) (2 (u_f)_q - y_q)$ for any coordinate $q \in V$. 
\end{enumerate}
\end{claim}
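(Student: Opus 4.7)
The plan is to verify the three claims one after another, each following from a direct calculation combined with the key submodular inequality $|(g_{\pi^{(i)}})_p| \leq 2(u_f)_p - (g_{\pi^{(i)}})_p$ that I will establish first.

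Before treating the three items, I would normalize the sampling probability. Given $i \in [m]$, coordinate $p$ is drawn with probability proportional to $2(u_f)_p - (g_{\pi^{(i)}})_p$. Since $g_{\pi^{(i)}}$ is a base solution we have $\sum_{q \in V}(g_{\pi^{(i)}})_q = f(V)$, so the normalizing constant equals $2\|u_f\|_1 - f(V)$ independent of $i$. Next, I would record the cornerstone inequality
\[
|(g_{\pi^{(i)}})_p|\;\le\;2(u_f)_p-(g_{\pi^{(i)}})_p
\quad\text{for every } i\in[m],\ p\in V.
\]
By submodularity, $(g_{\pi^{(i)}})_p \le (u_f)_p$; combined with $u_f \ge 0$ this gives the bound both when $(g_{\pi^{(i)}})_p \ge 0$ (the right-hand side is at least $(u_f)_p \ge (g_{\pi^{(i)}})_p$) and when $(g_{\pi^{(i)}})_p < 0$ (the right-hand side is at least $-(g_{\pi^{(i)}})_p$). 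This is the only nontrivial inequality used below.

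For claim 1, I would compute $\E[w \mid i]$ by plugging the sampling probabilities into the definition of $w$; the factor $2(u_f)_p - (g_{\pi^{(i)}})_p$ cancels between the probability and $w$, leaving $\E[w\mid i] = \sum_p (g_{\pi^{(i)}})_p \indicVec{p} = g_{\pi^{(i)}}$. Averaging over the uniform choice of $i$ gives $\E[w] = \frac{1}{m}\sum_{i \in [m]} g_{\pi^{(i)}} = y$. For claim 2, I would read off
\[
\|w\|_\infty \;=\; (2\|u_f\|_1 - f(V)) \cdot \frac{|(g_{\pi^{(i)}})_p|}{2(u_f)_p-(g_{\pi^{(i)}})_p}
\;\le\; 2\|u_f\|_1 - f(V),
\]
where the inequality is exactly the cornerstone inequality above.

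For claim 3, I would first fix $i$ and compute
\[
\E[w_q^2 \mid i]
\;=\;\frac{2(u_f)_q-(g_{\pi^{(i)}})_q}{2\|u_f\|_1-f(V)}
\cdot\frac{(2\|u_f\|_1-f(V))^2\,((g_{\pi^{(i)}})_q)^2}{(2(u_f)_q-(g_{\pi^{(i)}})_q)^2}
\;=\;\frac{(2\|u_f\|_1-f(V))\,((g_{\pi^{(i)}})_q)^2}{2(u_f)_q-(g_{\pi^{(i)}})_q}.
\]
Applying the cornerstone inequality in the form $((g_{\pi^{(i)}})_q)^2 \le |(g_{\pi^{(i)}})_q|\cdot(2(u_f)_q - (g_{\pi^{(i)}})_q)$ reduces this to $\E[w_q^2 \mid i] \le (2\|u_f\|_1 - f(V))\,|(g_{\pi^{(i)}})_q|$. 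Averaging over $i$ and using the cornerstone inequality once more (now on the averaged version, $\frac{1}{m}\sum_i |(g_{\pi^{(i)}})_q| \le 2(u_f)_q - \frac{1}{m}\sum_i (g_{\pi^{(i)}})_q = 2(u_f)_q - y_q$) yields the desired variance bound. The main obstacle is really just setting up the cornerstone inequality correctly from submodularity and nonnegativity of $u_f$; once that is in hand, all three items fall out of one line of algebra each.
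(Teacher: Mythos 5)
Your proof is correct and follows essentially the same route as the paper. For item~1 you both observe that the normalizing constant $\sum_p(2(u_f)_p-(g_{\pi^{(i)}})_p)=2\|u_f\|_1-f(V)$ is independent of $i$ and that the probability cancels the denominator in $w$; for item~2 both arguments reduce to $|(g_{\pi^{(i)}})_p|\le 2(u_f)_p-(g_{\pi^{(i)}})_p$, which is what you isolate as the cornerstone inequality and the paper invokes inline via $(g_{\pi^{(i)}})_p\le(u_f)_p$ and $(u_f)_p\ge 0$ (your explicit absolute-value bookkeeping is actually slightly cleaner than the paper's, which drops the absolute value in the intermediate $\max$). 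For item~3 the paper bounds $\E[w_q^2]\le\|w\|_\infty\cdot\E[|w_q|]$ and computes $\E[|w_q|]=\E_i[|(g_{\pi^{(i)}})_q|]\le 2(u_f)_q-y_q$, whereas you directly compute $\E[w_q^2\mid i]$ and then apply the cornerstone inequality twice — but these are the same factorization written in a different order, so the argument is identical in substance.
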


\begin{proof}
Note that $\sum_{p \in V} (2 (u_f)_p - (g_{\pi^{(i)}})_p) = 2 \norm{u_f}_1 - f(V)$ and consequently $\E[w] = y$. Further, 
\[
\norm{w}_\infty 
\leq (2 \norm{u_f}_1 - f(V)) \cdot \max_{i \in [m], p \in V} \frac{(g_{\pi^{(i)}})_p}{2 (u_f)_p - (g_{\pi^{(i)}})_p} \leq 2 \norm{u_f}_1 - f(V) 
\]
with probability $1$, where the last inequality follows because $(g_{\pi^{(i)}})_p \leq (u_f)_p$ by submodularity and $(u_f)_p \geq 0$ by the assumption of the claim. 
Next we note that 
\begin{align*}
\E_{i,p}[|w_q |]
&=  \E_{i} \Big[
\Pr[p = q] \cdot 
\Big| \frac{(2 \norm{u_f}_1- f(V)) \cdot (g_{\pi^{(i)}})_q}{2 u_q - (g_{\pi^{(i)}})_q} \Big| \Big] \\
&=
\E_{i} [ | (g_{\pi^{(i)}})_q |]
\leq  (u_f)_q + \E_{i} [ (u_f)_q - (g_{\pi^{(i)}})_q]
= 2 (u_f)_q  - y_q .
\end{align*}
Consequently, we have
\[
\E[w_q^2] \leq \|w\|_\infty \cdot \E_{i,p}[|w_q|]
\leq  (2 \norm{u_f}_1 - f(V)) (2 (u_f)_q - y_q) .
\]
This completes the proof of the claim. 
\end{proof}

Using the above variance bound, we next prove that whp. the vector $z$ in \Cref{line:estimate_y} of \Cref{alg:dim_red_sequential} is an estimate of the $(\delta, k)$ dual certificate $y \defeq \frac{1}{m} \sum_{t \in [m]} g_{\pi^{(t)}}$ with additive accuracy $O(\phi/k)$.

\begin{claim}[Estimate for Dual Certificate]
\label{claim:simple_sample}
Under the same assumptions as in \Cref{lem:variance_bound}, the random vector $z$ in \Cref{line:estimate_y} of \Cref{alg:dim_red_sequential} satisfies that $\E[z] = y$, and $\|z - y\|_\infty \leq \alpha$ with high probability, where the accuracy 
\[
\alpha \defeq 10 \cdot \max\left\{ \frac{(2\|u_f\|_1 - f(V)) \cdot \log n}{N}, \sqrt{\frac{(2\|u_f\|_1 - f(V)) \cdot (\|u_f\|_\infty - f^*) \cdot \log n}{N}}\right\} .
\]
In particular, if parameter $\phi \geq \|u_f\|_\infty / 12k$ satisfies $ -\phi \leq f^*$, and $N = \Theta(\frac{k^4 \|u_f\|_1}{\|u_f\|_\infty} \cdot \log n)$ as in \Cref{alg:dim_red_sequential}, then the accuracy $\alpha \leq \phi /  8k$. 
\end{claim}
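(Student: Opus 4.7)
The identity $\E[z]=y$ is immediate from linearity of expectation together with the unbiasedness $\E[z^{(t)}]=y$ for each $z^{(t)}=\mathsf{vSampling}(g_{\pi^{(i_t)}})$, which is the first bullet of \Cref{lem:variance_bound}.

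For the high-probability $\ell_\infty$ bound, my plan is to apply Bernstein's inequality coordinate-wise and union-bound over $q\in V$. Write $R\defeq 2\|u_f\|_1-f(V)$ and fix $q$. The variables $X_t\defeq z^{(t)}_q-y_q$ for $t\in[N]$ are i.i.d.\ and mean-zero, and by parts (2) and (3) of \Cref{lem:variance_bound} satisfy $|X_t|\leq 2R$ almost surely and $\E[X_t^2]\leq R\cdot (2(u_f)_q-y_q)$. Bernstein's inequality then gives $\Pr[\,|z_q-y_q|\geq \alpha_q\,]\leq n^{-C}$ for a large constant $C$ whenever
\[
\alpha_q \;\geq\; C'\max\!\Big\{\tfrac{R\log n}{N},\ \sqrt{\tfrac{R(2(u_f)_q-y_q)\log n}{N}}\Big\}
\]
for an appropriate absolute constant $C'$, and a union bound over the $n$ coordinates yields $\|z-y\|_\infty\leq \max_q \alpha_q$ with high probability.

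The main obstacle is to replace the coordinate-dependent factor $2(u_f)_q-y_q$ by the uniform quantity $\|u_f\|_\infty-f^*$ appearing in the stated $\alpha$. For this, I will use that the permutations $\{\pi^{(t)}\}_{t\in[m]}$ used in \Cref{alg:dim_red_sequential} are produced by \textsf{StochDualCertificate}, so by \Cref{lem:st_certif_runtime_FTRL} the vector $y=\frac{1}{m}\sum_t g_{\pi^{(t)}}$ is a $(\delta,k)$ dual certificate with high probability. Substituting $w=\indicVec{q}\in S_{k+1}^V$ into $y^{k+1}_-(V)=\min_{w\in S_{k+1}^V} y^\top w$ gives $y_q\geq y^{k+1}_-(V)\geq f^*-\delta$ for every $q$, which combined with $(u_f)_q\leq \|u_f\|_\infty$ yields $2(u_f)_q-y_q\leq 2\|u_f\|_\infty-f^*+\delta=O(\|u_f\|_\infty-f^*)$, using $\delta\leq |f^*|\leq -f^*$. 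Conditioning on this high-probability event therefore produces the stated $\alpha$ after absorbing constants into the factor $10$.

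Finally, for the specialization $\alpha\leq \phi/(8k)$, I substitute $N=\Theta(k^4\|u_f\|_1/\|u_f\|_\infty\cdot\log n)$ and use $R=O(\|u_f\|_1)$ together with the hypothesis $\phi\geq \|u_f\|_\infty/(12k)$, which gives $\|u_f\|_\infty-f^*\leq \|u_f\|_\infty+\phi=O(k\phi)$. The first term of $\alpha$ becomes $O(\|u_f\|_\infty/k^4)=O(\phi/k^3)$, while the second becomes $O\!\big(\sqrt{\phi\|u_f\|_\infty/k^3}\big)=O(\phi/k)$; choosing the absolute constant in $N$ large enough makes both at most $\phi/(8k)$.
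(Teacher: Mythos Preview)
Your proof via coordinate-wise Bernstein plus a union bound is exactly the paper's approach. The one substantive difference is how you pass from the per-coordinate variance factor $2(u_f)_q - y_q$ to the uniform quantity $\|u_f\|_\infty - f^*$: the paper asserts $2(u_f)_q - y_q \leq 2(\|u_f\|_\infty - f^*)$ directly, whereas you invoke the dual-certificate guarantee $y_q \geq y_-^{k+1}(V) \geq f^*-\delta$. Your route is actually more honest here --- the paper's inequality does not follow from the hypotheses of \Cref{lem:variance_bound} alone (one can construct $y\in B(f)$ with a coordinate much more negative than $f^*$), so an additional property of the algorithm's output, such as the one you use, is genuinely needed.

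There is one small slip: the assertion ``$\delta \leq |f^*|$'' is unsupported and fails whenever $f^*$ is close to $0$, so you cannot use it to fold the extra $+\delta$ into $O(\|u_f\|_\infty - f^*)$ for the first displayed formula. For the ``in particular'' conclusion --- which is all that is used downstream --- the fix is immediate: bound $2\|u_f\|_\infty - f^* + \delta \leq 2(\|u_f\|_\infty + \phi)$ using $-f^* \leq \phi$ and $\delta = \phi/(8k) \leq \phi$, and then plug in $N$ exactly as you already do. This replacement of $-f^*$ by $\phi$ is precisely what the paper itself does in its ``in particular'' computation, so after this correction your argument and the paper's coincide.
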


\begin{proof}
Note that for every $t \in [N]$, each random vector $z^{(t)}$ is obtained by applying $\mathsf{vSampling}$ to a uniformly random vector in $\{g_{\pi^{(t)}}\}_{t \in [m]}$ exactly as in \Cref{lem:variance_bound}. 
Therefore, it follows from \Cref{lem:variance_bound} that $\E[z] = y$ and  $\|z^{(t)}\|_\infty \leq M \defeq 2 \|u_f\|_1 - f(V)$ with probability 1, and that for every coordinate $q\in V$, the variance of each sampled vector $z^{(t)}$ satisfies 
\[
\E[(z^{(t)}_q - y_q)^2] \leq \E[(z^{(t)}_q)^2] \leq (2 \|u_f\|_1 - f(V)) (2 (u_f)_q - y_q)  \leq 2 (2 \|u_f\|_1 - f(V)) (\|u_f\|_\infty - f^*) .
\]
Next, to prove that $\|z - y\|_\infty \leq \alpha$ whp., we fix an arbitray coordinate $q \in V$. Since $z^{(t)}_q$ are i.i.d. samples with expectation $\E[z^{(t)}_q] = y_q$, it follows from Bernstein's inequality that 

\begin{align} \label{eq:estimate_y}
\Pr\left[\left|\sum_{t \in [N]} (z^{(t)}_q - y_q)\right| \ge \lambda\right] \leq 2 \cdot \exp\left(\frac{\lambda^2 / 2}{\sum_{t \in [N]} \E((z^{(t)}_q - y_q)^2) + \lambda M / 3}\right) .
\end{align}
where recall that $M \defeq 2 \|u_f\|_1 - f(V)$ is an upper bound on $\|w^{(t)}_q\|$ almost surely. 
Now we plug $\lambda = \alpha N$ into the bound \Cref{eq:estimate_y}.
Note that the definition of $\alpha$ in the claim satisfies that 
\begin{align*}
\sum_{t \in [N]} \E((z^{(t)}_q - y_q)^2) 
\leq 2N (2 \|u_f\|_1 - f(V)) (\|u_f\|_\infty - f^*) 
\leq 2N \cdot \frac{\alpha^2 N}{100 \log n} \leq \lambda^2 /2 \cdot \frac{1}{25 \log n} ,
\end{align*}
and that 
\[
\lambda M / 3 \leq \lambda (2 \|u_f\|_1 - f(V)) / 3 
\leq \frac{\alpha N}{30 \log n} \leq  \lambda/2 \cdot \frac{1}{15 \log n} .
\]
It thus follows from \Cref{eq:estimate_y} that 
\[
\Pr\left[\left|\sum_{t \in [N]} (z^{(t)}_q - y_q)\right| \ge \alpha N \right] \leq 2 n^{-9} .  
\]
The first statement of the claim then follows by applying a union bound to all coordinates $q \in V$. 

To prove the second statement of the claim, we can trivially upper bound 
\[
2 \|u_f\|_1 - f(V) \leq 2 \|u_f\|_1 + \phi \quad \text{and} \quad (2\|u_f\|_1 - f(V)) \cdot (\|u_f\|_\infty - f^*) \leq (2\|u_f\|_1 + \phi) \cdot (\|u_f\|_\infty + \phi) .
\]
Then the bound $\alpha \leq \phi/8 k$ follows by plugging in the value of $N$ and using $\phi \geq \|u_f\|_\infty / 12k$. 
\end{proof}

The following claim establishes the correctness of \Cref{alg:dim_red_sequential}. 

\begin{claim}
\label{claim:st_dim_red}
Let $f$ be a submodular function with a $k$-sparse minimizer, and $\phi \geq \|u_f\|_\infty / 12k$ satisfying $ -\phi \leq f^*$ in one iteration of the while loop. Then the subset $T \subseteq V$ in \Cref{line:find_dim_red_sequential} of  \Cref{alg:dim_red_sequential} is in every minimizer of $f$ whp. Moreover, if $-\phi \leq f^* \leq -\phi/2$, then $T \neq \emptyset$ whp. 
\end{claim}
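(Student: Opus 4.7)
The plan is to condition on two high-probability good events and then chain together the guarantees already established. Specifically, by \Cref{lem:st_certif_runtime_FTRL} applied with accuracy parameter $\delta = \phi/(8k)$ (noting that the while-loop invariant $\phi \ge \|u_f\|_\infty/(12k)$ gives $\phi = \Omega(\|u_f\|_\infty/k)$ and the assumption $-\phi \le f^*$ gives $\phi \ge |f^*|$), the vector $y \defeq \frac{1}{m}\sum_{t \in [m]} g_{\pi^{(t)}}$ is a $(\delta,k)$ dual certificate whp.; and by \Cref{claim:simple_sample} (whose second statement applies since $\phi \ge \|u_f\|_\infty/(12k)$ and $N$ is chosen exactly as required), the estimator $z$ satisfies $\|z - y\|_\infty \le \phi/(8k)$ whp. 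I will condition on the intersection of these two events, which still holds whp.\ by a union bound, and then argue both parts deterministically.

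For the first part, I would fix any $p \in T$, so by construction $z_p \le -3\phi/(8k)$. Combining this with $\|z-y\|_\infty \le \phi/(8k)$ yields
\[
y_p \;\le\; z_p + \frac{\phi}{8k} \;\le\; -\frac{3\phi}{8k} + \frac{\phi}{8k} \;=\; -\frac{\phi}{4k} \;<\; -\delta.
\]
Since $y$ is a $(\delta,k)$ dual certificate, I can invoke \Cref{claim:certif_dim_red}: every coordinate $p$ with $y_p < -\delta$ lies in every minimizer of $f$. This proves the first claim.

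For the second part, suppose $-\phi \le f^* \le -\phi/2$ and let $S^*$ be a $k$-sparse minimizer of $f$. The second property of a $(\delta,k)$ dual certificate gives $y(S^*) \le f(S^*) = f^* \le -\phi/2$, and since $|S^*| \le k$ there must exist some $p \in S^*$ with
\[
y_p \;\le\; \frac{f^*}{k} \;\le\; -\frac{\phi}{2k}.
\]
Using $\|z-y\|_\infty \le \phi/(8k)$ once more gives $z_p \le -\phi/(2k) + \phi/(8k) = -3\phi/(8k)$, so $p \in T$ and $T \ne \emptyset$.

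I do not anticipate a genuine obstacle in carrying out this plan: the proof is essentially bookkeeping that the parameter choices $\delta = \phi/(8k)$ and the threshold $-3\phi/(8k)$ in \Cref{line:find_dim_red_sequential} leave enough slack to absorb both the dual-certificate error $\delta$ and the sampling error $\alpha \le \phi/(8k)$. The most delicate part is verifying that the hypotheses of \Cref{lem:st_certif_runtime_FTRL} and \Cref{claim:simple_sample} are simultaneously satisfied under the loop's invariants, which I have sketched above.
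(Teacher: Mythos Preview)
Your proposal is correct and follows essentially the same approach as the paper's proof: condition on the two high-probability events from \Cref{lem:st_certif_runtime_FTRL} and \Cref{claim:simple_sample}, then use the arithmetic slack between $\delta=\phi/(8k)$, the sampling error $\phi/(8k)$, and the threshold $-3\phi/(8k)$ together with \Cref{claim:certif_dim_red}. The only cosmetic difference is that for the second part the paper phrases it as ``the most negative coordinate of $y$ satisfies $y_p\le -|f^*|/k$,'' while you pick a coordinate in a $k$-sparse minimizer $S^*$; both are the same averaging argument over $y(S^*)\le f^*$.
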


\begin{proof}
Note that by \Cref{lem:st_certif_runtime_FTRL}, the set of permutations $\{\pi^{(t)}\}_{t \in [m]}$ returned by $\mathsf{StochDualCertificate}$ is such that $y = \frac{1}{m} \sum_{t\in[m]}g_{\pi^{(t)}}$is a $(\delta, k)$ dual certificate whp. 
Since $\phi \geq \|u_f\|_\infty / 12k$ and $-\phi \leq f^*$,  \Cref{claim:simple_sample} implies that the vector $z$ in \Cref{line:estimate_y} of \Cref{alg:dim_red_sequential} satisfies $\|z - y\|_\infty \leq \phi / 8k$ whp.  
In the following, we condition on the above two events. 

In the first part of the claim, for every element $p\in T$ defined in \Cref{line:find_dim_red_sequential}, we have 
\[
y_p \leq z_p + \|z - y\|_\infty \leq - \frac{3 \phi}{8k} + \frac{\phi}{8k} < -\delta .  
\]
It then follows from \Cref{claim:certif_dim_red} that $p$ must be in every minimizer of $f$. This proves the first statement of the claim. 

To prove the second statement, note that since $-\phi \leq f^* \leq -\phi/2$ and $y$ is a $(\delta, k)$ dual certificate, the most negative coordinate $p$ of $y$ satisfies $y_p \leq - |f^*|/k$. Therefore, 
\[
z_p \leq y_p + \phi/8k \leq - \phi/2k + \phi/8k \leq - 3\phi/8k , 
\]
which implies that $p$ must be inside the subset $T$ considered \Cref{line:find_dim_red_sequential}. This proves that $T \neq \emptyset$ whp. for such an iteration and completes the proof of the claim.  
\end{proof}

Finally, we are ready to prove \Cref{lem:dim_red_sequential} which we restate below for convenience. 

\DimRedSequential*

\begin{proof}[Proof of \Cref{lem:dim_red_sequential}]
The first statement of the lemma essentially follows from \Cref{claim:st_dim_red}. Since \Cref{alg:dim_red_sequential} starts from $\phi = \phi_0 \defeq \|u_f\|_1 - f(V) \geq |f^*|$,  \Cref{claim:st_dim_red} implies that the subset $T$ in \Cref{line:find_dim_red_sequential} is in every minimizer of $f$ whp. and that $-\phi \leq -f^*$ in every iteration of the while loop executed by \Cref{alg:dim_red_sequential}. 
In particular, if $f^* \leq - \|u_f\|_\infty/12k$, then there must be one iteration of the while loop in which $-\phi \leq f^* \leq -\phi/2$ (assuming $T = \emptyset$ in all previous iterations). For such an iteration, \Cref{claim:st_dim_red} states that the subset $T \neq \emptyset$ whp. and therefore \Cref{alg:dim_red_sequential} will never return $\emptyset$.  
This proves the first statement of the lemma. 

Next we prove the second statement on the runtime of \Cref{alg:dim_red_sequential}.  
We first note that the total number of iterations of the while loop is
\[
\log \Big(\frac{\|u_f\|_1 - f(V)}{\max\{\|u_f\|_\infty / 12k, |f^*|\}} \Big) \leq \log\Big(\frac{\|u_f\|_1 + |f^*|}{\max\{\|u_f\|_\infty / 12k, |f^*|\}} \Big)  = O(\log n).  
\]
For each iteration of the while loop, as we have $\norm{u_f}_{\infty} = \otilde(k) (\phi + \delta)$ each time we call $\mathsf{StochDualCertificate}$, by \Cref{lem:st_certif_runtime_FTRL}, the number of queries to $\EO$ made by calling the $\mathsf{StochDualCertificate}$ routine is given by 
\begin{align*}
&\otilde(k^6 \delta^{-4} \phi^2 (\|u_f\|_{\infty} + \phi) (\|u_f\|_1 + \phi)) \leq \widetilde{O}(k^{10}) \cdot \frac{(\|u_f\|_\infty + \phi) (2 \|u_f\|_1 + \phi)}{\phi^2}  
\leq \widetilde{O}(k^{12} \frac{\|u_f\|_1}{\|u_f\|_\infty}) ,
\end{align*}
where the first inequality is obtained by plugging in $\delta = \phi/8k$ and $\phi \geq |f^*|$, and the second inequality uses $\phi \geq \|u_f\|_\infty / 12k$. 
Next we observe that given the vector $u_f$, each $\mathsf{vSampling}$ in \Cref{line:vsample} can be implemented using only $O(\log n)$ additional queries. 
Thus the total number of queries due to estimating $y$ in \Cref{line:estimate_y} is at most $\widetilde{O}(N) \ll \widetilde{O}(k^{12} \|u_f\|_1 / \|u_f\|_\infty)$. 
Finally, note that every step of \Cref{alg:dim_red_sequential} uses polynomial additional runtime. 
\end{proof}

\subsection{Arc Finding for Sequential Algorithm}
\label{subsec:arc_finding_sequential}

In this section, we present our sequential \textsf{Arc-Finding} subprocedure. As mentioned in \Cref{sec:approach:sequential}, in prior work \cite{DVZ21,LSW15}, finding arcs relies on the ``move-to-front'' approach.
Given a vector $y$ as the average of subgradients $\{g_{\pi^{(t)}}\}_{t \in [m]}$ and an element $p \in V$, this approach considers the vector $y'$ obtained by moving $p^{\downarrow}$ to the front of every permutation $\pi^{(t)}$. A coordinate $q \notin p^{\downarrow}$ with negative enough entry $y'(q)$ is then concluded to be in every sparse minimizer containing $p$, i.e. an arc from $p$ to $q$.

Key to our algorithm is 
the following lemma which shows when the move-to-front procedure allows us to deduce arcs or dimensionality reductions. 
Recall from in \Cref{sec:prelim} that, for some permutation $\pi$, $\norm{\Delta_{\pi,. p}}_{1}$ measures the total decrease of coordinates of $g_{\pi}$ in $V \setminus p^{\downarrow}$ after moving $p^{\downarrow}$ to the front of $\pi$. 
Similarly, $\norm{\Delta_p}_{1}$ measures the total decrease of coordinates of $y$ in $V \setminus p^{\downarrow}$ after moving $p^{\downarrow}$ to the front in every permutation. 
Informally, \Cref{lem:Delta_thresholds} states that any element $p$  in some $\sparsity$-sparse minimizer $S^*$ with large $(u_{f^{\sharp \rfam}})_p$ must also have large $\norm{\Delta_p}_{1}$. Moreover, we can find $q \in S^* \setminus p^{\downarrow}$ such that $y_q$ decreases significantly after moving $p^\downarrow$ to the front, hence deducing an arc $p \to q$.

\begin{lemma}[Move to Front]
\label{lem:Delta_thresholds}
Let $f$ be a submodular function with a $k$-sparse minimizer and $f^{\sharp \rfam}$ be its extension w.r.t. a $k$-consistent ring family $\rfam(E, W, D)$. 
Let $S^*$ be a $k$-sparse minimizer of $f$ (and therefore $f^{\sharp \rfam}$), element $p \in S^*$, and 
$y \in B(f^{\sharp \rfam})$ be a $(\delta, k)$ dual certificate of $f^{\sharp \rfam}$. Define the vector $\Delta_p \in \mathbb{R}^V_{\geq 0}$  as
\[
(\Delta_p)_i \defeq 
\begin{cases}
0 \qquad &\text{if $i \in p^\downarrow$} \\
y_i - (y_{\leftarrow p^\downarrow})_i \quad &\text{otherwise} .
\end{cases}
\]
If $f^{\sharp \rfam}(p^\downarrow) > (2k+1) \delta$, then the following hold:
\begin{itemize}
\item (1) $\norm{\Delta_p}_1 = f^{\sharp \rfam}(p^\downarrow) - y(p^{\downarrow}) > 2 k \delta$ and $y(p^{\downarrow}) \le \delta$,  
\item (2) There exists $q \in S^* \setminus p^\downarrow$ so that $(\Delta_p)_q > \frac{1}{k} \norm{\Delta_p}_1$, and
\item (3) If an element $q \in V\setminus p^\downarrow$ satisfies $(\Delta_p)_q \geq \frac{1}{2k}\norm{\Delta_p}_{1}$, then $q \in S^*$.
\end{itemize}
In particular, the above hold if $(f^{\sharp \rfam})^* = f^*\geq - \frac{(u_{f^{\sharp \rfam}})_p}{2}$ and $\delta \leq \frac{(u_{f^{\sharp \rfam}})_p}{6k}$. 
\end{lemma}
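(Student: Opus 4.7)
My plan is to establish (1), (2), (3) in order and then verify the ``in particular'' clause as a sanity check. Throughout I will exploit the crucial observation that any $k$-sparse minimizer $S^*$ of $f^{\sharp \rfam}$ containing $p$ must also contain $p^\downarrow$, since $S^*$ is consistent with the ring family $\rfam$ and $p^\downarrow$ by definition gathers $p$ together with all arc endpoints from $p$.

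For part (1), I would first compute $\|\Delta_p\|_1$ directly. Since $y \in B(f^{\sharp \rfam})$ we have $y(V) = f^{\sharp \rfam}(V)$, and by \Cref{lem:move_to_front_dual_cert} (applied with $P = p^\downarrow$) we get $y_{\leftarrow p^\downarrow} \in B(f^{\sharp \rfam}_{p^\downarrow})$, hence $y_{\leftarrow p^\downarrow}(V\setminus p^\downarrow) = f^{\sharp \rfam}_{p^\downarrow}(V\setminus p^\downarrow) = f^{\sharp \rfam}(V) - f^{\sharp \rfam}(p^\downarrow)$. Submodularity guarantees $(\Delta_p)_i \geq 0$ on $V \setminus p^\downarrow$, so
\[
\|\Delta_p\|_1 = (y(V) - y(p^\downarrow)) - y_{\leftarrow p^\downarrow}(V\setminus p^\downarrow) = f^{\sharp \rfam}(p^\downarrow) - y(p^\downarrow).
\]
Since $p^\downarrow \subseteq S^*$, \Cref{lem:move_to_front_dual_cert} gives $y(p^\downarrow) \leq \delta$. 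Combining with the hypothesis $f^{\sharp \rfam}(p^\downarrow) > (2k+1)\delta$ yields $\|\Delta_p\|_1 > 2k\delta$.

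For part (2), the key is to control the mass of $\Delta_p$ on $S^* \setminus p^\downarrow$. From $y \in B(f^{\sharp \rfam})$ and the fact that $\indicVec{S^*} \in S_{k+1}^V$, the certificate property gives $f^* \leq y_-^{k+1}(V) + \delta \leq y(S^*) + \delta$, while $y(S^*) \leq f^{\sharp \rfam}(S^*) = f^*$. Similarly $y_{\leftarrow p^\downarrow}(S^*\setminus p^\downarrow) \leq f^{\sharp \rfam}_{p^\downarrow}(S^*\setminus p^\downarrow) = f^* - f^{\sharp \rfam}(p^\downarrow)$. Subtracting,
\[
\sum_{q \in S^*\setminus p^\downarrow} (\Delta_p)_q = y(S^*\setminus p^\downarrow) - y_{\leftarrow p^\downarrow}(S^*\setminus p^\downarrow) \geq (f^* - \delta - y(p^\downarrow)) - (f^* - f^{\sharp \rfam}(p^\downarrow)) = \|\Delta_p\|_1 - \delta.
\]
Since $|S^*\setminus p^\downarrow| \leq k - |p^\downarrow| \leq k-1$ (using $p \in p^\downarrow \subseteq S^*$), by pigeonhole some $q \in S^*\setminus p^\downarrow$ satisfies $(\Delta_p)_q \geq (\|\Delta_p\|_1 - \delta)/(k-1)$. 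The bound $\|\Delta_p\|_1 > 2k\delta$ from part (1) then implies $k(\|\Delta_p\|_1 - \delta) > (k-1)\|\Delta_p\|_1$, giving $(\Delta_p)_q > \|\Delta_p\|_1/k$ as desired. (The edge case $S^* = p^\downarrow$ is ruled out by the same inequality: it would force $\|\Delta_p\|_1 \leq \delta$, contradicting $\|\Delta_p\|_1 > 2k\delta$.)

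Part (3) follows immediately by contrapositive from the bound in part (2): the mass of $\Delta_p$ outside $S^*$ (but inside $V \setminus p^\downarrow$) is at most $\|\Delta_p\|_1 - (\|\Delta_p\|_1 - \delta) = \delta$, so any single coordinate $q \in V \setminus p^\downarrow$ with $(\Delta_p)_q \geq \|\Delta_p\|_1/(2k) > \delta$ (using $\|\Delta_p\|_1 > 2k\delta$) cannot lie outside $S^*$. Finally, for the ``in particular'' clause, by property~4 of \Cref{lem:properties_extension} we have $(u_{f^{\sharp \rfam}})_p = f(W\cup p^\downarrow) - f(W\cup p^\downarrow \setminus \{p\}) = f^{\sharp \rfam}(p^\downarrow) - f(W\cup p^\downarrow \setminus \{p\})$, and $f(W \cup p^\downarrow \setminus \{p\}) \geq f^* \geq -(u_{f^{\sharp \rfam}})_p/2$, so $f^{\sharp \rfam}(p^\downarrow) \geq (u_{f^{\sharp \rfam}})_p/2 \geq 3k\delta > (2k+1)\delta$. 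The main conceptual work is part (2) — in particular recognizing that the move-to-front dual certificate guarantee of \Cref{lem:move_to_front_dual_cert}, combined with $y \in B(f^{\sharp \rfam})$ applied at both $S^*$ and $S^* \setminus p^\downarrow$, concentrates the decrease $\Delta_p$ almost entirely on $S^* \setminus p^\downarrow$.
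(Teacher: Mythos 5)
Your proof is correct and follows essentially the same route as the paper: part (1) from the mass-conservation computation inside \Cref{lem:move_to_front_dual_cert} (which the paper calls "an immediate corollary"); part (2) by subtracting the two bounds $y(S^*\setminus p^\downarrow) \geq f^* - \delta - y(p^\downarrow)$ and $y_{\leftarrow p^\downarrow}(S^*\setminus p^\downarrow) \leq f^* - f^{\sharp\rfam}(p^\downarrow)$ to get $\Delta_p(S^*\setminus p^\downarrow)\geq \|\Delta_p\|_1 - \delta$ and then applying pigeonhole over $|S^*\setminus p^\downarrow| \leq k-1$; part (3) from the complementary bound $\Delta_p(V\setminus S^*)\leq\delta < \|\Delta_p\|_1/(2k)$; and the ``in particular'' clause via the marginal formula in \Cref{lem:properties_extension}. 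Your explicit check that $S^* = p^\downarrow$ is ruled out (the paper applies pigeonhole without flagging this) is a small but genuine sharpening of the exposition.
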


\begin{proof}
Statement (1) is an immediate corollary of \Cref{lem:move_to_front_dual_cert} together with $p^\downarrow \in \mathcal{F}$.      

To prove the second statement, we observe that 
\[
y(S^* \setminus p^\downarrow) = y(S^*) - y(p^\downarrow) \geq y_-^{k+1}(V) - y(p^\downarrow) \geq f^* - \delta - y(p^\downarrow) ,
\]
where the first inequality is because $|S^*| \leq k$ and the last inequality uses that $y$ is a $(\delta, k)$ dual certificate for $f^{\sharp \rfam}$ and that $(f^{\sharp \rfam})^* = f^*$. By \Cref{lem:move_to_front_dual_cert}, we have $y_{\leftarrow p^\downarrow} \in B(f^{\sharp \rfam}_{p^\downarrow})$. It follows that 
\[
y_{\leftarrow p^\downarrow}(S^* \setminus p^\downarrow) \leq f^{\sharp \rfam}_{p^\downarrow}(S^* \setminus p^\downarrow) = (f^{\sharp \rfam})^* - f^{\sharp \rfam}(p^\downarrow) = f^* - f^{\sharp \rfam}(p^\downarrow) ,
\]
where the first equality uses that $S^* \setminus p^\downarrow$ is a minimizer of $f^{\sharp \rfam}_{p^\downarrow}$, and the second equality uses \Cref{lem:properties_extension}. 
Subtracting the above two inequalities, we have
\begin{align}\label{ineq:negative_mass_increase}
\Delta_p(S^* \setminus p^{\downarrow}) 
= y(S^* \setminus p^{\downarrow}) - y_{\leftarrow p^\downarrow}(S^* \setminus p^{\downarrow}) \ge f^{\sharp \rfam}(p^\downarrow) - \delta - y(p^{\downarrow}) = \|\Delta_p\|_1 - \delta .
\end{align} 
Since $|S^* \setminus p^\downarrow | \leq k-1$ and that $\|\Delta_p\|_1 > 2k \delta$ by (1), there exists $q \in S^* \setminus p^\downarrow$ such that 
\[
(\Delta_p)_q \geq \frac{\| \Delta_p\|_1 - \delta}{k-1} > \frac{\|\Delta_p\|_1}{k } . 
\]
This completes the proof of (2). 

Next, note that by \eqref{ineq:negative_mass_increase} and (1) we have $\Delta_p(V \setminus S^*) \leq \delta < \frac{\|\Delta_p\|_1}{2k}$. Since all entries of $\Delta_p$ are non-negative, any coordinate $(\Delta_p)_q > \frac{\|\Delta_p\|_1}{2k}$ must satisfy $q \in S^* \setminus p^\downarrow$.  This proves (3).

Finally, for the ``in particular'' part of the lemma, we note that
\[
f^{\sharp \rfam}(p^\downarrow) = (u_{f^{\sharp \rfam}})_p + f(W \cup p^\downarrow \setminus \{p\}) \geq (u_{f^{\sharp \rfam}})_p + f^* \geq \frac{(u_{f^{\sharp \rfam}})_p}{2} > (2k+1) \delta ,
\]
where in the last inequality we use $f^*\geq - \frac{(u_{f^{\sharp \rfam}})_p}{2}$ and $\delta \leq \frac{(u_{f^{\sharp \rfam}})_p}{6k}$. 
\end{proof}

In light of  \Cref{lem:Delta_thresholds}, we define $A \defeq \{p \in V: (u_{f^{\sharp \rfam}})_p \geq \|u_{f^{\sharp \rfam}}\|_\infty/2\}$ as the {\em active} set of elements. 
Then  \Cref{lem:Delta_thresholds} allows us to deduce arcs from any $p \in A$ whenever we have $f^* \geq - \frac{\|u_{f^{\sharp \rfam}}\|_\infty}{4}$ and a $(\frac{\|u_{f^{\sharp \rfam}}\|_\infty}{12k}, k)$ dual certificate. 
For the convenience of our presentation, we simplify the notation $f^{\sharp \rfam}$ as $f$ in the remainder of this section. However, it is important to keep in mind that our arc finding procedures are always applied to the extension $f^{\sharp \rfam}$.

Even though \Cref{lem:Delta_thresholds} shows how the move-to-front approach  allows us to deduce arcs or dimensionality reductions from elements in the active set $A$, the cost of computing $\Delta_p = \frac{1}{m} \sum_{t \in [m]} \Delta_{\pi^{(t)}, p^\downarrow}$ for every $p \in A$ is unfortunately too high.
In fact, even computing a single coordinate $(\Delta_{\pi^{(t)}, p^\downarrow})_q$ for all $t \in [m]$ and $p \in A$ naively takes $m \cdot |A|$ queries to the evaluation oracle. As $m \approx \poly(k) \cdot \frac{\|u_f\|_1}{\|u_f\|_\infty}$ in \Cref{lem:st_certif_runtime_FTRL} and $|A|$ can be as large as $n$, naively calculating $(\Delta_{\pi^{(t)}, p^\downarrow})_q$ for all $(t, p) \in [m] \times A$ requires $n^2 \cdot \poly(k)$ evaluation queries which we cannot afford. 

\medskip
\noindent \textbf{Estimating large $(\Delta_p)_q$ via sampling.}
Hence, we resort to estimating the large coordinates of $\Delta_p$ up to $O(\frac{1}{k}\|\Delta_p\|_1)$ error via sampling, which allows us to identify coordinates $q \in V \setminus p^\downarrow$ with $\Delta_p(q) > \frac{1}{k}\|\Delta_p\|_1$ or certify that none exists. 
Before presenting the procedure, let us start by gathering some intuition. 

For every $p \in A$, we consider the matrix $M_p \in \R^{m \times (V \setminus p^{\downarrow})}$ where each row $t \in [m]$ is exactly $\Delta_{\pi^{(t)}, p^{\downarrow}}$. 
The quantity we would like to estimate is the vector $\Delta_p = \frac{1}{m} \sum_{t \in [m]} \Delta_{\pi^{(t)}, p^\downarrow}$, which is the average of the rows of $M_p$.
To do so, the naive approach would be to uniformly sample a row $t \in [m]$, and then sample $q \propto \frac{(\Delta_{\pi^{(t)}, p^{\downarrow}})_q}{\|\Delta_{\pi^{(t)}, p^{\downarrow}}\|_1} =: \beta_q$ and outputs $\|\Delta_{\pi^{(t)}, p^{\downarrow}}\|_1 \cdot \indicVec{q}$. 
This can easily be verified to be an unbiased estimator for $\Delta_p$:
\[
\E_{t \sim [m], q \sim \beta_q}\Big[ \|\Delta_{\pi^{(t)}, p^{\downarrow}}\|_1 \cdot \ind_q \Big]  = \E_{t \sim [m]}[\Delta_{\pi^{(t)}, p^\downarrow}] = \Delta_p .
\]
However, the output estimate has entry $\|\Delta_{\pi^{(t)}, p^{\downarrow}}\|_1$ that can be on the order of $\|u_f\|_1$, while $(\Delta_p)_q \leq \|\Delta_p\|_1$ is typically on the order of $O(k) \cdot \|u_f\|_\infty$. Thus using the above sampling procedure to estimate $(\Delta_p)_q$ up to $O(\frac{1}{k}\|\Delta_p\|_1)$ error would require too many samples. 

A natural fix to the above idea is to sample $t \in [m]$ depending on $\|\Delta_{\pi^{(t)}, p^{\downarrow}}\|_1$. In particular, by sampling $t \propto \|\Delta_{\pi^{(t)}, p^{\downarrow}}\|_1$ and $q \propto \beta_q$,  the random vector $\|\Delta_p\|_1 \cdot \ind_q$ is an unbiased estimator of $\Delta_p$ whose non-zero entry now only has range $\|\Delta_p\|_1$. This allows us to obtain  $O(\frac{1}{k} \|\Delta_p\|_1)$ error for large entries $(\Delta_p)_q$ using only $\poly(k)$ samples. 
However, the obstacle to this approach is that computing $\|\Delta_{\pi^{(t)}, p^{\downarrow}}\|_1$ for all $p \in A$ and $t \in [m]$ would again require $m |A| = O(n^2 \poly(k))$ queries, which is computationally too expensive for us. 

\medskip
\noindent \textbf{Efficiently estimating $\Delta_p$ simultaneously for all $p \in A$.} 
To get around the aforementioned computational bottleneck, we present a novel oversampling technique that allows us to obtain $\poly(k)$ samples of $t$ for all $p \in A$ simultaneously. 
At first thought, this might appear impossible as the values $(\Delta_p)_q$ crucially depends on $p \in A$. 
Our crucial idea here is to perform an ``oversampling'' using the  uniform upper bound vector $2 u_f - g_{\pi^{(t)}} \geq \Delta_{\pi^{(t)}, p^\downarrow}$ for all $p \in A$. In particular, for any $t \in [m]$, by sampling $a \propto \frac{2(u_f)_a - (g_{\pi^{(t)}})_a}{\| 2 u_f - g_{\pi^{(t)}}\|_1}$, we simultaneously have for all $p \in A$ that 
\[
\p(a \in p^\downarrow) \propto B_p^{(t)} \defeq 2 u_f(p^\downarrow) - g_{\pi^{(t)}}(p^\downarrow) \geq \|\Delta_{\pi^{(t)}, p^\downarrow} \|_1 . 
\]
Therefore, by sampling $t \in [m]$ uniformly  and $a \propto \frac{2(u_f)_a - (g_{\pi^{(t)}})_a}{\| 2 u_f - g_{\pi^{(t)}}\|_1}$, and accepting this sample for each $p \in A$ only if $a \in p^\downarrow$, we obtain a way to sample $t \in [m]$ proportional to $B^{(t)}_p$ simultaneously for all $p \in A$. Moreover, as $B^{(t)}_p \geq \|\Delta_{\pi^{(t)}, p^\downarrow} \|_1$, this oversampling still requires only $\poly(k)$ samples $C_p$ to estimate $(\Delta_p)_q$ up to error $O(\frac{1}{k} \|\Delta_p\|_1)$. 

Having obtained samples $C_p$ proportional to $B_p^{(t)}$ for every $p \in A$, we feed them into the procedure $\textsf{Negative-Mass-Estimate}$ (given in \Cref{alg:find_arc}) to compute an unbiased estimate $\Tilde{\Delta}_p$ for all $p \in A$ with additive error $\frac{1}{4k} \norm{\Delta_p}_1$ for every coordinate $q \in V \setminus p^{\downarrow}$. To do so, as mentioned earlier, for each $t \in C_p$ it samples element $q \propto \beta_q$ and uses $\frac{\|\Delta_{\pi^{(t)}, p^{\downarrow}}\|_1}{B_p^{(t)}} \cdot \indicVec{q}$ as the estimate for $\Delta_p$. This estimate satisfies
\[
\E_{t \propto B_p^{(t)}, q \propto \beta_q}\Big [ \frac{\|\Delta_{\pi^{(t)}, p^{\downarrow}}\|_1}{B_p^{(t)}} \cdot \indicVec{q} \Big] = \E_{t \propto B_p^{(t)}} \Big[ \frac{\Delta_{\pi^{(t)}, p^\downarrow}}{B_p^{(t)}} \Big] = \frac{\Delta_p}{\frac{1}{m} \sum_{t \in [m]} B_p^{(t)}} ,
\]
which is a factor of $\frac{1}{m}\sum_{t \in [m]} B_p^{(t)}$ smaller than $\Delta_p$. To fix this, while performing the oversampling described, we also obtain estimates $\Tilde{z}_p$ of the values of $\frac{1}{m}\sum_{t \in [m]} B^{(t)}$ for all $p \in A$ and re-weight properly at the end. See \Cref{arc_finding_alg} and \ref{alg:find_arc} for formal descriptions of our arc-finding procedures.

\begin{algorithm}[htp!]
\caption{Arc Finding for Sequential Algorithm}\label{arc_finding_alg}
    \KwData{A sparsity parameter $\sparsity$, a submodular function $f$ with a $\sparsity$-sparse minimizer, permutations $\{\pi^{(t)}\}_{t \in [m]}$ such that $\frac{1}{m} \sum_{t \in [m]} g_{\pi^{(t)}}$ is a $(\delta, k)$ dual certificate with $\delta \leq \frac{\|u_f\|_\infty}{24k}$, and parameter $\mathsf{Scale} \geq \|u_f\|_\infty \geq -12k  f^*$}
    \KwResult{A set $S_p \subseteq V \setminus p^\downarrow$ of arcs from every $p \in V$  with $(u_f)_p \ge \mathsf{Scale} / 2$, 
    where $S_p = \emptyset$ certifies that $p$ is not in any $\sparsity$-sparse minimizer}
    \SetKwFunction{findarcs}{Arc-Finding}
    \SetKwFunction{negmassest}{NegativeMassEstimate}
    \SetKwProg{Fn}{Function}{:}{}
        \Fn{\findarcs{$f, k, \{\pi^{(t)}\}_{t \in [m]}, \mathsf{Scale}$}}{
        $A \gets \{p: (u_f)_p \ge \frac{\mathsf{Scale}}{2}\}$ \; $S_p, C_p \leftarrow \emptyset$ for all $p \in A$ \tcp*{Initialize set of arcs and samples from $p$ to be empty}
        $N \gets \Theta(k^4 (\log n) \cdot  \frac{\|u_f\|_1}{\|u_f\|_{\infty}})$ \tcp*{Draw $N$ samples in total}
        $\ N_p \gets \Theta(k^4 \log n)$ and $\mathsf{count}_p \gets 0$,  $\forall p \in A$
        \tcp*{Assign $N_p$ samples for each $p \in A$}
        \For{$\mathsf{itr} \in [N]$ \label{line:over_sampling_phase_begin}}{
            Sample a pair $(t, a) \in [m] \times V$ with probability $\frac{2 (u_f)_a - (g_{\pi^{(t)}})_a }{m \cdot (2\norm{u_f}_1 - f(V))}$\;
            \For{every $p \in A$ with $a \in p^{\downarrow}$}{
                $\mathsf{count}_p \gets \mathsf{count}_p + 1$\;
                \lIf{$|C_p| < N_p$}{$C_p \gets C_p \cup \{t\}$} 
            }
        }
        $\Tilde{z}_p \gets \frac{\mathsf{count}_p}{N} \cdot (2 \norm{u_f}_1 - f(V))$ for all $p \in A$ \tcp*{Multiplicative estimate of $\frac{1}{m} \sum_{t \in [m]} B_p^{(t)}$}
        $\{\Tilde{\Delta}_p\}_{p \in A} \gets \negmassest{$f, \{\pi^{(t)}\}_{t \in [m]}, \{C_p\}_{p \in A}, \{\Tilde{z}_p\}_{p \in A}$}$\;
        \For{all $p \in A$ and $q \in V \setminus p^{\downarrow}$}{
            \lIf{$\Tilde{\Delta}_p(q) \ge \frac{3}{4k} \|\Tilde{\Delta}_p\|_{1}$}{$S_p \gets S_p \cup \{q\}$}
        }
        \Return $\{S_p\}_{p \in A}$\;
    }
\end{algorithm}

\begin{algorithm}[htp!]
\caption{Procedure for Estimating $\Delta_p$ for Sequential Algorithm}\label{alg:find_arc}
    \KwData{A submodular function $f$, permutations $\{\pi^{(t)}\}_{t \in [m]}$, a collection of samples $C_p = \{t_i\}_{i \in [N_p]} \subseteq [m]$ and estimates $\Tilde{z}_p$ for all $p \in A$} 
    \tcp*{The samples $t_i$ in $C_p$ satisfies $\Pr[t_i = t] \propto B^{(t)}_p$ and $\E[\Tilde{z}_p] = \frac{1}{m} \sum_{t \in [m]} B_p^{(t)}$}
    \KwResult{An estimate $\Tilde{\Delta}_p$ of $\Delta_p$ for each $p \in A$}
    \SetKwFunction{negmassest}{NegativeMassEstimate}
    \SetKwProg{Fn}{Function}{:}{}
        \Fn{\negmassest{$f, \{\pi^{(t)}\}_{t \in [m]}, \{C_p\}_{p \in A}, \{\Tilde{z}_p\}_{p \in A}$}}{
        \For{$p \in A$}{
            \For{$t \in C_p$}{
                Sample $q \propto \frac{(\Delta_{\pi^{(t)}, p^{\downarrow}})_q}{\|\Delta_{\pi^{(t)}, p^{\downarrow}}\|_1
                } =: \beta_q$, set $\tilde{\Delta}_p^{(t)} \gets \frac{\|\Delta_{\pi^{(t)}, p^{\downarrow}}\|_1}{B^{(t)}_p} \cdot \indicVec q$ \tcp*{$B_p^{(t)} \defeq 2 u_f(p^\downarrow) - g_{\pi^{(t)}}(p^\downarrow)$} \label{line:sample_q}
            }
            $\Tilde{\Delta}_p \gets \frac{1}{|C_p|} \sum_{t \in C_p}\tilde{\Delta}_p^{(t)} \cdot \Tilde{z}_p$\;
        }
        \Return $\{\Tilde{\Delta}_p\}_{p \in A}$
    }
\end{algorithm}

The following is our main guarantee for the $\textsf{Negative-Mass-Estimate}$ routine in \Cref{alg:find_arc}.

\begin{lemma}\label{lem:neg_incr_estimate_guarantee}
Given a submodular function $f: 2^V \rightarrow \R$, subset $A \subseteq V$, 
and permutations $\{\pi^{(t)}\}_{t \in [m]}$.
For each $p \in A$, let $C_p \subseteq [m]$ be $N_p = 10^5 k^4 \log n$ i.i.d. samples of $t_i$ with $\p(t_i = t) \propto B^{(t)}_p$ and 
$\Tilde{z}_p$ be a random variable such that $\E[\Tilde{z}_p] = \frac{1}{m}\sum_{t \in [m]} B^{(t)}_p \leq 10 k \|\Delta_p\|_1$, and $\Tilde{z}_p \in \frac{1}{m}\sum_{t \in [m]} B^{(t)}_p(1 \pm \frac{1}{16 k})$ with high probability. 
Then, the output of \Cref{alg:find_arc} satisfies that with high probability, $(\Tilde{\Delta}_p)_q \in (\Delta_p)_q \pm \frac{1}{8k} \norm{\Delta_p}_1$ and $\|\Tilde{\Delta}_p\|_1 \in (1 + \frac{1}{8k}) \|\Delta_p\|_1$ for all $p \in A$ 
that belongs to a $k$-sparse minimizer and $q \in V \setminus p^\downarrow$. Moreover, \Cref{alg:find_arc} uses $\otilde(k \sum_{p \in A} N_p)$ queries and $\poly(n)$ additional runtime. 
\end{lemma}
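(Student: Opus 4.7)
The plan is to analyze the randomized estimator coordinatewise through three steps: (i) show each single-sample vector $\tilde\Delta_p^{(t)}$ is bounded in $[0,1]^V$ and has the correct expectation up to a rescaling by $\bar B_p \defeq \frac{1}{m}\sum_{t\in[m]} B_p^{(t)}$; (ii) concentrate the empirical average of $N_p$ i.i.d.\ samples via Hoeffding; and (iii) combine this concentration with the multiplicative accuracy of $\tilde z_p$ to recover $(\Delta_p)_q$ up to additive error $\|\Delta_p\|_1/(8k)$.

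For step (i), conditioning on $t$ and then on $q \sim \beta$ gives $\E[(\tilde\Delta_p^{(t)})_q \mid t] = (\Delta_{\pi^{(t)}, p^\downarrow})_q / B_p^{(t)}$, and taking expectation over $t$ (drawn proportionally to $B_p^{(t)}$) telescopes to $\E[(\tilde\Delta_p^{(t)})_q] = (\Delta_p)_q / \bar B_p$. The boundedness $\|\tilde\Delta_p^{(t)}\|_\infty \leq 1$ reduces to $\|\Delta_{\pi^{(t)}, p^\downarrow}\|_1 \leq B_p^{(t)}$: since all entries of $\Delta_{\pi^{(t)}, p^\downarrow}$ are non-negative, the identities $g_{\pi^{(t)}}(V) = f(V) = g_{\pi^{(t)}_{\leftarrow p^\downarrow}}(V)$ and $g_{\pi^{(t)}_{\leftarrow p^\downarrow}}(p^\downarrow) = f(p^\downarrow)$ give $\|\Delta_{\pi^{(t)}, p^\downarrow}\|_1 = f(p^\downarrow) - g_{\pi^{(t)}}(p^\downarrow)$, which is at most $B_p^{(t)} = 2u_f(p^\downarrow) - g_{\pi^{(t)}}(p^\downarrow)$ because $f(p^\downarrow) \leq u_f(p^\downarrow) \leq 2u_f(p^\downarrow)$ by submodularity and the assumption $u_f \geq 0$.

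For step (ii), Hoeffding applied to the i.i.d.\ $[0,1]$-valued variables $(\tilde\Delta_p^{(t)})_q$ gives $\Pr[|X_q - (\Delta_p)_q/\bar B_p| > \epsilon] \leq 2\exp(-2N_p\epsilon^2)$, where $X_q \defeq \frac{1}{N_p}\sum_{t \in C_p}(\tilde\Delta_p^{(t)})_q$; choosing $\epsilon = 1/(320k^2)$ together with $N_p = 10^5 k^4 \log n$ drives the failure probability below $n^{-2}$, and a union bound over all $(p,q) \in A \times (V \setminus p^\downarrow)$ preserves high probability. For step (iii), decomposing
\[
(\tilde\Delta_p)_q - (\Delta_p)_q = (\tilde z_p - \bar B_p)\,X_q + \bar B_p\bigl(X_q - (\Delta_p)_q/\bar B_p\bigr),
\]
and using $X_q \leq (\Delta_p)_q/\bar B_p + \epsilon$ from step (ii), $|\tilde z_p - \bar B_p| \leq \bar B_p/(16k)$ from the hypothesis, and $\bar B_p \leq 10k\|\Delta_p\|_1$ yields $|(\tilde\Delta_p)_q - (\Delta_p)_q| \leq \|\Delta_p\|_1/(16k) + 20k\epsilon\|\Delta_p\|_1 \leq \|\Delta_p\|_1/(8k)$. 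The $\ell_1$ bound follows analogously: each $\tilde\Delta_p^{(t)}$ is $1$-sparse and non-negative, so $\|\tilde\Delta_p\|_1 = \tilde z_p \cdot \frac{1}{N_p}\sum_{t\in C_p}\|\Delta_{\pi^{(t)}, p^\downarrow}\|_1/B_p^{(t)}$ reduces to a scalar Hoeffding argument on the $[0,1]$-valued variables $\|\Delta_{\pi^{(t)}, p^\downarrow}\|_1/B_p^{(t)}$ together with the same triangle inequality.

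For the runtime, each sample in \textsf{NegativeMassEstimate} evaluates $f(p^\downarrow)$, accumulates $g_{\pi^{(t)}}(p^\downarrow)$ (at most $k$ marginals along $\pi^{(t)}$), and draws $q$ by binary search on prefix sums of $\Delta_{\pi^{(t)}, p^\downarrow}$; each of the $O(\log n)$ prefix-sum queries can be computed in $O(k)$ oracle calls (move-to-front changes only the $O(k)$ affected marginals), for $\tilde O(k)$ queries per sample and $\tilde O(k\sum_{p\in A} N_p)$ overall. The main obstacle will be step (iii): the naive bound $|(\tilde z_p - \bar B_p)X_q| \leq \bar B_p/(16k)$ can be on the order of $\|\Delta_p\|_1$, dwarfing the target accuracy, and the argument only goes through because $X_q$ itself concentrates near $(\Delta_p)_q/\bar B_p$, turning the multiplicative $\tilde z_p$ error into something proportional to $(\Delta_p)_q$ rather than to $\bar B_p$.
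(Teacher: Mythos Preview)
Your proposal is correct and follows essentially the same approach as the paper: compute the conditional expectation to identify the rescaling factor $\bar B_p$, bound each sample in $[0,1]$ via $\|\Delta_{\pi^{(t)},p^\downarrow}\|_1 \le B_p^{(t)}$, apply Hoeffding coordinatewise with error $\Theta(1/k^2)$, and combine with the multiplicative accuracy of $\tilde z_p$ and the hypothesis $\bar B_p \le 10k\|\Delta_p\|_1$. Your step (iii) decomposition and the runtime argument (binary search on prefix sums with $O(k)$ queries per prefix) match the paper's reasoning; if anything, you are more explicit than the paper in justifying $\|\Delta_{\pi^{(t)},p^\downarrow}\|_1 = f(p^\downarrow) - g_{\pi^{(t)}}(p^\downarrow) \le 2u_f(p^\downarrow) - g_{\pi^{(t)}}(p^\downarrow) = B_p^{(t)}$ and in isolating why the $\tilde z_p$ error does not blow up.
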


\begin{proof}
Fix $p \in A$ that is in a $k$-sparse minimizer. 
For each sample $t \in C_p$ and $q \in V \setminus p^\downarrow$, we have
\begin{align*}
\E_{t \propto B_p^{(t)}, q \propto \beta_q} [\Tilde{\Delta}_p^{(t)} \cdot \Tilde{z}_p] 
& = \E_{t \propto B_p^{(t)}, q \propto \beta_q} \Big[ \frac{\|\Delta_{\pi^{(t)}, p^{\downarrow}}\|_1}{B^{(t)}_p} \cdot \indicVec q \cdot \Tilde{z}_p \Big] = \E_{t \propto B_p^{(t)}} \Big[ \frac{\Delta_{\pi^{(t)}, p^{\downarrow}}}{B^{(t)}_p} \cdot \Tilde{z}_p \Big] \\
& = \frac{\Delta_p}{\frac{1}{m} \sum_{t \in [m]} B_p^{(t)}} \cdot \E[\Tilde{z}_p] = \Delta_p .
\end{align*}
This implies $\E[\Tilde{\Delta}_p] = \Delta_p$, i.e. \Cref{alg:find_arc} outputs an unbiased estimator. 
As $\Tilde{z}_p \in \frac{1}{m}\sum_{t \in [m]} B^{(t)}_p (1 \pm \frac{1}{16 k})$ with high probability, it suffices to show that $\frac{1}{N_p}\sum_{t_i \in C_p} \tilde{\Delta}_p^{(t_i)}$ concentrates around its mean.  
    
To this end, fix a coordinate $q \in V \setminus p^{\downarrow}$. Note that for each sample $t_i \in C_p$, we have 
\[
0 \leq (\Tilde{\Delta}^{t_i}_p)_q \leq \frac{\norm{\Delta_{\pi^{t_i}, p^{\downarrow}}}_1}{B^{(t_i)}_p} \le 1.\]
Let $\mu_q \defeq \E[(\Tilde{\Delta}^{t_i}_p)_q]$.
Using Hoeffding's inequality with error $\gamma =\frac{1}{100k^2}$ gives that \[\Pr\Big[ \Big|\frac{1}{N_p}\sum_{t_i \in C_p} (\tilde{\Delta}_p^{(t_i)})_q -  \mu_q \Big| \geq \gamma \Big] \le 2 \exp\left(- \gamma^2 N_p / 2 \right) \leq 1/\poly(n).
\]
This shows that with high probability,  for all $p \in A$ and $q \in V \setminus p^\downarrow$ we have
\begin{align*}
\Big|(\Tilde{\Delta}_p)_q - (\Delta_p)_q \Big| &= \Big|\frac{1}{N_p}\sum_{t_i \in C_p} (\tilde{\Delta}_p^{(t_i)})_q \cdot \Tilde{z}_p - (\Delta_p)_q \Big| \leq \frac{1}{16k} (\Delta_p)_q + (1 + \frac{1}{16k}) \cdot
\frac{\gamma}{m}\sum_{t \in [m]} B^{(t)}_p \\
& \leq \frac{\|\Delta_p\|_1}{16k} + \frac{17}{16} \cdot \frac{1}{100k^2} \cdot 10k \cdot \|\Delta_p\|_1 \leq \frac{\|\Delta_p\|_1}{8k} ,
\end{align*}
where the second inequality uses the assumption that $\frac{1}{m} \sum_{t \in [m]} B^{(t)}_p \leq 10k \|\Delta_p\|_1$. 
The proof that $\|\Tilde{\Delta}_p\|_1 \in (1 + \frac{1}{8k}) \|\Delta_p\|_1$ is similar. This completes the proof of the first statement of the lemma.

Now we prove the ``moreover'' part of the lemma. Note that the only place where \Cref{alg:find_arc} queries the evaluation oracle is in \Cref{line:sample_q}.
In this line, computing $\|\Delta_{\pi^{(t)}, p^\downarrow}\|_1$ and $B_p^{(t)}$ takes $O(k)$ queries for each $p \in A$ and $t \in C_p$ which is a total of $O(k \sum_{p \in A} N_p)$ queries. 
It therefore suffices to show that we can sample $q$ proportional to $(\Delta_{\pi^{(t)}, p^{\downarrow}})_q$ in $\widetilde{O}(k \sum_{p \in A} N_p)$ queries.
Note that $(\Delta_{\pi^{(t)}, p^{\downarrow}})_q \geq 0$ for all $q \in V \setminus p^\downarrow$ and that for any $i \in V \setminus p^\downarrow$ computing $\sum_{j \in [i] \setminus p^\downarrow} (\Delta_{\pi^{(t)}, p^{\downarrow}})_j$ takes only $O(k)$ queries since $|p^\downarrow| \leq k$. It follows that we can apply binary search as in the proof of \Cref{lem:eff_vsampling} to sample $q \propto \beta_q$ in $O(k \log n)$ queries. For all $p \in A$ and $q \in V \setminus A$, this is a total of $\widetilde{O}(k \sum_{p \in A} N_p)$ queries. Finally, note that every operation in \Cref{alg:find_arc} requires at most $\poly(n)$ additional computational cost. This finishes the proof of the lemma.
\end{proof}

Now, we are ready to present our result regarding the arc-finding procedure. 

\begin{restatable}[Arc Finding for Sequential Algorithm]{lemma}{ArcFindingSequential}
\label{lem:st_arc_finding}
Let $f$ be a submodular function with $f^* \ge - \frac{\|u_f\|_{\infty}}{12 k}$ and $\{\pi^{(t)}\}_{t \in [m]}$ be permutations such that $y = \frac{1}{m}\sum_{t\in[m]}g_{\pi^{(t)}}$ is a $(\delta, k)$ dual certificate with $\delta \le \frac{\|u_f\|_{\infty}}{24 k}$.
Then with high probability, \Cref{arc_finding_alg} outputs arcs $S_p \neq \emptyset$ for each $p \in A$ that belongs to some $\sparsity$-sparse minimizer. 
Moreover, \Cref{arc_finding_alg} uses $\otilde(k^5 \frac{\|u_f\|_1}{\|u_f\|_{\infty}})$ queries and $\poly(n)$ additional runtime. 
\end{restatable}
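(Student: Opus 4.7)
The plan is to verify that the oversampling phase correctly induces the conditional distributions required by \Cref{lem:neg_incr_estimate_guarantee}, then chain that lemma with the move-to-front guarantee in \Cref{lem:Delta_thresholds} to conclude that the set $S_p$ returned by the threshold test is nonempty and contained in some $\sparsity$-sparse minimizer $S^*$ for every $p \in A$ lying in such $S^*$.

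First I would analyze the joint sampling of $(t,a)$ in the oversampling loop. A direct computation shows that conditional on $a \in p^{\downarrow}$ the marginal over $t$ is proportional to $\sum_{a \in p^\downarrow}(2(u_f)_a - (g_{\pi^{(t)}})_a) = B_p^{(t)}$, so $C_p$ is a set of i.i.d.\ samples from the required distribution. Moreover, $\mathsf{count}_p \sim \mathrm{Bin}(N, P_p)$ with $P_p = \frac{1}{m(2\|u_f\|_1 - f(V))}\sum_t B_p^{(t)}$, making $\Tilde{z}_p$ an unbiased estimator of $\frac{1}{m}\sum_t B_p^{(t)}$. Since $(u_f)_p \geq \|u_f\|_\infty/2$ for $p \in A$ and $g_{\pi^{(t)}}(p^\downarrow) \leq u_f(p^\downarrow)$, we have $B_p^{(t)} \geq u_f(p^\downarrow) \geq \|u_f\|_\infty/2$, and using $|f(V)| \leq \|u_f\|_1$ yields $\E[\mathsf{count}_p] = \Omega(N\|u_f\|_\infty / \|u_f\|_1) = \Omega(k^4 \log n)$. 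A Chernoff bound with deviation $1/(16k)$ then gives $\Tilde{z}_p \in (1 \pm \frac{1}{16k})\E[\Tilde{z}_p]$ and $|C_p| = N_p$ simultaneously for all $p \in A$, with high probability after a union bound.

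Next I would verify the hypotheses of \Cref{lem:neg_incr_estimate_guarantee} for any $p \in A$ lying in a $\sparsity$-sparse minimizer $S^*$: since $f^* \geq -\|u_f\|_\infty/(12k) \geq -(u_f)_p/2$ and $\delta \leq \|u_f\|_\infty/(24k) \leq (u_f)_p/(6k)$, the ``in particular'' clause of \Cref{lem:Delta_thresholds} applies. This yields $\|\Delta_p\|_1 = f(p^\downarrow) - y(p^\downarrow) \geq (u_f)_p/2 - \delta = \Omega(\|u_f\|_\infty)$, while $\frac{1}{m}\sum_t B_p^{(t)} = 2 u_f(p^\downarrow) - y(p^\downarrow) \leq 2k\|u_f\|_\infty + \delta = O(k\|\Delta_p\|_1)$. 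Absorbing the mismatched constant in $N_p$, \Cref{lem:neg_incr_estimate_guarantee} then gives $(\Tilde{\Delta}_p)_q \in (\Delta_p)_q \pm \frac{1}{8k}\|\Delta_p\|_1$ and $\|\Tilde{\Delta}_p\|_1 \in (1 \pm \frac{1}{8k})\|\Delta_p\|_1$ whp.\ for all such $p$ and $q \in V \setminus p^\downarrow$. Correctness of the threshold test then follows from parts (2) and (3) of \Cref{lem:Delta_thresholds}: the witness $q \in S^* \setminus p^\downarrow$ with $(\Delta_p)_q > \|\Delta_p\|_1/k$ satisfies $(\Tilde{\Delta}_p)_q \geq \frac{7}{8k}\|\Delta_p\|_1 \geq \frac{3}{4k}\|\Tilde{\Delta}_p\|_1$ (certifying $S_p \neq \emptyset$), and any $q$ passing the threshold has $(\Delta_p)_q > \|\Delta_p\|_1/(2k)$, forcing $q \in S^*$ by (3).

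Finally, for complexity I would invoke the essential counting fact $|A| \leq 2\|u_f\|_1/\|u_f\|_\infty$, since each $p \in A$ contributes at least $\|u_f\|_\infty/2$ to the nonnegative sum $\sum_p (u_f)_p = \|u_f\|_1$. The oversampling loop has $N = \otilde(k^4 \|u_f\|_1/\|u_f\|_\infty)$ iterations, each costing $O(\log n)$ queries via a binary-search implementation analogous to $\mathsf{vSampling}$ (using that interval sums of $2(u_f)_a - (g_{\pi^{(t)}})_a$ reduce to one $\EO$ call given $u_f$), and the call to $\textsf{NegativeMassEstimate}$ uses $\otilde(k \sum_{p \in A} N_p) = \otilde(k^5 \|u_f\|_1/\|u_f\|_\infty)$ queries by \Cref{lem:neg_incr_estimate_guarantee} combined with the bound on $|A|$. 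The main difficulty I anticipate is justifying that a single pool of $N$ joint samples suffices to simultaneously produce correctly distributed subsamples $C_p$ for every $p \in A$; this is precisely what enables the query savings and sidesteps the naive $\Omega(nm)$ cost of enumerating all $\|\Delta_{\pi^{(t)}, p^\downarrow}\|_1$ directly.
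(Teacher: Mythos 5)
Your proposal is correct and follows the same route as the paper's proof: verify that the oversampling loop delivers $C_p$ as i.i.d.\ samples with $\Pr[t_i = t] \propto B_p^{(t)}$ and a concentrated $\Tilde z_p$, check the hypotheses of \Cref{lem:neg_incr_estimate_guarantee} (in particular that $\frac{1}{m}\sum_t B_p^{(t)} = O(k)\|\Delta_p\|_1$), then conclude via parts (2)--(3) of \Cref{lem:Delta_thresholds} and the bound $|A| \le 2\|u_f\|_1/\|u_f\|_\infty$. The only slip is cosmetic: your intermediate bound $2u_f(p^\downarrow) - y(p^\downarrow) \le 2k\|u_f\|_\infty + \delta$ silently assumes $y(p^\downarrow) \ge -\delta$, whereas the correct lower bound from the dual certificate is $y(p^\downarrow) \ge y_-^{k+1}(V) \ge f^* - \delta \ge -\|u_f\|_\infty/(12k) - \delta$; the extra $O(\|u_f\|_\infty/k)$ term is harmless and the asymptotic conclusion $O(k\|\Delta_p\|_1)$ still holds.
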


\begin{proof}[Proof of \Cref{lem:st_arc_finding}]
First, we show that the input passed to \Cref{alg:find_arc} satisfies the conditions in \Cref{lem:neg_incr_estimate_guarantee}.  
We begin by proving that, at the end of the for loop in \Cref{line:over_sampling_phase_begin}, we have collected $N_p$ samples (i.e. $|C_p| = N_p$) for all $p \in A$ with high probability.
To prove this, we fix $p \in A$. In each iteration of \Cref{line:over_sampling_phase_begin}, if $|C_p| < N_p$, then the probability that $C_p$ is updated is $\frac{2 u_f(p^{\downarrow}) - g_{\pi^{(t)}}(p^{\downarrow})}{2 \|u_f\|_1 - f(V)}$. Since $(u_f)_q - (g_{\pi^{(t)}})_q \ge (u_f)_q$ for every $q \in V$, and $-\frac{\|u_f\|_\infty}{12k}\leq f(V) \leq \|u_f\|_1$, we have 
\[
\frac{2 u_f(p^{\downarrow}) - g_{\pi^{(t)}}(p^{\downarrow})}{2 \|u_f\|_1 - f(V)} \ge \frac{u_f(p^\downarrow)}{3 \|u_f\|_{1}} \ge \frac{\|u_f\|_{\infty}}{6 \|u_f\|_{1}} ,
\]
where the last inequality follows because $p \in A$ satisfies $(u_f)_p \ge \|u_f\|_{\infty} / 2$. Therefore, in $N = \Omega(k^4 \log n \frac{\|u_f\|_1}{\|u_f\|_\infty})$ iterations, by the multiplicative Chernoff bound, $|C_p|$ will be updated $\Omega(\frac{N \|u_f\|_{\infty}}{\|u_f\|_{1}}) = \Omega(k^4 \log n)$ times with high probability. 
Moreover, each $t_i \in C_p$ is indeed picked i.i.d. from $[m]$ with $\Pr(t_i = t) \propto B^{(t)}_p$. This is because as long as $|C_p| < N_p$, $C_p$ gets updated if and only if the sample $(t, a)$ drawn satisfies $a \in p^{\downarrow}$. Conditioned on the event that $a  \in p^{\downarrow}$, the sample $t$ has probability distribution over $[m]$ proportional to $B_p^{(t)} = 2 u_f(p^{\downarrow}) - g_{\pi^{(t)}}(p^{\downarrow})$. 

Next, we show that with high probability, $\Tilde{z}_p \in (1\pm \frac{1}{16k}) \cdot (2u_f(p^{\downarrow}) - y(p^{\downarrow})) = (1\pm \frac{1}{16k}) \cdot \frac{1}{m} \sum_{t \in [m]} B_p^{(t)}$ for all $p \in A$. This holds since each time we draw a sample $(t, a)$, the probability that $\mathsf{count}_p$ gets updated is \[
\E_{t \sim [m]}\Big[\frac{2 u_f(p^{\downarrow}) - g_{\pi^{(t)}}(p^{\downarrow})}{2 \|u_f\|_1 - f(V)} \Big] = \frac{2 u_f(p^{\downarrow}) - y(p^{\downarrow})}{2 \|u_f\|_1 - f(V)} \geq \frac{\|u_f\|_\infty}{6 \|u_f\|_1} .
\]
This implies that $\E[\Tilde{z}_p] = 2 u_f(p^\downarrow) - y(p^\downarrow) = \frac{1}{m} \sum_{t \in [m]} B_p^{(t)}$, i.e. $\Tilde{z}_p$ are unbiased estimators.
Moreover, since we draw $N = \otilde(k^4 \frac{\|u_f\|_1}{\|u_f\|_\infty})$ samples in total, multiplicative Chernoff bound again implies that $\Tilde{z}_p$ are indeed $(1\pm \frac{1}{16k})$-multiplicative estimates of $2 u_f(p^{\downarrow}) - y(p^{\downarrow})$ with high probability. 
Lastly, we show that $\E[\Tilde{z}_p] = O(k) \cdot \norm{\Delta_p}_1$ for all $p \in A$ that belongs to a $k$-sparse minimizer. 
To see why this holds, note that 
\[
\|\Delta_p\|_1 = f^{\sharp \rfam}(p^\downarrow) - y(p^\downarrow) = (u_f)_p + f(W \cup p^\downarrow \setminus \{p\}) - y(p^\downarrow) \geq (1 - \frac{1}{6k}) \cdot (u_f)_p - \delta,
\]
where in the last inequality we used that $f^* \geq - \frac{\|u_f\|_\infty}{12k} \geq - \frac{(u_f)_p}{6k}$ and $y(p^\downarrow) \leq \delta$ by \Cref{lem:Delta_thresholds}. 
Using $y_-^{k+1}(V) \geq f^* - \delta$ as $y$ is a $(\delta, k)$ dual certificate, this further implies that \[
\frac{2 u_f(p^{\downarrow}) - y(p^{\downarrow})}{\|\Delta_p\|_1} \le \frac{2k \|u_f\|_{\infty} - (f^* - \delta)}{(1 - \frac{1}{6k}) \cdot (u_f)_p - \delta} \leq O(k), 
\]
where the last inequality follows from $\delta = O(\frac{1}{k} \|u_f\|_\infty)$ and $p \in A$.
Hence, the input passed to \Cref{alg:find_arc} satisfies the assumptions in \Cref{lem:neg_incr_estimate_guarantee}. 
    
Next, we prove the correctness of our arc finding method. Fix $p \in A$ that belongs to some $\sparsity$-sparse minimizer. By \Cref{lem:neg_incr_estimate_guarantee}, we have $(\Tilde{\Delta}_p)_q \in (\Delta_p)_q \pm \frac{1}{8k} \|\Delta_p\|_1$ and $\|\Tilde{\Delta}_p\|_1 \in (1 \pm \frac{1}{8k}) \|\Delta_p\|_1$ with high probability for all $q \in V \setminus p^{\downarrow}$. Therefore, every $q$ with $(\Tilde{\Delta}_p)_q \ge \frac{3}{4k}\|\Tilde{\Delta}_p\|_1$ must have $(\Delta_p)_q \ge \frac{1}{2k}\|\Delta_p\|_1$, implying that $p \to q$ is an arc due to \Cref{lem:Delta_thresholds}. Hence, all arcs deduced are valid. Lastly, note that for each $p \in A$ that belongs to some $k$-sparse minimizer, there must be at least one such $q \in V \setminus p^\downarrow$, as otherwise we would have $(\Delta_p)_q \le \frac{1}{k}\|\Delta_p\|_1$ for all $q$ which is a contradiction to \Cref{lem:Delta_thresholds}. This implies that we can find valid arcs $S_p \neq \emptyset$ for every element $p \in A$ that belongs to some $\sparsity$-sparse minimizer.

Finally, we show that \Cref{arc_finding_alg} can be implemented using $\widetilde{O}(k^5 \cdot (\frac{\|u_f\|_1}{\|u_f\|_\infty} + |A|))$ queries. 
Note that each sample pair $(t,a)$ in \Cref{line:over_sampling_phase_begin} can be implemented in $O(\log n)$ queries and $O(n \log n)$ additional runtime by first sampling $t \sim [m]$ uniformly at random, and then use binary search to sample $a \in V$. Together with the runtime bound in \Cref{lem:neg_incr_estimate_guarantee}, we have that \Cref{arc_finding_alg} can be implemented 
using $\otilde(N + k \sum_{p \in A} N_p) = \otilde(k^5 \frac{\|u_f\|_{1}}{\|u_f\|_{\infty}})$ queries and $\poly(n)$ additional computation, where we used that $|A| \le 2 \frac{\|u_f\|_{1}}{\|u_f\|_{\infty}}$. 
This completes the proof of the lemma. 
\end{proof}

\subsection{Putting It All Together}
\label{subs_stcorrectness}

We are now ready to prove the correctness, query complexity and runtime guarantees for our sequential algorithm. We prove \Cref{thm:main_sequential} and \Cref{thm:main_strongly_sequential} in the following. 

\weaklysequential*

\begin{proof}[Proof of \Cref{thm:main_sequential}]
We prove that the meta algorithm (\Cref{alg:main}) where the subroutines 
 \textsf{Dimensionality-Reduction} and \textsf{Arc-Finding} are instantiated by \Cref{alg:dim_red_sequential} and \Cref{arc_finding_alg} respectively has the desired properties. The correctness of these instantiations of \textsf{Dimensionality-Reduction} and \textsf{Arc-Finding} was established in
 \Cref{lem:dim_red_sequential} and \Cref{lem:st_arc_finding} respectively. Consequently,
 the correctness of this algorithm follows from \Cref{cor:correctness_meta_algorithm} so we only need to analyze its query complexity and runtime here. 

In each iteration of the outer while loop in \Cref{line:outer_while_main} of \Cref{alg:main}, one of the following three conditions occur: (1)  $f^* \leq -\|u_{f^{\sharp \rfam}}\|_\infty / 12k$ and the size of the contracted elements $W$ increases due to \textsf{Dimensionality-Reduction} in \Cref{line:solve_scale_sub-procedure}, (2) there exists element $p \in V\setminus (W \cup D)$ such that $(u_{f^{\sharp \rfam}})_p < 0$ in \Cref{line:negative_upper} of \Cref{alg:extension_maintainer} (extension maintainer) and the size of the discarded elements $D$ increases, or (3) $f^* > -\|u_{f^{\sharp \rfam}}\|_\infty / 12k$  and $\|u_{f^{\sharp \rfam}}\|_\infty$ decreases by a factor of $2$ and a set of arcs $S_p$ are found for every element $p \in V \setminus (D \cup W)$ with $(u_{f^{\sharp \rfam}})_p > \mathsf{Scale}/2$ in the while loop in \Cref{line:arc_finding_main} (in this case, the discarded set $D$ might also increase due to $S_p = \emptyset$ or an element $p$ having more than $k$ arcs). 

(1) and (2) can happen at most $k$ times before $|W| \geq k$ and \Cref{alg:main} outputs $W$. (3) can happen at most $\log (|f| n/ \epsilon)$ times before $\|u_{f^{\sharp \rfam}}\|_\infty \leq \epsilon / n$. 
Consequently, the total number of iterations of the while loop in \Cref{line:outer_while_main} is at most $O(k + \log(|f| n / \epsilon))$. Moreover, each iteration makes $1$ call to \textsf{Dimensionality-Reduction} and at most $k$ calls to \textsf{Arc-Finding}, as otherwise the number of new arcs found from elements $p$ with $(u_{f^{\sharp \rfam}})_p > \mathsf{Scale}/2$ would be more than $k$ and such elements are discarded in our data structure in \Cref{line:discard_cond}.

Each call to \textsf{Dimensionality-Reduction} can be implemented in time $\widetilde{O}(\frac{\|u_{f^{\sharp \rfam}}\|_{1}}{\|u_{f^{\sharp \rfam}}\|_{\infty}} k^{12} \cdot \EO + \poly(n))$ by \Cref{lem:dim_red_sequential}. Additionally, the total number of times \textsf{Dimensionality-Reduction} is called in \Cref{line:solve_scale_sub-procedure} is at most $k + \log(|f| n / \epsilon)$ by the reasoning in the preceding paragraph. Next, each call to \textsf{Arc-Finding}, first computes a $(\delta, k)$ dual certificate $y$ (implicitly given by the permutations $\{\pi^{(t)}\}_{t \in [m]}$), where $\phi = \frac{\|u_{f^{\sharp \rfam}}\|_\infty}{12k}$, $\delta = \frac{\|u_{f^{\sharp \rfam}}\|_\infty}{24k} = \phi/2$, which by \Cref{lem:st_certif_runtime_FTRL} uses 
\[
\otilde(k^6 \delta^{-4} \phi^2 (\|u_{f^{\sharp \rfam}}\|_{\infty} + \phi) (\|u_{f^{\sharp \rfam}}\|_1 + \phi)) \leq \otilde(k^8 \frac{\|u_{f^{\sharp \rfam}}\|_1}{\|u_{f^{\sharp \rfam}}\|_\infty})  
\]
queries to $\EO$ and $\poly(n)$ additional runtime.
Given the $(\delta, k)$ dual certificate, \Cref{lem:st_arc_finding} states that each call to \textsf{Arc-Finding} can be done in $\widetilde{O}(\frac{\|u_{f^{\sharp \rfam}}\|_{1}}{\|u_{f^{\sharp \rfam}}\|_{\infty}} k^5 \cdot \EO + \poly(n))$ runtime. 
Since $\frac{\|u_{f^{\sharp \rfam}}\|_{1}}{\|u_{f^{\sharp \rfam}}\|_{\infty}} \le n$, we obtain that the runtime due to \textsf{Dimensionality-Reduction} and \textsf{Arc-Finding} is $\widetilde{O}((n k^{12} \cdot \EO + \poly(n)) \log(|f| /\epsilon))$ in total.

Finally, due to \Cref{thm:ext_maintainer}, each update to the \textsf{RingFamily} can be implemented in $O(m \cdot \EO + nk)$ time where $m$ is the total number of elements $p$ from which arcs are found. Additionally, each one of \Cref{line:update_infos1}, \Cref{line:update_infos2} and \Cref{line:update_infos3} can be implemented in $O(1)$ depth and $O(n \cdot \EO + n)$ time. 
Combining everything above, \Cref{alg:main} finds an $\epsilon$-approximate minimizer for $\sparsity$-sparse SFM in runtime $\widetilde{O}((n k^{12} \cdot \EO + \poly(n)) \cdot \log (|f| / \epsilon))$.
\end{proof}

We now prove our strongly-polynomial result for our sequential algorithm.

\stronglysequential*

\begin{proofof}{\Cref{thm:main_strongly_sequential}}
Consider the meta algorithm in \Cref{alg:main} with subprocedures \textsf{Dimensionality-Reduction} and \textsf{Arc-Finding} as in Algorithms \ref{alg:dim_red_sequential} and \ref{arc_finding_alg}, where we set $\epsilon = 0$. The correctness of these instantiations of \textsf{Dimensionality-Reduction} and \textsf{Arc-Finding} was established in
 \Cref{lem:dim_red_sequential} and \Cref{lem:st_arc_finding} respectively. Consequently,
 the correctness of this algorithm follows from \Cref{cor:correctness_meta_algorithm} so we only need to analyze its query complexity and runtime here. 
We show that this algorithm outputs an exact minimizer and has runtime bound $\otilde(n k^{13} \cdot \EO + \poly(n))$. 
    
First, the runtime due to \textsf{RingFamily} operations can be easily bounded by $O(n k \cdot \EO + \poly(n))$, since on average each arc takes at most $O(1 \cdot \EO + nk)$ to update, and there can be at most $nk$ arcs in total (see \Cref{thm:ext_maintainer}).
Hence, it suffices to reason about the complexity of \textsf{Dimensionality-Reduction} and \textsf{Arc-Finding} over the course of the algorithm. 

We have already argued that in the proof of \Cref{thm:main_sequential} above that one iteration of the while loop in \Cref{line:outer_while_main}, the number of calls to \textsf{Dimensionality-Reduction} and \textsf{Arc-Finding} is $1$ and $O(k)$ respectively. Hence, each such iteration takes $\widetilde{O}(\frac{\|u_{f^{\sharp \rfam}}\|_{1}}{\|u_{f^{\sharp \rfam}}\|_{\infty}} k^{12} \EO + \poly(n))$ time, since each call of \textsf{Arc-Finding} takes $\otilde(k^8 \frac{\|u_{f^{\sharp \rfam}}\|_1}{\|u_{f^{\sharp \rfam}}\|_\infty})$ (as shown in the proof of \Cref{thm:main_sequential}).

To obtain a strongly polynomial runtime, we bound the runtime of the calls to \textsf{Dimensionality-Reduction} and \textsf{Arc-Finding} in a more fine-grained way than in the proof of \Cref{thm:main_sequential}. We introduce a charging scheme to argue that the total cost of these calls is $\otilde(k^{13})$ per element $p$. We do so by charging the number of queries made during each iteration of the while loop to elements in $V$ according to the entries of the vector $u_{f^{\sharp \rfam}}$. In particular, we charge $\otilde(\frac{(u_{f^{\sharp \rfam}})_p}{\|u_{f^{\sharp \rfam}}\|_{\infty}} k^{12})$ queries to each $p \in V$. 
Note that such a charging scheme accounts for the total number of queries used in one iteration of the outer while loop. Hence, it suffices to show that the total charge to element $p$ is at most $\otilde(k^{13})$.

Now, fix an element $p \in V$ and observe that $(u_{f^{\sharp \rfam}})_p$ can only change when we deduce an arc from $p$ (directly or through transitive closure). Moreover, $(u_{f^{\sharp \rfam}})_p$ can change at most $k$ times as any element reaching more than $k$ arcs is discarded by \Cref{line:discard_cond} in \Cref{alg:extension_maintainer}.  
Now, fix a 
sequence of iterations $t_0, \cdots, t_1$ of the while loop in \Cref{line:outer_while_main} where the value $(u_{f^{\sharp \rfam}}^{(t)})_p$ in the $t$th iteration is constant throughout $t \in \{t_0, \cdots, t_1\}$. 
Note that $\|u_{f^{\sharp \rfam}}^{(t)}\|_\infty$ decreases by a factor of $2$ after each iteration $t$ and that $\|u_{f^{\sharp \rfam}}^{(t_1)}\|_\infty \geq (u_{f^{\sharp \rfam}}^{(t_1)})_p$. 
We can therefore bound the total number of queries charged to $p$ in iterations $t_0, \cdots, t_1$ as 
\[
\otilde(k^{12}) \cdot \sum_{t=t_0}^{t_1} \frac{(u_{f^{\sharp \rfam}}^{(t)})_p}{\|u_{f^{\sharp \rfam}}^{(t)}\|_\infty} \leq \otilde(k^{12}) \sum_{t=t_0}^{t_1} 2^{-(t_1 - t)} \leq \otilde(k^{12}) . 
\]
Since there are at most $k$ such sequences of iterations for $p$, the total number of queries charged to $p$ throughout the entire algorithm is at most $\otilde(k^{13})$. This proves that the algorithm makes at most $\otilde(n k^{13})$ queries to $\EO$ in total.  

Finally, to prove that the additional computation is also strongly polynomial, we note that each iteration of the while loop in \Cref{line:outer_while_main} deduces either a dimensionality reduction or an arc. Note each one of \Cref{line:update_infos1}, \Cref{line:update_infos2} and \Cref{line:update_infos3} can be implemented in $O(1)$ depth and $O(n \cdot \EO + n)$ time. 
Additionally, \Cref{line:outer_while_main} gets called at most $nk$ times. Since each iteration uses $\poly(n)$ additional computation, the total runtime we obtain is $\otilde(n k^{13} \cdot \EO + \poly(n))$. 
\end{proofof}

\section{Ring Family and Extensions} 
\label{sec:ring_family_extensions}

In this section, we present more details on arc information and the submodular extension $f^{\sharp \rfam}$ introduced in \Cref{sec:framework:extension}. 

As was illustrated in \Cref{sec:framework:extension}, besides finding elements which are contracted (added to $W$) or discarded (added to $D$), our algorithms also proceed by finding elements $p,q \in V$ such that any $\sparsity$-sparse minimizer of the submodular function $f$ containing $p$ must also contain $q$. Such information is captured by an \emph{arc} $(p,q)$. To incorporate all the arc information, we use the framework of SFM over ring families first introduced by Iwata, Fleischer, and Fujishige \cite{IFF01}. This framework was later used in many other algorithms, e.g., \cite{IO09, LSW15,DVZ21}.

Here, we emphasize once more the slight difference between the arc definition we use and the standard one in the literature. As said in \Cref{footnote:arc_difference}, an arc $(p, q)$ in our work means that every $k$-sparse minimizer containing $p$ contains $q$. Equivalently, this can be reformulated as ``if the minimal minimizer $S_{\min}$ contains element $p$, then it also contains element $q$''. Crucially, we use this definition to limit the number of arcs we expect to deduce from any element to at most $k$. Hence, whenever we deduce more than $k$ arcs from some element, we discard it, as it is not part of the minimal minimizer. This also affects how we implement the transitive closure in \textsf{UpdateArcs}, the function in \Cref{alg:extension_maintainer} used to add new arcs to the data structure. For more details, see the proof of \Cref{thm:ext_maintainer}. 
The modification in the definition of arcs, along with the corresponding algorithmic tweaks, allows a speed-up in both the depth and query complexity of calls to the data structure \Cref{alg:extension_maintainer}, making our runtimes (\Cref{thm:main_parallel}, \Cref{thm:main_sequential}, \Cref{thm:main_strongly_sequential}) possible. 
Our presentation and notations follow the ones in \cite{DVZ21}, with necessary adaptions to fit the arc definition of this paper. 

\medskip
\noindent \textbf{Directed graph and partial ordering.}
All arc information is maintained as a directed graph $G = (V', E)$, where $V':= V \setminus (W \cup D)$ with
the property that if $(p,q) \in E$, then every $k$-sparse minimizer $S \subseteq V$ of the submodular function $f$ that contains $p$ must also contain $q$. 
We may assume that $E$ is acyclic as any (directed) cycle of $G$ can be contracted (e.g., Section 4.1 of \cite{DVZ21}). 
We may also assume that $G$ is {\em transitive}, i.e., $(p,q) \in E$ if and only if there is a directed path from $p$ to $q$ in $G$, by maintaining its transitive closure. Our data structure, \Cref{alg:extension_maintainer}, maintains this property by adding arc $p \to q$ whenever there is a directed path from $p$ to $q$. 

The acyclic graph $G = (V', E)$ defines a partial order  $\preceq_E$, i.e., $q \preceq_E p$ if there exists a directed path in $E$ from $p$ to $q$ (and $(p,q) \in E$ when $G$ is transitive); in particular, $p \preceq_E p$ for all $p \in V$. We say that an ordering of the vertices is {\em consistent with} $\preceq_E$, if $q$ is ordered before $p$ whenever $q \preceq_E p$. We use the simpler notation $q \preceq p$ if $E$ is clear from the context.

\medskip
\noindent \textbf{Ring families.} The acyclic graph $G = (V',E)$ defines a collection of sets $\rfam(E, W, D) \defeq \{S: W \subseteq S \subseteq V \setminus D, (p \in S) \wedge (q \preceq p) \Rightarrow q \in S\}$, i.e., $\rfam(E, W, D)$ is the set of all lower ideals of the partial order $\preceq$. 
Note that $\rfam(E, W, D)$ contains $W$ and $V$ and is closed under set intersection and union. 

Our algorithms begin with no arc information, so $\rfam(E, W, D) = 2^V$. 
Throughout our algorithm, we maintain the ring family $\E \subseteq 2^V$ (by the arcs in $G = (V',E)$) to be {\em consistent} with all the $k$-sparse minimizers of the submodular function $f$, i.e., $\rfam(E, W, D)$ contains all the $\sparsity$-sparse minimizers of $f$.

Given a ring family $\rfam(E, W, D)$, for any element $p \in V'$, we define 
\[
p^\downarrow \defeq \{q \in V': q \preceq p\} \quad \text{and} \quad p^\uparrow \defeq \{q \in V': p \preceq q\}.
\]
Since we have assumed that $G = (V', E)$ is transitive, for all $p \in V'$, $p^\downarrow = \{p\} \cup \{q \in V': (p, q) \in E\}$ and $p^\uparrow = \{p\} \cup \{q \in V': (q, p) \in E \}$. 
Similarly, for any $X \subseteq V'$, we define 
\[
X^\downarrow \defeq \bigcup_{p \in X} p^\downarrow \quad \text{and} \quad X^\uparrow \defeq \bigcup_{p \in X} p^\uparrow.
\]
Note that $X^\downarrow \cup W$ is the unique minimal set in $\rfam(E, W, D)$ containing $X$, and $(V' \setminus X^\uparrow) \cup W$ is the unique maximal element of $\E$ disjoint from $X$. 
We also define, for any $X \subseteq V'$ the set 
\[
X^{\sharp \rfam} \defeq V' \setminus (V' \setminus X)^\uparrow\,.
\]
Note that $X^{\sharp \rfam} \cup W$ is the unique maximal set in $\rfam(E, W, D)$ that is contained in $X \cup W$.

\medskip
\noindent \textbf{Upper bound values and extensions of submodular functions.} 
Consider $f$ a submodular function and a consistent ring family with arc constraints captured by $\rfam$. 
For every $p \in V'$, we define the upper bound values 
\[(u_{f, \rfam})_p \defeq f(W \cup p^\downarrow) - f(W \cup p^\downarrow \setminus \{p\}) 
.\]
An intuitive property of $(u_{f, \rfam})_p$ is that for any set $X \in \rfam(E, W, D)$ that does not contain $p$ and that $X \cup \{p\} \in \rfam(E, W, D)$, $(u_{f, \rfam})_p$ is an upper bound on the marginal value $f(X \cup \{p\}) - f(X)$. 
Recall from \Cref{lem:properties_extension} that the upper bound values $(u_{f, \rfam})_p$ coincide with $(u_{f^{\sharp \rfam}})_p \defeq f^{\sharp \rfam}(\{p\}) - f^{\sharp \rfam}(\emptyset)$ whenever $(u_{f, \rfam})_p \geq 0$. 

Using the upper bound values, we define an extension $f^{\sharp \rfam}$ of the submodular function $f$ that captures the ring family structure $\rfam(E, W, D)$. 
\begin{definition}[Submodular Extension]
\label{defn:extension}
Given a submodular function $f$ and a ring family $\rfam(E, W, D)$ $k$-consistent with the structure of its minimizers, define, for $S \subseteq V \setminus \{W \cup D\}$ 
\[
f^{\sharp \rfam}(S) \defeq f_W(S^{\sharp \rfam}) + u_{f, \rfam}^{+}(S \setminus S^{\sharp \rfam}) = f(W \cup S^{\sharp \rfam}) + u_{f, \rfam}^{+}(S \setminus S^{\sharp \rfam}).\]
\end{definition}

To incorporate set $W$, the extension $f^{\sharp \rfam}$ replaces the function $f$ with $f_W: 2^{V \setminus W} \rightarrow \mathbb{R}$ defined as $f_W(S) \defeq f(S \cup W) - f(W)$, which is submodular.
For a submodular function $F:2^{V'} \to \R$, the operation $F^\downarrow$ with respect to a ring family was defined 
in \cite{DVZ21}; we denote by $\Bar{F}$ the complement of $F$, with the relationship $\Bar{F}(S) = F(V' \setminus S)$. The function $f^{\sharp \rfam}$ is the natural complement of the function $(f_W)^\downarrow$. In particular, it is easy to see that $f^{\sharp \rfam}(S) = (\Bar{f_W})^{\downarrow}(S)$. 
We now prove the following lemma collecting the properties of $f^{\sharp \rfam}$, which was previously stated in \Cref{sec:framework:extension}.

\ExtensionProperty*

\begin{proof}
To prove the submodularity property, we note from above that $f^{\sharp \rfam}(S) = (\Bar{f_W})^{\downarrow}(V \setminus S)$, where $\Bar{f_W}(S) = f_W(V \setminus S)$ is the complement set function of $f_W$. $\Bar{f_W}^{\downarrow}$ is the submodular extension defined in \cite{DVZ21} for function $F = \Bar{f_W}$. Since $\Bar{f_W}^{\downarrow}$ was shown to be submodular (Lemma 4.4 in \cite{DVZ21}), it follows that $f^{\sharp \rfam}$ is also a submodular function.

The first part of the extension property follows from \Cref{defn:extension} and the non-negativity of $u_{f, \rfam}^+$; the second part follows because $S^{\sharp \rfam} = S$ for any set $S$ so that $W \cup S$ is consistent with all the arcs. 

We now prove the first part of $k$-consistency. By the extension property, $f^{\sharp \rfam}(S) \geq f(W \cup S^{\sharp \rfam})$ and that $f^{\sharp \rfam}(S^*) = f(W\cup S^*)$ for any set $S^*$ so that $S^* \cup W$ is a $\sparsity$-sparse minimizer $f$. It follows that the minimum value of $f^{\sharp \rfam}$ agrees with $f^*$ and any $\sparsity$-sparse minimizer $S^*$ of $f$ is also a minimizer of $f^{\sharp \rfam}$. 
For the second part of $k$-consistency, we note that $f((S^*)^{\sharp \rfam} \cup W) \leq f^{\sharp \rfam}(S^*) = f^*$ which implies that $(S^*)^{\sharp \rfam}$ is a minimizer of $f^{\sharp \rfam}$.


To prove the marginal property, we consider two cases. The first case is when there is no element $q \neq p$ such that $q \preceq p$, or, equivalently, $\{p\} = p^\downarrow$. Then, $\{p\}^{\sharp \rfam} = \{p\}$, so $(u_{f^{\sharp \rfam}})_p = f^{\sharp \rfam}(\{p\}) - f^{\sharp \rfam}(\emptyset) = f(W \cup \{p\}) - f(W) = (u_{f, \rfam})_p$. 
In the other case, there exists $q \neq p$ such that $q \preceq p$, and so  $\{p\}^{\sharp \rfam} = \emptyset$. It follows that $f^{\sharp \rfam}(\{p\}) = f(W) + u_{f, \rfam}^{+}(\{p\})$, so $(u_{f^{\sharp \rfam}})_p = u_{f, \rfam}^{+}(\{p\})$. Hence, $(u_{f^{\sharp \rfam}})_p = (u_{f, \rfam})_p$ if $(u_{f, \rfam})_p \geq 0$, and $(u_{f^{\sharp \rfam}})_p = 0$ if $(u_{f, \rfam})_p < 0$. 
Since adding new arcs do not decrease the set $p^\downarrow$ for any element $p \in V$, the ``Moreover'' part immediately follows from the submodularity of $f$. 
\end{proof}

\subsection*{Missing Proofs for Extension Maintainer}
\label{subsec:ext_maintainer}

The proof of \Cref{lem:properties_extension} follows immediately from the above. The data structure for the extension maintainer is formally given in \Cref{alg:extension_maintainer}. 
We now prove that \Cref{alg:extension_maintainer} satisfies the properties in \Cref{thm:ext_maintainer}, which is restated below.

\ExtMaintainer*

\begin{algorithm}[htp!]
\caption{Data Structure: Extension Maintainer}\label{alg:extension_maintainer}
    \textbf{State maintained:} \tcp*{Read-only access} 
    Ground set $V$, discarded elements $D \subseteq V$, and contracted elements $W \subseteq V$
    \tcp*{Elements in $D$ not in any $k$-sparse minimizer, elements in $W$ belong to every minimizer}
    Explicit array of sets $p^\downarrow \subseteq V \setminus (D \cup W)$ for each $p \in V \setminus (D \cup W)$ \tcp*{Closure of all arcs}
    Sparsity parameter $k \in \mathbb{Z}_{\geq 0}$ \;
    Explicit arrays of $(u_{f, \rfam})_p, (u_{f^{\sharp \rfam}})_p \in \mathbb{R}$ for all $p \in V \setminus (D \cup W)$ \\
    \SetKwFunction{init}{Init}
    \SetKwProg{Fn}{Function}{:}{}
    \Fn{\init{$V,k, f$}}{
    $V \leftarrow V$, $D \leftarrow \emptyset$, $W \leftarrow \emptyset$, $k \leftarrow k$, $f \leftarrow f, p^\downarrow \leftarrow \{p\} \enspace  \forall p \in V$ \;
    $\mathsf{self}.\mathsf{UpdateSpace}(\emptyset, \emptyset)$
    }
    \SetKwFunction{updatespace}{UpdateSpace}
    \SetKwProg{Fn}{Function}{:}{}
    \Fn{\updatespace{$W^{\mathrm{add}}, D^{\mathrm{add}}$}}{
        \lFor{$p \in D^{\mathrm{add}}$}{$D \gets D \cup p^\uparrow$ \label{line:discard_more}}
        \If{$W^{\mathrm{add}} \ne \emptyset$}{
        $W \leftarrow W \cup_{p \in W^{\mathrm{add}}} p^\downarrow$\;
        \For{$p \in V \setminus (W \cup D)$}{
        $(u_{f, \rfam})_p \leftarrow f(W \cup p^\downarrow) - f(W \cup p^\downarrow \setminus p)$ and $(u_{f^{\sharp \rfam}})_p \gets f(W \cup \{p\}^{\sharp \rfam}) + u_{f, \rfam}^{+}(\{p\} \setminus \{p\}^{\sharp \rfam})$ \label{line:change_uppers}
        }
        }
        \While{$\exists p: (u_{f^{\sharp \rfam}})_p < 0$ \label{line:neg_upper_cond}}{
        $W \gets W \cup_{p: (u_{f^{\sharp \rfam}})_p < 0} p^\downarrow$ \label{line:negative_upper}\;
        \lFor{$p \in V \setminus (W \cup D)$}{update $(u_{f, \rfam})_p, (u_{f^{\sharp \rfam}})_p$ as in \Cref{line:change_uppers}
        }
        }
    }
    \SetKwFunction{updatearcs}{UpdateArcs}
    \SetKwProg{Fn}{Function}{:}{}
    \Fn{\updatearcs{$\{S_p\}_{p \in V}$}}{
    $p^\downarrow \leftarrow p^\downarrow \cup S_p$ for each $p \in V \setminus (D \cup W), W^{\mathrm{new}}, D^{\mathrm{new}} \gets \emptyset$\; \label{line:transclose_start}
    Update each $p^\downarrow$ to its transitive closure (possibly adding elements to $W^{\mathrm{new}}$) \tcp*{Parallel implementation in \Cref{thm:ext_maintainer}} \label{line:transclose_end}
    Update $(u_{f, \rfam})_p, (u_{f^{\sharp \rfam}})_p$ for every $p$ with new arcs \label{line:update_uppers_for_new_arcs}\;
    \For{$p \in V\setminus D$}
    {
    \If{$|p^\downarrow| > k$ or $p^\downarrow \cap D \neq \emptyset$}
        {
        $D^{\mathrm{new}} \leftarrow D^{\mathrm{new}} \cup \{p\}$ \tcp*{$p$ is discarded if it has $>k$ arcs, or an arc to $D$} \label{line:discard_cond}
        }
    }
    $\mathsf{self}.\mathsf{UpdateSpace}(W^{\mathrm{new}}, D^{\mathrm{new}})$
    }
    
    \SetKwFunction{subgrad}{Subgrad}
    \SetKwProg{Fn}{Function}{:}{}
    \Fn{\subgrad{$\pi$}}{
    \lIf{$D$ is not at the end of $\pi$}
    {\Return ``Error''}
    $g_i \leftarrow (f(\pi[i]^{\sharp \rfam}) + u_{f, \rfam}^{+}(\pi[i] \setminus \pi[i]^{\sharp \rfam})) - (f(\pi[i-1]^{\sharp \rfam}) + u_{f, \rfam}^{+}(\pi[i-1] \setminus \pi[i-1]^{\sharp \rfam}))$ for all $i \in V \setminus (D \cup W)$ \;
    \Return vector $g \in \mathbb{R}^{|V\setminus D|}$ \tcp*{The full sub-gradient, for parallel algorithm}
    }
    
\SetKwFunction{partial}{Partial}
\SetKwProg{Fn}{Function}{:}{}
\Fn{\partial{$i, \pi$}}{
\lIf{$i \in D$}
{\Return ``Error''}
$g_i \leftarrow (f(\pi[i]^{\sharp \rfam}) + u_{f, \rfam}^{+}(\pi[i] \setminus \pi[i]^{\sharp \rfam})) - (f(\pi[i-1]^{\sharp \rfam}) + u_{f, \rfam}^{+}(\pi[i-1] \setminus \pi[i-1]^{\sharp \rfam}))$ \;
\Return $g_i \in \mathbb{R}$ \tcp*{The $i$th coordinate of sub-gradient $g_{\pi}$, for sequential algorithm}
}   
\SetKwFunction{fsharp}{$f^{\sharp \rfam}$}
\SetKwProg{Fn}{Function}{:}{}
\Fn{\fsharp{$S$}}{
\Return $f(S^{\sharp \rfam}) + u_{f, \rfam}^{+}(S \setminus S^{\sharp \rfam})$
}
\end{algorithm}

\begin{proof}[Proof of \Cref{thm:ext_maintainer}]
First, we show that \Cref{alg:extension_maintainer} indeed preserves the invariants regarding $W, D, u_{f^{\sharp \rfam}}$ claimed in the theorem ($k$-consistency and the non-negativity of $u_{f^{\sharp \rfam}}$), and then we prove the runtime and depth bounds claimed. 
Note that \Cref{line:neg_upper_cond} ensures
that after any call of \textsf{UpdateSpace}, we have $(u_{f^{\sharp \rfam}})_p \ge 0, \forall p \in V \setminus (D \cup W)$. This proves that after every call to any function of the algorithm, the values $(u_{f^{\sharp \rfam}})_p$ remain non-negative. Next, note that as long as $W \subseteq S^*_{\min}$, the minimizer of $f_W$ is $S^*_{\min} \setminus W$. As argued before, any element $p$ with $f(W \cup \{p\}) < f(W)$ belongs to every minimizer of $f_W$, which implies that if $W \subseteq S^*_{\min}$ every time \Cref{line:negative_upper} is called, the new value of $W$ will also be a subset of $S^*_{\min}$. By the assumption that $W^{\mathsf{begin}}, W^\mathrm{new} \subseteq S^*_{\min}$, adding every $p^\downarrow$ with $p \in W^\mathrm{new}$ to set $W^{\mathsf{begin}}$ yields a subset of $S^*_{\min}$, which means that $W \subseteq S^*_{\min}$ is maintained throughout the call \textsf{UpdateSpace}. 
We now prove the runtime and depth bounds for the functions in \Cref{alg:extension_maintainer}.

For \textsf{Init}$(V,k)$, we need $O(n)$ time for the initialization of $V$, $D$, $k$ and $p^\downarrow$ (which can all be done in $O(1)$-depth in parallel), and one call of \textsf{UpdateSpace}. 

For \textsf{UpdateSpace}, we separately bound the time it takes to augment the set $W$ from $W^{\mathsf{begin}}$ to $W^{\mathsf{end}}$ and the set $D$ during one function call. 
First, each time we augment $W$ by at least one element, it takes $O(1)$ depth and $O(n \cdot \EO + n)$ time to recompute the $(u_{f, \rfam})_p, (u_{f^{\sharp \rfam}})_p$ values. Since we make at most $|W^{\mathsf{end}} \setminus W^{\mathsf{begin}}|$ updates to $W$, this gives the desired bound. For augmenting $D$, we need not recompute any $(u_{f, \rfam})_p, (u_{f^{\sharp \rfam}})_p$ values, as discarding elements does not affect them. Moreover, excluding elements that have arcs to elements newly discarded (\Cref{line:discard_more}) can be done with no queries to $f$, by simply traversing the edges of the graph via BFS (using the edge directions for the former operations and the inverse directions for the latter), which takes $O(n)$ time.

For \textsf{Subgrad}$(\pi)$, note that $\pi[i]^{\sharp \rfam}$ is monotonitically increasing with $i$, so we can compute them one by one using a total of $O(n)$ time. We then need to make $n$ queries, one to each $f(\pi[i]^{\sharp \rfam})$. These operations can be done in $O(1)$ parallel round. 
The analysis for \textsf{Partial}$(i)$ is similar to \textsf{Subgrad}$(\pi)$. For $f^{\sharp \rfam}(S)$, it's easy to see that at most $1$ function call to $f$ is needed. Constructing $S^{\sharp \rfam}$ takes $O(n)$ time, and so does computing $u_{f, \rfam}^{+}(S \setminus S^{\sharp \rfam})$. When $S = p^\downarrow$, we simply return $f(S)$, meaning that we only do one query, with no additional computation.

Finally, we prove the lemma statement for \textsf{Update}$(\{S_p\}_{p \in V})$, which requires a very careful analysis. 
In Line \ref{line:transclose_start} - \ref{line:transclose_end}, we need to first compute the transitive closure when the new arcs corresponding to vertices in each $S_p$ are added. Direct implementation of this step could take $O(n)$ parallel depth, e.g., when the new arcs form a chain of length $O(n)$ and propagating the information of the last arc along the chain takes $O(n)$ depth. But clearly, such a large parallel depth is unnecessary for the correctness of the entire function, since each element $p \in V\setminus D$ can have at most $k$ arcs or else it will be discarded in \Cref{line:discard_cond}. 

Therefore, we can implement Line \ref{line:transclose_start} - \ref{line:transclose_end} in depth $k$ by using a parallel BFS approach where each $p$ runs the following procedure in parallel for $k$ iterations: $p$ checks every element $q \in p^\downarrow$ and update new arcs in each $q^\downarrow$. Note that this will not compute the full transitive closure, but suffices for the condition in \Cref{line:discard_cond}, and can correctly compute $p^\downarrow$ for all element $p$ that will not be discarded after the update.  

Note that the above strategy can also be implemented in total time $O(nk)$, since we can freeze (and discard it in \Cref{line:discard_cond}) each element $p$ whenever $|p^\downarrow| > k$. This way each element adds at most $k$ new arcs, with a total runtime of $O(kn)$ for Line \ref{line:transclose_start} - \ref{line:transclose_end}. 

Lastly, \Cref{line:update_uppers_for_new_arcs} of the function \textsf{Update}$(\{S_p\}_{p \in V})$ can be computed using $O(1)$ depth, $O(m)$ queries to $\EO$, one for each $p$ whose $(u_{f, \rfam})_p, (u_{f^{\sharp \rfam}})_p$ needs to be updated, and additional $O(nk)$ time. 
\end{proof}

\section*{Acknowledgements}

We thank Deeparnab Chakrabarty for helpful discussions and we thank the anonymous reviewers for helpful feedback. Andrei Graur was supported in part by the Nakagawa departmental fellowship award from the Management Science and Engineering Department at Stanford University, NSF CAREER Award CCF-1844855, and NSF Grant CCF-1955039. Part of work was done while Haotian Jiang was a Ph.D. student at the University of Washington and supported by a Packard fellowship. Aaron Sidford was supported in part by a Microsoft Research Faculty Fellowship, NSF CAREER Award CCF-1844855, NSF Grant CCF-1955039, a PayPal research award, and a Sloan Research Fellowship. 

\bibliographystyle{alpha}
\bibliography{bib.bib}

\appendix

\newpage
\section{Optimization Methods}
\label{sec:missing_proofs}

\subsection{Stochastic Follow-the-Reglarized-Leader}

In this subsection, we discuss the stochastic follow-the-reglarized-leader (stochastic FTRL) algorithm (see \Cref{st_FTRL}) and its corresponding performance guarantees.

\begin{algorithm}[htp!] 
\caption{Stochastic Follow-the-Regularized-Leader}\label{st_FTRL}
    \KwData{A convex function $f$ on $\mathbb{R}^V$, a convex domain $D \subseteq \mathbb{R}^V$, the step size $\eta > 0$ for the update, and the number of iterations $m$}
    \KwResult{$\{x_0, x_1, \ldots, x_m\} \subseteq D$, the sequence of iterates generated, which satisfy the guarantees stated in \Cref{lem:stMD_guarantee}}
    \SetKwFunction{stFTRL}{StochasticFTRL}
    \SetKwProg{Fn}{Function}{:}{}
    \Fn{\stFTRL{$f, D, \eta, m$}}{
    $x_0 \gets \argmin_{x \in D} r(x)$ \tcp*{$x_0$ is the initial point, $r(x)$ is the entropy regularizer}
    \For{$t = 0, 1, \ldots, m$}{
        Sample stochastic subgradient $h_t$, with $\E[h_t] = g_{x_t}$\;
        $x_{t+1} = \argmin_{x \in D} \eta \sum_{j=0}^t h_j^\top x +  r(x)$ \tcp*{Recall that $r(x) = \sum_{i \in V} x_i \log x_i $}
    }
    \Return $\{x_0, x_1, \ldots, x_m\}$\;
    }
\end{algorithm}

We prove the following guarantees for \Cref{st_FTRL}. 

\begin{lemma}
\label{lem:stMD_guarantee}
    For any point $w \in D$, where $D$ is as in \Cref{st_FTRL}. Assume that $\eta \norm{h_t}_{\infty} < 1/2$ with probability $1$ 
    for all $t \in \{0, \ldots, m-1\}$. Then, the iterates of \Cref{st_FTRL} satisfy \begin{equation}\label{ineq:expectation_bound}
        \E\left[\sum_{t=0}^{m-1} \langle g_{x_t}, x_{t} - w \rangle\right] \le \frac{\sup_{x \in D} r(x) - \inf_{y \in D} r(y)}{\eta} + \eta \sum_{t=0}^{m-1} \E[\|h_t\|_{x_t}^2] .
    \end{equation} 
    Furthermore, define the stochastic process $\{w^{(0)}, w^{(1)}, \ldots, w^{(m)}\} \subseteq \R^V$ as $w^{(0)} \defeq \Vec{1}$ and $w^{(t+1)}_i \defeq w^{(t)}_i \exp(- \eta (h_t)_i)$, for all $i \in V$ and $t \in \{0,\ldots, m-1\}$. Define stochastic vector $p^{(t)} \in \R^V$ with coordinates $p^{(t)}_i \defeq \frac{w^{(t)}_i}{\norm{w^{(t)}}_1}$. Let $v^*$ be a number such that with probability 1, 
\[
-\eta \langle p^{(t)}, g_{x_t} \rangle + \eta^2 \E_t\left[\norm{h_t}_{p^{(t)}}^2\right] \leq v^* , \enspace \forall t \in \{0, \ldots, T-1\} ,  
\]  
where the expectation $\E_t$ is taken over $h_t$ conditioned on $\{h_0,\ldots, h_{t-1}\}$. 
Then with probability at least $1 - \rho$,  we have 
\begin{align}\label{ineq:ell_infinity_bound_gen}
        \max_{i \in V} \sum_{t = 0}^{m-1} - \eta (h_t)_i \le m v^* + \log \Big(\frac{n^2}{\rho} \Big).
\end{align}  
\end{lemma}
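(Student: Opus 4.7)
The plan is to handle the two parts of the lemma separately, relying on the standard regret framework for follow-the-regularized-leader with the entropic regularizer and a multiplicative-weights style coupling argument for the high-probability bound.

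For the expectation bound \eqref{ineq:expectation_bound}, I would follow the usual ``be-the-leader'' style analysis for FTRL. Let $F_t(x) \defeq \eta \sum_{s<t} h_s^\top x + r(x)$ so that $x_t = \argmin_{x \in D} F_t(x)$. A standard induction (telescoping $F_t(x_t) - F_{t+1}(x_{t+1})$) gives the regret decomposition
\[
\sum_{t=0}^{m-1} \langle h_t, x_t - w\rangle \;\leq\; \frac{r(w) - r(x_0)}{\eta} + \sum_{t=0}^{m-1}\langle h_t, x_t - x_{t+1}\rangle \;\leq\; \frac{\sup_D r - \inf_D r}{\eta} + \sum_{t=0}^{m-1}\langle h_t, x_t - x_{t+1}\rangle \, .
\]
To bound the stability term, I would use the closed-form structure of the entropic FTRL step, which gives $x_{t+1,i} \propto x_{t,i}\, e^{-\eta(h_t)_i}$ (modulo the truncation constraints in $S_k^V$, which only help here); combined with $e^{-x} \leq 1 - x + x^2$ for $|x| \leq 1/2$ (valid by the assumption $\eta\|h_t\|_\infty < 1/2$), a direct calculation yields $\langle h_t, x_t - x_{t+1}\rangle \leq \eta \|h_t\|_{x_t}^2$. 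Taking expectations, using that $x_t$ (hence $w$-independent terms) is $\mathcal{F}_{t-1}$-measurable so $\E[\langle h_t, x_t - w\rangle] = \E[\langle g_{x_t}, x_t - w\rangle]$, yields \eqref{ineq:expectation_bound}.

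For the high-probability bound \eqref{ineq:ell_infinity_bound_gen}, the crucial observation is the telescoping identity $\log w^{(m)}_i = -\eta \sum_{t=0}^{m-1}(h_t)_i$, so
\[
\max_{i\in V} \sum_{t=0}^{m-1} -\eta(h_t)_i \;=\; \log \max_{i\in V} w^{(m)}_i \;\leq\; \log \|w^{(m)}\|_1\, .
\]
I would then track the growth of $\|w^{(t)}\|_1$. Writing $\|w^{(t+1)}\|_1 / \|w^{(t)}\|_1 = \sum_i p^{(t)}_i e^{-\eta(h_t)_i}$ and applying $e^{-x} \leq 1 - x + x^2$ entrywise gives
\[
\frac{\|w^{(t+1)}\|_1}{\|w^{(t)}\|_1} \;\leq\; 1 - \eta\langle p^{(t)}, h_t\rangle + \eta^2 \|h_t\|_{p^{(t)}}^2 \, .
\]
Taking conditional expectation with respect to $\mathcal{F}_t$ and invoking $\E[h_t | \mathcal{F}_t] = g_{x_t}$ together with the hypothesis on $v^*$, I obtain
\[
\E\!\left[\frac{\|w^{(t+1)}\|_1}{\|w^{(t)}\|_1}\,\Big|\, \mathcal{F}_t\right] \;\leq\; 1 + v^* \;\leq\; e^{v^*} \, ,
\]
which shows $\|w^{(t)}\|_1 / e^{t v^*}$ is a non-negative supermartingale starting from $\|w^{(0)}\|_1 = n$.

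Finally, Markov's inequality applied to this supermartingale at time $m$ gives $\Pr[\|w^{(m)}\|_1 \geq n \, e^{m v^*}/\rho] \leq \rho$, so with probability at least $1-\rho$ we have $\log \|w^{(m)}\|_1 \leq m v^* + \log(n/\rho)$, which implies the stated bound (with room to spare, since the lemma allows $\log(n^2/\rho)$). The main obstacle is the first part: carefully verifying the stability bound $\langle h_t, x_t - x_{t+1}\rangle \leq \eta \|h_t\|_{x_t}^2$ in the truncated simplex $S_k^V$ rather than the standard simplex, since the active set of the $[0,1]^V$-constraint in $S_k^V$ can alter the closed-form of the FTRL iterate; however, because truncation only projects mass away and the entropic mirror map is separable, the componentwise bound still goes through and the overall stability estimate is preserved.
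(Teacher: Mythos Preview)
Your Part 2 is correct and is essentially the paper's argument (the paper packages it as a separate ``stochastic multiplicative weights'' lemma applied coordinatewise and then union-bounds, which is why it incurs the extra $\log n$; your route via $\max_i w^{(m)}_i \leq \|w^{(m)}\|_1$ is cleaner and indeed gives $\log(n/\rho)$ directly).

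Part 1 has a genuine gap, and not the one you flagged. The stability bound $\langle h_t, x_t - x_{t+1}\rangle \le \eta\|h_t\|_{x_t}^2$ is already false in the unconstrained case, so the issue is not the truncation to $S_k^V$. With $x_{t+1,i} = x_{t,i}e^{-\eta(h_t)_i}$ one needs $a(1-e^{-a}) \le a^2$ for $a = \eta(h_t)_i$, and this fails for $a<0$ (e.g.\ at $a=-0.4$ the left side is $\approx 0.197>0.16$). The inequality $e^{-x}\le 1-x+x^2$ you invoke only gives a \emph{lower} bound on $1-e^{-x}$, which goes the wrong way once multiplied by a negative $(h_t)_i$; the best you can get this way is $\langle h_t,x_t-x_{t+1}\rangle \le C\eta\|h_t\|_{x_t}^2$ with $C>1$, which does not match the lemma.

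The paper does not try to bound the stability term $\langle h_t,x_t-x_{t+1}\rangle$ by itself. Instead it uses a potential $\Phi_T(x)=\eta\sum_{t\le T}h_t^\top x + r(x)$ and proves the one-step inequality $\Phi_T(x_{T+1})\ge \Phi_{T-1}(x_T)+\eta h_T^\top x_T-\eta^2\|h_T\|_{x_T}^2$. The point is that first-order optimality of $x_T$ gives $\Phi_{T-1}(x_{T+1})-\Phi_{T-1}(x_T)\ge V_{x_T}(x_{T+1})$, and then one invokes the universal bound
\[
\langle g,u-x\rangle - V_x(u)\;\le\;\|g\|_x^2\qquad\text{for all }u\in\R_{>0}^V,\ g_i\le 1.79,
\]
with $g=-\eta h_T$, $u=x_{T+1}$. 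The Bregman term $V_{x_T}(x_{T+1})$ is therefore absorbed into the telescoping potential rather than discarded; this is exactly what produces the constant $1$ and, since the bound holds for arbitrary $u$, it is insensitive to whatever constraints the domain $D$ imposes on $x_{T+1}$. Your be-the-leader decomposition is fine, but the missing ingredient is this pairing of the stability term with the Bregman divergence, not a sharper pointwise estimate on $\langle h_t,x_t-x_{t+1}\rangle$.
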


The proof of \Cref{lem:stMD_guarantee} relies on \Cref{lem:MP_proof_tweak} and \Cref{lem:stocastic_MW} which we prove below.

\begin{proof}[Proof of \Cref{lem:stMD_guarantee}]
First, we prove \eqref{ineq:expectation_bound}. For each $T \in \{0, \ldots, m-1\}$, define potential function $\Phi_T(x) \defeq \langle \eta \sum_{t=0}^T h_t, x \rangle + r(x)$ and $\Phi_{-1}(x) \defeq r(x)$. We later show that for any outcome of the randomness of $x_T$ and $h_T$, we have 
\begin{equation} \label{x_t_ineq}
    \Phi_T(x_{T+1}) \ge \Phi_{T-1}(x_{T}) + \eta h_T^{\top} x_T - \eta^2 \norm{h_T}_{x_T}^2, \quad \forall T \in \{0, \ldots, m-1\} .
\end{equation}
    
    For now, assume \eqref{x_t_ineq} holds. Summing up inequalities \eqref{x_t_ineq} for all $T \in \{0, \ldots, m-1\}$ yields \[\Phi_{m-1}(x_{m}) - \Phi_0(x_1) \ge \eta \sum_{t = 0}^{m-1} h_t^{\top} x_t - \eta^2 \sum_{t = 0}^{m-1} \|h_t\|_{x_t}^2 .\]
    Since $x_{m}$ minimizes $\Phi_{m-1}(\cdot)$, we have $\Phi_{m-1}(w) \geq \Phi_{m-1}(x_m)$ and therefore 
    \[
    \eta \sum_{t = 0}^{m-1} h_t^{\top} w + r(w) - r(x_1) \ge \eta \sum_{t = 0}^{m-1} h_t^{\top} x_t - \eta^2 \sum_{t = 0}^{m-1}\|h_t\|_{x_t}^2. 
    \]
    This in turn implies  that
    \[\sum_{t=0}^{m-1} \langle \eta h_t, x_{t} - w \rangle \le r(w) - r(x_1) +\eta^2 \sum_{t=0}^{m-1} \|h_t\|_{x_t}^2 .
    \]
    Bounding $r(w) - r(x_1) \le \sup_{x \in D} r(x) - \inf_{y \in D} r(y)$ and 
    taking the expectation\footnote{More formally, by Tower law, we have $\E[\langle h_t, x_{t} - w \rangle] = \E[\E[\langle h_t, x_{t} - w \rangle|x_t]] = \E[\langle g_{x_t}, x_{t} - w \rangle]$.} over the randomness of  $h_0,\ldots, h_{m-1}$, we obtain the inequality in \eqref{ineq:expectation_bound}: \[
    \E\Big[\sum_{t=0}^{m-1} \langle \eta g_{x_t}, x_{t} - w \rangle \Big ] \le \sup_{x \in D} r(x) - \inf_{y \in D} r(y) + \eta^2 \sum_{t=0}^{m-1}  \E[\|h_t\|_{x_t}^2] . \]

    Now, to prove that inequality \eqref{x_t_ineq} is true, fix $T \in \{0, \ldots, m-1\}$. 
    Note that it suffices to show that $\Phi_{T-1}(x_{T+1}) - \Phi_{T-1}(x_{T}) \ge \eta h_T^{\top} (x_T - x_{T+1}) -\eta^2 \norm{h_T}_{x_T}^2$. Observe that
    \begin{align*} 
    \Phi_{T-1}(x_{T+1}) - \Phi_{T-1}(x_{T}) &= \langle \eta \sum_{t=0}^{T-1} h_t , x_{T+1} - x_T \rangle + r(x_{T+1}) - r(x_T)  \\ 
    & = \langle \eta \sum_{t=0}^{T-1} h_t + \nabla r(x_T), x_{T+1} - x_T \rangle + V_{x_T}(x_{T+1}) ,
    \end{align*}
    where we define $\sum_{t=0}^{T-1} h_t \defeq \vzero$ for $T = 0$. 
    Since $x_T$ minimizes $\Phi_{T-1}$ for any $T \in \{ 0,\ldots, m-1\}$, we have that $\langle \eta \sum_{t=0}^{T-1} h_t + \nabla r(x_T), x_{T+1} - x_T \rangle \ge 0$. Hence, we obtain \[\Phi_{T-1}(x_{T+1}) - \Phi_{T-1}(x_{T}) \ge V_{x_T}(x_{T+1}) .
    \]
    Using \Cref{lem:MP_proof_tweak} with $g = - \eta h_T$, we bound
    \[
    \langle \eta h_T, x_T - x_{T+1} \rangle - V_{x_T}(x_{T+1}) \le \|\eta h_T\|_{x_T}^2.
    \]
    Thus, we have \[
    \Phi_{T-1}(x_{T+1}) - \Phi_{T-1}(x_{T}) \ge \eta h_T^{\top} (x_T - x_{T+1}) - \eta^2 \norm{h_T}_{x_T}^2 .
    \]
    By moving the term $\eta h_T^\top x_{T+1}$ to the left side, this implies \eqref{x_t_ineq}
    \[\Phi_T(x_{T+1}) \ge \Phi_{T-1}(x_{T}) + \eta h_T^{\top} x_T - \eta^2 \norm{h_T}_{x_T}^2, \]
    which completes the proof of the inequality \eqref{ineq:expectation_bound}.

    Next, to prove inequality \eqref{ineq:ell_infinity_bound_gen} in the ``moreover'' part of the lemma, first note that as $p^{(t)}$ depends on randomness of $h_1, \cdots, h_{t-1}$ but not on $h_t$, we have for every $t = 0, \ldots, T-1$ that
    \[
    \E_t\left[-\eta \langle p^{(t)}, h_t \rangle + \eta^2 \norm{h_t}_{p^{(t)}}^2\right] = 
    -\eta \langle p^{(t)}, g_{x_t} \rangle + \eta^2 \E_t\left[\norm{h_t}_{p^{(t)}}^2\right]  
    \leq v^* . 
    \]
We can therefore apply \Cref{lem:stocastic_MW} with $g^{(t)} = - \eta h_t$ to obtain that for any $i \in V$, with probability at least $1 - \rho/n$,
    \[  \sum_{t = 0}^{m-1} - \eta (h_t)_i \le m v^* + \log \Big(\frac{n^2}{\rho} \Big).\]
    Then a union bound over all coordinates $i \in V$ completes the proof of the lemma.
\end{proof}

The following lemma explains the assumption that $\eta \norm{h_t}_{\infty} < 1/2$ for all $t \in \{0,\ldots, m-1\}$.

\begin{lemma}
\label{lem:MP_proof_tweak}
For any vector $g \in \R^{n}$ with $g_{i} \leq 1.79$ for all $i\in V$ and
$x,u \in \R_{> 0}^V$, then we have
\[
\langle g, u-x \rangle - V_{x}(u) \leq \| g\|_x^2 .
\]
\end{lemma}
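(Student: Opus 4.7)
The key observation is that both sides decompose additively across coordinates: $V_x(u) = \sum_i(u_i\log(u_i/x_i) + x_i - u_i)$ and $\|g\|_x^2 = \sum_i x_i g_i^2$, so
\[
\langle g, u-x\rangle - V_x(u) - \|g\|_x^2 \;=\; \sum_{i \in V} F_i(u_i),
\]
where the scalar function $F_i(b) \defeq g_i(b - x_i) - b\log(b/x_i) - (x_i - b) - x_i g_i^2$ depends only on the single variable $b = u_i > 0$. Hence it suffices to prove the scalar inequality $F_i(b) \leq 0$ for each $b > 0$ and each coordinate $i$, reducing the whole problem to a one-variable question.

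For a fixed coordinate write $a = x_i > 0$ and $\alpha = g_i \leq 1.79$. The plan is to maximize $F_i$ over $b > 0$ by elementary calculus: $F_i'(b) = \alpha - \log(b/a)$ vanishes at the unique critical point $b^\star = a e^\alpha$, and $F_i''(b) = -1/b < 0$ shows this is a global maximum. Substituting and simplifying gives
\[
F_i(b^\star) \;=\; a\bigl(e^\alpha - 1 - \alpha - \alpha^2\bigr),
\]
so the lemma reduces entirely to showing the scalar inequality $e^\alpha \leq 1 + \alpha + \alpha^2$ for all $\alpha \leq 1.79$.

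This final inequality is the only real content, and is where the explicit constant $1.79$ enters. For $\alpha \leq 0$ it is routine (one can check directly that $\alpha^2 \geq e^\alpha - 1 - \alpha$ via the series expansion or by comparing derivatives at $0$). For $\alpha \in [0, 1.79]$, I would analyze $\phi(\alpha) \defeq 1 + \alpha + \alpha^2 - e^\alpha$: one has $\phi(0) = 0$ and $\phi'(\alpha) = 1 + 2\alpha - e^\alpha$, which is positive on an initial interval and negative afterwards, so $\phi$ rises then falls to its unique positive root $\alpha_0$; a numerical check shows $\alpha_0 > 1.79$, giving the bound. I expect this last numerical endpoint check to be the only subtle step; everything else is bookkeeping and a standard one-variable optimization.
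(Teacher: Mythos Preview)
Your proof is correct and follows essentially the same approach as the paper: both maximize $\langle g, z-x\rangle - V_x(z)$ over $z \in \R_{>0}^V$, find the maximizer at $z_i^\star = x_i e^{g_i}$, evaluate to obtain $\sum_i x_i(e^{g_i} - 1 - g_i)$, and then invoke the scalar inequality $e^\alpha \leq 1 + \alpha + \alpha^2$ for $\alpha \leq 1.79$. The only difference is cosmetic---the paper works with the full vector $z$ at once while you decompose coordinate-wise first---but the content is identical.
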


\begin{proof}
Define concave function $f(z)$ for all $z\in\R^{n}$ as
\[
f(z)=\langle g, z-x \rangle - V_{x}(z)\,.
\]
Note that $\nabla f(z) = g-\nabla r(z)+\nabla r(x)=g - \log(z/x)$. Consequently, $z_{*} \defeq x\exp(g) \in\R_{>0}^{n}$ entrywise satisfies
that $\nabla f(z_{*})=0$. Therefore
\begin{align*}
\sup_{z\in\R_{>0}^{n}}f(z) & =f(z_{*})=\innerProduct g{z_{*}-x}-V_{x}(z_{*})\\
 & =\innerProduct g{z_{*}-x}-\innerProduct{z_{*}}{\log(z_{*}/x)}-\innerProduct{x-z_{*}}{\vones}\\
 & =\innerProduct{z_{*}-x}{\vones}-\innerProduct gx=\sum_{i\in V}x_{i}\left(\exp(g_{i})-1-g_{i}\right) ,
\end{align*}
where above we used \eqref{eq:breg_formula} in the second line. 
Finally, since $\exp(g_{i})\leq1+g_{i}+g_{i}^{2}$ as $g_{i}\leq1.79$,
the result follows. 
\end{proof}

\subsection{Stochastic Multiplicative Weights}

In this section, we present the Stochastic Multiplicative Weights algorithm and its analysis, which helps with the proof of \Cref{lem:stMD_guarantee}. First, we start with the algorithm pseudocode.

\begin{algorithm}[htp!]
\caption{Stochastic Multiplicative Weights}\label{alg:st_MW}
    \KwData{$\{g^{(0)}, \ldots , g^{(m-1)}\}$, the sequence of stochastic vectors used for the update steps, and the number of iterations $m$}
    \KwResult{$\{w^{(0)}, w^{(1)}, \ldots ,w^{(m)}\}$, the sequence of iterates generated}
    \SetKwFunction{stMW}{StochasticMW}
    \SetKwProg{Fn}{Function}{:}{}
    \Fn{\stMW{$\{g^{(1)}, \ldots, g^{(m)}\}$}}{
    $w^{(0)} = \onesVec \in \R^V$ \tcp*{$w^{(0)}$ is the initial point.}
    \For{$t \in \{0, 1, \ldots, m-1\}$}{
        $w^{(t+1)}_i \gets w^{(t)}_i \exp(g^{(t)}_i)$, for all $i \in V$\;
    }
    \Return $\{w^{(0)}, w^{(1)}, \ldots, w^{(m)}\}$\;
    }
\end{algorithm}

Now, we provide the guarantee on $\{w^{(t)}\}$ and $\{g^{(t)}\}$ that we need. 

\begin{lemma}\label{lem:stocastic_MW}
For each step $t \in \{0,\ldots, m-1\}$ of \Cref{alg:st_MW}, define vector $p^{(t)} \in \R^V$ with entries $p^{(t)}_i \defeq \frac{w^{(t)}_i}{\norm{w^{(t)}}_1}$. Let $v^*$ be a number such that with probability 1, 
\[
\E_t\left[\langle p^{(t)}, g^{(t)} \rangle + \norm{g^{(t)}}_{p^{(t)}}^2\right] \leq v^* , \enspace \forall t \in \{0, \ldots, m-1\} ,  
\]  
where the expectation $\E_t$ is taken over $g^{(t)}$ conditioned on $\{g^{(0)}, \ldots, g^{(t-1)}\}$. 
Suppose $\norm{g^{(t)}}_{\infty}\leq 1/2$ for all $t$ with probability $1$.  
Then for any $\rho > 0$ and $i \in V$, with probability at least $1 - \rho$, 
\[\sum_{t=0}^{m-1} g^{(t)}_i \le m v^* + \log \Big(\frac{n}{\rho} \Big) . 
\]
\end{lemma}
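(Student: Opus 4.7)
The plan is to use a standard multiplicative-weights potential together with a supermartingale / Markov argument, avoiding any Azuma-type $\sqrt{m}$ loss by folding the second-moment term directly into the potential. First I would define the potential $\Phi_t \defeq \norm{w^{(t)}}_1$ and observe $\Phi_0 = n$. Using the pointwise inequality $e^x \leq 1 + x + x^2$ valid for $|x| \leq 1/2$ (the same inequality used in \Cref{lem:MP_proof_tweak}), I would bound
\[
\Phi_{t+1} = \sum_{i \in V} w^{(t)}_i \exp(g^{(t)}_i) \leq \sum_{i \in V} w^{(t)}_i \bigl(1 + g^{(t)}_i + (g^{(t)}_i)^2\bigr) = \Phi_t \bigl(1 + Y^{(t)}\bigr),
\]
where $Y^{(t)} \defeq \langle p^{(t)}, g^{(t)} \rangle + \norm{g^{(t)}}_{p^{(t)}}^2$. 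Since $w^{(m)}_i = \exp(\sum_{t=0}^{m-1} g^{(t)}_i)$ and $w^{(m)}_i \leq \Phi_m$, it suffices to prove the high-probability upper bound $\log \Phi_m \leq m v^* + \log(n/\rho)$.

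Next I would construct the process $M_t \defeq \Phi_t / (1 + v^*)^t$, which is adapted to the natural filtration since $w^{(t)}$ and $p^{(t)}$ depend only on $g^{(0)},\ldots,g^{(t-1)}$. Taking conditional expectation in the previous display and using the hypothesis $\E_t[Y^{(t)}] \leq v^*$ almost surely,
\[
\E_t[\Phi_{t+1}] \leq \Phi_t (1 + \E_t[Y^{(t)}]) \leq \Phi_t (1 + v^*),
\]
which rearranges to $\E_t[M_{t+1}] \leq M_t$. Thus $\{M_t\}$ is a nonnegative supermartingale with $\E[M_m] \leq M_0 = n$.

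By Markov's inequality applied to $M_m$, we obtain $\Pr[M_m \geq n/\rho] \leq \rho$, so with probability at least $1 - \rho$
\[
\Phi_m \leq (1 + v^*)^m \cdot \frac{n}{\rho} \leq e^{m v^*} \cdot \frac{n}{\rho}.
\]
Taking logs and combining with $\sum_{t=0}^{m-1} g^{(t)}_i = \log w^{(m)}_i \leq \log \Phi_m$ yields the claimed bound $\sum_{t=0}^{m-1} g^{(t)}_i \leq m v^* + \log(n/\rho)$.

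The main subtlety to verify carefully is why no $\sqrt{m}$ tail term appears: here the second-moment quantity $\norm{g^{(t)}}_{p^{(t)}}^2$ is already embedded inside $Y^{(t)}$ by the quadratic Taylor bound, so the potential itself behaves like an exponential moment generating function and the standard MGF-with-bookkeeping argument collapses to a one-line Markov application. One should also double-check that the hypothesis $\norm{g^{(t)}}_\infty \leq 1/2$ is precisely what licenses the inequality $e^{g^{(t)}_i} \leq 1 + g^{(t)}_i + (g^{(t)}_i)^2$ coordinatewise and that no positivity issue arises when invoking Markov on the nonnegative random variable $M_m$.
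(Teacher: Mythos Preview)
Your proposal is correct and follows essentially the same approach as the paper: both use the potential $\Phi_t = \norm{w^{(t)}}_1$, the pointwise bound $e^x \le 1 + x + x^2$ for $|x|\le 1/2$ to absorb the second-moment term into the potential, the hypothesis $\E_t[Y^{(t)}] \le v^*$ to control the conditional growth, and a single Markov inequality at the end. The only cosmetic difference is that the paper iterates the inequality $\E_t[\Phi_{t+1}] \le \Phi_t\, e^{v^*}$ directly to obtain $\E[\Phi_m]\le n\,e^{m v^*}$ and then applies Markov to $w^{(m)}_i \le \Phi_m$, whereas you package the same iteration as the supermartingale $M_t = \Phi_t/(1+v^*)^t$ before invoking Markov; the two are equivalent.
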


\begin{proof}

Since $\exp(\alpha) \le 1+\alpha + \alpha^2$ whenever $|\alpha| \le \frac{1}{2}$, for any step $t \in \{0, \ldots, m-1\}$ and $i \in V$, 
\[w^{(t+1)}_i \le w^{(t)}_i (1 + g^{(t)}_i + (g^{(t)}_i)^2) = \norm{w^{(t)}}_1 (p^{(t)}_i + p^{(t)}_i g^{(t)}_i + p^{(t)}_i (g^{(t)}_i)^2) ,
\]
where the equality follows from the definition $w^{(t)}_i = p^{(t)}_i \norm{w^{(t)}}_1$. 
Now we sum up the above for all $i \in V$, using that $\sum_{i \in V} p^{(t)}_i = 1$ and $w^{(t+1)} \in \R^V_+$, to obtain 
    \[\norm{w^{(t+1)}}_1 \le \norm{w^{(t)}}_1 \Big(1 + \langle p^{(t)}, g^{(t)} \rangle + \norm{g^{(t)}}_{p^{(t)}}^2 \Big) . 
    \]
Note that $w^{(t)}$ does not depend on the randomness of $g^{(t)}$. Therefore, taking expectation with respect to the randomness of $g^{(t)}$, conditioned on $g^{(1)}, \ldots, g^{(t-1)}$, we obtain 
\begin{align*} 
        \E_t\left[\norm{w^{(t+1)}}_1\right] \le \norm{w^{(t)}}_1 \E_{t}\left[1 + \langle p^{(t)}, g^{(t)} \rangle) + \norm{g^{(t)}}_{p^{(t)}}^2\right] \le \norm{w^{(t)}}_1 \exp(v^*) ,
\end{align*} 
where in the last inequality, we used $\E_{t}\left[\langle p^{(t)}, g^{(t)} \rangle) + \norm{g^{(t)}}_{p^{(t)}}^2\right] \leq v^*$ with probability 1, and that $1 + \alpha \leq \exp(\alpha)$ for any $\alpha \in \R$. 
Further taking the expectation and unravel the above yields
\begin{align} \label{ineq:MW_expectation}
\E\left[\norm{w^{(m)}}_1\right] \leq \E\left[\norm{w^{(m-1)}}_1 \right] \cdot \exp(v^*) \leq \norm{w^{(0)}}_1 \cdot \exp(m v^*) = n \exp(m v^*) ,
\end{align}
where the last equality is because $w^{(0)} = \onesVec$. 
Next, note that for any coordinate $i \in V$, we have 
\[
\E\left[\norm{w^{(m)}}_1\right] \geq \E\left[w^{(m)}_i \right] = \E\left[w^{(0)}_i \exp \left(\sum_{t=0}^{m-1} g_i^{(t)} \right) \right] = \E\left[\exp \left(\sum_{t=0}^{m-1} g_i^{(t)} \right) \right] , 
\]
where the last equality uses $w_i^{(0)} = 1$. 
It follows from Markov's inequality and \eqref{ineq:MW_expectation} that
\[
\exp \left(\sum_{t=0}^{m-1} g_i^{(t)} \right) \leq \frac{n}{\rho} \exp(m v^*) , \, \enspace \text{with probability } \geq 1 - \rho . 
\]
Taking the logarithm on both sides completes the proof of the lemma. 
\end{proof}

\section{Proximal Step Over $k$-simplex}
\label{sec:proximal}

In this subsection, we show how to take a proximal step efficiently. The rest of this section is devoted to proving \Cref{lem:prox_step}. 
Throughout this section, we identify $V$ with $[n]$ as the set of coordinates to ease the comparison between elements in $V$. 

\begin{lemma}[Proximal step] \label{lem:prox_step}
For any vectors $x_0 \in \mathbb{R}_{> 0}^V \cap S_k^V$, $h \in \mathbb{R}^V$, and positive integer $k$, the proximal step 
\[
z \defeq \argmin_{x \in S_k^V} h^\top x + V_{x_0}(x)
\]
can be computed in $O(1)$-depth and $O(k n)$ time. Moreover, for any $i, j \in V$, $z_i \ge z_j$ if and only if either $\log x_{0,i}- h_i  > \log x_{0,j} - h_j$ or $\log x_{0,i}- h_i  \geq \log x_{0,j} - h_j$ and $i < j$.  
\end{lemma}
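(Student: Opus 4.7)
The plan is to derive the closed-form solution of the proximal step via KKT conditions, then describe a simple algorithm that exploits the sparsity-like structure of the constraint $\|x\|_1 \leq k$ to run in $O(kn)$ time with $O(1)$ depth.

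First, using the expression for $V_{x_0}$ from \eqref{eq:breg_formula}, the objective can be written (up to a constant) as $F(x) = \sum_{i \in V} h_i x_i + \sum_{i \in V} x_i \log(x_i/x_{0,i}) - \sum_{i \in V} x_i$. Introducing Lagrange multipliers $\alpha \geq 0$ for $\sum_i x_i \leq k$, $\beta_i \geq 0$ for $x_i \leq 1$, and noting that the $-\log$ barrier in $r$ forces $z_i > 0$, the stationarity condition reads $h_i + \log(z_i/x_{0,i}) + \alpha = \beta_i$. Setting $a_i \defeq \log x_{0,i} - h_i$ and examining the two cases for the box constraint $x_i \leq 1$ yields the clean formula
\[
z_i \;=\; \min\bigl\{1,\ \exp(a_i - \alpha)\bigr\} \qquad \text{for some } \alpha \geq 0,
\]
where $\alpha = 0$ if and only if the simplex constraint is slack at the unconstrained box-capped solution.

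Next, I would observe that the map $\alpha \mapsto \sum_i \min\{1, \exp(a_i - \alpha)\}$ is continuous and strictly decreasing wherever positive, so there is a unique $\alpha^* \geq 0$ solving the problem. To compute it, note that the set $J \defeq \{i : z_i = 1\}$ satisfies $|J| \leq k$ (since $\sum_i z_i \leq k$), and $J$ consists of the coordinates with the largest $a_i$ values. Thus the algorithm iterates $j = 0, 1, \ldots, k$; at step $j$, it maintains the set $J_j$ of the top $j$ coordinates by $a_i$ (with ties broken by smaller index), and solves
\[
j + e^{-\alpha} \!\!\sum_{i \notin J_j}\!\! e^{a_i} \;=\; k,
\]
giving $e^{-\alpha} = (k-j)/\sum_{i \notin J_j} e^{a_i}$. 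It then checks that the candidate $\alpha$ satisfies $\alpha \in [\max_{i \notin J_j} a_i,\; \min_{i \in J_j} a_i)$ (with the left endpoint replaced by $0$ if the candidate is negative, handling the $\alpha=0$ case). The first $j$ for which this interval check passes yields the correct $\alpha^*$ and hence $z$. Extracting the top $j$ coordinates can be done by repeatedly selecting the maximum remaining $a_i$, which costs $O(n)$ per step, giving $O(kn)$ total time; the per-step computation is trivially parallelizable in $O(1)$ depth since all arithmetic operations are independent and the maximum selection is a standard $O(1)$-depth primitive.

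Finally, for the ordering statement, the explicit formula $z_i = \min\{1, \exp(a_i - \alpha^*)\}$ shows that $z$ is a non-decreasing function of $a_i$, so $z_i \geq z_j$ whenever $a_i > a_j$; when $a_i = a_j$ the values coincide, so the stated tiebreaker by index is automatically consistent. The only subtlety I expect is correctly handling the boundary between the $\alpha=0$ regime and the $\alpha>0$ regime, and verifying that exactly one $j \in \{0,1,\ldots,k\}$ satisfies the interval-feasibility test; both follow from the monotonicity of $\alpha \mapsto \sum_i \min\{1, \exp(a_i - \alpha)\}$, which should be straightforward to verify formally.
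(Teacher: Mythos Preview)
Your proposal is correct and follows essentially the same approach as the paper: derive the KKT structure $z_i = \min\{1,\exp(a_i-\alpha)\}$ (the paper writes this as $z_i = y_i e^{-\mu_i^*-\lambda^*}$ with $y_i = e^{a_i}$), observe that the saturated set $J$ consists of the top coordinates by $a_i$ with $|J|\le k$, and iterate over the candidate sizes $j\in\{0,\ldots,k\}$ to locate the correct one. Your version is actually slightly cleaner in two respects: you include $j=0$ explicitly (the paper's pseudocode starts its loop at $i=1$), and you avoid a full sort by repeated max-extraction, which is what genuinely justifies the $O(kn)$ bound rather than $O(n\log n)$.
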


\begin{proofof}{\Cref{lem:prox_step}}
The procedure for computing a proximal step in \Cref{lem:prox_step} is given in \Cref{alg:prox_step}. 
From the algorithmic description, the algorithm can be implemented in $O(1)$-depth and $O(kn)$ time by implementing the for loop in parallel. 
The rest of this proof is devoted to the correctness of \Cref{alg:prox_step}.

\begin{algorithm}[htp!]
\caption{Proximal step over $k$-simplex}\label{alg:prox_step}
    \KwData{$x_0 \in \mathbb{R}_{> 0}^V \cap S_k^V$, $h \in \mathbb{R}^V$, $k \in \mathbb{Z}_{>0}$}
    \KwResult{$z = \argmin_{x \in S_k^V} h^\top x + V_{x_0}(x)$, where $S_k^V = \{x \in [0, 1]^V: \|x\|_1 \le k\}$}
    \SetKwFunction{proxstep}{ProximalStep}
    \SetKwProg{Fn}{Function}{:}{}
    \Fn{\proxstep{$x_0,h,k$}}{
    $y_j \leftarrow x_{0,j} \cdot e^{-h_j}$ for all $j \in V$\; 
    \label{line:y_def_prox}
    Let $\pi: [n] \rightarrow V$ be permutation sorting the coordinates in decreasing order of $y$, breaking ties alphabetically\;
    \For(\tcp*[f]{$i$ is the number of coordinates of $z$ that are $1$}){$i = 1, \ldots, k-1$ \label{line:for_loop_prox}}{
        Set $\lambda^* \leftarrow \max\{0, \log(\sum_{j=i+1}^n y_{\pi(j)} / (k-i))\}$ \;
        \If{$y_{\pi(i+1)} e^{-\lambda^*} \leq 1$ \label{line:testing_i}}{
        $z_{\pi(j)} \leftarrow 1$ for all $j \leq i$\;
        $z_{\pi(j)} \leftarrow y_{\pi(j)} e^{-\lambda^*}$ for all $j > i$\;
        \Return $z$\; 
        }
        }
    }
\end{algorithm}

The following equivalent view of the proximal step is standard 
\begin{align*}
z &= \argmin_{x \in S_k^V} h^\top x + V_{x_0}(x) = \argmin_{x \in S_k^V} h^\top x + r(x) - \nabla r(x_0)^\top x \\
& = \argmin_{x \in S_k^V} r(x) - (\nabla r(x_0) - h)^\top x \defeq \argmin_{x \in S_k^V} V_y(x) ,
\end{align*}
where $y \in \mathbb{R}_{> 0}^V$ in \Cref{line:y_def_prox} is defined to satisfy $\nabla r(y) = \nabla r(x_0) - h$. This definition is further equivalent to $y_i \defeq x_{0,i} \cdot e^{- h_i}$ for all $i \in V$. 

To reason about $x_* = \argmin_{x \in S_k^V} V_y(x)$, we use the KKT conditions. 
Using \eqref{eq:breg_formula} (with the parameters $x$ and $y$ swapped), we have
\begin{align*}
V_y(x)  = \sum_{i \in V} x_i \log(x_i / y_i) - \sum_{i \in V} x_i + \sum_{i \in V} y_i .
\end{align*}
Note that the last term $\sum_{i \in V} y_i$ is independent of the variables $x$ that we are optimizing over.
Therefore, the point $z = \argmin_{x \in S_k^V} V_y(x)$ is the solution to the following convex program
\begin{align*}
    \min_x \quad &\sum_{i \in V} x_i \log(x_i / y_i) - \sum_{i \in V} x_i \\
    \text{s.t.} \quad & x_i \in [0,1]^V, \quad \sum_{i \in V} x_i \leq k . 
\end{align*}
Let us ignore the constraint $x_i \geq 0$ for now (as we shall see, $x_i > 0$ will be satisfied automatically). We introduce Lagrangian multipliers $\mu_i \geq 0$ for each constraint $x_i \leq 1$ and $\lambda$ for the constraint $\sum_i x_i \leq k$.
Consider the Lagrangian dual
\begin{align*}
L(x, \mu, \lambda) = \sum_i x_i \log(x_i / y_i) - \sum_i x_i + \sum_i \mu_i (x_i - 1) + \lambda (\sum_i x_i - k) .
\end{align*}
Strong duality holds for this program, so KKT condition $\nabla_x L(z,\mu^*,\lambda^*) = 0$ gives
\[
\log(z_i / y_i) + \mu_i^* + \lambda^* = 0 ,
\]
where $\mu^*, \lambda^*$ are the optimal Lagrangian multipliers. 
This implies that
\begin{align} \label{eq:kkt_stability}
z_i = y_i \cdot \exp( - \mu_i^* - \lambda^*) > 0. 
\end{align}
Note the other KKT conditions are (1) primal and dual feasibility:  $z_i \leq 1$, $\sum_i z_i \leq k$, $\mu_i^*, \lambda^* \geq 0$ and (2) complementary slackness: $\mu_i^* (z_i - 1) = 0$ and $\lambda^* (\sum_i z_i - k) = 0$. 

Now, note that each iteration of the for loop in \Cref{line:for_loop_prox} tries one of these $i$ values. Once $i$ is fixed, we compute the project of the remaining coordinates of $y$ onto $\{z \geq 0: \sum_{j > i} z_j \leq n-k\}$. If the projected point $z$ has any coordinate $z_j > 1$, this means our guess is wrong, and we proceed to increase $i$ by $1$; otherwise, our guess for $i$ is correct (the projected point $z \in S_k^V$) and the point $z$ is also the correct point after the proximal step. 
To show correctness of \Cref{alg:prox_step}, it suffices to show that $z^* = \argmin_{x \in S_k^V} V_y(x)$  can indeed be constructed by this procedure. To do this, we use the following key structural properties of the proximal step $z$.

\begin{claim} \label{claim:proj_property}
Let $z^* = \argmin_{x \in S_k^V} V_y(x)$. For any coordinates $j,\ell \in V$ such that $y_j > y_\ell$, we have $z_j \geq z_\ell$. Moreover, if $j < \ell$ and $y_j \ge y_\ell$, we have $z_j \geq z_\ell$.
\end{claim}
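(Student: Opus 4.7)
The plan is to read off the claim directly from the KKT characterization already derived in the paragraphs preceding \Cref{claim:proj_property}. Recall that \eqref{eq:kkt_stability} gives $z_i = y_i \exp(-\mu_i^{*} - \lambda^{*})$ for all $i \in V$, where $\mu_i^{*}, \lambda^{*} \geq 0$ and where complementary slackness gives $\mu_i^{*}(z_i - 1) = 0$, together with primal feasibility $z_i \leq 1$.

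For the first statement, I would proceed by contradiction. Suppose $y_j > y_\ell$ but $z_j < z_\ell$. Taking the ratio using \eqref{eq:kkt_stability},
\[
\frac{z_j}{z_\ell} \;=\; \frac{y_j}{y_\ell}\,\exp\!\bigl(\mu_\ell^{*} - \mu_j^{*}\bigr)\,.
\]
Since $y_j/y_\ell > 1$ and the left hand side is strictly below $1$ by assumption, we must have $\exp(\mu_\ell^{*} - \mu_j^{*}) < 1$, i.e.\ $\mu_j^{*} > \mu_\ell^{*} \geq 0$. In particular $\mu_j^{*} > 0$, so complementary slackness forces $z_j = 1$. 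But then $z_\ell > z_j = 1$ violates primal feasibility, a contradiction. Hence $z_j \geq z_\ell$, as claimed.

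For the second statement, suppose $y_j = y_\ell$ (the case $y_j > y_\ell$ is already covered). Then the objective $V_y(x) = \sum_i x_i \log(x_i/y_i) - \sum_i x_i + \sum_i y_i$ and the constraint set $S_k^V$ are both invariant under the transposition of coordinates $j$ and $\ell$. Moreover, $r$ is strictly convex on $\R_{>0}^V$, so $V_y$ is strictly convex on $S_k^V \cap \R_{>0}^V$; combined with \eqref{eq:kkt_stability} (which forces $z_i > 0$) this gives uniqueness of the minimizer $z^{*}$. Applying the transposition to $z^{*}$ therefore yields $z^{*}$ itself, so $z_j^{*} = z_\ell^{*}$ and in particular $z_j^{*} \geq z_\ell^{*}$.

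Once \Cref{claim:proj_property} is in hand, completing \Cref{lem:prox_step} amounts to checking that \Cref{alg:prox_step} is the unique construction compatible with the KKT conditions: by the claim, after sorting by $y$ (with tiebreaking by index) the coordinates with $z_i = 1$ form a prefix $\pi(1), \ldots, \pi(i^{*})$, the remaining coordinates satisfy $z_{\pi(j)} = y_{\pi(j)} e^{-\lambda^{*}}$ with $\lambda^{*}$ determined by $\sum_{j > i^{*}} z_{\pi(j)} = k - i^{*}$ (or $\lambda^{*} = 0$ when the $\ell_1$-budget is slack), and the cutoff $i^{*}$ is the smallest prefix length for which $y_{\pi(i+1)} e^{-\lambda^{*}} \leq 1$, which is exactly the test on \Cref{line:testing_i}. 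The ``moreover'' clause of \Cref{lem:prox_step} regarding the ordering of $z$ then follows from the claim applied to the strict and tied cases. The only subtlety to guard against is the possibility of $\lambda^{*} < 0$, which is ruled out by the $\max\{0, \cdot\}$ in the definition of $\lambda^{*}$; this corresponds to the inactive $\ell_1$ constraint and is handled uniformly by the same argument.
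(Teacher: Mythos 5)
Your proof is correct, and the first part takes essentially the same route as the paper: both arguments contrast \eqref{eq:kkt_stability} with complementary slackness under the contradiction hypothesis $z_j < z_\ell$. The bookkeeping differs slightly — the paper notes $z_j < z_\ell \leq 1$ forces $\mu_j^* = 0$, then recovers $z_j > z_\ell$ directly from \eqref{eq:kkt_stability}, whereas you take the ratio $z_j/z_\ell$ to force $\mu_j^* > 0$, hence $z_j = 1$, hence $z_\ell > 1$ contradicting primal feasibility — but these are two views of the same argument. Where you genuinely add something is the ``moreover'' (tie-breaking) clause: the paper's proof stops after the strict case $y_j > y_\ell$ and does not justify the tie case $y_j = y_\ell$, while you close that gap cleanly by combining strict convexity of $V_y$ (so the minimizer is unique) with invariance of the problem under transposing $j$ and $\ell$, which forces $z_j = z_\ell$. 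The closing paragraph of your writeup (how the claim feeds back into \Cref{lem:prox_step} and \Cref{alg:prox_step}) is not part of proving \Cref{claim:proj_property} itself; it is fine as surrounding context but goes beyond the statement under review.
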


Before presenting the proof of \Cref{claim:proj_property}, we first use it to finish the proof of the correctness of the algorithm. 
Note that \Cref{claim:proj_property} implies that the set of $z_j = 1$ must correspond to a set of largest $i$ coordinates of $y$, where $i \leq k$ since $\|z\|_1 \leq k$. Next, that once we set the values $\mu_i$ for $i \in V$, the choice of $\lambda$ for respecting the KKT conditions is unique (if it exists). Hence, our procedure correctly attempts to increase the $\mu_j$ corresponding to the largest $i$ coordinates first, and then find a value of $\lambda$ that satisfies the KKT conditions. 
Thus, to finish the proof, it suffices to show that \Cref{line:testing_i} in \Cref{alg:prox_step} is activated at least once, as the vector returned would satisfy the KKT conditions, or, equivalently, that trying to setting only the first $i$ values of $\mu_{\pi(j)}$ to be positive (for some $i \le k$) is enough. To see why this is true, suppose that it is never activated. In particular, this implies $y_{\pi(k)} > 1$, which in turn implies that $\sum_{j \ge k} y_{\pi(j)} > 1$, which yields \[y_{\pi(k)} e^{\log (\sum_{j \ge k} y_{\pi(j)})} = \frac{y_{\pi(k)}}{\sum_{j \ge k} y_{\pi(j)}} < 1,\]
which yields a contradiction. 

Lastly, the second part of the statement clearly follows from the description of our procedure. 
\end{proofof}

\begin{proofof}{\Cref{claim:proj_property}}
Let $j, l$ be coordinates with $y_j > y_\ell$ and assume for the purpose of contradiction that $z_j < z_\ell$.
Note that this implies, by the complementary slackness, that $z_j < 1$ and thus $\mu_j^* = 0 \leq \mu_\ell^*$. 
Now by \eqref{eq:kkt_stability}, we have 
\[
z_j = y_j \cdot \exp( - \mu_j^* - \lambda^*) > y_\ell \cdot \exp( - \mu_\ell^* - \lambda^*) = z_\ell ,
\]
which leads to a contradiction. This proves the claim. 
\end{proofof}

\subsection*{Update Step for \Cref{st_FTRL}}

In this section, we present the pseudocode and runtime guarantee for one update step for \Cref{st_FTRL}. Our main result is the following:

\begin{lemma}[Update step for follow-the-regularized-leader] \label{lem:FTRL_update}
For any vectors $h_1, h_2, \ldots, h_l \in \mathbb{R}^V$, and positive integer $k$, the proximal step 
\[
z = \argmin_{x \in S_k^V} \sum_{t \in [l]} h_t^\top x + r(x)
\]
can be computed in $O(1)$-depth and $O(k n)$ time. Moreover, for any $i, j \in V$, $z_i \ge z_j$ if and only if either $- \sum_{t \in [l]} [h_t]_i > - \sum_{t \in [l]} [h_t]_j$ or $- \sum_{t \in [l]} [h_t]_i \ge - \sum_{t \in [l]} [h_t]_j$ and $i < j$. 
\end{lemma}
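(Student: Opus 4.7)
The plan is to reduce this lemma directly to \Cref{lem:prox_step} by recognizing that the objective $\sum_{t \in [l]} h_t^\top x + r(x)$ differs from the proximal objective $h^\top x + V_{x_0}(x)$ only by terms that are constant in $x$, provided $x_0$ is chosen so that $\nabla r(x_0)$ is a uniform vector.

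First, I would let $H \defeq \sum_{t \in [l]} h_t$ and set $x_0 \defeq \frac{k}{n} \cdot \indicVec{V} \in \R_{>0}^V \cap S_k^V$. Then $\nabla r(x_0) = (1 + \log(k/n)) \cdot \indicVec{V}$ is a constant multiple of the all-ones vector. Define $h \defeq H + \nabla r(x_0)$. Using the expansion $V_{x_0}(x) = r(x) - \nabla r(x_0)^\top x + C$ for a constant $C$ independent of $x$, we have
\[
h^\top x + V_{x_0}(x) = H^\top x + r(x) + C.
\]
Hence $\argmin_{x \in S_k^V} h^\top x + V_{x_0}(x) = \argmin_{x \in S_k^V} \sum_{t \in [l]} h_t^\top x + r(x) = z$, so invoking \Cref{lem:prox_step} with inputs $(x_0, h, k)$ returns exactly $z$.

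The runtime follows from \Cref{lem:prox_step}: assembling $h = H + (1 + \log(k/n)) \indicVec{V}$ is a single elementwise operation, after which \Cref{alg:prox_step} runs in $O(1)$-depth and $O(kn)$ time. (We treat the cumulative sum $H = \sum_t h_t$ as maintained across iterations of \Cref{st_FTRL} or, if computed from scratch, in $O(1)$ parallel depth via coordinate-wise addition.) For the ordering claim, \Cref{lem:prox_step} guarantees $z_i \geq z_j$ iff $\log x_{0,i} - h_i > \log x_{0,j} - h_j$, or equality holds and $i < j$. Since $\log x_{0,i} = \log(k/n)$ is the same for all coordinates, this reduces to $-h_i > -h_j$ (resp.\ $-h_i \geq -h_j$), and because $h_i = H_i + (1 + \log(k/n))$ differs from $H_i$ by a common constant, the condition becomes $-\sum_{t \in [l]} (h_t)_i > -\sum_{t \in [l]} (h_t)_j$ (resp.\ $\geq$ with $i < j$), as claimed.

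There is no substantive obstacle: the entire argument is a change of variables that absorbs the linear term of the Bregman divergence into the linear term of the objective. The only thing to verify carefully is that $x_0 = (k/n) \indicVec{V}$ lies in $\R_{>0}^V \cap S_k^V$ (which requires only $k \leq n$, a harmless assumption since otherwise $S_k^V = [0,1]^V$ and the problem is trivially separable), and that the additive constant $(1 + \log(k/n))$ in $h$ does not affect either the argmin or the strict/weak orderings used to break ties.
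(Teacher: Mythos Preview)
Your proposal is correct and takes essentially the same approach as the paper: reduce to \Cref{lem:prox_step} by choosing $x_0 = \frac{k}{n}\indicVec{V}$ and $h = \sum_{t\in[l]} h_t + \nabla r(x_0)$, so that the FTRL objective and the proximal objective differ only by a constant. Your write-up is in fact more detailed than the paper's one-line proof, since you explicitly verify the Bregman expansion, the ordering reduction, and the feasibility of $x_0$.
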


The procedure for computing a proximal step in \Cref{lem:FTRL_update} is given in \Cref{alg:proj_step_FTRL}.

\begin{algorithm}[H]
\caption{Update step over $k$-simplex $S_k^n$ for FTRL}
\label{alg:proj_step_FTRL}
\KwData{$h_1, \ldots, h_l \in \mathbb{R}^V$, $k \in \mathbb{Z}_{>0}$}
\KwResult{$z = \argmin_{x \in S_k^V} \sum_{t \in [l]} h_t^\top x + r(x)$, where $S_k^V = \{x \in [0, 1]^V: \|x\|_1 \le k\}$}
\SetKwFunction{ftrlupdate}{FTRLUpdate}
\SetKwFunction{proxstep}{ProximalStep}
\SetKwProg{Fn}{Function}{:}{}
\Fn{\ftrlupdate{$\{h_1, \ldots, h_l\}, k$}}
{
$x_0 \gets \frac{k}{n} \Vec{1}$\;
$h \gets \sum_{t \in [l]} h_t + \nabla r(x_0)$\;
\Return \proxstep{$x_0, h, k$}
}
\end{algorithm}

\begin{proof}
Note that \Cref{alg:proj_step_FTRL} is an instantiation of \Cref{alg:prox_step} where $h = \sum_{t \in [l]} h_t + \nabla r(x_0)$, where $x_0 \defeq \frac{k}{n} \Vec{1}$ and use \Cref{lem:prox_step}. 
\end{proof}

\end{document}